\documentclass[acmsmall,screen,nonacm]{acmart}

\setcopyright{none}
\usepackage{xparse}
\usepackage{ebproof} 
\usepackage{setspace}
\usepackage{stackengine}
\usepackage{mathpartir}
\usepackage{pifont}
\usepackage{thm-restate}
\usepackage{enumitem}
\usepackage{cleveref}
\usepackage{algorithm}
\usepackage{algpseudocode}
\usepackage{tikz}
\usepackage{graphicx}
\usepackage{longtable}
\usepackage{xcolor}
\usetikzlibrary{positioning,calc,matrix,fit,arrows.meta}
\usepackage{subcaption}
\usepackage{listings}
\usepackage[most]{tcolorbox}
\usepackage{misc/mpst-macros} 
\usepackage{xr}

\title{Top-down $=$ Bottom-up}
\subtitle{Sound and Complete Characterisations of Liveness by Multiparty Global Protocols}

\author{Kai Pischke}
\affiliation{%
  \institution{University of Oxford}
  \department{Department of Computer Science}
  \city{Oxford}
  \country{United Kingdom}
}
\email{kai.pischke@cs.ox.ac.uk}
\orcid{0000-0002-6435-9456}

\author{Nobuko Yoshida}
\affiliation{%
  \institution{University of Oxford}
  \department{Department of Computer Science}
  \city{Oxford}
  \country{United Kingdom}
}
\email{nobuko.yoshida@cs.ox.ac.uk}
\orcid{0000-0002-3925-8557}

\begin{document}
\begin{abstract}
    \emph{Multiparty session types} (MPST) are a type discipline for concurrent and distributed systems, 
    designed to ensure not only type safety and deadlock-freedom, but also \emph{liveness} of typed communicating processes. 
    Two main MPST methodologies, top-down and bottom-up, have been proposed and are integrated into a wide range of programming languages and tools. 
    The \emph{top-down} strategy starts by specifying the overall choreography of the protocol (called a global type), 
    from which a set of local types that satisfy safety and liveness are generated by endpoint projection (EPP).
    Once each participant is type-checked against a generated local type, liveness of the set of typed processes is automatically ensured \emph{by construction}. 
    The \emph{bottom-up} strategy directly checks whether local types inferred from processes satisfy liveness in order to enforce liveness of processes. 
    Since the top-down strategy depends on global types and the EPP algorithms, 
    it has often been considered that the top-down system offers \emph{strictly less typability} than the bottom-up system. 
    Our paper negates this belief. We prove that, using the precise subtyping for the subsumption rule, 
    the top-down strategy offers \emph{exactly the same typability} as the bottom-up system. 
    More precisely, a multiparty session $M$ is typable and verified to be live by the bottom-up typing system if and only if $M$ is typable by the top-down typing system. 
    The key to the proof is the development of a \emph{principal global type
inference algorithm} 
which builds a \emph{principal} global type 
from an arbitrary set of live local types. 
    We have implemented the global type inference algorithm together with
    projection, process type checking and local type
    inference algorithms, and built a toolchain
for both the top-down and bottom-up strategies.
We evaluated our toolchain with representative examples 
from the literature, confirming that the top-down approach is more efficient
than the bottom-up approach. 
\end{abstract}

\begin{CCSXML}
<ccs2012>
   <concept>
       <concept_id>10003752.10003753.10003761</concept_id>
       <concept_desc>Theory of computation~Concurrency</concept_desc>
       <concept_significance>500</concept_significance>
       </concept>
   <concept>
       <concept_id>10003752.10003790.10011740</concept_id>
       <concept_desc>Theory of computation~Type theory</concept_desc>
       <concept_significance>500</concept_significance>
       </concept>
   <concept>
       <concept_id>10003752.10003753.10003761.10003764</concept_id>
       <concept_desc>Theory of computation~Process calculi</concept_desc>
       <concept_significance>500</concept_significance>
       </concept>
   <concept>
       <concept_id>10003752.10003753.10003761.10003763</concept_id>
       <concept_desc>Theory of computation~Distributed computing models</concept_desc>
       <concept_significance>500</concept_significance>
       </concept>
 </ccs2012>
\end{CCSXML}

\ccsdesc[500]{Theory of computation~Concurrency}
\ccsdesc[500]{Theory of computation~Type theory}
\ccsdesc[500]{Theory of computation~Process calculi}
\ccsdesc[500]{Theory of computation~Distributed computing models}

\maketitle
\nocite{*}

\section{Introduction}
\label{introduction}
\emph{Multiparty session types} (MPST) \cite{HYC2016,Honda2008} are a
type framework for distributed services and agents, 
designed to describe and verify communication protocols between 
\emph{multiple} distributed participants. 
The original theory takes 
the \emph{top-down} strategy (Figure~\ref{fig:topdown}):  
a protocol designer first prescribes the overall 
choreography (who sends what to whom) 
as a \emph{global type} $\G$.  
Then an endpoint projection (EPP) algorithm
uses this global type to generate a set of local types $\{\T[i]\}_{i\in I}$
which constrain how each individual participant $\PP[i]$ can behave. 
Type-checking guarantees that typed processes $\{\PP[i]\}_{i\in I}$
satisfy \emph{communication safety} (there are no unexpected type and label mismatches), 
\emph{deadlock-freedom} and \emph{session fidelity} (each process adheres to the declared global protocol). 
In the top-down system, 
typability can ensure a stronger property, 
\emph{liveness}, i.e., ``something good eventually happens''.  
Here ``something good'' means that a request eventually gets a response, 
and each participant in the protocol is able to make \emph{progress} 
without stalling indefinitely. 
The top-down strategy has been integrated into 
many programming languages, taking  
various forms 
such as code generation, API generation, API inference,
event-driven programming, 
static type checking, 
protocol compliance, real-time systems and  
runtime monitoring.  
Some notable examples include integration with 
languages such as Rust~\cite{CYV2022,DBLP:conf/ecoop/LagaillardieNY22,DBLP:conf/ecoop/HouLY24,DBLP:conf/ecoop/VassorY24}, 
Java~\cite{FASE16EndpointAPI,HY2017,DBLP:journals/scp/KouzapasDPG18,DBLP:conf/tacas/BoumaGJ23},
Scala~\cite{SDHY2017,OOPSLA21FaultTolerantMPST,ECOOP22MPSTScala,BHYZ2023,DBLP:conf/issta/FerreiraJ23,DBLP:conf/tacas/Jongmans25}, 
Go~\cite{DBLP:journals/pacmpl/CastroHJNY19}, 
TypeScript~\cite{MFYZ2021,DBLP:conf/ecoop/GheriLSTY22},
PureScript~\cite{DBLP:journals/corr/abs-1904-01287}, 
OCaml~\cite{YZF2021,INYY2020},
MPI-C~\cite{NCY2015,LMMNSVY2015},
F$\star$~\cite{ZFHNY2020},
F$\sharp$~\cite{NHYA2018},
Python~\cite{DBLP:conf/rv/NeykovaYH13,NY2017Actor,DBLP:journals/fmsd/DemangeonHHNY15},
Erlang~\cite{DBLP:journals/corr/Fowler16,NY2017,DBLP:conf/icsoft/EgidiGV22,BHVT2026,FH2026},
C~\cite{NYH2012,NYNTL2012}, 
C$\sharp$~\cite{DBLP:journals/corr/abs-2004-01325},
and domain-specific languages~\cite{larsen_et_al:LIPIcs.ECOOP.2026.17,DBLP:journals/corr/abs-1208-4632,DBLP:journals/scp/CharalambidesDA16,DBLP:conf/ecoop/00020DG21,Fowler2026}.    
See~\cite{Y2024} for a survey.
Mechanisations of the top-down system have been 
developed across different frameworks to ensure correctness of the
top-down system~\cite{DBLP:conf/pldi/Castro-Perez0GY21,DBLP:conf/ecoop/TiroreBC25,DBLP:journals/jar/TiroreBC25,DBLP:journals/pacmpl/JacobsBK22a,EKY2025,DBLP:conf/itp/LiW25,Castro2026,KYG2026}. 

The shortcoming of the top-down approach is that \emph{typability is 
limited} since it depends on the EPP algorithm 
being used. If $\G$ is not projectable by the chosen EPP algorithm, 
either $\G$ is not \emph{implementable} (there exists no correct set 
of local types) or this particular EPP algorithm rejects some implementable $\G$. 
If the EPP is too limited, 
many safe-and-live processes become untypable.
As expected, computationally cheaper EPP algorithms tend to  
give fewer typable processes than more expensive ones 
\cite{thien-nobuko-popl-25}.  

\begin{figure}[!t]
\centering
\begin{subfigure}[b]{0.24\textwidth}
\centering
\begin{tikzpicture}[
    every node/.style={align=center, font=\scriptsize},
    level/.style={minimum width=0.5cm, minimum height=0.25cm, inner sep=1pt},
    arrow/.style={->, semithick, black!90},
    groupbox/.style={draw, gray!40, rounded corners=2pt, inner sep=4pt, line width=1pt},
    annotation/.style={font=\tiny, text=blue!60}
]
\node[level] (G1) at (0.225,0.5) {$\G$};
\node[level] (T1)  at (-0.55,-0.25) {$\T[1\text{~pro}]$};
\node[level] (T2)  at (0.05,-0.25) {$\T[2\text{~pro}]$};
\node[level] (T3)  at (0.55,-0.25) {$\cdots$};
\node[level] (Tn)  at (1.0,-0.25) {$\T[n\text{~pro}]$};
\node[level] (S1) at (-0.55,-0.55) {\rotatebox[origin=c]{90}{$\subt$}};
\node[level] (S2) at (0.05,-0.55) {\rotatebox[origin=c]{90}{$\subt$}};
\node[level] (S3) at (0.55,-0.55) {};
\node[level] (Sn) at (1.0,-0.55) {\rotatebox[origin=c]{90}{$\subt$}};
\node[level] (T11) at (-0.55,-0.85) {$\T[1]$};
\node[level] (T12) at (0.05,-0.85) {$\T[2]$};
\node[level] (T13) at (0.55,-0.85) {$\cdots$};
\node[level] (T1n) at (1.0,-0.85) {$\T[n]$};
\node[level] (P11) at (-0.55,-2.05) {$\PP[1]$};
\node[level] (P12) at (0.05,-2.05) {$\PP[2]$};
\node[level] (P13) at (0.55,-2.05) {$\cdots$};
\node[level] (P1n) at (1.0,-2.05) {$\PP[n]$};
\draw[->] (G1) -- (T1);
\draw[->] (G1) -- (T2);
\draw[->] (G1) -- (Tn);
\draw[arrow] (T11) -- (P11);
\draw[arrow] (T12) -- (P12);
\draw[arrow] (T1n) -- (P1n);
\end{tikzpicture}
\caption{\small top-down}
\label{fig:topdown}
\end{subfigure}%
\hspace{1mm}%
\begin{subfigure}[b]{0.24\textwidth}
\centering
\begin{tikzpicture}[
    every node/.style={align=center, font=\scriptsize},
    level/.style={minimum width=0.5cm, minimum height=0.25cm, inner sep=1pt},
    arrow/.style={->, semithick, black!90},
    groupbox/.style={draw, gray!40, rounded corners=2pt, inner sep=4pt, line width=1pt},
    annotation/.style={font=\tiny, text=blue!60}
]
\coordinate (top) at (0.225,0.3);
\node[level] (Tp21) at (-0.55,-1.4) {$\T[1\text{~min}]$};
\node[level] (Tp22) at (0.05,-1.4) {$\T[2\text{~min}]$};
\node[level] (Tp23) at (0.55,-1.4) {$\cdots$};
\node[level] (Tp2n) at (1.0,-1.4) {$\T[n\text{~min}]$};
\node[level] (P21) at (-0.55,-2.05) {$\PP[1]$};
\node[level] (P22) at (0.05,-2.05) {$\PP[2]$};
\node[level] (P23) at (0.55,-2.05) {$\cdots$};
\node[level] (P2n) at (1.0,-2.05) {$\PP[n]$};
\node[groupbox, fit=(Tp21)(Tp2n)] (Tbox2) {};
\node[annotation, fill=white, inner sep=1.5pt, anchor=south west, xshift=2pt, yshift=2pt] at (Tbox2.north west) {$\safe + \live$};
\draw[->] (P21) -- (Tp21);
\draw[->] (P22) -- (Tp22);
\draw[->] (P2n) -- (Tp2n);
\end{tikzpicture}
\caption{\small bottom-up}
\label{fig:bottomup}
\end{subfigure}%
\hspace{1mm}%
\begin{subfigure}[b]{0.24\textwidth}
\centering
\begin{tikzpicture}[
    every node/.style={align=center, font=\scriptsize},
    level/.style={minimum width=0.5cm, minimum height=0.25cm, inner sep=1pt},
    arrow/.style={->, semithick, black!90},
    groupbox/.style={draw, gray!40, rounded corners=2pt, inner sep=4pt, line width=1pt},
    annotation/.style={font=\tiny, text=blue!60}
]
\node[level] (G3) at (0.225,0) {$\G[\text{inf}]$};
\node[level] (T31) at (-0.55,-0.85) {$\T[1]$};
\node[level] (T32) at (0.05,-0.85) {$\T[2]$};
\node[level] (T33) at (0.55,-0.85) {$\cdots$};
\node[level] (T3n) at (1.0,-0.85) {$\T[n]$};
\node[level] (P31) at (-0.55,-2.05) {$\PP[1]$};
\node[level] (P32) at (0.05,-2.05) {$\PP[2]$};
\node[level] (P33) at (0.55,-2.05) {$\cdots$};
\node[level] (P3n) at (1.0,-2.05) {$\PP[n]$};
\node[groupbox, fit=(T31)(T3n)] (Tbox3) {};
\node[annotation, fill=white, inner sep=1.5pt, anchor=south west, xshift=2pt, yshift=2pt] at (Tbox3.north west) {$\safe$\\$+\,\live$};
\draw[->, thick, red!80] (T31) -- (G3);
\draw[->, thick, red!80] (T32) -- (G3);
\draw[->, thick, red!80] (T3n) -- (G3);
\draw[arrow] (T31) -- (P31);
\draw[arrow] (T32) -- (P32);
\draw[arrow] (T3n) -- (P3n);
\end{tikzpicture}
\caption{global type inference}
\label{fig:topdown3}
\end{subfigure}%
\hfill%
\begin{subfigure}[b]{0.24\textwidth}
\centering
\begin{tikzpicture}[
    every node/.style={align=center, font=\scriptsize},
    level/.style={minimum width=0.5cm, minimum height=0.25cm, inner sep=1pt},
    arrow/.style={->, semithick, black!90},
    groupbox/.style={draw, gray!40, rounded corners=2pt, inner sep=4pt, line width=1pt},
    annotation/.style={font=\tiny, text=blue!60}
]
\node[level] (G4) at (0.225,-0.4) {$\G[\text{min}]$};
\node[level] (Tp41) at (-0.55,-1.4) {$\T[1\text{~min}]$};
\node[level] (Tp42) at (0.05,-1.4) {$\T[2\text{~min}]$};
\node[level] (Tp43) at (0.55,-1.4) {$\cdots$};
\node[level] (Tp4n) at (1.0,-1.4) {$\T[n\text{~min}]$};
\node[level] (P41) at (-0.55,-2.05) {$\PP[1]$};
\node[level] (P42) at (0.05,-2.05) {$\PP[2]$};
\node[level] (P43) at (0.55,-2.05) {$\cdots$};
\node[level] (P4n) at (1.0,-2.05) {$\PP[n]$};
\node[groupbox, fit=(Tp41)(Tp4n)] (Tbox4) {};
\node[annotation, fill=white, inner sep=1.5pt, anchor=south west, xshift=2pt, yshift=2pt] at (Tbox4.north west) {$\safe$\\$+\,\live$};
\draw[->] (P41) -- (Tp41);
\draw[->] (P42) -- (Tp42);
\draw[->] (P4n) -- (Tp4n);
\draw[->, thick, red!80] (Tp41) -- (G4);
\draw[->, thick, red!80] (Tp42) -- (G4);
\draw[->, thick, red!80] (Tp4n) -- (G4);
\end{tikzpicture}
\caption{principal type inf.}
\label{fig:bottomup4}
\end{subfigure}
\vspace*{-2mm}
\caption{Multiparty session type frameworks.
We have $\G[\text{min}] \subseteq \G[\text{inf}] \subseteq \G$, where $\subseteq$ denotes trace inclusion.}
\label{fig:methodologies}
\vspace*{-5mm}
\end{figure}

The bottom-up strategy proposed in \cite{Scalas2019} aims to solve this issue. 
It directly verifies that a set of local types (either prescribed by the programmer 
or inferred from a set of processes) 
satisfies some desired property $\varphi$ (Figure~\ref{fig:bottomup}). 
The benefit of this approach is the ability to verify a wide variety of properties 
of typed processes. For instance, it is possible to 
check whether processes satisfy type safety but not liveness.  
Since local types resemble processes (such as CCS \cite{MilnerR:comc} 
and CSP \cite{Hoare:1985:CSP:3921}) 
or state machines, this approach 
can directly use off-the-shelf model checkers
(such as mCRL2 \cite{mcrl2} or SPIN \cite{spin}) to verify properties. 
The bottom-up system is integrated into Scala \cite{BSYZ2022,SYB2019}, 
OCaml \cite{DBLP:conf/tacas/ImaiLN22}, 
Go \cite{NY2016,LNTY2018} 
and Rust \cite{CYV2022,DBLP:conf/ecoop/LagaillardieNY22}, 
and certified using Rocq \cite{DBLP:journals/pacmpl/HinrichsenJK24,DBLP:conf/itp/EkiciY24,BY2026,HQB2026} and Why3 \cite{DBLP:conf/esop/GiuntiY25}. 

The shortcoming of this approach is that it is \emph{computationally
expensive}.
Verifying safety, deadlock-freedom and liveness of a set of local types
is \PSPACE-complete
\cite{thien-nobuko-popl-25} in the synchronous
semantics, and is \emph{undecidable} in the asynchronous
semantics \cite{Brand1983}.
By contrast, the top-down approach requires only projection and
subtyping, both of which are polynomial in the size of the types
\cite{thien-nobuko-popl-25}.

These two approaches represent a trade-off. The top-down
approach demands more effort as $\G$ must be supplied upfront;
however, this can offer substantially cheaper verification and a
human-readable specification, while the bottom-up approach makes
no design-time demand but pays at verification time, with known
hard lower bounds (\PSPACE-complete synchronously,
undecidable asynchronously). Our main result concerns 
the respective expressiveness of these two methodologies
($\Ptop = \Pbot$), not the benefits and drawbacks of each workflow. 
In particular, we show the two methodologies accept exactly the same processes, 
and hence the practical choice should be motivated by other factors.

\textbf{This paper} gives results with unexpected conclusions
related to typability of the top-down
and bottom-up systems. Our focus is the \emph{liveness} property.
To explain this, let us assume a process $\PP[i]$ that acts as 
participant $\pp[i]$ (denoted by $\proctag{\pp[i]}{\PP[i]}$)  
has session type $\T[i]$. A \emph{multiparty session} $\M$
is a parallel composition of participants 
($\M=\proctag{\pp[1]}{\PP[1]} \ | \  \cdots\ | \ \proctag{\pp[n]}{\PP[n]}$).
The bottom-up typing judgement 
is given as 
$\provesbot \M:\ctx$ where
$\ctx$ is a \emph{safe typing context} mapping each participant $\pp[i]$ to its type $\T[i]$.
``Safety'' on $\ctx$ is a minimum requirement on the typing
judgement $\provesbot$ to ensure that $\M$ is type safe.

We define \emph{a set of typable live sessions},   
$\Pbot$, as a set of multiparty sessions 
which are typable by the bottom-up system:\\[1mm]
\centerline{$\Pbot \ = \ \{\ \M \ | \ \provesbot \M:\ctx \quad \text{such that} \ \ctx \ \text{is live}\}
$}\\[1mm] 
If $\M$ is typed by live $\ctx$, then $\M$ satisfies liveness
\cite{Scalas2019,revisited}. This set is closed under reduction, i.e.,
if $\M \in \Pbot$ and 
$\M$ reduces to $\Mi$, then we have
$\Mi \in \Pbot$.
Thus $\Pbot$ gives a \textbf{complete characterisation of liveness}, i.e., $\M$ is typable and live if
$\M\in \Pbot$ and vice versa, and $\Pbot$ is closed under reductions.

In the top-down system, 
each participant is 
typed by $\T[i]$, which is a projection of global type $\G$ on
participant $\role{p_i}$ 
(denoted by $\G \cofullproj{p_i} \T[i]$). 
Assuming $\PP[i]$ is typed by $\T[i]$, we define:\\[1mm] 
\centerline{$
\Ptop \ = \ \{\ \M \ | \ \provestop \M: 
\ctx \quad \text{such that}\ \G \cofullproj{p_i}\T[i] \quad
\ctx=\set{\AT{\role{p_i}}{\T[i]}}_{i\in I}\ \text{for some balanced} \ \G\}
$}\\[1mm] 
This implies: 
if $\M\in \Ptop$, then 
$\M\in \Pbot$, i.e., 
$\Ptop \, \subseteq \, \Pbot$. 
Depending on the EPP, we may have
$\Ptop \, \subsetneq \, \Pbot$, i.e., $\M\in \Pbot$ but 
$\M\not\in \Ptop$. 
This is because
$\G \cofullproj{p_i} \T[i]$ might be undefined even if
$\G$ is in fact implementable by safe processes satisfying liveness.

\textbf{This paper} shows the typability given by the top-down system 
is the same as that given by the bottom-up system, i.e., 
$\Ptop \, = \, \Pbot$, against 
the common belief that the top-down strategy is incomplete.  
\textbf{The top-down strategy is complete with respect to liveness} if 
three conditions are met. 
First, the typing rule for a process 
includes the subsumption rule defined by  
\emph{precise synchronous subtyping} $\subt$ 
\cite{synchronous-subtyping}. Secondly,
$\G$ is \emph{balanced} 
(i.e., any participant
appearing in a subterm of $\G$ is unavoidable);  
and thirdly, the EPP algorithm
we use is the most
expressive and computationally expensive EPP algorithm among 
the four EPP algorithms proposed in the MPST literature 
\cite{thien-nobuko-popl-25}, 
namely the \emph{coinductive full-merging projection}.

To prove this, we develop a \emph{principal global type inference
algorithm} that synthesises a balanced global type from any safe and
live typing context $\ctx$ (Figure~\ref{fig:topdown3}). The key idea
is to equip contexts with behavioural semantics and follow a
deterministic fair schedule, so that synthesis extracts a
representative interleaving while turning repeated semantic states
into syntactic $\toc{\sfmu}$-binders. We then show that the resulting
global type is projectable by the coinductive full-merging projection,
preserves exactly the traces of $\ctx$, and is therefore a principal
type for $\ctx$.

A recent line of work establishes \emph{complete} EPPs in CFSM-based settings
with respect to realisability and deadlock-freedom
\cite{Li2023,DiGiustoUrsoLozesPPDP2025,Stutz2023PowerOfChoice}.
We make use of such EPPs for a different completeness result, 
namely the completeness of the top-down system with respect to
bottom-up \emph{typability of processes}.
Notice that these are different notions of completeness: 
the bottom-up to top-down typability question does not reduce to
EPP-completeness. Instead,
we give a synthesis procedure which introduces two coinductive
devices absent from the inductive, syntactic projection literature, a
combined projection-and-branch-subtyping relation
(Lemma~\ref{lemma:soundness-combined-relation}) and 
a fair, round-robin scheduling discipline (\SEC{sec:semantics}).

We implement a toolchain (\SEC{sec:implementation}) which
consists of (1) \textbf{the top-down system} with the four projection
algorithms in \cite{thien-nobuko-popl-25}, subtyping checking
and process type checking; and (2) \textbf{the bottom-up system}
with the minimum local type inference from a given process,
safety and liveness checking of a typing context
and the principal global type inference algorithm.
We evaluate the toolchain on a range of examples from the literature,
including particularly challenging global types from
\cite{Scalas2019}, \cite{Li2023}, and \cite{Castro2026},
all of which can be projected top-down and whose global types can also be inferred bottom-up.
We further show that the top-down strategy can often be more efficient
than off-the-shelf bottom-up model checkers such as {\sf mpstk}~\cite{Scalas2019}.

\myparagraph{Outline}
\SEC{sec:overview}
gives an overview of our framework and results.  
\SEC{sec:calculus}
summarises the calculus and defines 
the properties needed in this paper. 
\SEC{sec:types} defines global and local types and projection.
\SEC{sec:semantics} equips types with labelled transition semantics and fair scheduling.
\SEC{sec:building}
is the main section:
it presents the inference algorithm and proves its correctness with respect to liveness. 
\SEC{sec:typing}  
proves the completeness of the top-down typing system with respect to
live processes, proving
$\Ptop=\Pbot$. 
\SEC{sec:implementation}  
outlines the toolchain and benchmarks
it on case studies from the literature. 
\SEC{sec:related} discusses related work and concludes the
paper. 
\ifnotsplit{The appendix contains
the full proofs and more examples.}{The full
version~\cite{fullversion} contains the full proofs and more
examples.}
Information about the toolchain can be found in 
the data availability statement (\SEC{sec:dataavailability}). 

\section{Overview}
\label{sec:overview}
As a running example, we consider a \emph{Monte Carlo protocol}
for estimating~$\pi$, a variation of the MapReduce protocol
from~\cite{Scalas2019}. Our goal is to prove
that any safe-and-live typing context
$\ctx$ has a corresponding global type $\G$.
For this, we build $\G$ from $\ctx$ which is
\emph{inferred} from a live session $\M$.

We consider four participants:
\emph{manager} $\role{m}$, \emph{workers} $\role{w_1}$, $\role{w_2}$,
and
\emph{reducer} $\role{r}$.
In each round, the manager selects \labname{map} for each
worker and sends a random float in $[0,1]$; each worker computes
$4\sqrt{1 - x^2}$ and reports the result to the reducer, who 
decides whether to continue or stop.  If it stops, the manager
selects \labname{stop} for both workers and the protocol terminates.
The average of all reported values estimates~$\pi$.
Figure~\ref{fig:protocol-steps} visualises one round of this workflow.

\begin{figure}[tb]
\centering
\begin{subcaptionblock}{0.3\linewidth}
\centering
\begin{tikzpicture}[
  >=Stealth,
  role/.style={draw, rounded corners=3pt, fill=gray!10, minimum width=1cm, minimum height=0.45cm, font=\scriptsize, align=center},
  idle/.style={draw, rounded corners=3pt, fill=white, minimum width=1cm, minimum height=0.45cm, font=\scriptsize, align=center, text=black!30, draw=black!30},
  msg/.style={->, semithick}
]
  \node[role] (m) at (0,1.6) {$\role{m}$};
  \node[role] (w1) at (-1.3,0.55) {$\role{w_1}$};
  \node[role] (w2) at (1.3,0.55) {$\role{w_2}$};
  \node[idle] (r) at (0,-0.55) {$\role{r}$};
  \draw[msg] (m) -- node[left, font=\tiny] {$\pv{x}$} (w1);
  \draw[msg] (m) -- node[right, font=\tiny] {$\pv{x}$} (w2);
\end{tikzpicture}
\caption{Dispatch}
\label{fig:step-dispatch}
\end{subcaptionblock}%
\hfill
\begin{subcaptionblock}{0.3\linewidth}
\centering
\begin{tikzpicture}[
  >=Stealth,
  role/.style={draw, rounded corners=3pt, fill=gray!10, minimum width=1cm, minimum height=0.45cm, font=\scriptsize, align=center},
  idle/.style={draw, rounded corners=3pt, fill=white, minimum width=1cm, minimum height=0.45cm, font=\scriptsize, align=center, text=black!30, draw=black!30},
  msg/.style={->, semithick}
]
  \node[idle] (m) at (0,1.6) {$\role{m}$};
  \node[role] (w1) at (-1.3,0.55) {$\role{w_1}$};
  \node[role] (w2) at (1.3,0.55) {$\role{w_2}$};
  \node[role] (r) at (0,-0.55) {$\role{r}$};
  \draw[msg] (w1) -- node[left, font=\tiny] {$\pv{4\sqrt{1-x^2}}$} (r);
  \draw[msg] (w2) -- node[right, font=\tiny] {$\pv{4\sqrt{1-x^2}}$} (r);
\end{tikzpicture}
\caption{Compute}
\label{fig:step-compute}
\end{subcaptionblock}%
\hfill
\begin{subcaptionblock}{0.3\linewidth}
\centering
\begin{tikzpicture}[
  >=Stealth,
  role/.style={draw, rounded corners=3pt, fill=gray!10, minimum width=1cm, minimum height=0.45cm, font=\scriptsize, align=center},
  idle/.style={draw, rounded corners=3pt, fill=white, minimum width=1cm, minimum height=0.45cm, font=\scriptsize, align=center, text=black!30, draw=black!30},
  msg/.style={->, semithick},
  stopmsg/.style={->, semithick, dashed, red!70!black}
]
  \node[role] (m) at (0,1.6) {$\role{m}$};
  \node[role] (w1) at (-1.3,0.55) {$\role{w_1}$};
  \node[role] (w2) at (1.3,0.55) {$\role{w_2}$};
  \node[role] (r) at (0,-0.55) {$\role{r}$};
  \draw[msg] (r) -- node[sloped, above, font=\tiny] {\labname{cont}/\labname{stop}} (m);
  \draw[stopmsg] (m) -- node[left, font=\tiny] {\labname{stop}} (w1);
  \draw[stopmsg] (m) -- node[right, font=\tiny] {\labname{stop}} (w2);
\end{tikzpicture}
\caption{Control}
\label{fig:step-control}
\end{subcaptionblock}
\vspace{-2mm}
\caption{One round of the Monte Carlo $\pi$-estimation protocol.
  (a)~Manager sends random floats to workers.
  (b)~Workers compute $4\sqrt{1-x^2}$ and send results to reducer.
  (c)~Reducer sends \labname{cont} or \labname{stop} to manager; on \labname{stop}, manager terminates both workers (dashed).}
\Description{Three subfigures showing one round of the protocol: (a) manager sends floats to both workers, (b) workers send computed results to reducer, (c) reducer sends continue or stop to manager, and on stop the manager sends stop to both workers.}
\label{fig:protocol-steps}
\vspace{-5mm}
\end{figure}
To make the role of subsumption explicit, we let the manager be
\emph{slightly more permissive} than the protocol shown in
Figure~\ref{fig:protocol-steps}: besides \labname{cont} and
\labname{stop}, it is prepared to receive an extra label
\labname{crash} from the reducer. The reducer never emits
\labname{crash}, so this branch does not affect the actual behaviour of
the session, but it does appear in the inferred local type of the
manager.
A multiparty session $\M$ of a Monte Carlo protocol consists of
four processes defined as:\\[1mm]
\centerline{
$\small
\begin{array}{r@{~~}l}
\PP[\role{w_i}] =& \pfix{X} \procbra{\role{m}}{\labname{map} \gc
\procin{\role{m}}{x}{\procout{\role{r}}{\pv{4\sqrt{1-x^2}}}{\poc{X}}},\;
\labname{stop} \gc \inact} \qquad (i=1,2)\\[1mm]
\PP[\role{r}] = & \pfix{Y} \procin{\role{w_1}}{x}{\procin{\role{w_2}}{y}{\cond{g(x,y)}{\procsel{\role{m}}{\labname{cont}}\, \poc{Y}}{\procsel{\role{m}}{\labname{stop}}\, \inact}}}\\[1mm]
\PP[\role{m}] = & \pfix{X} \procsel{\role{w_1}}{\labname{map}}\,
\procout{\role{w_1}}{\pv{\mathtt{rand}()}}{\procsel{\role{w_2}}{\labname{map}}\,
  \procout{\role{w_2}}{\pv{\mathtt{rand}()}}{\procbra{\role{r}}{\labname{cont} \gc
      \poc{X},\; \labname{crash} \gc \inact,\; \labname{stop} \gc
      \procsel{\role{w_1}}{\labname{stop}}\,
      \procsel{\role{w_2}}{\labname{stop}}\, \inact}}}
\end{array}
$
}  
\\[1mm]
To read this example, it is enough to know that $\pfix{X}\PP$
denotes guarded recursion, $\procbra{\role{m}}{\ldots}$ branches on a
label received from~$\role{m}$, and
$\procsel{\role{q}}{\lab}\,\PP$ is the dual selection of label
$\lab$ at~$\role{q}$; inputs, outputs, and termination are written
$\procin{\role{m}}{x}{\PP}$, $\procout{\role{r}}{\e}{\PP}$, and
$\inact$.
Thus the reducer receives both worker results and chooses between
\labname{cont} and \labname{stop}, while the manager dispatches work
to the workers and follows the reducer's verdict.
Together these form the session
$\M = \proctag{\role{m}}{\PP[\role{m}]} \pc \proctag{\role{w_1}}{\PP[\role{w_1}]}
\pc \proctag{\role{w_2}}{\PP[\role{w_2}]} \pc \proctag{\role{r}}{\PP[\role{r}]}$.

A \emph{local type} abstracts a process to its communication
structure, replacing concrete values with base types: for instance,
the worker's receive $\role{m}\,\poc{?(}\pv{x}\poc{)}$ of a float
becomes input type $\tin[]{\role{m}}{\tfloat}$.
From $\M$, we can infer 
typing context $\ctx = \set{\ptag{\role{m}}{\T[\role{m}]}, \ptag{\role{w_1}}{\T[\role{w_1}]},
\ptag{\role{w_2}}{\T[\role{w_2}]}, \ptag{\role{r}}{\T[\role{r}]}}$
where:\\[1mm]
\centerline{
\small
$\begin{array}{r@{~~}l}
\T[\role{w_i}] =& \tfix{\ty} \tbra{\role{m}}{\labname{map} \gc
  \tin{\role{m}}{\tfloat} \tout{\role{r}}{\tfloat} \ty,\;
  \labname{stop} \gc \tend}
\qquad (i=1,2)
\\[2pt]
\T[\role{r}] = &
\tfix{\ty} \tin{\role{w_1}}{\tfloat}
\tin{\role{w_2}}{\tfloat}
\tsel{\role{m}}{\labname{cont} \gc \ty,\;
  \labname{stop} \gc \tend}
\\[2pt]
\T[\role{m}] = & \tfix{\ty} \tsel{\role{w_1}}{\labname{map} \gc \tout{\role{w_1}}{\tfloat} \tsel{\role{w_2}}{\labname{map} \gc \tout{\role{w_2}}{\tfloat}{}\\
& \quad \quad \quad \quad \quad \tbra{\role{r}}{\labname{cont} \gc \ty,\; \labname{crash} \gc \tend,\; \labname{stop} \gc \tsel{\role{w_1}}{\labname{stop} \gc \tsel{\role{w_2}}{\labname{stop} \gc \tend}}}}}
\end{array}$}\\[1mm]
$\tsel{}$/$\tbra{}$ denote the selection/branching types,
and we can check $\ctx$ is safe and live. 

The extra \labname{crash} branch is exactly where subsumption enters.
The reducer type $\T[\role{r}]$ can select only \labname{cont} or
\labname{stop}, so the intended global protocol need not mention
\labname{crash}. Let $\T[\role{m},\text{pro}]$ be the type obtained from
$\T[\role{m}]$ by deleting the \labname{crash} branch. Then
$\T[\role{m}] \subt \T[\role{m},\text{pro}]$, since a branching type
with more offered labels is a subtype. Writing
\[
\ctx[\text{pro}] =
\set{\ptag{\role{m}}{\T[\role{m},\text{pro}]},\,
\ptag{\role{w_1}}{\T[\role{w_1}]},\,
\ptag{\role{w_2}}{\T[\role{w_2}]},\,
\ptag{\role{r}}{\T[\role{r}]}}
\]
the original session $\M$ is typable by $\ctx[\text{pro}]$ via
subsumption.

\emph{Global types} represent  
the entire choreography from a bird's-eye view. 
We now build \emph{global type} $\G[\text{map}]$
whose projection is 
$\ctx[\text{pro}]$:\\[1mm]
\centerline{
$\small\begin{aligned}
\G[\text{map}] = \toc{\sfmu} \ty .\;&
\role{m}\gar\role{w_1}\;\goc{\Big\{}\labname{map} \gc
\Gvt{\role{m}}{\role{w_1}}{\tfloat}
\Gvt{\role{w_1}}{\role{r}}{\tfloat} {} \\
&
\role{m}\gar\role{w_2}\;\goc{\Big\{}\labname{map} \gc
\Gvt{\role{m}}{\role{w_2}}{\tfloat}
\Gvt{\role{w_2}}{\role{r}}{\tfloat} {} \\
&\quad
\role{r}\gar\role{m}\;\goc{\big\{}\labname{cont} \gc \ty,\;
\labname{stop} \gc {}
\; \GvtPairs{\role{m}}{\role{w_1}}{\labname{stop} \gc
\GvtPairs{\role{m}}{\role{w_2}}{\labname{stop} \gc \gend}}\goc{\big\}}\goc{\Big\}}\goc{\Big\}}
\end{aligned}$}\\[1mm]
where
$\role{p} \gar \role{q}\; \goc{\langle} \B \goc{\rangle}$ denotes
a message with type $\B$ from participant $\role{p}$ to participant $\role{q}$, 
and $\role{p} \gar \role{q}\; \goc{\{}\labname{l_i} \gc \ldots\goc{\}}$ a labelled choice.

Recall our aim is to show $\M$ is typable by the top-down system.
For this running example, the projected context is slightly smaller
than the inferred one: it omits the manager's unused
\labname{crash} branch. This is precisely why we need subsumption in the
top-down system. The global type captures only the actual
\labname{cont}/\labname{stop} choreography, while the process typing is
still discharged because $\ctx \subt \ctx[\text{pro}]$
(Figure~\ref{fig:topdown}).

\myparagraph{Summary.} 
Figure~\ref{fig:methodologies} gives the overall picture. 
The \emph{top-down} approach (Figure~\ref{fig:topdown}) starts with a global type $\G$. 
Then the projection generates local types $\T[1\,\text{pro}], 
\ldots, \T[n\,\text{pro}]$ for each participant, 
guaranteed to be safe and live.
We type processes $\PP[i]$ against  
$\T[i]$ where $\T[i] \, \subt \, \T[i\,\,\text{pro}]$ using the
subsumption rule.  
Then 
$\M=\proctag{\pp[1]}{\PP[1]}\,|\, \cdots \,|\, \proctag{\pp[n]}{\PP[n]}$ 
will be live (\emph{correctness by construction}).

In the \emph{bottom-up} approach (Figure~\ref{fig:bottomup}),
 we start with processes 
$\PP[1], \ldots, \PP[n]$ 
and infer their principal local types 
$\T[1\,\text{min}], \ldots, \T[n\,\text{min}]$ (by the type inference 
algorithm in \cite{thien-nobuko-popl-25}). 
We then check whether 
a set of local types satisfies safety and liveness.

The diagram in Figure~\ref{fig:topdown3} 
outlines our \emph{inference} algorithm, which can
construct a global type $\G[\text{inf}]$ from any safe-and-live
context.
Applying it to this running example, the inference starts from
$\ctx = \set{\ptag{\role{m}}{\T[\role{m}]},
\ptag{\role{w_1}}{\T[\role{w_1}]},
\ptag{\role{w_2}}{\T[\role{w_2}]},
\ptag{\role{r}}{\T[\role{r}]}}$.
It first follows the deterministic prefix
$\role{m}\gar\role{w_1}$, $\role{w_1}\gar\role{r}$,
$\role{m}\gar\role{w_2}$, $\role{w_2}\gar\role{r}$.
Note that some of these interactions involve disjoint participant
pairs (e.g.\ $\role{w_1}\gar\role{r}$ and $\role{m}\gar\role{w_2}$),
so the synthesis must choose an order for them.
As we shall see in \SEC{sec:semantics}, the global type semantics
allow independent interactions to commute, so this choice does
not affect the set of traces; the algorithm uses a fair schedule
to ensure every enabled pair is eventually selected.
Then it detects the choice $\role{r}\gar\role{m}$ between
\labname{cont} and \labname{stop}. The \labname{cont} branch
returns to the initial context; the \labname{stop} branch has $\role{m}$
select \labname{stop} for $\role{w_1}$ and $\role{w_2}$ and
terminates.  When the initial context is revisited, synthesis binds it
with recursion variable $\ty$, yielding $\G[\text{inf}]=\G[\text{map}]$.

If $\T[i]$ is a type of $\PP[i]$ as in 
Figure~\ref{fig:topdown}, then  
$\G[\text{inf}] \subseteq \G$ (i.e., $\G$ has at least all traces of $\G[\text{inf}]$). We call
$\G[\text{inf}]$ a \emph{principal global type} of the
typing context $\ctx=\set{\AT{\pp[1]}{\T[1]}, \ldots, \AT{\pp[n]}{\T[n]}}$. 

Figure~\ref{fig:bottomup4} shows
that type $\T[i\,\text{min}]$ inferred
from $\PP[i]$ is the minimum (principal) type of $\PP[i]$ \cite[Sec.~5.2]{thien-nobuko-popl-25}.
Hence a global type obtained by our inference 
is the \emph{principal (minimum) global type} which can type $\M$.
Since $\ctx$ is the principal typing context for $\M$,
$\G[\text{inf}]=\G[\text{map}]$ is a principal type of $\M$. 

To illustrate the role of subtyping in more detail, 
consider replacing the reducer with a variant
$\PPi[\role{r}]$ that stops after one round with its
principal type $\Ti[\role{r}\,\text{min}]$: 
\[
\small 
\PPi[\role{r}] =
\procin{\role{w_1}}{x_1}{\procin{\role{w_2}}{y_1}{\procsel{\role{m}}{\labname{stop}}\,
    \inact}}
\qquad\text{with}\quad 
\Ti[\role{r}\,\text{min}]
 = 
\tin{\role{w_1}}{\tfloat}
\tin{\role{w_2}}{\tfloat}
\tsel{\role{m}}{\labname{stop} \gc \tend}
\]
Then we have 
$\Ti[\role{r}\,\text{min}] \subt \T[\role{r}]$.
Therefore $\PPi[\role{r}]$ is typable by the original
context~$\ctx$ and by $\G$.
However, because $\PPi[\role{r}]$ has strictly fewer behaviours
than $\PP[\role{r}]$, the session $\Mi=
\proctag{\role{m}}{\PP[\role{m}]} \pc \proctag{\role{w_1}}{\PP[\role{w_1}]}
\pc \proctag{\role{w_2}}{\PP[\role{w_2}]} \pc \proctag{\role{r}}{\PPi[\role{r}]}$ has fewer traces,
yielding a smaller principal global type $\G[\text{min}]$ such that
($\subseteq$ here denotes trace inclusion; see Definition~\ref{def:traces})
$\G[\text{min}] \subseteq \G[\text{inf}] \subseteq \G$:\\[1mm]
\centerline{
$\small\begin{aligned}
\G[\text{min}] = \;&
\role{m}\gar\role{w_1}\;\goc{\Big\{}\labname{map} \gc
\Gvt{\role{m}}{\role{w_1}}{\tfloat}{}
\role{m}\gar\role{w_2}\;\goc{\Big\{}\labname{map} \gc
\Gvt{\role{m}}{\role{w_2}}{\tfloat} {}\\
&\quad
\Gvt{\role{w_1}}{\role{r}}{\tfloat} \;
\Gvt{\role{w_2}}{\role{r}}{\tfloat}
\role{r}\gar\role{m}\;\goc{\big\{}
\labname{stop} \gc {}
\GvtPairs{\role{m}}{\role{w_1}}{\labname{stop} \gc
\GvtPairs{\role{m}}{\role{w_2}}{\labname{stop} \gc \gend}}\goc{\big\}}\goc{\Big\}}\goc{\Big\}}
\end{aligned}$}

\smallskip\noindent
The remainder of the paper formalises this development.
We introduce the session calculus and its properties
(\SEC{sec:calculus}), then the type language and projection
(\SEC{sec:types}), followed by the operational semantics of types
(\SEC{sec:semantics}).
The core technical contribution is the synthesis algorithm and its
correctness proof (\SEC{sec:building}), which together yield the
completeness result (\SEC{sec:typing}).

\section{Multiparty Session Calculus}
\label{sec:calculus}
This section introduces the multiparty synchronous session calculus
from \cite{synchronous-subtyping}, which suffices for our purposes.

\myparagraph{Syntax.}
\emph{Expressions} ($\e,\ei,\ldots$) are variables ($\pv{x},\pv{y},\pv{z},\ldots$), a value
$\val$ (either a natural $\valnat$, an integer $\valint$, a float $\pv{f}$, or a boolean
$\true/\false$), or are built by operators such as $\pv{\neg}$, $\pv{\sqrt{~}}$, $\pv{\oplus}$,
$\pv{\vee}$, $\pv{+}$, $\pv{-}$, $\pv{*}$, $\pv{/}$, or $\pv{\le}$.
The nullary $\pv{\mathtt{rand}()}$ returns a random float in $[0,1]$.
The operator $\pv{\oplus}$ models non-determinism: $\e[1] \mathbin{\pv{\oplus}} \e[2]$ may
yield either $\e[1]$ or $\e[2]$.
\[
  \e \;::=\; \true \SEP \false \SEP \valnat \SEP \valint \SEP \pv{f} \SEP \pv{x}
  \SEP \pv{\mathtt{rand}()}
  \SEP \e \pv{\vee} \e \SEP \pv{\neg} \e \SEP \pv{\sqrt{\e}} \SEP \e \pv{+} \e \SEP \e \pv{-} \e \SEP \e \pv{*} \e \SEP \e \pv{/} \e \SEP \e \mathbin{\pv{\oplus}} \e \SEP \e \pv{\le} \e
\]
\emph{Processes} ($\PP,\PQ,\ldots$) and \emph{multiparty sessions}
($\M,\Mi,\ldots$) are:
\[
\begin{array}{r@{\;}l@{\quad}l}
\PP ::= & \inact
  \SEP \procout{\pp}{\e}{\PP}
  \SEP \procin{\pp}{x}{\PP}
  \SEP \procsel{\pp}{\lab}\,\PP  
  \SEP \procbrasub{\pp}{\,\lab[i]:\PP[i]\,}{i\in I}
\SEP  \cond{\e}{\PP}{\PQ}
  \SEP \pfix{X}\PP
  \SEP \poc{X}
  &
\\[2pt]
\M ::= & \proctag{\pp}{\PP} \SEP \M \pc \Mi
\end{array}
\]
$\pp,\pq,\pr,\role{s},\ldots \in \partset = \set{\pp[0], \ldots, \pp[n-1]}$ are \emph{participants};
$\lab,\labi,\ldots \in \labset$ are \emph{labels}.
Nil $\inact$ denotes termination; 
output $\procout{\pp}{\e}{\PQ}$ sends the value of $\e$ to $\pp$;
input $\procin{\pp}{x}{\PQ}$ receives from $\pp$;
selection $\procsel{\pp}{\lab}\,\PQ$ selects $\lab$ at $\pp$ (internal choice);
branching $\procbrasub{\pp}{\lab[i]:\PP[i]}{i\in I}$ ($\size{I}\geq 1$) accepts any $\lab[i]$ from $\pp$ with
continuation $\PP[i]$ (external choice).
Conditionals $\cond{\e}{\PP}{\PQ}$ and recursion $\pfix{X}\PP$ are standard.
Recursion is \emph{guarded} (e.g.\ $\pfix{X}\poc{X}$ is invalid, but
$\pfix{X}\procout{\pp}{\e}{\poc{X}}$ is valid).
A process $\PP$ playing role $\pp$ is written $\proctag{\pp}{\PP}$.
We write
$\prod_{i\in I}\proctag{\pp[i]}{\PP[i]}$ for the composition of processes into a multiparty session.

\myparagraph{Reductions.}
The value $\val$ of $\e$ (notation $\eval{\e}{\val}$) is computed by
a standard big-step semantics. The full rules are given in
\appref{app:expr-eval}.
The internal choice $\e[1] \mathbin{\pv{\oplus}} \e[2]$ evaluates to either branch (see \cite{synchronous-subtyping}, Table~1).
The \emph{reduction rules of multiparty sessions} are in
Figure~\ref{fig:reduction_sessions}.

\begin{figure}
  \begin{center}
\small
$\begin{array}{c}
\begin{array}{@{}rll}
\rulename{r-comm} &
\proctag{\pp}{\procin{\pq}{x}{\PP}} \pc \proctag{\pq}{\procout{\pp}{\e}{\PQ}}  \; \pc\;\M \red
\proctag{\pp}{{\PP}\sub{\val}{\pv{x}}}\; \pc \;\proctag{\pq}{\PQ} \; \pc\;\M
& (\eval{\e}{\val})\\
\rulename{r-bra} &
\proctag{\pp}{\procsel{\pq}{\lab[k]}{\PP}} \pc \proctag{\pq}{\procbrasub{\pp}{\lab[i]: \PP[i]}{i\in I}}
\; \pc\;\M
\red
\proctag{\pp}{\PP} \pc \proctag{\pq}{\PP[k]}
\; \pc\;\M
& (k\in I)\\
\rulename{t-cond}
&{
    \proctag{\pp}{\cond{\e}{\PP}{\PQ}} \; \pc \;  \M \red \proctag{\pp}{\PP}\; \pc \;  \M
   }
& (\eval{\e}{\true}) \\
\rulename{f-cond}
&{
    \proctag{\pp}{\cond{\e}{\PP}{\PQ}} \; \pc \;  \M \red \proctag{\pp}{\PQ}\; \pc \;  \M
   }
& (\eval{\e}{\false}) \\
\rulename{r-str}
&
\M[1]\prestruct \M[2] \quad \M[2]\red \M[3] \quad \M[3] \prestruct \M[4]
 \ \Longrightarrow \
\M[1] \red \M[4]\\
\rulename{v-err}
&
\proctag{\pp}{\cond{\e}{\PP}{\PQ}} \; \pc \; \M \red \error
& (\eval{\e}{\val} \text{ and } \val\not\in \{\true,\false\}) \\
\rulename{c-err}
&
\proctag{\pp}{\procsel{\pq}{\lab[k]}{\PP}} \pc \proctag{\pq}{\procbrasub{\pp}{\lab[i]: \PP[i]}{i\in I}}
\; \pc\;\M
\red \error & (k \notin I)\\
\end{array}
\end{array}
$
\vspace{-3mm}
\end{center}
  \caption{Reductions of sessions}
\vspace{-3mm}  
\label{fig:reduction_sessions}
\end{figure}

Rule \rulename{r-comm} is communication between an output and an input;
\rulename{r-bra} is selection/branching. Rules \rulename{t-cond,f-cond} are standard, and
\rulename{r-str} closes reductions under \emph{structural precongruence}
$\prestruct$, defined by:\\[1mm]
$\small
\proctag{\pp}{\pfix{X}\PP}\pc \M \;\prestruct\; \proctag{\pp}{\PP\sub{\pfix{X}\PP}{\poc{X}}}\pc \M
\qquad
\prod_{i\in I}\proctag{\pp[i]}{\PP[i]} \;\prestruct\;
\prod_{j\in J}\proctag{\pp[j]}{\PP[j]}
\ \text{ ($I$ is a permutation of $J$)}
$\\[1mm]
\rulename{v-err,c-err} define value and session
errors.
$\red^\ast$ denotes the reflexive–transitive closure of $\red$.

\myparagraph{Properties.}
We recall properties of a session $\M$ from \cite[Def.~5.1]{Scalas2019}.

\begin{definition}[Properties]\label{def:properties}
A session $\M$ is:
\begin{enumerate}
  \item\label{mp:safe}
  \emph{communication safe} iff $\M \red^\ast \Mi$ implies $\Mi \not\prestruct \error$.
  \item\label{mp:live}
  \emph{live} iff $\M \red^\ast \Mi = \Pi_{i\in I} \proctag{\pp[i]}{\PP[i]}$ implies:
  \begin{enumerate}
    \item\label{mp:live:in}
    If $\PP[i]=\procin{\pp[j]}{x}{\PPi[i]}$ then $\exists \M[2],\val.\ \Mi \red^\ast \M[2] \pc \proctag{\pp[i]}{\PPi[i]\sub{\val}{\pv{x}}}$.
    \item\label{mp:live:out}
    If $\PP[i]=\procout{\pp[j]}{\e}{\PPi[i]}$ then $\exists \M[2].\ \Mi \red^\ast \M[2] \pc \proctag{\pp[i]}{\PPi[i]}$.
    \item\label{mp:live:sel}
    If $\PP[i]=\procsel{\pp[j]}{\lab}\,\PPi[i]$ then $\exists \M[2].\ \Mi \red^\ast \M[2] \pc \proctag{\pp[i]}{\PPi[i]}$.
    \item\label{mp:live:bra}
    If $\PP[i]=\procbrasub{\pp[j]}{\lab[k]:\PPi[k]}{k\in K}$ then
    $\exists
    \M[2],h\in K.\ \Mi \red^\ast \M[2] \pc \proctag{\pp[i]}{\PPi[h]}$.
\end{enumerate}
\end{enumerate}
\end{definition}

\noindent
Thus $\M$ is \emph{communication safe} if no errors arise,
and \emph{live} if every pending action eventually communicates.
Since these properties are undecidable on processes,
the bottom-up system checks the corresponding properties on \emph{typing
contexts} (\SEC{sec:semantics}), which soundly imply the process-level guarantees.
\begin{remark}[Process liveness]\label{rem:liveness}
Liveness for branching (\Cref{def:properties} (\ref{mp:live:bra}))
requires that \emph{some} branch $h\in K$ is eventually taken. Another branch
may never fire. Consider:\\[1mm]
\centerline{\small
$\begin{array}{rl}
\M[1] \ &=\ \proctag{\pp}{\pfix{X} \procbra{\pq}{\lab[1]: \poc{X},\ \lab[2]:\inact}}
          \pc \proctag{\pq}{\pfix{X} \procsel{\pp}{\lab[1]}\,\poc{X}}
\\
\M[2] \ &=\ \proctag{\pp}{\pfix{X} \procbra{\pq}{\lab[1]: \poc{X},\ \lab[2]:\procout{\pr}{\e}{\inact}}}
          \pc \proctag{\pq}{\pfix{X} \procsel{\pp}{\lab[1]}\,\poc{X}}
          \pc \proctag{\pr}{\procin{\pp}{x}{\inact}}
  \end{array}
$}\\[1mm]  
$\M[1]$ is live (always takes $\lab[1]$), but $\M[2]$ is not live: the $\lab[2]$ branch
requires $\pp$ and $\pr$ to interact, which never happens while $\pp$
and $\pq$ loop on $\lab[1]$.
\end{remark}

\section{Global and Local Types}
\label{sec:types}
This section introduces global and local types and coinductive full-merging projection.
\subsection{Syntax of Global and Local Types and Subtyping}
\label{sec:global-local-types}
\textit{Global types} describe interactions between all participants.

\begin{definition}[Global types]
\label{def:global}
\emph{Base types} ($\B, \B',
\dots$) and
\emph{global types} ($\G, \G', \dots$)
are defined by:\\[1mm]
\centerline{$
\begin{array}{c}
\B ::= \tbool \SEP \tnat \SEP \tint \SEP \tfloat\qquad
\G  ::= \tend \SEP \Gvt{\pp}{\pq}{\B}\Gi \SEP
\GvtPair{\pp}{\pq}{\lab[i]:\G[i]}{i\in I} \SEP
\toc{\sfmu}\ty. \G \SEP \ty
\end{array}
$}
\end{definition}
\noindent
The constructor $\Gvt{\pp}{\pq}{\B}\G$ denotes a payload communication
from $\pp$ to $\pq$, while
$\GvtPair{\pp}{\pq}{\lab[i]: \G[i]}{i \in I}$ denotes a labelled choice
where $\pp$ selects at $\pq$.
Recursion is guarded, and index sets $I$ are finite and nonempty.
The set of \emph{participants} of a global type $\G$ is defined as:
$\pt{\ty} = \pt{\tend} = \emptyset$; 
$\pt{\toc{\sfmu} \ty. \G} = \pt{\G}$; 
$\pt{\Gvt{\pp}{\pq}{\B}\G} = \pt{\G}\cup\{\pp, \pq\}$; 
$\pt{\GvtPair{\pp}{\pq}{\lab[i]: \G[i]}{i \in I}} = \left(\bigcup_{i \in
  I}{\pt{\G[i]}}\right) \cup \{\pp, \pq\}$.
We denote $\ftv{\G}$ as the set of \emph{free type variables} in $\G$ 
and if $\ftv{\G}=\emptyset$, we call $\G$ \emph{closed}. 
We define $\unfold{\G} \;=\;
\unfold{\Gi \sub{\toc{\sfmu} \ty.\Gi}{\ty}}$ if 
$\G = \toc{\sfmu} \ty.\Gi$; otherwise 
$\unfold{\G} \;=\; \G$. 

\emph{Local types} specify individual participant behaviour;
\emph{typing contexts} abstract process behaviour.

\begin{definition}[Local types and typing contexts] 
\label{def:local} \emph{Local types} ($\T, \T', \dots$) and typing
contexts ($\ctx,\ctxi,\dots$) are defined by:\\[1mm] 
\centerline{$
\begin{array}{c}
\T ::=  \tend \SEP \tout \pp{\B} \T \SEP \tin \pp{\B} \T 
         \SEP \tselsub{\pp}{\lab[i]: \T[i]}{i\in I}
         \SEP \tbrasub{\pp}{\lab[i]: \T[i]}{i\in I}
         \SEP \toc{\sfmu} \ty. \T 
         \SEP \ty \qquad 
\ctx ::= \ctx,\ctxi \SEP \ptag{\pp}{\T}         
\end{array}
$}
\end{definition}
Local types are endpoint views of protocols:
$\tout \pp{\B} \T$ and $\tin \pp{\B} \T$ are output and input,
while $\tselsub{\pp}{\lab[i]: \T[i]}{i\in I}$ and
$\tbrasub{\pp}{\lab[i]: \T[i]}{i\in I}$ are the dual
selection/branching behaviours.
Recursion is guarded, index sets are finite and nonempty, and we
often omit trailing $\tend$.
We use $\ftv{\T}$ for the set of \emph{free type variables} in $\T$,
and  $\unfold{\T}$ for unfolding. 
Typing contexts ($\ctx$, $\ctxi$, \dots) are mappings from participants
$\pp[0],\ldots,\pp[n-1]$ to local types, and are nonempty.  

\begin{definition}[Subtyping] The \emph{synchronous subtyping relation} is
 coinductively defined as:
\label{def:coind-subtyping}

\centerline{\small$
\begin{array}{c}
    \cinferruleleft[\rulename{S1}]{ \T \subt \Ti }{ \tout{\pp}{\B}{\T} \subt \tout{\pp}{\B}{\Ti}} \quad
    \cinferruleleft[\rulename{S2}]{ \T \subt \Ti }{ \tin{\pp}{\B}{\T} \subt \tin{\pp}{\B}{\Ti}} \quad
    \cinferruleleft[\rulename{S5}]{ \T \sub{\toc{\sfmu} \ty . \T}{\ty} \subt \Ti }{ \toc{\sfmu} \ty . \T \subt \Ti} \quad
   \cinferruleleft[\rulename{S6}]{ \T \subt \Ti \sub{\toc{\sfmu} \ty
        . \Ti}{\ty} }{ \T \subt \toc{\sfmu} \ty . \Ti}\\[2mm]
\cinferruleleft[\rulename{S3}]{ I \subseteq J \\ \forall i \in I : \T[i] \subt \Ti[i]}{ \tselsub{\pp}{\lab[i] : \T[i]}{i \in I} \subt \tselsub{\pp}{\lab[i] : \Ti[i]}{i \in J}} \quad
    \cinferruleleft[\rulename{S4}]{ J \subseteq I \\ \forall i \in J : \T[i] \subt \Ti[i]}{ \tbrasub{\pp}{\lab[i] : \T[i]}{i \in I} \subt \tbrasub{\pp}{\lab[i] : \Ti[i]}{i \in J}} \quad
    \cinferruleleft[\rulename{S7}]{ }{ \tend \subt \tend } 
\end{array}
$}
\end{definition}
Rule~\rulename{S3} is covariant: subtypes offer \emph{fewer} selections ($I \subseteq J$).
Rule~\rulename{S4} is contravariant: subtypes offer \emph{more} branches ($J \subseteq I$).
We extend $\subt$ to typing contexts as:
$\ctx \subt \ctxi$ if $\dom{\ctx}=\dom{\ctxi}$ and $\forall \pp\in
\dom{\ctx}, \ctx(\pp)\subt \ctxi(\pp)$.  

\subsection{Merging and Projections}
\label{sec:merge-and-projection}
A participant uninvolved in a choice must anticipate different continuations.
The \emph{endpoint projection} (EPP) merges these into a single compatible local type.
We use \emph{coinductive full merge}, which creates a local type that is a
\emph{subtype} of each continuation.

The coinductive full merge ($\comergesym$)
\emph{merges} local types by combining continuations in
the branching and recursively merging continuations on shared labels.

\begin{definition}[Coinductive merge relation]
\label{def:coind-merge}
The coinductive merge relation between a set of local types
  and a local type is defined by the following rules. We use the
  coinductive full merge relation ($\comergesym$), where the index sets
  $I_j$ in rule \rulename{M5} may differ.\\[1mm]
\centerline{
  \small
$
\begin{array}{c}
   \cinferruleleft[\rulename{M1}]{\quad}{\comergefull{\set{\tend}}{\tend}} \quad
    \cinferruleleft[\rulename{M2}]{ \comergefull{\set{\T[j]}_{j \in J}}{\T}}{\comergefull{\set{\tout{\pp}{\B}{\T[j]}}_{j \in J}}{\tout{\pp}{\B}{\T}}}\quad
    \cinferruleleft[\rulename{M3}]{ \comergefull{\set{\T[j]}_{j \in J}}{\T}}{\comergefull{\set{\tin{\pp}{\B}{\T[j]}}_{j \in J}}{\tin{\pp}{\B}{\T}}}\\[3mm]
    \cinferruleleft[\rulename{M4}]{ \forall i \in I :
    \comergefull{\set{\T[i,j]}_{j \in J}}{\T[i]}}{\comergefull{\set{\tselsub{\pp}{\lab[i] : \T[i, j]}{i \in I}}_{j \in J}}{\tselsub{\pp}{\lab[i] : \T[i]}{i \in I}}} \quad
    \cinferruleleft[\rulename{M5}]{ \forall k \in J,\; \forall i \in I_{k} : \comergefull{\set{\T[i,j]}_{j \in J \mid i \in I_{j}}}{\T[i]}}{\comergefull{\set{\tbrasub{\pp}{\lab[i] : \T[i, j]}{i \in I_{j}}}_{j \in J}}{\tbrasub{\pp}{\lab[i] : \T[i]}{i \in {\bigcup}_{j \in J} I_{j}}}}\\[3mm]
   \cinferruleleft[\rulename{M6}]{ \comergefull{S \cup \set{\T \sub{\toc{\sfmu} \ty . \T}{\ty}}}{\Ti}}{ \comergefull{S \cup \set{\toc{\sfmu} \ty . \T}}{\Ti}} \quad
   \cinferruleleft[\rulename{M7}]{ \comergefull{S}{\T \sub{\toc{\sfmu} \ty . \T}{\ty}}}{ \comergefull{S}{\toc{\sfmu} \ty . \T}}
\end{array}$
}
\end{definition}
Rule~\rulename{M4} requires identical labels across selections;
\rulename{M5} allows their union.
Full merge lets uninvolved participants offer additional choices
while preserving observable behaviour.

Using the coinductive merge, we define a \emph{coinductive projection}
from a global type to a local type, parameterised by the projected participant~$\pp$.
To see why both \emph{full} and \emph{coinductive} merging are needed,
consider projecting $\G[\text{map}]$ onto~$\role{w_1}$.
At the choice $\role{r}\gar\role{m}\;\goc{\{}\labname{cont}\gc\ldots,\;\labname{stop}\gc\ldots\goc{\}}$,
$\role{w_1}$ is uninvolved, so rule~\rulename{PR6} projects each branch independently and merges.
The $\labname{cont}$ branch recurses, giving label set $\set{\labname{map},\labname{stop}}$;
the $\labname{stop}$ branch gives only $\set{\labname{stop}}$.
A standard merge requires identical label sets and would reject this.
Full merge (\rulename{M5}) takes the union, and coinductive unfolding (\rulename{M6})
handles the $\mu$-type, yielding $\T[\role{w_1}]$ itself.

\begin{definition}[Coinductive projection]
\label{def:coind-proj}
The coinductive projection from a global type to a local type is defined by the following rules. 
When the merge used in rule \rulename{PR6} is the full merge
($\comergesym$), we obtain the projection relation
\emph{coinductive full-merging projection} ($\cofullproj{p}$), which is the one
used in this paper.
\centerline{
\small
$
\begin{array}{c}
    \cinferruleleft[\rulename{PR1}]{ \G \cofullproj{p} \T}{ \Gvt{\pp}{\pq}{\B}{\G}  \cofullproj{p} \tout{\pq}{\B}{\T}} \quad 
    \cinferruleleft[\rulename{PR2}]{ \G \cofullproj{p} \T}{ \Gvt{\pq}{\pp}{\B}{\G}  \cofullproj{p} \tin{\pq}{\B}{\T}} \quad 
    \cinferruleleft[\rulename{PR3}]{ \G \cofullproj{p} \T \\ \pp \notin \set{\pq, \pr} }{ \Gvt{\pq}{\pr}{\B}{\G}  \cofullproj{p} \T} \\[3mm]  
   \cinferruleleft[\rulename{PR4}]{ \forall i \in I : \G[i] \cofullproj{p} \T[i]}{ \GvtPair{\pp}{\pq}{\lab[i] : \G[i]}{i \in I}  \cofullproj{p} \tselsub{\pq}{\lab[i] : \T[i]}{i \in I}} \quad 
    \cinferruleleft[\rulename{PR5}]{ \forall i \in I : \G[i] \cofullproj{p} \T[i]}{ \GvtPair{\pq}{\pp}{\lab[i] : \G[i]}{i \in I}  \cofullproj{p} \tbrasub{\pq}{\lab[i] : \T[i]}{i \in I}} \\[3mm]  
    \cinferruleleft[\rulename{PR6}]{ \forall i \in I : \G[i] \cofullproj{p} \T[i] \\ \comergefull{\set{\T[i]}_{i \in I}}{\T} \\ \pp \notin \set{\pq, \pr}}{ \GvtPair{\pq}{\pr}{\lab[i] : \G[i]}{i \in I}  \cofullproj{p} \T} \quad
    \cinferruleleft[\rulename{PR9}]{ \G \sub{\toc{\sfmu} \ty . \G}{\ty} \cofullproj{p} \T \\ \pp \in \pt{\G}}{ \toc{\sfmu} \ty . \G \cofullproj{p} \T} \\[3mm] 
    \cinferruleleft[\rulename{PR10}]{ \G \cofullproj{p} \T \sub{\toc{\sfmu} \ty . \T}{\ty}}{ \G \cofullproj{p} \toc{\sfmu} \ty . \T} \quad
    \cinferruleleft[\rulename{PR11}]{ }{ \gend \cofullproj{p} \tend} \quad
   \cinferruleleft[\rulename{PR12}]{ \pp \not \in \pt{\G} }{ \toc{\sfmu} \ty . \G \cofullproj{p} \tend}
\end{array}
$
}
\end{definition}
\noindent
These rules project communications to the sender or receiver, skip
uninvolved participants, merge uninvolved branches, and handle
recursion coinductively; rule~\rulename{PR9} requires $\pp$ to occur in
the recursive body, while rule~\rulename{PR12} projects to $\tend$
when $\pp$ is absent from it.
To ensure projections are live, we need to know which participants are
\emph{unavoidable}, i.e., must eventually be involved
after some bounded number of steps in any fair path.

\begin{definition}[Subterms and unavoidable set]
  \label{def:subterms-unavoidable}
  The \emph{subterms} of a global type $\G$ are the set of (unfolded) global types that appear in $\G$. 
  The \emph{unavoidable set} of $\G$ is the set of participants that must eventually be involved in any (fair) path from $\G$.
  Let $\Gi = \unfold{\G}$. Then $\subterms{\G}$ and $\unavoidable(\G)$ are defined as the least sets satisfying the following:
  \[
  \begin{array}{l|l|l}
    \Gi & \subterms{\G} & \unavoidable(\G) \\
    \hline
    \gend & \set{\gend} & \emptyset \\
    \Gvt{\pp}{\pq}{\B}{\Gii} & \set{\Gi} \cup \subterms{\Gii} & \set{\pp, \pq} \cup \unavoidable(\Gii) \\
    \GvtPair{\pp}{\pq}{\lab[i] : \G[i]}{i \in I} & \set{\Gi} \cup \bigcup_{i \in I} \subterms{\G[i]} & \set{\pp, \pq} \cup \bigcap_{i \in I} \unavoidable(\G[i]) \\
  \end{array}
  \]
\end{definition}
We can now define a global type as \emph{balanced} if no reachable
participant can be avoided.

\begin{definition}[Balanced, Definition~3.3 in \cite{synchronous-subtyping}]
\label{def:balanced-global-types}
A global type $\G$ is \emph{balanced} if, for all $\Gi \in \subterms{\G}$, we have that $\unavoidable(\Gi) = \pt{\Gi}$.
\end{definition}

Without balancedness, the top-down system does not guarantee liveness.
To illustrate, consider a variant of our Monte Carlo example 
in \SEC{sec:overview} where workers send results
back to the manager instead of the reducer, and only on termination
does the manager forward the aggregated result to the reducer
(Figure~\ref{fig:unbalanced-example}). 

\begin{figure}[tb]
\begin{minipage}[c]{0.38\linewidth}
\centering
\begin{tikzpicture}[
  >=Stealth,
  role/.style={draw, rounded corners=3pt, fill=gray!10, minimum width=1cm, minimum height=0.45cm, font=\scriptsize, align=center},
  msg/.style={->, semithick},
  stopmsg/.style={->, semithick, dashed, red!70!black}
]
  \node[role] (m) at (0,2.2) {$\role{m}$};
  \node[role] (w1) at (-2.0,0.35) {$\role{w_1}$};
  \node[role] (w2) at (2.0,0.35) {$\role{w_2}$};
  \node[role] (r) at (0,0.35) {$\role{r}$};

  \draw[msg] (m.210) to[bend right=12] node[left, font=\tiny, pos=0.25] {$\pv{x}$} (w1.70);
  \draw[msg] (m.330) to[bend left=12] node[right, font=\tiny, pos=0.25] {$\pv{x}$} (w2.110);
  \draw[msg] (w1.50) to[bend right=12] node[right, font=\tiny, pos=0.25] {$\pv{4\sqrt{1\!-\!x^2}}$} (m.230);
  \draw[msg] (w2.130) to[bend left=12] node[left, font=\tiny, pos=0.25] {$\pv{4\sqrt{1\!-\!x^2}}$} (m.310);
  \draw[stopmsg] (m) -- node[right, font=\tiny, pos=0.35] {$\pv{\bar{z}}$} (r);
  \draw[stopmsg] (m.west) to[bend right=40] node[above, font=\tiny] {\labname{stop}} (w1.north west);
  \draw[stopmsg] (m.east) to[bend left=40] node[above, font=\tiny] {\labname{stop}} (w2.north east);
\end{tikzpicture}
\end{minipage}%
\hfill
\begin{minipage}[c]{0.59\linewidth}
\footnotesize
\[
\begin{aligned}
\G[\text{unbal}] = \toc{\sfmu} \ty .\;&
\role{m}\gar\role{w_1}\;\goc{\Big\{}\labname{map} \gc
\Gvt{\role{m}}{\role{w_1}}{\tfloat} {} \\
&
\role{m}\gar\role{w_2}\;\goc{\Big\{}\labname{map} \gc
\Gvt{\role{m}}{\role{w_2}}{\tfloat} {} \\
&\quad
\Gvt{\role{w_1}}{\role{m}}{\tfloat} \;
\Gvt{\role{w_2}}{\role{m}}{\tfloat} \\
&\quad
\role{m}\gar\role{w_1}\;\goc{\big\{}
\labname{cont} \gc \GvtPairs{\role{m}}{\role{w_2}}{\labname{cont} \gc \ty},\; {} \\
&\qquad
\labname{stop} \gc
\GvtPairs{\role{m}}{\role{w_2}}{\labname{stop} \gc
\Gvt{\role{m}}{\role{r}}{\tfloat} \gend}\goc{\big\}}\goc{\Big\}}\goc{\Big\}}
\end{aligned}
\]
\end{minipage}
\caption{Unbalanced variant: workers send results to the manager, not the reducer. Dashed arrows indicate messages that only occur on termination.}
\label{fig:unbalanced-example}
\end{figure}
Let $\Gi$ be the subterm at the choice
$\role{m} \gar \role{w_1}\; \goc{\{}\labname{cont}\gc\ldots,\, \labname{stop}\gc\ldots\goc{\}}$.
The reducer $\role{r}$ is reachable from $\Gi$ (via the $\labname{stop}$ branch),
so $\role{r} \in \pt{\Gi}$.
However, the $\labname{cont}$ branch loops back without ever involving $\role{r}$,
so $\role{r} \notin \unavoidable(\Gi)$ (the unavoidable set uses
$\bigcap$ over branches, and $\role{r}$ is absent from the $\labname{cont}$ branch).
Since $\pt{\Gi} \neq \unavoidable(\Gi)$, $\G[\text{unbal}]$ is \emph{not balanced}.
Concretely, the projections of $\G[\text{unbal}]$ can type a session
in which the manager and workers loop indefinitely via the
$\labname{cont}$ branch, never reaching the $\labname{stop}$ branch
that communicates with $\role{r}$, so the reducer starves and liveness fails.
In balanced $\G[\text{map}]$ in \SEC{sec:overview},
workers send to the reducer inside the loop, ensuring every participant
is unavoidable at every subterm.
See also \cite[Example~3.3]{Franco2024}
and \cite[Example 3.9]{HYC2016}
for unbalanced types. 

\section{Labelled Transition Semantics}
\label{sec:semantics}
As a \textit{behavioural} type system, the framework equips global and local types with
operational semantics, allowing them to evolve during the execution of the
session.

\begin{definition}[Local type and typing context semantics]
\label{def:typecontexts}
\label{def:semantics}
We define a transition relation $\transa{\czeta}$
for local types, $\transt{\calpha}$ for one-sided transitions of contexts,
and $\transi{\inter}$ for two-sided synchronous transitions in contexts,
where an observable $\obs$ is either a payload from $\B$ or a label $\lab$,
an action $\czeta$ is a send or receive of an observable,
a tagged action $\calpha = \pp:\czeta$ is accompanied by the performing participant,
and an interaction $\inter = \interaction{\pp}{\pq}{\obs}$ is an action between two participants.
\[
\obs \defeq \B \bnfor \lab
\qquad
\czeta \defeq \qsend{\pp}{\obs} \bnfor \qrecv{\pp}{\obs}
\qquad
\calpha \defeq \pp:\czeta
\qquad
\inter \defeq \interaction{\pp}{\pq}{\obs}
\]
\small
\[
\begin{array}{c}
\inferrule*[left=\rulename{LR1}]{ }{ \tout{\pp}{\B}{\T}  \transa{\qsend{\pp}{\B}} \T} \quad
    \inferrule*[left=\rulename{LR2}]{ }{ \tin{\pp}{\B}{\T}  \transa{\qrecv{\pp}{\B}} \T} \quad
    \inferrule*[left=\rulename{LR3}]{j \in I}{ \tselsub{\pp}{\lab[i] : \T[i]}{i \in I}  \transa{\qsend{\pp}{\lab[j]}} \T[j]} \quad
    \inferrule*[left=\rulename{LR4}]{j \in I}{ \tbrasub{\pp}{\lab[i] : \T[i]}{i \in I}  \transa{\qrecv{\pp}{\lab[j]}} \T[j]} \\[3mm]
    \inferrule*[left=\rulename{LR5}]{\T \sub{\toc{\sfmu} \ty . \T}{\ty} \transa{\czeta} \Ti}{\toc{\sfmu} \ty . \T \transa{\czeta} \Ti} \quad
    \inferrule*[left=\rulename{LTag}]{\T \transa{\czeta} \Ti}{\ctx, \ptag{\pp}{\T} \transt{\pp : \czeta} \ctx, \ptag{\pp}{\Ti}} \quad
    \inferrule*[left=\rulename{LEnv}]{\ctx[1] \transt{\pp : \qsend{\pq}{\obs}} \ctxi[1] \\ \ctx[2] \transt{\pq : \qrecv{\pp}{\obs}} \ctxi[2]}{\ctx[1], \ctx[2] \transi{\interaction{\pp}{\pq}{\obs}} \ctxi[1], \ctxi[2]}
\end{array}
\]
\end{definition}
Rule \rulename{LR1} permits the sending of a payload of sort $\B$ to a
participant $\pp$, and rule \rulename{LR2} is 
its dual for receiving. 
Rule \rulename{LR3} sends label $\lab[j]$ to $\pp$,
continuing according to the continuation $\T[j]$. 
Rule \rulename{LR4} is the dual of \rulename{LR3}.
Rule \rulename{LR5} unfolds a recursive type.
The one-sided transition (rule \rulename{LTag})
tags the action with the participant
that performs it. 
The relation $\transi{\interaction{\pp}{\pq}{\obs}}$ captures
two-sided synchronous reductions in contexts by combining
the one-sided transitions (\rulename{LEnv}).%

For example, if
$\unfold{\T[\role{m}]} = \tout{\role{w_1}}{\tfloat} \, \Ti$, then
the transition is: $\T[\role{m}] \transa{\qsend{\role{w_1}}{\tfloat}} \Ti
$. 

\begin{definition}[Global type semantics]
\label{def:semantics-global}
The labelled transition relation of global types, written
$\transg{\inter}$, is defined by the following rules.  
\small  
\[
\begin{array}{c}
    \inferrule*[left=\rulename{GR1}]{ }{ \Gvt{\pp}{\pq}{\B} \G \;
      \transg{\interaction{\pp}{\pq}{\B}} \; \G} \quad 
    \inferrule*[left=\rulename{GR2}]{ \G \;
      \transg{\interaction{\pr}{\role{s}}{\obs}} \; \Gi \\ \set{\pr,
        \role{s}} \cap \set{\pp, \pq} = \emptyset}{ \Gvt{\pp}{\pq}{\B}
      \G \; \transg{\interaction{\pr}{\role{s}}{\obs}} \; \Gvt{\pp}{\pq}{\B} \Gi } \quad
    \inferrule*[left=\rulename{GR3}]{ \G \sub{\toc{\sfmu} \ty . \G}{\ty} \transg{\inter} \Gi}{\toc{\sfmu} \ty . \G \transg{\inter} \Gi} \\[3mm]
    \inferrule*[left=\rulename{GR4}]{ j \in I}{
      \GvtPair{\pp}{\pq}{\lab[i] : \G[i]}{i \in I} \;
      \transg{\interaction{\pp}{\pq}{\lab[j]}} \; \G[j]} \quad
    \inferrule*[left=\rulename{GR5}]{ \forall i \in I : \G[i]
      \transg{\interaction{\pr}{\role{s}}{\obs}} \Gi[i] \ \ \set{\pr, \role{s}} \cap \set{\pp, \pq} = \emptyset}{ \GvtPair{\pp}{\pq}{\lab[i] : \G[i]}{i \in I} \transg{\interaction{\pr}{\role{s}}{\obs}} \GvtPair{\pp}{\pq}{\lab[i] : \Gi[i]}{i \in I}} \quad
\end{array}
\]
\end{definition}
Rule
\rulename{GR1} exposes the head communication of a message whose
payload is of sort $\B$, while \rulename{GR4} handles selections by
emitting the chosen label.
Recursive protocols evolve by one unfolding step under
\rulename{GR3}, mirroring \rulename{LR5} on the local side.

Rules \rulename{GR2} and \rulename{GR5} are key: they allow
interactions between \emph{disjoint} participant pairs to proceed
inside the continuation of another interaction.
Concretely, if $\set{\pp,\pq} \cap \set{\pr,\role{s}} = \emptyset$,
then $\pp\gar\pq\;\goc{\langle}\B\goc{\rangle};\;
\pr\gar\role{s}\;\goc{\langle}\Bi\goc{\rangle};\;\G$ and
$\pr\gar\role{s}\;\goc{\langle}\Bi\goc{\rangle};\;
\pp\gar\pq\;\goc{\langle}\B\goc{\rangle};\;\G$ have
exactly the same set of traces.
Although global types are written as a sequential chain of
interactions, they therefore naturally represent concurrent
behaviour: any interleaving of independent interactions is
semantically equivalent.
This observation is central to our synthesis algorithm
(\SEC{sec:proof-main-theorem}), which must choose a particular
interleaving when constructing a global type from a set of local
types.

We write $\ctx \transi{}$ to mean $\exists \ctxi, \inter.\ \ctx \transi{\inter} \ctxi$,
and $\ctx \transi{} \ctxi$ to mean $\exists \inter.\ \ctx \transi{\inter} \ctxi$.
Similarly, $\ctx \transi{\pp\pq}$ means $\exists \obs, \ctxi.\ \ctx \transi{\interaction{\pp}{\pq}{\obs}} \ctxi$,
and $\ctx \transi{\pp\pq} \ctxi$ means $\exists \obs.\ \ctx \transi{\interaction{\pp}{\pq}{\obs}} \ctxi$.
We write $\ctx \ntransi{}$ to mean $\neg(\ctx \transi{})$.
The same conventions apply to local types, one-sided, and global type reductions ($\transa{}$, $\transt{}$, and $\transg{}$).

\subsection{Fair and Live Reduction Sequences}
\label{subsec:typingcontext}
Many interesting properties are defined in terms of possible reduction
sequences of contexts and global types.

\begin{definition}[Reduction sequences]
  \label{def:stuck}
  \label{def:labels}
A \emph{reduction sequence} is a finite or infinite sequence of contexts related by synchronous transitions:
$\ctx[0] \transi{\inter[0]} \ctx[1] \transi{\inter[1]} \ctx[2]
  \transi{\inter[2]} \ldots$;
it may be \emph{empty}, consisting of the single context $\ctx[0]$ alone.
We write $(\ctx[i] \transi{\inter[i]} \ctx[i+1])_{i \in I}$ for such a
sequence, where $I$ indexes its \emph{transitions}, and write
$\overline{I}$ for its set of \emph{context indices}, so that
$\overline{I} = I \cup \set{\vert I \vert}$ if the sequence is finite
and $\overline{I} = I$ otherwise.
We write $\mathsf{stuck}(\ctx)$ if
$\ctx \ntransi{}$ and $\mathsf{unstuck}(\ctx)$ if $\ctx \transi{}$. 
\end{definition}
The context is \emph{stuck} if 
it cannot take any step, and \emph{unstuck}
if it can take at least one step. 
We are usually interested in \emph{maximal} reduction sequences
meaning that either the sequence is infinite or the final 
context is stuck.

\begin{definition}[Traces]
\label{def:traces}
The \emph{traces} of a global type $\G$ (resp.\ a context $\ctx$) are the (finite or infinite) sequences of interactions labelling maximal reduction sequences using $\transg{}$ (resp.\ $\transi{}$):
\[
\trace(\G) = \left\{\, \inter[0]\,\inter[1]\,\inter[2]\,\cdots
  \;\middle|\;
  \G = \G[0] \transg{\inter[0]} \G[1] \transg{\inter[1]} \G[2]
  \transg{\inter[2]} \cdots
  \text{ is maximal}
\,\right\}.
\]
\[
\trace(\ctx) = \left\{\, \inter[0]\,\inter[1]\,\inter[2]\,\cdots
  \;\middle|\;
  \ctx = \ctx[0] \transi{\inter[0]} \ctx[1] \transi{\inter[1]} \ctx[2]
  \transi{\inter[2]} \cdots
  \text{ is maximal}
\,\right\}.
\]
We write $\Gi \subseteq \G$ when $\trace(\Gi) \subseteq \trace(\G)$.
\end{definition}

While the term \emph{safety} has many uses in the literature, we adopt a specific definition of a \emph{safe} context. 
A context is \emph{safe} if it cannot reach a state where there is a mismatch between the expectations of the send and receive endpoints.

\begin{definition}[Safe contexts]
    \label{def:safe-contexts}
    A context $\ctx$ is \emph{safe} if, for all contexts $\ctx \transi{}^{*} \ctxi$ reachable from $\ctx$, we have that: 
    if $\ctxi \transt{\pp : \qsend{\pq}{\obs}}$ and $\ctxi \transt{\pq : \qrecv{\pp}{\obsi}}$, then $\ctxi \transi{\interaction{\pp}{\pq}{\obs}}$.
\end{definition}

For example, in a safe context, if $\pp$ is trying to send a $\tbool$ to $\pq$
then it should not be the case that $\pq$ is trying to receive an $\tint$.

We say a maximal reduction sequence is \emph{fair} in the sense that
enabled interactions are always eventually taken.

\begin{definition}[Fair and live reduction sequences]
A maximal reduction sequence $(\ctx[i] \transi{\inter[i]} \ctx[i+1])_{i \in I}$ is  
\emph{fair} when, for all $j \in \overline{I}$,
$\ctx[j] \transi{\pp \pq}$ implies that
$\inter[n] = \interaction{\pp}{\pq}{\obs}$
for some $n \in I$ with $n \ge j$ and some $\obs$.
\end{definition}

For example, we may have a context $\ctx$ which can self-loop in two different ways $\ctx \transi{\interaction{\pp}{\pq}{\B}} \ctx$ and $\ctx \transi{\interaction{\pr}{\role{s}}{\Bi}} \ctx$.
The reduction sequence $\ctx \transi{\interaction{\pp}{\pq}{\B}} \ctx \transi{\interaction{\pp}{\pq}{\B}} \ctx \transi{\interaction{\pp}{\pq}{\B}} \ldots$ is unfair because the interaction between $\pr$ and $\role{s}$ 
is enabled but never taken.

\begin{definition}[Live paths]
\label{def:live-paths}
A maximal reduction sequence $(\ctx[i] \transi{\inter[i]} \ctx[i+1])_{i \in I}$ is
\emph{live} if, for all $i \in \overline{I}$:
\begin{enumerate}
    \item if $\ctx[i] \transt{\pp : \qsend{\pq}{\B}}$, then there exists a $j \in I, j \ge i$ such that $\ctx[j] \transt{\pq : \qrecv{\pp}{\B}}$
    \item if $\ctx[i] \transt{\pq : \qrecv{\pp}{\B}}$, then there exists a $j \in I, j \ge i$ such that $\ctx[j] \transt{\pp : \qsend{\pq}{\B}}$
    \item if $\ctx[i] \transt{\pp : \qsend{\pq}{\lab}}$, then there exist a $j \in I, j \ge i$ and a label $\labi$ such that $\ctx[j] \transt{\pq : \qrecv{\pp}{\labi}}$
    \item if $\ctx[i] \transt{\pq : \qrecv{\pp}{\lab}}$, then there exist a $j \in I, j \ge i$ and a label $\labi$ such that $\ctx[j] \transt{\pp : \qsend{\pq}{\labi}}$
\end{enumerate}
\end{definition}

We disregard starvation caused solely by an unfair scheduler, and
therefore require liveness along every \emph{fair} maximal reduction
sequence.

\begin{definition}[Live contexts]
\label{def:livecontexts}
\label{def:safelivecontexts}
A context $\ctx$ is \emph{live} (written $\live(\ctx)$)
if all fair reduction sequences 
starting from that context are live.
We write $\slive(\ctx)$ if
$\safe(\ctx)$ and
$\live(\ctx)$. 
\end{definition}

\noindent
Note that context liveness is a type-level analogue of the process
liveness of Definition~\ref{def:properties}: rather than reasoning
about all reachable process states, we check a semantic property of
the typing context.
The two notions are linked by soundness
(Theorems~\ref{thm:topdown} and~\ref{thm:bottomup}): if
$\slive(\ctx)$ and $\M$ is typed by~$\ctx$, then $\M$ is live in the
sense of Definition~\ref{def:properties}.

\subsection{Fair Scheduling}
\label{sec:roundrobin}
As established above, interactions between disjoint participant pairs
commute in the global type semantics, so a context with several
simultaneously enabled pairs admits many equivalent interleavings.
To synthesise a global type that faithfully captures \emph{all}
behaviours of the local context, we must choose an interleaving that
is \emph{fair}, so every enabled pair is eventually scheduled.
Combined with the assumption that the context is live, a fair schedule
guarantees that the synthesised global type preserves the semantics of
each participant.
We achieve fairness via a simple deterministic strategy, cycling
through enabled senders in a fixed order, captured by a relation
$\rrtrans{}$.

We define this strategy using two functions:
a set of \emph{enabled senders}, which contains all participants who are
the sender in an enabled communication pair at state $\ctx$;
the \emph{next-active} function chooses the
next participant who is the sender in an enabled communication pair
from $\ctx$.

\begin{definition}[Enabled and next-active functions] \leavevmode
\label{def:enabled}
\label{def:next-active}
Let $n$ be the total number of participants
and assume $\mathsf{unstuck}(\ctx)$.
When $\ctx$ can transition, we define:\\[1mm]
\centerline{
  $\mathsf{enabled}(\ctx) = \{\, \pp \mid \exists \pq.\ \ctx \transi{\pp\pq}
    \,\}$ \qquad $\textsf{next-active}(\ctx, \pp[i]) =
\operatorname*{argmin}_{\pp[j] \in \mathsf{enabled}(\ctx)}
\big((j-i) \bmod n\big)$}\\[1mm]
That is, $\textsf{next-active}(\ctx, \pp[i])$ selects the enabled sender
$\pp[j]$ whose index $j$ minimises $(j-i) \bmod n$, i.e.\ the first
enabled participant at or after $\pp[i]$ in the round-robin order
(including $\pp[i]$ itself, when it is enabled).
We omit the first argument and write $\textsf{next-active}(\pp[i])$
when the context is unambiguous.
\end{definition}

For example, if we are in a state with $\ctx \transi{\pp[0] \pp[1]}$
and $\ctx \transi{\pp[2] \pp[3]}$ then we have $\mathsf{enabled}(\ctx)
= \set{\pp[0],\pp[2]}$ and 
we have $\textsf{next-active}(\pp[0]) = \textsf{next-active}(\pp[3]) = \pp[0]$ and
$\textsf{next-active}(\pp[1]) = \textsf{next-active}(\pp[2]) = \pp[2]$. 
The strategy cycles through participants, prioritising each one in turn.
Hence $\rrtrans{}$ is a relation between pairs of contexts and a prioritised participant.

\begin{definition}[Round-robin]
  We write $(\ctx, \pp[i]) \rrtrans{\inter} (\ctxi, \pp[k])$ to show that the
  state $\ctx$ with participant $\pp[i]$ prioritised transitions to the state
  $\ctxi$ with participant $\pp[k]$ prioritised.
  \[
\pp[j] = \textsf{next-active}(\ctx, \pp[i]),\quad
\inter = \interaction{\pp[j]}{\pq}{\obs},\quad
\ctx \transi{\inter} \ctxi,\quad
k = (j+1)\bmod n.
\]
\end{definition}

\begin{figure}[t]
\centering
\newcommand{\rrhl}[1]{\begingroup\setlength{\fboxsep}{0pt}\colorbox{yellow!45}{\strut$#1$}\endgroup}
\newcommand{\rrhlgray}[1]{\begingroup\setlength{\fboxsep}{0pt}\colorbox{yellow!45}{\strut{\color{gray!70}$#1$}}\endgroup}
\newcommand{\rrpgray}[1]{\begingroup\renewcommand{\rolecolor}{gray!70}\pp[#1]\endgroup}
\newcommand{\rrna}[1]{\begingroup\setlength{\fboxsep}{0.4pt}\fcolorbox{orange!85!black}{white}{\strut$#1$}\endgroup}
\newcommand{\rrnahl}[1]{\begingroup\setlength{\fboxsep}{0.4pt}\fcolorbox{orange!85!black}{yellow!45}{\strut$#1$}\endgroup}
\begin{subfigure}[b]{0.48\textwidth}
\centering
\begin{tikzpicture}[scale=0.78, transform shape,
  rrbox/.style={
    draw,
    rounded corners=3pt,
    inner xsep=1pt,
    inner ysep=1pt,
    align=left,
    font=\tiny
  },
  rrarr/.style={-{Stealth[length=1.9mm,width=1.4mm]}, semithick, violet!75!black},
  rrlabel/.style={font=\tiny, text=violet!75!black}
]
  \node[rrbox] (a) at (0,1.5) {
    \begin{tabular}{@{}c@{}}
      $\mathbf{\circled{1}\;(\ctx[a],\, \pp[0])}$ \\
      \begin{tabular}{@{}l|l@{}}
        \rrnahl{\pp[0]} & \rrhl{\T[0]} \\
        $\rrpgray{1}$ & {\color{gray!70}$\T[1]$} \\
        $\pp[2]$ & {$\T[2]$} \\
        $\rrpgray{3}$ & {\color{gray!70}$\T[3]$}
      \end{tabular}
    \end{tabular}
  };

  \node[rrbox] (b) at (2.5,2.65) {
    \begin{tabular}{@{}c@{}}
      $\mathbf{\circled{2}\;(\ctx[a],\, \pp[1])}$ \\
      \begin{tabular}{@{}l|l@{}}
        $\pp[0]$ & $\T[0]$ \\
        \rrhlgray{\rrpgray{1}} & \rrhlgray{\T[1]} \\
        \rrna{\pp[2]} & {$\T[2]$} \\
        $\rrpgray{3}$ & {\color{gray!70}$\T[3]$}
      \end{tabular}
    \end{tabular}
  };

  \node[rrbox] (c) at (2.5,0.35) {
    \begin{tabular}{@{}c@{}}
      $\mathbf{\circled{3}\;(\ctx[b],\, \pp[1])}$ \\
      \begin{tabular}{@{}l|l@{}}
        $\rrpgray{0}$ & {\color{gray!70}$\tend$} \\
        \rrhlgray{\rrpgray{1}} & \rrhlgray{\tend} \\
        \rrna{\pp[2]} & {$\T[2]$} \\
        $\rrpgray{3}$ & {\color{gray!70}$\T[3]$}
      \end{tabular}
    \end{tabular}
  };

  \node[rrbox] (d) at (5.0,2.65) {
    \begin{tabular}{@{}c@{}}
      $\mathbf{\circled{4}\;(\ctx[c],\, \pp[3])}$ \\
      \begin{tabular}{@{}l|l@{}}
        \rrna{\pp[0]} & {$\T[0]$} \\
        $\rrpgray{1}$ & {\color{gray!70}$\T[1]$} \\
        $\rrpgray{2}$ & {\color{gray!70}$\tend$} \\
        \rrhlgray{\rrpgray{3}} & \rrhlgray{\tend}
      \end{tabular}
    \end{tabular}
  };

  \node[rrbox] (e) at (5.0,0.35) {
    \begin{tabular}{@{}c@{}}
      $\mathbf{\circled{5}\;(\ctx[d],\, \pp[3])}$ \\
      \begin{tabular}{@{}l|l@{}}
        $\rrpgray{0}$ & {\color{gray!70}$\tend$} \\
        $\rrpgray{1}$ & {\color{gray!70}$\tend$} \\
        $\rrpgray{2}$ & {\color{gray!70}$\tend$} \\
        \rrhlgray{\rrpgray{3}} & \rrhlgray{\tend}
      \end{tabular}
    \end{tabular}
  };

  \draw[rrarr] (a.east) -- (b.west) node[midway, above=1pt, rrlabel] {$\lab[1]$};
  \draw[rrarr] (a.east) -- (c.west) node[midway, below=1pt, rrlabel] {$\lab[2]$};
  \draw[rrarr, bend right=25] (b.west) to node[midway, above=1pt, rrlabel] {$\lab[3]$} (a.north east);
  \draw[rrarr] (b.east) -- (d.west) node[midway, above=1pt, rrlabel] {$\lab[4]$};
  \draw[rrarr] (c) edge[loop above, looseness=5, min distance=2mm] node[rrlabel] {$\lab[3]$} (c);
  \draw[rrarr] (d) edge[loop above, looseness=5, min distance=2mm] node[rrlabel] {$\lab[1]$} (d);
  \draw[rrarr] (d.south) -- (e.north) node[midway, right=1pt, rrlabel] {$\lab[2]$};
  \draw[rrarr] (c.east) -- (e.west) node[midway, above=1pt, rrlabel] {$\lab[4]$};

\end{tikzpicture}
\caption{Reduction graph on states $(\ctx,\pp)$,
drawing states with identical transitions once
({\setlength{\fboxsep}{1pt}\colorbox{yellow!45}{\strut prioritised}},
\mbox{\color{gray!70}not enabled},
{\setlength{\fboxsep}{1pt}\fcolorbox{orange!85!black}{white}{\strut next-active sender}}).}
\label{fig:rr-graph}
\end{subfigure}%
\hfill
\begin{subfigure}[b]{0.48\textwidth}
\centering
\begin{minipage}[c]{\linewidth}
\centering
\footnotesize
$\begin{array}{@{}r@{~=~}l@{\quad}r@{~=~}l@{}}
\T[0] & \toc{\sfmu} \ty . \tsel{\pp[1]}{\lab[1]:\ty,\ \lab[2]:\tend}
& \T[2] & \toc{\sfmu} \ty . \tsel{\pp[3]}{\lab[3]:\ty,\ \lab[4]:\tend} \\[2pt]
\T[1] & \toc{\sfmu} \ty . \tbra{\pp[0]}{\lab[1]:\ty,\ \lab[2]:\tend}
& \T[3] & \toc{\sfmu} \ty . \tbra{\pp[2]}{\lab[3]:\ty,\ \lab[4]:\tend}
\end{array}$

\vspace{4pt}
\hrule
\vspace{4pt}

{\scriptsize
$\begin{aligned}
\G ={} & \toc{\sfmu}\ty[1].\; \pp[0]\gar\pp[1]\;
\color{gtypecolor}\left\{\color{black}
\begin{array}{@{}l@{}}
\lab[1] : \toc{\sfmu}\ty[2].\; \pp[2]\gar\pp[3]\;
\goc{\{}\lab[3] : \ty[1],\; \\
\quad \lab[4] : \toc{\sfmu}\ty[3].\; \pp[0]\gar\pp[1]\;
\goc{\{}\lab[1] : \ty[3],\; \lab[2] : \gend\goc{\}}\goc{\big\}} \\
\lab[2] : \toc{\sfmu}\ty[4].\; \pp[2]\gar\pp[3]\;
\goc{\{}\lab[3] : \ty[4],\; \lab[4] : \gend\goc{\}}
\end{array}
\color{gtypecolor}\right\}
\end{aligned}$
}
\end{minipage}
\caption{Local types $\ctx$ (top) and synthesised global type $\G$ (bottom).
The two disjoint pairs ($\pp[0]$/$\pp[1]$ and $\pp[2]$/$\pp[3]$) are interleaved by the round-robin schedule.}
\label{fig:rr-types}
\end{subfigure}
\caption{Round-robin reduction graph with inferred global type.
Nodes \circled{1} and \circled{2} share the same context but differ in priority;
this distinction forces both pairs to be scheduled before the state repeats,
so the synthesised global type captures all interleavings.
The binder $\toc{\sfmu}\ty[2]$ in $\G$ is redundant and can be removed in a post-pass.}
\label{fig:round-robin-reduction-diagram}
\end{figure}

The two disjoint pairs in Figure~\ref{fig:round-robin-reduction-diagram} are truly concurrent, so the round-robin schedule must interleave them to explore all reachable states.
\begin{example}[Round-robin execution of Figure~\ref{fig:round-robin-reduction-diagram}]
\label{ex:round-robin-execution}
The context has two disjoint pairs ($\pp[0]$/$\pp[1]$ and $\pp[2]$/$\pp[3]$), and the execution proceeds as follows.
\begin{enumerate}[leftmargin=*, itemsep=1pt, topsep=2pt, label=\circled{\arabic*}]
\item Both pairs are enabled. With priority $\pp[0]$, $\textsf{next-active}$ chooses $\pp[0]$, which sends $\lab[1]$ or $\lab[2]$ to $\pp[1]$.
\item After $\lab[1]$, the context is still $\ctx[a]$ but the priority has advanced to $\pp[1]$. The scheduler skips to $\pp[2]$, which sends $\lab[3]$ back to \circled{1} or $\lab[4]$ to \circled{4}.
\item After $\lab[2]$ from \circled{1}, the pair $\pp[0]$/$\pp[1]$ has terminated. Only $\pp[2]$ remains enabled, sending $\lab[3]$ on the self-loop or $\lab[4]$ to \circled{5}.
\item After $\lab[4]$ from \circled{2}, the pair $\pp[2]$/$\pp[3]$ has terminated. Only $\pp[0]$ remains enabled, sending $\lab[1]$ on the self-loop or $\lab[2]$ to \circled{5}.
\item Terminal state: all local types are $\tend$.
\end{enumerate}
\end{example}

The round-robin schedule guarantees fairness by construction; the synthesis algorithm in the next section relies on this to ensure every reachable interaction is captured.

\section{Building Global Types}
\label{sec:building}
This section shows the main result of this paper: we infer a global
type from any safe-and-live typing context, and show that it is a principal
global type of the context.  

\subsection{Proof Outline}
\label{sec:proofoutline}
The proof proceeds in several steps. 
\begin{description}
\item[Synthesis definition and well-definedness.]
We define a synthesis function
$\textsf{synth}(\Sigma, \pp[i], \ctx)$ that constructs a global type
from a safe-and-live context by following its operational semantics
using fair scheduling. The environment~$\Sigma$ binds revisited
states to recursion variables
(Definition~\ref{def:synth-env}).
We show termination of this procedure
(Lemma~\ref{lemma:synth-terminates-for-slive-contexts})
via a finiteness argument on the reachable state space
(Lemmas~\ref{lemma:finite-reachable-local-states}--\ref{lemma:finite-branching-context-transitions}).

\item[Structural properties of the synthesised type.]
We prove that the synthesised type faithfully reflects the
participants in the context:
every active participant appears in the result
(Lemma~\ref{lemma:nonterminated-participant-appears}),
but not if its local type is $\tend$
(Lemma~\ref{lemma:terminated-participant-not-appears}),
and the synthesised type is balanced
(Lemma~\ref{lemma:synthesised-balanced}).

\item[Backward closure and association]
We define a composite relation $\cofullprojsub{p}$
(Definition~\ref{def:composition-projection-supertype}) that
packages coinductive projection and subtyping into a single
coinductive relation.
We show that the candidate global type we synthesise is related
by this composite relation to the local context we started with
(Lemma~\ref{lemma:backward-closure-synth-candidate}),
using a substitution lemma
(Lemma~\ref{lemma:substitution-for-synth}), which resolves the
environment bindings up to unfolding equivalence, and
preservation of safety and liveness under reduction
(Lemma~\ref{lemma:preservation-reduction}).

\item[Splitting the composite relation]
We show the composite relation can be decomposed into
separate coinductive projection and subtyping derivations
(Lemma~\ref{lemma:composite-implies-association}),
using an auxiliary result that a common widening bound yields a merge
(Lemma~\ref{lemma:widening-bound-yields-merge}), and using
inversion lemmas for subtyping
(Lemmas~\ref{lemma:unfold-compatible-with-subtyping}--\ref{lemma:inversion-subtyping}).
These results establish that 
the synthesised global type is associated with the context
(Theorem~\ref{theorem:existence-associated-global-type-for-safe-live-contexts}).

\item[Principal global types]
We show that association implies trace inclusion
(Lemma~\ref{lemma:association-implies-traces}) and that the
synthesised type has the same traces as the context
(Lemma~\ref{lemma:trace-equivalence-synth}).
Combining these with the existence theorem gives the
principality result: the synthesised type is the smallest
global type associated with the context
(Theorem~\ref{theorem:synthesised-type-is-principal}).
\end{description}

\subsection{Inferring a Global Type from a Local Context}
\label{sec:proof-main-theorem}
We define a synthesis function that infers global types from safe-and-live contexts.
It linearises parallel behaviour via fair scheduling,
tracking visited states to synthesise recursive types with $\mu$-binders.

\begin{definition}[Synthesis environment and function]
  \label{def:synth-env}
  \label{def:synth-rules}
A \emph{synthesis environment} $\Sigma$ is a finite partial map from
context-participant pairs to type variables: 
$\Sigma \;::=\; \varnothing \;\SEP\; \Sigma,\, (\ctx, \pp[j]) \mapsto
\ty$.  
We write $\Sigma(\ctx, \pp[j]) = \ty$ when $(\ctx, \pp[j]) \in
\dom{\Sigma}$. Then we define
the \emph{synthesis function}, 
$\textsf{synth}(\Sigma, \pp[i], \ctx)$, by the table below,
where $\pp[j] = \textsf{next-active}(\ctx, \pp[i])$: 
\begin{enumerate}
\item[\textbf{(1)}] if $\textsf{stuck}(\ctx)$, then
$\textsf{synth}(\Sigma, \pp[i], \ctx) = \gend$;
\item[\textbf{(2)}] if $\Sigma(\ctx, \pp[j]) = \ty$, then
$\textsf{synth}(\Sigma, \pp[i], \ctx)= \ty$;
\item[\textbf{(3)}] otherwise, let $\pq$ be the unique participant such that
$\ctx \transi{\pp[j]\pq}$, and let $\ty$ be a fresh type variable.
We case-split on the form of the enabled action:
\begin{enumerate}
\item[\textbf{(a)}] if $\unfold{\ctx(\pp[j])} = \tout{\pq}{\B}{\T'}$
  (value send), then
  $\textsf{synth}(\Sigma, \pp[i], \ctx) =
    \toc{\sfmu} \ty .\; \Gvt{\pp[j]}{\pq}{\B}{\G'}$
  where $\ctx \transi{\interaction{\pp[j]}{\pq}{\B}} \ctxi$ and
  $\G' = \textsf{synth}\!\left(\Sigma \cup \set{(\ctx, \pp[j]) \mapsto
    \ty},\; \pp[(j{+}1)\bmod n],\; \ctxi\right)$;
\item[\textbf{(b)}] if $\unfold{\ctx(\pp[j])} = \tselsub{\pq}{\lab[i] : \T[i]}{i \in I}$
  (label select), then
  $\textsf{synth}(\Sigma, \pp[i], \ctx) =
    \toc{\sfmu} \ty .\; \pp[j] \gar \pq\; \goc{\{}\, \lab : \G[\lab]
    \,\goc{\}}_{\lab \in L}$
  where
  $L = \{\, \lab \mid \ctx \transi{\interaction{\pp[j]}{\pq}{\lab}} \ctx[\lab]
  \,\}$ and
  $\G[\lab] = \textsf{synth}\!\left(\Sigma \cup \set{(\ctx, \pp[j]) \mapsto
    \ty},\; \pp[(j{+}1)\bmod n],\; \ctx[\lab]\right)$ for each $\lab \in L$.
\end{enumerate} 
\end{enumerate}  
\end{definition}

\begin{utheorem}[Complexity of Synthesis]
For a context $\ctx = \set{\ptag{\pp[i]}{\T[\pp[i]]}}_{i \in I}$ with
$n = \vert I \vert$, computing the graph representation of
$\textsf{synth}(\varnothing, \pp[i], \ctx)$, with one vertex per
reachable context-participant pair, takes time
$O(n^2 b \vert \ctx \vert^n)$, where $b$ bounds the number of labels
in any selection type of $\ctx$.
\end{utheorem}
\begin{proof}
The graph is computed by a standard reachability traversal that
maintains a global visited map over context-participant pairs, unlike
the path-local environment $\Sigma$ of
Definition~\ref{def:synth-rules} and of Algorithm~\ref{alg:synth} in
\SEC{sec:implementation}, so each reachable pair is expanded at most
once and later arrivals add edges to the existing vertex.
The number of reachable contexts is bounded by
$\prod_{i \in I} \vert \T[\pp[i]] \vert \le \vert \ctx \vert^n$, so
there are at most $n \vert \ctx \vert^n$ reachable
context-participant pairs, each with at most $b$ outgoing edges.
Processing one such pair requires only $O(n)$ overhead to select the
next active participant, determine its partner, and collect the
enabled observables, plus $O(n)$ per edge to build the successor pair
and query the visited map.
Hence the total running time is
$O(n \vert \ctx \vert^n \cdot nb) = O(n^2 b \vert \ctx \vert^n)$.
\end{proof}

\noindent
Unravelling the graph into a syntactic global type may add an
exponential factor, since $\toc{\sfmu}$-syntax cannot share subterms
across the branches of a choice; the Binary Counter family in
\SEC{sec:implementation} exhibits this growth.

\begin{example}[Synthesising Monte Carlo Global Type]
\label{ex:synthesis}
We illustrate synthesis on the running example from \SEC{sec:overview}.
Starting from the initial context
$\ctx = \set{\ptag{\role{m}}{\T[\role{m}]},\,
\ptag{\role{w_1}}{\T[\role{w_1}]},\,
\ptag{\role{w_2}}{\T[\role{w_2}]},\,
\ptag{\role{r}}{\T[\role{r}]}}$
with participant ordering
$\pp[0]{=}\role{m}$, $\pp[1]{=}\role{w_1}$, $\pp[2]{=}\role{w_2}$, $\pp[3]{=}\role{r}$,
we compute $\textsf{synth}(\varnothing, \role{m}, \ctx)$.

At $\ctx$, $\role{m}$ is the sole enabled sender, so synthesis applies
Definition~\ref{def:synth-rules}(3b), binding $(\ctx, \role{m}) \mapsto \ty$.
It then follows six deterministic interactions:
\begin{enumerate}[label=\textbf{(\arabic*)}, leftmargin=*, itemsep=1pt, topsep=2pt]
\item $\role{m}{\gar}\role{w_1}\;\goc{\{}\labname{map}\goc{\}}$. Only $\role{m}$ is enabled; selects $\labname{map}$ for $\role{w_1}$.
\item $\role{m}{\gar}\role{w_1}\;\goc{\langle}\tfloat\goc{\rangle}$. Sends the payload value to $\role{w_1}$.
\item $\role{w_1}{\gar}\role{r}\;\goc{\langle}\tfloat\goc{\rangle}$. Priority advances; $\role{w_1}$ forwards to $\role{r}$.
\item $\role{m}{\gar}\role{w_2}\;\goc{\{}\labname{map}\goc{\}}$. Round-robin returns to $\role{m}$; selects $\labname{map}$ for $\role{w_2}$.
\item $\role{m}{\gar}\role{w_2}\;\goc{\langle}\tfloat\goc{\rangle}$. Sends the payload value to $\role{w_2}$.
\item $\role{w_2}{\gar}\role{r}\;\goc{\langle}\tfloat\goc{\rangle}$. Priority advances; $\role{w_2}$ forwards to $\role{r}$.
\end{enumerate}
Let $\ctx[6]$ be the context after these six steps:
{\footnotesize\[
\ctx[6] = \bigl\{\,
\role{m} \gc \tbra{\role{r}}{\labname{cont} \gc \T[\role{m}],\;
\labname{crash} \gc \tend,\;
\labname{stop} \gc
\tsel{\role{w_1}}{\labname{stop} \gc
\tsel{\role{w_2}}{\labname{stop} \gc \tend}}},\;
\role{r} \gc \tsel{\role{m}}{\labname{cont} \gc \T[\role{r}],\;
\labname{stop} \gc \tend},\;
\role{w_1} \gc \T[\role{w_1}],\;
\role{w_2} \gc \T[\role{w_2}]
\,\bigr\}
\]}
In $\ctx[6]$, the reducer chooses $\labname{cont}$ or $\labname{stop}$.
The $\labname{cont}$ branch restores $\ctx$ with priority $\role{m}$,
so rule~(2) reuses~$\ty$.
The $\labname{stop}$ branch terminates both workers and reaches a stuck context,
so rule~(1) returns $\gend$.
The result is $\G[\text{inf}] = \G[\text{map}]$ from \SEC{sec:overview}.
\end{example}

We first show that synthesis is a well-defined
construction on safe-and-live inputs. Termination of the 
synthesis procedure follows directly from several standard finiteness 
properties (Lemmas~\ref{lemma:finite-reachable-local-states}--\ref{lemma:finite-branching-context-transitions},
see \appref{app:building-proofs}).

\begin{restatable}[Termination of Synthesis]{lemma}{synthterminatesforslivecontexts}
  \label{lemma:synth-terminates-for-slive-contexts}
  For any context $\ctx$ and any priority $\pp[i]$,
  the computation $\textsf{synth}(\varnothing, \pp[i], \ctx)$ terminates.
\end{restatable}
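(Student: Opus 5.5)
The plan is to view $\textsf{synth}(\varnothing, \pp[i], \ctx)$ as a recursive unfolding of a tree whose nodes are calls $\textsf{synth}(\Sigma, \pp[k], \ctxi)$, and to show that this call tree is finite. Finiteness will come from König's lemma: the tree is finitely branching, and every branch has bounded length.

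First, finite branching. Each call either returns immediately (rules (1) and (2)) or makes recursive calls in rule (3). Under rule (3a) there is exactly one successor. Under rule (3b) there are $\vert L\vert$ successors, and $L$ is finite because it is contained in the label set of the finite selection $\unfold{\ctx(\pp[j])}$, which is finite by syntax. This is the content of Lemma~\ref{lemma:finite-branching-context-transitions}. Two further points need only brief remarks. The function $\textsf{next-active}$ is well defined whenever $\ctx$ is unstuck. The "unique $\pq$" in rule (3) is single-valued: the sender's unfolded type has one head prefix naming a single partner, so no safety or liveness assumption is needed for determinism. If one prefers to avoid relying on uniqueness, it is enough to note that $\pq$ ranges over finitely many participants.

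Second, bounded depth. By Lemma~\ref{lemma:finite-reachable-local-states}, each local type has finitely many states reachable under $\transa{}$ (up to unfolding, its syntactic subterms with recursion substituted). Hence the contexts reachable from $\ctx$ under $\transi{}$ form a finite set $R$, of size at most $\prod_{i}\vert\T[\pp[i]]\vert$. The pairs $(\ctxi,\pp[k])$ with $\ctxi\in R$ therefore form a finite set of size $N\le n\vert R\vert$.

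Along any branch of the call tree, every rule-(3) step adds the pair $(\ctxi,\pp[j])$ to $\Sigma$. This happens only when that pair is not already in $\dom{\Sigma}$, since otherwise rule (2) would have fired. So $\vert\dom{\Sigma}\vert$ strictly increases along a branch and is bounded by $N$, which means a branch contains at most $N$ rule-(3) steps before rule (1) or (2) ends it. Finite branching together with bounded depth gives a finite call tree, so the computation terminates.

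The main obstacle is the finiteness of reachable local states. Because of $\sfmu$-unfolding, the states are not literally subterms of the original type; they are subterms with the recursive binders substituted back in, and these must be identified up to syntactic equality. The standard argument is that every $\transa{}$-successor of $\unfold{\T}$ has the form $\Ti\sub{\ldots}{}$ for a subterm $\Ti$ of $\T$ under a fixed substitution determined by its enclosing binders. This works because recursion is guarded, so unfolding terminates. One also needs $\Sigma$ to compare contexts syntactically (or up to this canonical form), so that revisits are actually detected; I would fix this convention explicitly.
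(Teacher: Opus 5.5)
Your proposal is correct and is essentially the paper's argument. The paper uses the same three ingredients: the finite set $S$ of reachable context--participant pairs (Lemma~\ref{lemma:finite-reachable-context-priority-space}), the fact that each rule-(3) call adds a fresh pair of $S$ to $\Sigma$, and finite branching (Lemma~\ref{lemma:finite-branching-context-transitions}). It packages them as well-founded induction on $m(\Sigma)=|S\setminus\mathrm{dom}(\Sigma)|$ rather than as a finitely branching call tree of bounded depth; the two phrasings are interchangeable, and bounded depth plus finite branching already gives finiteness without König's lemma.
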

\begin{proof}[Proof (full details in \appref{app:building-proofs})]
Let $S = \set{(\ctxi,\pp) \mid \ctx \transi{}^{*}\ctxi,\; \pp \in \dom{\ctx}}$,
which is finite by
Lemma~\ref{lemma:finite-reachable-context-priority-space}.
Each recursive call of $\textsf{synth}$ adds a fresh pair
$(\ctxi,\pp[j])$ to~$\Sigma$, so the measure
$m(\Sigma) = |S \setminus \dom{\Sigma}|$ strictly decreases.
Since the non-recursive cases (stuck context or revisited state)
terminate immediately and each context has only finitely many
successors
(Lemma~\ref{lemma:finite-branching-context-transitions}),
the computation terminates.
\end{proof}

Distinct runs of $\textsf{synth}$ fold the same cyclic behaviour at
different entry points, so their results agree only up to
unfolding. We make this precise with the following equivalence.

\begin{definition}[Unfolding Equivalence]
\label{def:unfolding-equivalence}
\emph{Unfolding equivalence} $\unfeq$ is the largest relation on
closed global types such that $\G \unfeq \Gi$ implies one of:
(1)~$\unfold{\G} = \unfold{\Gi} = \gend$;
(2)~$\unfold{\G} = \Gvt{\pp}{\pq}{\B}{\G[1]}$ and
$\unfold{\Gi} = \Gvt{\pp}{\pq}{\B}{\Gii[1]}$ with
$\G[1] \unfeq \Gii[1]$;
(3)~$\unfold{\G} = \GvtPair{\pp}{\pq}{\lab[i] : \G[i]}{i \in I}$ and
$\unfold{\Gi} = \GvtPair{\pp}{\pq}{\lab[i] : \Gii[i]}{i \in I}$ with
$\G[i] \unfeq \Gii[i]$ for all $i \in I$.
\end{definition}

\begin{restatable}[Substitution for Synth]{lemma}{substitutionforsynth}
\label{lemma:substitution-for-synth}
\proofapp{\appref{app:building-proofs}}
Let $\slive(\ctx)$ and $\slive(\ctxi)$, let $\pp[j]$ be an enabled
sender of $\ctxi$, and let $\ty$ be a type variable. Then for every
priority $\pp[i]$,
  \[
  \textsf{synth}\!\left(\{\,(\ctxi, \pp[j]) \mapsto \ty\,\}, \pp[i], \ctx\right)\sub{\textsf{synth}(\varnothing, \pp[j], \ctxi)}{\ty}
  \unfeq
  \textsf{synth}(\varnothing, \pp[i], \ctx).
  \]
\end{restatable}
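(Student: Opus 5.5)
The plan is to prove a more general statement by coinduction, and to obtain the lemma as the special case with a one-binding environment. Two preliminary observations come first.

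\emph{Priority irrelevance.} Inspecting Definition~\ref{def:synth-rules}, the result of $\textsf{synth}(\Sigma,\pp[i],\ctx)$ depends on $\pp[i]$ only through $\pp[j]=\textsf{next-active}(\ctx,\pp[i])$. Hence $\textsf{synth}(\Sigma,\pp[i],\ctx)=\textsf{synth}(\Sigma,\pp[j],\ctx)$ whenever $\ctx$ is unstuck.

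\emph{Well-definedness along reductions.} By safety and liveness, and by their preservation under reduction (Lemma~\ref{lemma:preservation-reduction}), every reachable context has a unique partner $\pq$ for each enabled sender. So case~(3) is well defined at every state visited by any of the synthesis runs involved.

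Now call an environment $\Sigma$ \emph{closable} by a substitution $\sigma$ (defined on all variables in $\mathrm{ran}(\Sigma)$, with closed images) if, for every binding $\Sigma(\ctx[0],\pp[k])=\ty[0]$, we have $\sigma(\ty[0])\unfeq\textsf{synth}(\varnothing,\pp[k],\ctx[0])$. The general claim is:
\[
\text{for all such } (\Sigma,\sigma), \text{ all priorities } \pp, \text{ and all safe-live } \ctx \text{ reachable from the initial context:}\quad
\textsf{synth}(\Sigma,\pp,\ctx)\,\sigma \;\unfeq\; \textsf{synth}(\varnothing,\pp,\ctx).
\]
The lemma follows by taking $\Sigma=\{(\ctxi,\pp[j])\mapsto\ty\}$ and $\sigma=[\textsf{synth}(\varnothing,\pp[j],\ctxi)/\ty]$. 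This pair is closable trivially, by reflexivity of $\unfeq$.

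To prove the claim, take as candidate relation $\mathcal R$ the set of all pairs of this shape, closed under $\unfeq$ on either side. I will show that $\mathcal R$ satisfies the three clauses of Definition~\ref{def:unfolding-equivalence}, proceeding by case analysis on the left-hand synthesis at state $\ctx$ with next-active $\pp[k]$.
\begin{enumerate}
\item If $\ctx$ is stuck, both sides are $\gend$.
\item If $\Sigma(\ctx,\pp[k])=\ty[0]$, the left side is $\sigma(\ty[0])\unfeq\textsf{synth}(\varnothing,\pp[k],\ctx)$ by closability. By priority irrelevance this equals the right side.
\item Otherwise both runs execute case~(3) at the same state. The same sender $\pp[k]$, the same unique partner $\pq$ and the same label set $L$ (or payload $\B$) are used, because these depend only on $\ctx$. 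After unfolding the outer $\toc{\sfmu}$ binders, the two sides expose the same head interaction. The left continuation is $\textsf{synth}(\Sigma',\pp[(k+1)\bmod n],\ctx[\lab])\,\sigma'$, where $\Sigma'=\Sigma\cup\{(\ctx,\pp[k])\mapsto\ty'\}$ and $\sigma'=\sigma\cup[\,\mathsf{L}/\ty'\,]$, with $\mathsf{L}$ the left-hand $\toc{\sfmu}$-term itself. The right continuation is $\textsf{synth}(\{(\ctx,\pp[k])\mapsto\ty''\},\ldots)[\mathsf{R}/\ty'']$, with $\mathsf{R}$ the right-hand term. The latter is $\unfeq$-related to $\textsf{synth}(\varnothing,\pp[(k+1)\bmod n],\ctx[\lab])$ by an instance of the claim itself.
\end{enumerate}
For case~3 I work up to transitivity of $\unfeq$, which I prove first by a standard coinductive argument on the definition. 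It then suffices to show that $(\Sigma',\sigma')$ is closable. For the new binding, this requires $\mathsf{L}\unfeq\textsf{synth}(\varnothing,\pp[k],\ctx)$, which is precisely the pair currently being established.

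The main obstacle is exactly this circularity in case~3. Closability of the extended environment presupposes the equivalence we are proving, so a naive coinduction is unsound. I would handle it by defining $\mathcal R$ with closability weakened to $\sigma(\ty[0])\mathrel{\mathcal R}\textsf{synth}(\varnothing,\pp[k],\ctx[0])$, that is, $\mathcal R$-relatedness rather than $\unfeq$. Under this weakening the new binding is certified by the very pair under consideration, which lies in $\mathcal R$ by construction. The remaining work is a check that each clause-step of $\unfeq$ either stays inside $\mathcal R$ or passes through a variable lookup that returns a pair already in $\mathcal R$. This is a coinduction up to transitivity and up to unfolding.

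Two further checks are needed. First, since all variables are fresh, $\sigma'$ does not capture or clash with variables of $\sigma$. Second, termination (Lemma~\ref{lemma:synth-terminates-for-slive-contexts}) guarantees that all terms involved are finite and closed after substitution, so that $\unfeq$, which is defined on closed types, applies.
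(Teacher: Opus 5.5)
Your overall shape matches the paper: you generalise to an environment closed by a substitution, coinduct on the clauses of $\unfeq$ with a case split on $\textsf{synth}$, and map the new variable to the left-hand $\toc{\sfmu}$-term $\mathsf L$ itself. You also locate the circularity correctly. However, your repair of it does not work as stated.

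\textbf{The self-referential $\mathcal R$.} Weakening closability to $\sigma(\ty[0])\mathrel{\mathcal R}\textsf{synth}(\varnothing,\pp[k],\ctx[0])$ makes $\mathcal R$ refer to itself, so you must say which fixed point you mean.
\begin{itemize}
\item If $\mathcal R$ is the largest relation satisfying the condition (the natural reading inside a coinduction), it contains self-certifying garbage. For a safe, live, unstuck $\ctx$ with enabled sender $\pp[k]$, take $\Sigma=\{(\ctx,\pp[k])\mapsto\ty[0]\}$ and $\sigma(\ty[0])=\gend$. Then $\textsf{synth}(\Sigma,\pp[k],\ctx)\sigma=\gend$, and the pair $(\gend,\textsf{synth}(\varnothing,\pp[k],\ctx))$ is its own certificate. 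Its right component unfolds to a communication, so $\mathcal R\not\subseteq\unfeq$.
\item If $\mathcal R$ is the least such relation, the argument can be completed, but only by induction on the stage at which the certificate was derived. That induction must show that in case~(2) the term $\sigma(\ty[0])$ eventually has a representation falling under case~(3). Your remark that a lookup ``returns a pair already in $\mathcal R$'' does not supply this. In case~(2) the returned pair is literally the pair whose clause you are verifying, and checking that clause requires knowing the head of $\sigma(\ty[0])$.
\end{itemize}

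\textbf{The right continuation in case~(3).} This continuation is built with a singleton environment binding $(\ctx,\pp[k])$ and closed by $\mathsf R$, so it is not of your shape. Relating it to $\textsf{synth}(\varnothing,\ldots)$ ``by an instance of the claim itself'' is coinduction up to $\mathcal R\circ\mathcal R^{-1}$, not merely up to transitivity of $\unfeq$, since you do not yet know that this continuation is $\unfeq$-related to anything. That up-to technique is sound here, because the functional defining $\unfeq$ is deterministic and commutes with converse and composition. It still needs its own argument.

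\textbf{The missing idea: syntactic resolvents.} Make the resolving substitution syntactic and inductively generated, so that no certificate is needed. The paper defines $\mathrm{Res}(\Sigma)$ as follows:
\begin{itemize}
\item the empty substitution is in $\mathrm{Res}(\varnothing)$;
\item if $\theta\in\mathrm{Res}(\Sigma)$ and $(\ctxi,\pp[a])\mapsto\ty$ is a fresh binding, then $\theta$ extended by $\ty\mapsto\textsf{synth}(\Sigma,\pp,\ctxi)\theta$, with $\textsf{next-active}(\ctxi,\pp)=\pp[a]$, is in $\mathrm{Res}$ of the extended environment.
\end{itemize}
The image is exactly your $\mathsf L$, as a literal term rather than something merely equivalent to a synthesis.

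The candidate relation is symmetric: all pairs $(\textsf{synth}(\Sigma,\pp[i],\ctx)\theta,\ \textsf{synth}(\Sigma',\pp[i],\ctx)\theta')$ with $\theta\in\mathrm{Res}(\Sigma)$ and $\theta'\in\mathrm{Res}(\Sigma')$. The two problem cases then resolve directly:
\begin{itemize}
\item In case~(2), a lookup is syntactically equal to the resolvent recorded when the binding was created. That resolvent's environment does not bind $(\ctx,\pp[k])$, so one replacement lands in case~(3).
\item In case~(3), the unfolding of $\toc{\sfmu}\ty[s].(C\theta)$ is $C\theta_1$ with $\theta_1\in\mathrm{Res}(\Sigma_1)$ by the very definition of $\mathrm{Res}$. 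Because both sides are treated alike, the continuation pairs lie in the relation directly.
\end{itemize}
This is a plain coinduction with no up-to technique. The lemma is the instance $\Sigma=\{(\ctxi,\pp[j])\mapsto\ty\}$, $\theta=[\ty\mapsto\textsf{synth}(\varnothing,\pp[j],\ctxi)]$, against $\Sigma'=\varnothing$.
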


Termination shows synthesis returns some global type.
Now we use the liveness assumption on the context to demonstrate 
any active participant is in fact eventually incorporated into the synthesised type.

\begin{restatable}[Non-Terminated Participant Appears in Synthesised Type]{lemma}{nonterminatedparticipantappearsinsynthesisedtype}
  \label{lemma:nonterminated-participant-appears}
  For any context $\ctx$ satisfying $\slive(\ctx)$, any priority $\pp[i]$, and any participant $\pp \in \dom{\ctx}$, if $\unfold{\ctx(\pp)} \neq \tend$, then
  $\pp \in \pt{\textsf{synth}(\varnothing, \pp[i], \ctx)}$.
\end{restatable}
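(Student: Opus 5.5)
The plan is to extract, from the round-robin graph of $\ctx$, a finite schedule that ends in an interaction involving $\pp$, and then show that $\textsf{synth}(\varnothing,\pp[i],\ctx)$ reproduces this schedule syntactically. Since $\pt{\cdot}$ collects the endpoints of every interaction node, this puts $\pp$ into $\pt{\textsf{synth}(\varnothing, \pp[i], \ctx)}$.

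\emph{Step 1: fairness of round-robin.} I would first prove that every maximal $\rrtrans{}$-sequence from $(\ctx,\pp[i])$ is, as a $\transi{}$-sequence, fair. Suppose $\ctx[k]\transi{\pq\pr}$. In the synchronous semantics, a step on a pair disjoint from $\set{\pq,\pr}$ leaves the local types of $\pq$ and $\pr$ unchanged, so $\pq$ stays enabled with partner $\pr$. Such a pair can only be disabled by a step involving $\pq$ or $\pr$. Meanwhile the priority strictly advances cyclically past each chosen sender, so within at most $n$ steps $\textsf{next-active}$ reaches $\pq$, unless $\pq$ or $\pr$ has interacted first. The step that resolves the pair is then the $\pq\pr$ interaction itself. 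This is the part I expect to be the main obstacle: the fairness definition asks for a later interaction labelled $\pq\pr$, not just a later step involving $\pq$. So I need the observation that, while $\pq$ remains enabled with $\pr$, it is the only candidate partner of $\pq$ (uniqueness, as in Definition~\ref{def:synth-rules}(3)). Any step involving $\pq$ as sender is then that interaction, and a step involving $\pq$ or $\pr$ otherwise changes the pair. I may need to weaken the goal to ``an interaction involving $\pp$ eventually occurs'', which is all that is used below.

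\emph{Step 2: $\pp$ eventually interacts.} Next I fix a maximal round-robin sequence from $(\ctx,\pp[i])$; one exists by following any enabled choice. By Step 1 it is fair, so by $\live(\ctx)$ it is live. Because $\unfold{\ctx(\pp)}\neq\tend$, $\pp$ has a pending send or receive in $\ctx[0]$, and that action persists until $\pp$ interacts. If the sequence is empty ($\ctx$ stuck), the live-path clause already fails, which is a contradiction. Otherwise the live-path clauses give an index $j$ at which the counterpart offers the complementary action while $\pp$ still offers its own. Safety (Definition~\ref{def:safe-contexts}) turns this into an enabled interaction $\ctx[j]\transi{}$ on that pair, and fairness forces an interaction involving $\pp$ at some finite index $k$.

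\emph{Step 3: from path to syntax.} Among all finite round-robin paths from $(\ctx,\pp[i])$ whose last step involves $\pp$ (Step 2 shows this set is nonempty), I choose a shortest one. On it no context–priority pair repeats, since otherwise cutting the loop yields a shorter path. I then follow $\textsf{synth}$ along this path by induction on the prefix. The environment $\Sigma$ at each call contains exactly the pairs already visited on the current branch, all distinct from the current pair, so rule~(2) never fires. The current context is not stuck, so rule~(1) does not fire either. Rule~(3) therefore emits the interaction node for the chosen step, with the chosen label among $L$ in case~(3b), and recurses into the corresponding subterm. At the final step this produces a node $\pp[j]\gar\pq$ with $\pp\in\set{\pp[j],\pq}$, so $\pp\in\pt{\textsf{synth}(\varnothing,\pp[i],\ctx)}$. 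Termination (Lemma~\ref{lemma:synth-terminates-for-slive-contexts}) guarantees the term is well defined.
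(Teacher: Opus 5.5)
Your proof is correct in substance, but it runs in the opposite direction from the paper's. The paper argues by contradiction on the finite recursion tree of $\textsf{synth}$. Assuming $\pp \notin \pt{\textsf{synth}(\varnothing,\pp[i],\ctx)}$, every root-to-leaf path is a round-robin run on which $\pp$ never moves. A case-(1) leaf yields a finite maximal fair run. A case-(2) leaf yields a cycle, which is pumped into an infinite fair run. Either run contradicts liveness via Lemma~\ref{lemma:fair-live-communication}.

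You instead apply liveness once, to an arbitrary maximal round-robin run, to get a finite witness ending in a $\pp$-interaction. You then take a length-minimal witness so that memoisation never truncates it, and $\textsf{synth}$ literally emits the $\pp$-node. Your route avoids building and justifying a pumped infinite run, and it locates the $\pp$-node explicitly. The paper's route avoids the minimisation and splicing argument.

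Your Step 1 is the paper's Lemma~\ref{lemma:projection-of-round-robin-sequences-is-fair}, and no weakening of the goal is needed. While $\pq$ and $\pr$ are enabled with each other, the head of $\pr$ is also pinned to $\pq$. So the only step involving either of them is the $\pq\pr$ interaction itself.

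One point in Step 3 needs correcting. Rule~(2) does not test the context--priority pair. It tests $\Sigma(\ctxi,\pp[j])$ with $\pp[j] = \textsf{next-active}(\ctxi,\pp[i'])$, and $\Sigma$ stores these (context, selected sender) keys. Take two states with the same context and the same next-active sender but different priorities. They are distinct context--priority pairs, yet the second would hit rule~(2) and cut your branch off. So distinctness of context--priority pairs does not justify ``rule~(2) never fires''.

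The fix reuses your own splice argument. A round-robin step and its successor priority $(j+1) \bmod n$ depend only on the context and the selected sender. So if two source states of the witness share a key, you can take the later state's step from the earlier state and continue from there. This gives a shorter witness, and it also covers the final source state, whose $\pp$-step can then be taken earlier. Hence a shortest witness has pairwise distinct keys at all its source states, and your conclusion that rule~(2) never fires along it holds.
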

\begin{proof}[Proof (full details in \appref{app:building-proofs})]
By contradiction.
If $\pp$ never appears in the synthesised type, then
no interaction along any path in the recursion tree of
$\textsf{synth}$ involves~$\pp$, and so
$\ctx(\pp)$ is unchanged throughout.
Every leaf of the recursion tree is either a stuck context
(case~(1)) or a revisited state (case~(2)).
In case~(1), extending the corresponding reduction to a maximal
sequence yields a fair path on which $\pp$ is never served,
contradicting \emph{liveness}.
In case~(2), the revisited state creates a cycle that can be
repeated forever, yielding an infinite fair path that again never
involves~$\pp$, contradicting liveness.
\end{proof}

Two further lemmas
(Lemmas~\ref{lemma:terminated-participant-not-appears}
and~\ref{lemma:no-transition-safe-live-only-end},
\appref{app:building-proofs})
give the converse: terminated participants do not appear in the
synthesised type, and a stuck safe-and-live context consists entirely of $\tend$ locals.
The second key use of liveness comes in the backward-closure argument
(Lemma~\ref{lemma:backward-closure-synth-candidate}), via
Lemma~\ref{lemma:preservation-reduction}, ensuring successor
contexts remain safe and live.

At this point the remaining gap is how to show that 
projecting our candidate global type recovers the original
context. Because full merge may widen offered branches, ordinary
projection is slightly too rigid for this step.
To establish association, we use a combined
relation $\G \cofullprojsub{p} \T$ packaging coinductive full-merging
projection with branch subtyping from merge
(Definition~\ref{def:composition-projection-supertype}).
The key difference from ordinary projection is at branching:
\rulename{SPR5} allows branching on a superset $I \supseteq J$ of the global labels (reflecting \rulename{M5}),
while \rulename{SPR4} requires an exact label match for selections.
{\small
\[
\begin{array}{c}
  {
            \cinferrule[\rulename{SPR4}]{ \forall i \in I : \G[i] \cofullprojsub{p} \T[i]}{ \GvtPair{\pp}{\pq}{\lab[i] : \G[i]}{i \in I}  \cofullprojsub{p} \tselsub{\pq}{\lab[i] : \T[i]}{i \in I}} \quad
            \cinferrule[\rulename{SPR5}]{ \forall j \in J : \G[j]
              \cofullprojsub{p} \T[j] \\ J \subseteq I}{
              \GvtPair{\pq}{\pp}{\lab[j] : \G[j]}{j \in J}
              \cofullprojsub{p} \tbrasub{\pq}{\lab[i] : \T[i]}{i \in
                I}} \quad
            \cinferrule[\rulename{SPR6}]{ \forall i \in I : \G[i]
              \cofullprojsub{p} \T \\ \pp \notin \set{\pq, \pr}}{
              \GvtPair{\pq}{\pr}{\lab[i] : \G[i]}{i \in I}
              \cofullprojsub{p} \T}
            }
\end{array}
\]
}
We write $\G \cofullprojsubctx \ctx$ when $\dom{\ctx} \supseteq \pt{\G}$ and
$\G \cofullprojsub{p} \ctx(\pp)$ for all $\pp \in \dom{\ctx}$;
for $\pp \notin \pt{\G}$ this forces $\unfold{\ctx(\pp)} = \tend$
(Lemma~\ref{lemma:completed-participants}).

We make use of standard inversion lemmas for subtyping
(Lemmas~\ref{lemma:unfold-compatible-with-subtyping}
and~\ref{lemma:inversion-subtyping}, see \appref{app:building-proofs})
and that safety and liveness are preserved under reduction.

\begin{restatable}[Preservation of Safety and Liveness under Reduction]{lemma}{preservationofsafetyandlivenessunderreduction}
    \label{lemma:preservation-reduction}
    If $\ctx$ is safe and live, and $\ctx \transi{\interaction{\pp}{\pq}{\obs}} \ctxi$, then $\ctxi$ is safe and live.
\end{restatable}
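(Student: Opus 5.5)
The two properties are handled separately; both rest on the observation that every context reachable from $\ctxi$ is also reachable from $\ctx$.

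\textbf{Safety.} Safety (Definition~\ref{def:safe-contexts}) quantifies over all contexts reachable from the given one. If $\ctxi \transi{}^{*} \ctxii$, then $\ctx \transi{\interaction{\pp}{\pq}{\obs}} \ctxi \transi{}^{*} \ctxii$, so $\ctxii$ is reachable from $\ctx$. The safety condition at $\ctxii$ therefore already holds by $\safe(\ctx)$, and $\safe(\ctxi)$ follows.

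\textbf{Liveness.} Take an arbitrary fair maximal sequence $\rho = (\ctx[k] \transi{\inter[k]} \ctx[k+1])_{k\in I}$ with $\ctx[0] = \ctxi$. Prepend the step $\ctx \transi{\interaction{\pp}{\pq}{\obs}} \ctxi$ to obtain a sequence $\rho'$ starting at $\ctx$. I will check three things about $\rho'$.
\begin{enumerate}
\item $\rho'$ is maximal. It is infinite exactly when $\rho$ is, and if finite it ends in the same stuck context.
\item $\rho'$ is fair. For context index $j \geq 1$ of $\rho'$, the fairness obligation is exactly the obligation for index $j-1$ in $\rho$, which holds. The only new obligation is at index $0$, i.e.\ at $\ctx$: for every pair $\pr\role{s}$ with $\ctx \transi{\pr\role{s}}$, some later interaction of $\rho'$ must be between $\pr$ and $\role{s}$. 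If $\{\pr,\role{s}\} = \{\pp,\pq\}$, the prepended step itself witnesses this at index $0 \geq 0$.
\item $\rho'$ is live. This follows from $\live(\ctx)$ once fairness is established. The liveness conditions of Definition~\ref{def:live-paths} at context indices $i \geq 1$ of $\rho'$ are exactly those for $\rho$, with witnesses $j \geq i \geq 1$ lying inside $\rho$. Hence $\rho$ is live.
\end{enumerate}

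\textbf{Main obstacle.} The hard case in fairness at index $0$ is a pair $\pr\role{s}$ enabled at $\ctx$ with $\pr\role{s} \neq \pp\pq$ as a pair, which $\rho$ itself might never schedule. I split on whether the participants are disjoint from $\{\pp,\pq\}$.
\begin{itemize}
\item If $\{\pr,\role{s}\} \cap \{\pp,\pq\} = \emptyset$, the local types of $\pr$ and $\role{s}$ are untouched by the step. So $\ctxi \transi{\pr\role{s}}$ still holds, and fairness of $\rho$ at index $0$ supplies the required later interaction.
\item If they overlap, the obligation at $\ctx$ is not inherited from $\rho$. One repair would be to relax the fairness definition (e.g.\ to ``eventually enabled or eventually taken'') to make the obligation vacuous. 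Instead, I will use the actual definition and argue it directly. In the synchronous setting each participant's unfolded type has a single head action, so $\pr$ is committed to one partner with one direction. This should force the enabled pair at $\ctx$ to share its sender (or receiver) with $\pp\pq$, making it a competing branch of the same choice.
\end{itemize}
I expect the overlapping case to need the most care. In particular, I must check whether a pair involving $\pp$ or $\pq$ but not equal to $\pp\pq$ can be enabled at all, given that each participant has a unique head action. If such a pair can be enabled, I will weaken to the standard reading in which obligations are checked only at indices where the pair remains enabled.
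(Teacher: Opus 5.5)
Your proposal is correct and follows the paper's proof essentially step for step. Safety is obtained by reachability. Liveness is obtained by prepending the step $\ctx \transi{\interaction{\pp}{\pq}{\obs}} \ctxi$ to a fair sequence from $\ctxi$, checking fairness only at the new index $0$ (disjoint pairs being inherited from the tail), and then invoking $\live(\ctx)$.

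The overlapping case you leave open closes exactly via the observation you sketch, which is also the one the paper uses. By Lemma~\ref{lemma:semantics}, $\unfold{\ctx(\pp)}$ is a send directed at $\pq$ and $\unfold{\ctx(\pq)}$ is a receive from $\pp$. Hence the only enabled pair at $\ctx$ meeting $\{\pp,\pq\}$ is the ordered pair $(\pp,\pq)$. The prepended step discharges it, because fairness is per pair and ignores the label. So no weakening of the fairness definition is needed, and such a weakening would not be legitimate anyway.
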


\begin{proof}[Proof (full details in \appref{app:building-proofs})]
Safety is inherited because every state reachable from~$\ctxi$ is
also reachable from~$\ctx$.
Liveness is preserved because any fair reduction from~$\ctxi$
can be extended by prepending the step from~$\ctx$, yielding a
fair reduction from~$\ctx$ which is live by assumption.
\end{proof}

The existence proof proceeds by coinduction. We define a candidate relation
pairing the synthesised global type with each participant's local type,
and show it is backward-closed under~$\cofullprojsub{p}$,
establishing $\G \cofullprojsub{p} \ctx(\pp)$ for every~$\pp$.
For each participant $\pp$:
\[
R_{\pp}
=
\set{
\left(\textsf{synth}(\varnothing, \pp[i], \ctx), \ctx(\pp)\right)
\mid
\slive(\ctx),\ \pp \in \dom{\ctx},\ \pp[i] \in \partset
}.
\]
We then define its closure $\widehat{R}_{\pp}$ inductively by the
following rules, where \rulename{RUnfEq} subsumes closure under left
unfolding, since $\toc{\sfmu}\ty.\G \unfeq
\G\sub{\toc{\sfmu}\ty.\G}{\ty}$ (Lemma~\ref{lemma:unfeq-properties}):
{\small
\[
\begin{aligned}
\text{\rulename{RCand}}\quad &
\frac{(\G,\T)\in R_{\pp}}{(\G,\T)\in \widehat{R}_{\pp}}
\qquad
\text{\rulename{RUnfEq}}\quad &
\frac{(\G,\T)\in \widehat{R}_{\pp} \qquad \G \unfeq \Gi}{(\Gi,\T)\in \widehat{R}_{\pp}}
\qquad
\text{\rulename{RUnfR}}\quad &
\frac{(\G,\toc{\sfmu} \ty . \T)\in \widehat{R}_{\pp}}{(\G,\T \sub{\toc{\sfmu} \ty . \T}{\ty})\in \widehat{R}_{\pp}}
\end{aligned}
\]
}

\begin{restatable}[Backward Closure of Candidate Relation]{lemma}{backwardclosureofsynthcandidate}
  \label{lemma:backward-closure-synth-candidate}
  For any participant $\pp$, the relation $\widehat{R}_{\pp}$ is closed backwards under the rules defining $\cofullprojsub{p}$.
\end{restatable}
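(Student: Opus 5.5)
We must show: for every $(\G,\T)\in\widehat{R}_{\pp}$, there is a rule of $\cofullprojsub{p}$ (the \rulename{PR}-style rules with \rulename{SPR4}--\rulename{SPR6} replacing the choice rules, plus the unfolding rules) whose conclusion is $(\G,\T)$ and whose premises again lie in $\widehat{R}_{\pp}$. The argument goes by induction on the derivation of membership in $\widehat{R}_{\pp}$, which is inductively defined. For the base case \rulename{RCand}, I first normalise both sides.

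\textbf{Normalising the right-hand side.} If $\T=\toc{\sfmu}\ty.\Ti$, I apply \rulename{PR10}. Its premise $(\G,\Ti\sub{\toc{\sfmu}\ty.\Ti}{\ty})$ is in $\widehat{R}_{\pp}$ by \rulename{RUnfR}, so this case is immediate. Hence I may assume $\T$ is not a $\mu$-type.

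\textbf{Normalising the left-hand side.} On the left, if $\G=\toc{\sfmu}\ty.\Gi$, I use \rulename{PR9} when $\pp\in\pt{\Gi}$; the unfolding is related by \rulename{RUnfEq}, since $\G\unfeq$ its unfolding. Otherwise $\pp\notin\pt{\G}$, and I must show $\T=\tend$ in order to use \rulename{PR12}. For this I unfold $\G$ until I reach a communication or $\gend$.

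\textbf{Main case analysis.} Take $(\textsf{synth}(\varnothing,\pp[i],\ctx),\ctx(\pp))$ with $\slive(\ctx)$. I case-split on Definition~\ref{def:synth-rules}; case~(2) cannot occur with $\Sigma=\varnothing$.
\begin{itemize}
\item \emph{Case (1), $\ctx$ stuck.} The result is $\gend$. By Lemma~\ref{lemma:no-transition-safe-live-only-end} every local type is $\tend$ up to unfolding, so \rulename{PR11} applies after the right-unfoldings above.
\item \emph{Case (3).} The result is $\toc{\sfmu}\ty.\,\pp[j]\gar\pq\ldots$, with body $\textsf{synth}(\{(\ctx,\pp[j])\mapsto\ty\},\pp[(j+1)\bmod n],\ctxi)$. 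By Lemma~\ref{lemma:substitution-for-synth}, substituting the whole term for $\ty$ yields something $\unfeq\textsf{synth}(\varnothing,\pp[(j+1)\bmod n],\ctxi)$. So after one \rulename{PR9}/\rulename{RUnfEq} step, each continuation is $\unfeq$ to a genuine candidate on a successor context $\ctxi$, which is safe and live by Lemma~\ref{lemma:preservation-reduction}.
\end{itemize}

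\textbf{Sub-cases of case (3), by the role of $\pp$.}
\begin{itemize}
\item If $\pp=\pp[j]$ sends a value, then $\unfold{\ctx(\pp)}=\tout{\pq}{\B}{\T'}$ and $\ctxi(\pp)=\T'$, so \rulename{PR1} applies.
\item If $\pp=\pp[j]$ selects, the global label set $L$ is all labels enabled in $\ctx$. By safety and the \rulename{LEnv} rule, every label offered by $\pp$ is accepted by $\pq$, so $L$ equals the index set $I$ of the selection, and \rulename{SPR4} applies with exact match.
\item If $\pp=\pq$ receives, the labels of $L$ form a subset of the branching labels of $\unfold{\ctx(\pq)}$. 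This is exactly the inclusion $J\subseteq I$ needed for \rulename{SPR5}, or an exact match for \rulename{PR2} on payloads.
\item If $\pp\notin\set{\pp[j],\pq}$, then $\ctxi(\pp)=\ctx(\pp)$ in every branch, and \rulename{PR3}/\rulename{SPR6} apply with unchanged right-hand side.
\end{itemize}
The remaining sub-case is $\pp\notin\pt{\G}$ with $\G$ a $\mu$-type. Here Lemma~\ref{lemma:nonterminated-participant-appears} (contrapositive) gives $\unfold{\ctx(\pp)}=\tend$, so after right-unfolding \rulename{PR12} applies.

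\textbf{Inductive cases and the main obstacle.} The \rulename{RUnfR} case is handled just as in the base case, since unfolding on the right only changes $\T$ to its one-step unfolding, which is again normalised by \rulename{PR10} or passed to the analysis above. The expected obstacle is \rulename{RUnfEq}. We know $(\G,\T)$ is backward-closed, and $\Gi\unfeq\G$ has the same top-level shape after unfolding, with $\unfeq$-related continuations. I would show that any rule applied to $\G$ can be transported to $\Gi$: strip $\mu$'s via \rulename{PR9} (Lemma~\ref{lemma:unfeq-properties} ensures unfoldings remain $\unfeq$), then reuse the same rule with continuations related by \rulename{RUnfEq}. The delicate point is that $\pt{\cdot}$ must be invariant under $\unfeq$, so that the choice between \rulename{PR9} and \rulename{PR12} agrees for $\G$ and $\Gi$. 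I would prove this invariance by noting that participants are determined by the unfolded tree, which $\unfeq$ preserves.
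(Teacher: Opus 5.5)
Your base-case analysis is correct and matches the paper. A stuck context is handled by Lemma~\ref{lemma:no-transition-safe-live-only-end}. The synthesised binder is handled by \rulename{SPR9}, or, when $\pp$ is absent, by the contrapositive of Lemma~\ref{lemma:nonterminated-participant-appears}. The role split works as you describe, with $L = I$ by safety and $L$ contained in the receiver's branch labels. (The relation is $\cofullprojsub{p}$, so the rules you need are the \rulename{SPR} ones, not \rulename{PR}.)

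The gap is in the inductive cases. Your induction hypothesis only says that the premise pair is justified by \emph{one} rule instance whose premises lie in $\widehat{R}_{\pp}$. That is too weak to transport to the conclusion, as three cases show.
\begin{itemize}
\item In \rulename{RUnfEq}, suppose $\Gi$ is a binder with $\pp \notin \pt{\Gi}$, but $\G$ has a communication head not involving $\pp$. The rule at $(\G,\T)$ is then \rulename{SPR3} or \rulename{SPR6}, which says nothing about $\T$. Yet \rulename{SPR12} at $(\Gi,\T)$ needs $\T = \tend$. Invariance of $\pt{\cdot}$ under $\unfeq$ does not help here, because $\G$ is not a binder.
\item Suppose instead $\G = \toc{\sfmu}\ty.\G[0]$ and $\Gi$ has a head involving $\pp$. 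The rule at $(\G,\T)$ is \rulename{SPR9}. Its premise $(\G[0]\sub{\G}{\ty},\T)$ lies in $\widehat{R}_{\pp}$, but it has no smaller derivation. So ``strip the binder and reuse the rule'' tells you nothing about the shape of $\unfold{\T}$.
\item In \rulename{RUnfR}, take $\G$ with a head involving $\pp$. The only rule applicable at $(\G,\toc{\sfmu}\ty.\Ti)$ is \rulename{SPR10}, and its premise is exactly the pair you are trying to justify. ``Passing it to the analysis above'' is therefore circular unless you still know $\G$ is tied to a synthesis term.
\end{itemize}

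The missing idea is an invariant that replaces induction on derivations. Every element of $\widehat{R}_{\pp}$ is a pair $(\G,\T)$ where:
\begin{itemize}
\item $\G \unfeq \Gi = \textsf{synth}(\varnothing,\pp[i],\ctx)$ for some $\slive(\ctx)$, and
\item $\T$ is obtained from $\ctx(\pp)$ by finitely many one-step unfoldings.
\end{itemize}
This holds because iterated \rulename{RUnfEq} steps collapse by transitivity of $\unfeq$, and \rulename{RUnfR} only touches the right component.

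With this invariant no induction is needed. Case directly on the shape of $\G$, always arguing through the underlying term $\Gi$ and the context $\ctx$.
\begin{itemize}
\item If $\G$ is in communication form, then $\G \unfeq \Gi = \toc{\sfmu}\ty.\G[0] \unfeq \G[0]\sub{\Gi}{\ty}$ exposes the synthesised head. Each continuation is $\unfeq \textsf{synth}(\varnothing,\pp[(j{+}1)\bmod n],\ctx[\lab])$ by Lemma~\ref{lemma:substitution-for-synth}, using $\textsf{synth}(\varnothing,\pp[j],\ctx) = \Gi$. So the premises re-enter $\widehat{R}_{\pp}$ by \rulename{RCand} and \rulename{RUnfEq}.
\item If $\G$ is a binder with $\pp \notin \pt{\G} = \pt{\Gi}$, Lemma~\ref{lemma:nonterminated-participant-appears} gives $\unfold{\ctx(\pp)} = \tend$. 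Hence the non-binder $\T$ is $\tend$.
\item If $\G = \gend$, then $\ctx$ is stuck and Lemma~\ref{lemma:no-transition-safe-live-only-end} applies.
\end{itemize}
This is how the paper organises its proof, and your role-by-role analysis slots into it unchanged.
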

\begin{proof}[Proof (full details in \appref{app:building-proofs})]
We show that for every $(\G,\T) \in \widehat{R}_{\pp}$
there is a rule of $\cofullprojsub{p}$ whose conclusion is
$\G \cofullprojsub{p} \T$ and whose premises all lie
in~$\widehat{R}_{\pp}$.
Each pair traces back to some
$\textsf{synth}(\varnothing, \pp[i], \ctx)$ and $\ctx(\pp)$.
We case-split on the synthesis step that produced~$\G$.
The key case is when synthesis chose an interaction
$\ctx \transi{\interaction{\pp[j]}{\pq}{\obs}} \ctx[\lab]$:
by Lemma~\ref{lemma:preservation-reduction}, each
successor~$\ctx[\lab]$ remains safe and live, so the recursive
sub-calls place the premises back in~$\widehat{R}_{\pp}$,
up to unfolding equivalence
(via the substitution lemma and rule \rulename{RUnfEq}).
We then match the role of~$\pp$ (sender, receiver, or uninvolved)
to the corresponding rule of~$\cofullprojsub{p}$.
See \appref{app:building-proofs} for the full case analysis.
\end{proof}

\subsection{Existence of Well-Formed Global Types and
  Principal Global Type}
\label{sec:principal}
The first main result of the section states that synthesis
produces a global type associated with the source context.
We prove two main theorems:
(1) any safe-and-live context yields a well-formed global type (Theorem~\ref{theorem:existence-associated-global-type-for-safe-live-contexts});
(2) this type is \emph{principal}, having the minimum trace set (Theorem~\ref{theorem:synthesised-type-is-principal}).
We use the association relation \cite{revisited,pischke2025,BHYZ2023,lmcs2025}
between contexts and global types.

\begin{definition}[Association]
\label{def:association}
We say typing context $\ctxi$ is \emph{associated} with global type
$\G$, written $\ctxi \assoc \G$, if $\ctxi$ can be split into two
disjoint (possibly empty) sub-contexts $\ctxi = \ctxact, \ctxterm$
where:
(1)~$\ctxact$ contains subtypes of projections of $\G$:
$\dom{\ctxact} = \pt{\G}$ and there exists $\ctx[0]$ such that
$\dom{\ctx[0]} = \pt{\G}$, $\G \cofullproj{p} \ctx[0](\pp)$ for all
$\pp \in \pt{\G}$, and $\ctxact \subt \ctx[0]$;
(2)~$\ctxterm$ contains only terminated endpoints:
$\unfold{\ctxterm(\pp)} = \tend$ for all $\pp \in \dom{\ctxterm}$.
\end{definition}
\noindent
The sub-context $\ctxterm$ accounts for participants that have
completed their protocol and hence no longer appear in~$\G$.

The composite relation is designed so that pointwise projection directly yields association, linking the candidate global type back to the source context.
\begin{restatable}[Pointwise Composite Relation Implies Association]{lemma}{compositeimpliesassociation}
\label{lemma:composite-implies-association}
\proofapp{\appref{app:building-proofs}}
If $\G \cofullprojsubctx \ctx$, then $\ctx \assoc \G$.
\end{restatable}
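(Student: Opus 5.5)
The plan is to take the obvious split of $\ctx$ and then build the required projections $\ctx[0]$ coinductively. Put $\ctxact = \ctx\restriction\pt{\G}$ and $\ctxterm = \ctx\restriction(\dom{\ctx}\setminus\pt{\G})$. This is well defined because $\G \cofullprojsubctx \ctx$ gives $\dom{\ctx}\supseteq\pt{\G}$. Condition (2) of Definition~\ref{def:association} then follows directly from Lemma~\ref{lemma:completed-participants}: every $\pp\notin\pt{\G}$ has $\unfold{\ctx(\pp)}=\tend$.

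All the work is in condition (1). For each $\pp\in\pt{\G}$ we must find a local type $\T[0]$ with $\G\cofullproj{p}\T[0]$ and $\ctx(\pp)\subt\T[0]$. The natural candidate is obtained from $\ctx(\pp)$ by pruning each branching that \rulename{SPR5} matched against a global choice with $J\subsetneq I$ down to the labels in $J$. The uninvolved-choice case \rulename{SPR6} requires more than pruning, so we package the construction relationally.

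Let $\mathcal{S}_{\pp}$ be the set of pairs $(\G,\T[0])$ such that some $\T$ satisfies $\G\cofullprojsub{p}\T$ and $\T\subt\T[0]$, and $\T[0]$ is a \emph{least widening} of $\T$ relative to $\G$. By this we mean that $\T[0]$ offers on each branching exactly the labels that $\G$ offers at the matching choice, or the union of those labels across the uninvolved branches of $\G$.

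We prove two things simultaneously, by coinduction on this relation. First, $\mathcal{S}_{\pp}$ is backward closed under the rules of $\cofullproj{p}$. Second, the companion relation $\{(\T,\T[0])\}$ is backward closed under the subtyping rules of Definition~\ref{def:coind-subtyping}. The case analysis follows the last rule of $\cofullprojsub{p}$.

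Payload steps (\rulename{PR1}--\rulename{PR3}, with \rulename{S1}/\rulename{S2}) are immediate. For \rulename{SPR4} we keep the selection unchanged and use \rulename{PR4} with \rulename{S3} at $I=J$. For \rulename{SPR5} we prune to $J$, so that \rulename{PR5} applies and \rulename{S4} gives $\tbrasub{\pq}{\lab[i]:\T[i]}{i\in I}\subt\tbrasub{\pq}{\lab[j]:\T[0,j]}{j\in J}$. Recursion on the left is handled by \rulename{PR9}, using $\pp\in\pt{\G}$, and by \rulename{PR12} when $\pp$ has dropped out of the body. Recursion on the right is handled by \rulename{PR10} together with \rulename{S5}/\rulename{S6}. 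Here we invoke the inversion lemmas (Lemmas~\ref{lemma:unfold-compatible-with-subtyping} and~\ref{lemma:inversion-subtyping}) so that we may always work with $\unfold{\T}$. The end case is \rulename{PR11} with \rulename{S7}.

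The main obstacle is \rulename{SPR6}. There every branch satisfies $\G[i]\cofullprojsub{p}\T$ with the \emph{same} $\T$, so coinductively we obtain $(\G[i],\T[0,i])\in\mathcal{S}_{\pp}$ with $\T\subt\T[0,i]$ for each $i$. Rule \rulename{PR6}, however, needs a single $\T[0]$ with $\comergefull{\set{\T[0,i]}_{i\in I}}{\T[0]}$ and still $\T\subt\T[0]$. This is exactly what Lemma~\ref{lemma:widening-bound-yields-merge} supplies: $\T$ is a common widening lower bound of the $\T[0,i]$, so their full merge exists and lies above $\T$.

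What remains to check is that the merged type is again a least widening of $\T$ relative to $\G$, so that the pair re-enters $\mathcal{S}_{\pp}$. The point is that rule \rulename{M5} takes unions of branch labels, which matches the ``union across uninvolved branches'' clause in our definition of $\mathcal{S}_{\pp}$. I expect this bookkeeping, keeping the merge compatible with the pruning chosen at nested \rulename{SPR5} and \rulename{SPR6} nodes under unfoldings, to be the delicate step. If it resists a direct argument, the fallback is to define $\T[0]$ directly as a coinductive function of the derivation of $\G\cofullprojsub{p}\T$ and verify the two closures separately.

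Collecting the $\T[0]$ into $\ctx[0]$ with $\dom{\ctx[0]}=\pt{\G}$ then gives $\ctxact\subt\ctx[0]$ and $\G\cofullproj{p}\ctx[0](\pp)$ for every $\pp\in\pt{\G}$, hence $\ctx\assoc\G$.
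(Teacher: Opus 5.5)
Your outline matches the paper's proof. You use the same split of $\ctx$, with Lemma~\ref{lemma:completed-participants}(1) for the terminated part. You build the witness coinductively by case analysis on the last rule of $\cofullprojsub{p}$, pruning at \rulename{SPR5} and invoking Lemma~\ref{lemma:widening-bound-yields-merge} at \rulename{SPR6}. The gap is in the invariant you carry.

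First, $\mathcal{S}_{\pp}$ records only $\T\subt\T[0]$, but Lemma~\ref{lemma:widening-bound-yields-merge} is stated for $\subtw$, and it genuinely fails for $\subt$. For example, $\tsel{\pq}{\lab[1]:\tend,\ \lab[2]:\tend}$ and $\tsel{\pq}{\lab[1]:\tend,\ \lab[3]:\tend}$ both lie $\subt$-above $\tsel{\pq}{\lab[1]:\tend}$. Yet they have no full merge, because \rulename{M4} demands identical selection label sets. Your construction does in fact never enlarge selections, since at \rulename{SPR4} you copy them, so what you build are widenings. At \rulename{SPR6} you even call $\T$ a ``common widening lower bound''. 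But $\T\subtw\T[0]$ must be part of the coinductive hypothesis for that step to be justified.

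Second, the ``least widening'' clause, combined with the backward-closure framing, makes the argument run in the wrong direction. Backward closure of $\mathcal{S}_{\pp}$ means the following. Take an arbitrary member $(\G,\T[0])$ whose $\G$ has an uninvolved choice head. You must show that this given $\T[0]$ is a full merge of some $\T[0,i]$ with $(\G[i],\T[0,i])\in\mathcal{S}_{\pp}$. Building some merge via Lemma~\ref{lemma:widening-bound-yields-merge} and checking that it re-enters $\mathcal{S}_{\pp}$ does not establish this. It would only suffice if you also proved that least widenings are unique up to unfolding, which is exactly the bookkeeping you leave open.

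None of this is needed. The paper drops minimality altogether and proves a pure existence claim, constructed corecursively: $\G\cofullprojsub{p}\T$ implies there is $\Ti$ with $\G\cofullproj{p}\Ti$ and $\T\subtw\Ti$. At \rulename{SPR6} the hypothesis gives $\G[i]\cofullproj{p}\Ti[i]$ and $\T\subtw\Ti[i]$ for each branch. The lemma then yields $\comergefull{\set{\Ti[i]}_{i\in I}}{\Ti}$ with $\T\subtw\Ti$, and \rulename{PR6} concludes, with nothing further to check about the merged type. At the end, ${\subtw}\subseteq{\subt}$ gives $\ctxact\subt\ctx[0]$. Your fallback, defining $\T[0]$ as a function of the derivation, is close to this, but it too needs the $\subtw$ invariant.
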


\begin{restatable}[Existence of Associated Global Type for Safe, Live Contexts]{theorem}{existenceassociatedglobaltypeforsafelivecontexts}
\label{theorem:existence-associated-global-type-for-safe-live-contexts}
For any context $\ctx$ satisfying $\slive(\ctx)$, and any participant $\pp[i]$, let
$\G = \textsf{synth}(\varnothing, \pp[i], \ctx)$. Then $\ctx \assoc \G$.
\end{restatable}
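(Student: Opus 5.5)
The plan is to derive the theorem by combining the backward-closure result with the splitting lemma. Fix $\ctx$ with $\slive(\ctx)$ and a priority $\pp[i]$, and set $\G = \textsf{synth}(\varnothing, \pp[i], \ctx)$. Lemma~\ref{lemma:synth-terminates-for-slive-contexts} guarantees that $\G$ is a well-defined closed global type. By Lemma~\ref{lemma:composite-implies-association}, the theorem follows once we establish $\G \cofullprojsubctx \ctx$. That requires two things: (i) $\dom{\ctx} \supseteq \pt{\G}$, and (ii) $\G \cofullprojsub{p} \ctx(\pp)$ for every $\pp \in \dom{\ctx}$.

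For (ii), fix $\pp \in \dom{\ctx}$. By construction, the pair $(\G, \ctx(\pp))$ lies in the candidate set $R_{\pp}$, since $\slive(\ctx)$ holds, $\pp \in \dom{\ctx}$ and $\pp[i] \in \partset$. Rule \rulename{RCand} then places it in $\widehat{R}_{\pp}$. Lemma~\ref{lemma:backward-closure-synth-candidate} says $\widehat{R}_{\pp}$ is backward closed under the rules of the coinductive relation $\cofullprojsub{p}$. By the coinduction principle, $\widehat{R}_{\pp}$ is therefore contained in the greatest fixed point $\cofullprojsub{p}$, which gives $\G \cofullprojsub{p} \ctx(\pp)$.

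For (i), the argument is a simple structural induction on synthesis. Every participant in $\pt{\G}$ arises from some interaction $\ctxi \transi{\interaction{\pp[j]}{\pq}{\obs}} \ctxii$, where $\ctxi$ is reachable from $\ctx$. Transitions never change the domain, and rule \rulename{LEnv} only fires between participants that are in the domain. Hence both $\pp[j]$ and $\pq$ belong to $\dom{\ctx}$.

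Participants of $\dom{\ctx} \setminus \pt{\G}$ need no separate treatment here. Lemma~\ref{lemma:completed-participants} forces their types to unfold to $\tend$, and Lemma~\ref{lemma:composite-implies-association} places them in $\ctxterm$. As a consistency check, this agrees with Lemma~\ref{lemma:nonterminated-participant-appears}.

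The substantive work all sits inside the two lemmas being invoked, so the only delicate point here is the coinduction step. We must confirm that the closure $\widehat{R}_{\pp}$, rather than the bare $R_{\pp}$, is the right witness. The unfolding rules \rulename{RUnfEq} and \rulename{RUnfR} are needed because premises generated by the sub-calls of synthesis match the original candidates only up to $\unfeq$ (by Lemma~\ref{lemma:substitution-for-synth}) and up to right unfolding (by rule \rulename{PR10}). Since Lemma~\ref{lemma:backward-closure-synth-candidate} is stated for $\widehat{R}_{\pp}$ itself, this concern is already discharged, and the proof reduces to chaining these results.
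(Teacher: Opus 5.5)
Your proposal is correct and follows the paper's proof essentially step for step: $(\G,\ctx(\pp)) \in R_{\pp} \subseteq \widehat{R}_{\pp}$, backward closure (Lemma~\ref{lemma:backward-closure-synth-candidate}) yields $\G \cofullprojsub{p} \ctx(\pp)$ by coinduction, $\pt{\G} \subseteq \dom{\ctx}$ is read off the synthesis rules, and Lemma~\ref{lemma:composite-implies-association} then gives $\ctx \assoc \G$. One minor slip: the right-unfolding closure for the combined relation corresponds to rule \rulename{SPR10}, not \rulename{PR10}.
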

\begin{proof}
By Lemma~\ref{lemma:synth-terminates-for-slive-contexts},
$\G$ is well-defined. For each $\pp \in \dom{\ctx}$, the pair
$(\G, \ctx(\pp)) \in R_{\pp} \subseteq \widehat{R}_{\pp}$.
By Lemma~\ref{lemma:backward-closure-synth-candidate},
$\widehat{R}_{\pp}$ is backward-closed under the rules of
$\cofullprojsub{p}$, so $\G \cofullprojsub{p} \ctx(\pp)$.
Moreover $\pt{\G} \subseteq \dom{\ctx}$ (immediate from
Definition~\ref{def:synth-rules}); hence
$\G \cofullprojsubctx \ctx$, and
Lemma~\ref{lemma:composite-implies-association} yields
$\ctx \assoc \G$.
\end{proof}

\begin{restatable}[Synthesised Global Type is Balanced]{lemma}{synthesisedbalanced}
\label{lemma:synthesised-balanced}
\proofapp{\appref{app:building-proofs}}
For any context $\ctx$ satisfying $\slive(\ctx)$ and any priority $\pp[i]$, let
$\G = \textsf{synth}(\varnothing, \pp[i], \ctx)$. Then $\G$ is balanced.
\end{restatable}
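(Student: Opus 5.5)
The plan is to characterise the subterms of $\G = \textsf{synth}(\varnothing, \pp[i], \ctx)$ semantically and then apply Lemma~\ref{lemma:nonterminated-participant-appears} and Lemma~\ref{lemma:terminated-participant-not-appears} to each of them.

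\textbf{Subterms up to unfolding.} I will first show that every $\Gi \in \subterms{\G}$ satisfies $\Gi \unfeq \textsf{synth}(\varnothing, \pp[k], \ctxi)$ for some $\ctxi$ reachable from $\ctx$ and some priority $\pp[k]$. The argument is by induction on the position of the subterm in the recursion tree of $\textsf{synth}$. At a node of the tree the result is built from an environment $\Sigma$ whose bindings are all reachable pairs. Applying the substitution lemma (Lemma~\ref{lemma:substitution-for-synth}) repeatedly, once per binding and innermost first, the closed subterm obtained after unfolding is $\unfeq$-equivalent to $\textsf{synth}(\varnothing, \pp[k], \ctxi)$. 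Reaching a variable leaf $\ty$ unfolds back to an ancestor node, which is again of this form. By Lemma~\ref{lemma:preservation-reduction}, each such $\ctxi$ satisfies $\slive(\ctxi)$.

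\textbf{Invariance under $\unfeq$.} Next I will check that $\pt{\cdot}$ and $\unavoidable(\cdot)$ are invariant under $\unfeq$. Since both are defined through $\unfold{\cdot}$ and the clauses of $\unfeq$ match constructors, this follows by a least-fixed-point comparison. It therefore suffices to prove $\unavoidable(\Gi) = \pt{\Gi}$ for $\Gi = \textsf{synth}(\varnothing, \pp[k], \ctxi)$ with $\slive(\ctxi)$.

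\textbf{The inclusion $\unavoidable(\Gi) \subseteq \pt{\Gi}$.} This direction is immediate from the definitions.

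\textbf{The inclusion $\pt{\Gi} \subseteq \unavoidable(\Gi)$.} This is the main obstacle, because $\unavoidable$ is a \emph{least} set and takes intersections over branches. Take $\pp \in \pt{\Gi}$. By Lemma~\ref{lemma:terminated-participant-not-appears}, $\unfold{\ctxi(\pp)} \neq \tend$. I will argue by induction on the length of a bounded prefix, so that $\pp$ enters $\unavoidable$ in finitely many rule applications.

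The key claim is that there is a bound $N$ such that every path of length $N$ through the synthesis tree from $(\ctxi,\pp[k])$ involves $\pp$. Suppose not. By finiteness of the reachable context-priority space (Lemma~\ref{lemma:finite-reachable-context-priority-space}) and finite branching (Lemma~\ref{lemma:finite-branching-context-transitions}), König's lemma then gives an infinite path, or a path ending in a stuck context, that never involves $\pp$. Along such a path $\ctxi(\pp)$ never changes. The path follows the round-robin schedule, so it is fair: a pair that stays enabled is reached within $n$ priority steps. A stuck endpoint is handled by Lemma~\ref{lemma:no-transition-safe-live-only-end}. In either case $\pp$ has a pending action that is never matched, which contradicts liveness of $\ctxi$. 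This repeats the argument of Lemma~\ref{lemma:nonterminated-participant-appears}, now applied uniformly to all branches.

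Given the bound $N$, I conclude by induction on $N$. At a choice node, every branch reaches $\pp$ within $N-1$ steps, so $\pp$ lies in the intersection over branches. At a payload node the continuation does. At a node where $\pp$ is the sender or receiver, $\pp$ enters $\unavoidable$ directly. The delicate point is that subterms which are recursion variables must be treated through their unfoldings, which is why the first step works up to $\unfeq$.
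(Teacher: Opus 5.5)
Your proposal is correct and follows essentially the same route as the paper. Subterms correspond to synthesis calls on reachable, hence safe and live, contexts, and Lemma~\ref{lemma:terminated-participant-not-appears} excludes terminated participants; fairness of round-robin paths together with liveness then places every remaining participant on every path, and finite branching (your K\"onig and bounded-depth argument) connects this path formulation to the least-fixed-point definition of $\unavoidable$. Your reduction of each subterm to $\textsf{synth}(\varnothing,\pp[k],\ctxi)$ up to unfolding equivalence spells out a step the paper only asserts: it is what licenses applying the terminated-participant lemma to the closed, unfolded subterms.
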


In addition to existence, we demonstrate that 
synthesis returns the semantically \emph{smallest} global type
compatible with the context (it does not create any extra behaviours).
This notion is exactly what we call a \emph{principal global type}.

\begin{definition}[Principal Global Type]
\label{def:principal-type}
A global type $\G$ is a \emph{principal global type} for a context
$\ctx$ if $\ctx \assoc \G$ and
for all \emph{balanced} global types $\Gi$ such that $\ctx \assoc \Gi$, we have
$\G \subseteq \Gi$.
That is, $\G$ is the smallest balanced global type associated with $\ctx$.
\end{definition}

\begin{restatable}[Completeness of Association]{lemma}{completenessofassociation}
\label{lemma:completeness-association}
\proofapp{\appref{app:building-proofs}}
If $\ctx \assoc \G$ with $\G$ balanced and $\ctx \transi{\inter} \ctxi$,
then there exists a balanced $\Gi$ such that $\G \transg{\inter} \Gi$
and $\ctxi \assoc \Gi$.
\end{restatable}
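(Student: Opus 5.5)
We will prove the statement by analysing how the context step $\ctx \transi{\inter} \ctxi$, with $\inter = \interaction{\pp}{\pq}{\obs}$, arises and mirroring it in $\G$. Unfold the association: $\ctx = \ctxact, \ctxterm$ with $\ctxact \subt \ctx[0]$ and $\G \cofullproj{r} \ctx[0](\pr)$ for every $\pr \in \pt{\G}$. Since terminated endpoints cannot move, $\pp,\pq \in \dom{\ctxact} = \pt{\G}$. By the inversion lemmas for subtyping (Lemmas~\ref{lemma:unfold-compatible-with-subtyping} and~\ref{lemma:inversion-subtyping}), the one-sided actions of $\ctxact(\pp)$ and $\ctxact(\pq)$ are matched by $\ctx[0](\pp)$ and $\ctx[0](\pq)$. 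For selections this holds because the subtype offers fewer labels, so a selected label is offered by the supertype. For branching it does not hold directly: the label $\lab$ is fixed by the sender, and we must show the receiver's projection also accepts it. We do this by showing $\lab$ is a label of the corresponding global choice.

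The core step is an induction on the depth at which the pair $\pp,\pq$ first occurs in $\G$. This depth is finite because balancedness gives $\unavoidable(\G) = \pt{\G}$, and unavoidability is defined as a least set, hence by a finite derivation. Unfold $\G$; recursion is guarded, so unfolding terminates. If the head is an interaction between $\pp$ and $\pq$, projection rules \rulename{PR1}/\rulename{PR2} or \rulename{PR4}/\rulename{PR5} determine the local heads. For a choice, the sender's projection has exactly the global labels. Because $\ctxact(\pp) \subt \ctx[0](\pp)$ and selection is covariant, the label $\lab$ chosen by $\pp$ is one of the global labels, so \rulename{GR4} fires. If the head involves a disjoint pair $\pr \gar \role{s}$, projection for $\pp$ and $\pq$ passes through \rulename{PR3}/\rulename{PR6}. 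In the \rulename{PR6} case the local types of $\pp,\pq$ are merges of the branch projections. Inverting \rulename{M2}--\rulename{M5} shows each branch projection has the same head action up to label sets. Selections agree exactly by \rulename{M4}; for branching, each branch offers a subset of the merged labels. We then apply the induction hypothesis in every branch, which is justified because $\pp,\pq$ are unavoidable in each branch, and assemble the result with \rulename{GR2}/\rulename{GR5}.

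We expect the \textbf{main obstacle} to be the branching case under \rulename{M5}. The merged receiver may accept labels that some branch continuation does not, so we must show the sender's label is available in every branch. Since selections merge only with identical label sets (\rulename{M4}), the sender's label set is the same across all branches and contains $\lab$. We use the sender side to fix the label and the receiver side only to know that a receive is pending. If this fails we would fall back on safety of $\ctx$, but it should not be needed.

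Finally, associate the successor. Let $\Gi$ be the target of $\G \transg{\inter} \Gi$. Balancedness is preserved because the subterms of $\Gi$ are, up to the rebuilt disjoint prefixes, subterms of $\G$ whose unavoidable sets shrink consistently. For association, define $\ctx[0]'$ by projecting $\Gi$. This is possible because projection commutes with the step: the continuation projections of $\pp,\pq$ are taken, and for the other participants the merges are unchanged, noting \rulename{PR9}/\rulename{PR12} for participants that disappear. Participants leaving $\pt{\G}$ move to $\ctxterm$; their projection via \rulename{PR12} is $\tend$ and their local type is unchanged, hence a subtype of $\tend$ whose unfolding is $\tend$. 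The subtyping $\ctxi_{\mathrm{act}} \subt \ctx[0]'$ then follows from inverting \rulename{S1}--\rulename{S4} on the participants that moved.
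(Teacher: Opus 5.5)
Your forward-simulation argument is essentially the paper's Lemma~\ref{lemma:completeness-combined}. That covers the induction on the unavoidability depth, firing the step inside every branch of a foreign choice via \rulename{GR2}/\rulename{GR5}, and using \rulename{M4} to see that the sender's label set is the same in every branch, so that $\lab$ labels every $(\pp,\pq)$-choice reached.

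\textbf{The gap is in your last paragraph.} Projection does not commute with the step, and the merges of uninvolved participants are \emph{not} unchanged. When the fired interaction is a choice, each $\role{x} \notin \set{\pp,\pq}$ is projected at that node by \rulename{PR6} as a merge over all branches, while the residual retains only the fired branch. Take the balanced
$\G = \GvtPairs{\pp}{\pq}{\lab[1] \gc \GvtPairs{\pp}{\role{x}}{\lab[3] \gc \gend},\; \lab[2] \gc \GvtPairs{\pp}{\role{x}}{\lab[4] \gc \gend}}$.
Its projection on $\role{x}$ is $\tbra{\pp}{\lab[3] \gc \tend,\; \lab[4] \gc \tend}$, but after $\interaction{\pp}{\pq}{\lab[1]}$ it is $\tbra{\pp}{\lab[3] \gc \tend}$. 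So $\ctx[0]'(\role{x}) \neq \ctx[0](\role{x})$. Getting $\ctxi(\role{x}) \subt \ctx[0]'(\role{x})$ then needs the fact that a merge is a $\subtw$-lower bound of its members (Lemma~\ref{lemma:merge-widening-bound}) plus transitivity, not an inversion of \rulename{S1}--\rulename{S4}.

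The more serious problem arises when the fired choice lies below an enclosing foreign choice. Projecting $\Gi$ on $\role{x}$ then requires re-merging the \emph{widened} branch projections. The existence of that merge cannot be read off by inverting \rulename{M2}--\rulename{M5} on the old derivation. You need two further ingredients:
\begin{itemize}
\item a lemma that a family with a common $\subtw$-lower bound has a merge above that bound (Lemma~\ref{lemma:widening-bound-yields-merge});
\item an induction hypothesis strengthened to say that each new projection is a $\subtw$-supertype of the old one (of its continuation, for $\pp$ and $\pq$).
\end{itemize}
An induction hypothesis that merely re-asserts the lemma in each branch is too weak. It yields only some witness, and a mere $\subt$-lower bound does not license \rulename{M4}. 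For example, $\tsel{\pq}{\lab[1] \gc \tend}$ lies below both $\tsel{\pq}{\lab[1] \gc \tend,\; \lab[2] \gc \tend}$ and $\tsel{\pq}{\lab[1] \gc \tend,\; \lab[3] \gc \tend}$, which have no merge.

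The paper sidesteps all of this in three moves. First, it passes to the projection witness $\ctx[0]$ at the outset, so only widening subtyping is ever involved. Second, it does the induction in the combined relation $\cofullprojsub{p}$, whose rule \rulename{SPR6} demands the \emph{same} local type in every branch, so no merge is built during the induction. Third, it recovers a projection of $\Gi$ only at the end via Lemma~\ref{lemma:composite-implies-association}, whose \rulename{SPR6} case is precisely where Lemma~\ref{lemma:widening-bound-yields-merge} enters. $\ctxi \assoc \Gi$ then follows by transitivity of $\subt$.

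Two smaller points also need repair:
\begin{itemize}
\item By balancedness only $\pp$ or $\pq$ can leave $\pt{\G}$, and their local types \emph{do} change. So you must show the residual projection unfolds to $\tend$ and then invert subtyping.
\item Balancedness of a rebuilt foreign choice requires that a participant occurring in one branch residual occurs in all of them. The paper derives this from the residual local type being identical across branches (Lemma~\ref{lemma:completed-participants}(1) and~(3)). Your phrase ``unavoidable sets shrink consistently'' does not establish this.
\end{itemize}
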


\begin{restatable}[Soundness of Association]{lemma}{soundnessofassociation}
\label{lemma:soundness-association}
\proofapp{\appref{app:building-proofs}}
If $\ctx \assoc \G$ with $\G$ balanced and $\G \transg{\inter} \Gi$,
then there exist $\interi$, a balanced $\Gii$, and $\ctxi$ such that
$\G \transg{\interi} \Gii$,
$\ctx \transi{\interi} \ctxi$,
and $\ctxi \assoc \Gii$.
\end{restatable}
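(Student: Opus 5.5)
The plan is to exploit the fact that the statement only asks for \emph{some} interaction $\interi$, not necessarily the given $\inter$. The hypothesis $\G \transg{\inter} \Gi$ serves only to tell us that $\unfold{\G}$ is not $\gend$. So $\unfold{\G}$ is either $\Gvt{\pp}{\pq}{\B}{\G[1]}$ or $\GvtPair{\pp}{\pq}{\lab[i]:\G[i]}{i\in I}$. I will always fire this \emph{head} interaction, via \rulename{GR1}/\rulename{GR4} after \rulename{GR3}, rather than chase the given one through \rulename{GR2}/\rulename{GR5}. The choice of $\interi$ differs from $\inter$ only in the selection label. This is forced by \rulename{S3}: the actual selector may offer fewer labels than the global choice.

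\textbf{Step 1 (inversion).} From $\ctx \assoc \G$, write $\ctx = \ctxact,\ctxterm$ with $\ctxact \subt \ctx[0]$ and $\G \cofullproj{r} \ctx[0](\pr)$. Since $\pp,\pq \in \pt{\G} = \dom{\ctxact}$, inverting projection up to \rulename{PR9}/\rulename{PR10} gives the following. In the payload case, $\ctx[0](\pp)$ unfolds to $\tout{\pq}{\B}{\T}$ and $\ctx[0](\pq)$ to $\tin{\pp}{\B}{\T'}$. In the choice case, $\ctx[0](\pp)$ unfolds to a selection on exactly $I$, and $\ctx[0](\pq)$ to a branching on $I$ (by \rulename{PR4}/\rulename{PR5}). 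Applying the inversion lemmas for subtyping (Lemmas~\ref{lemma:unfold-compatible-with-subtyping},~\ref{lemma:inversion-subtyping}), $\ctx(\pp)$ unfolds to a send of the same $\B$, or to a selection on some nonempty $I' \subseteq I$. Likewise $\ctx(\pq)$ unfolds to a receive, or to a branching on some $I'' \supseteq I$. I pick any $\lab[k]$ with $k \in I'$, set $\interi = \interaction{\pp}{\pq}{\lab[k]}$ (or $\interaction{\pp}{\pq}{\B}$), and obtain $\ctx \transi{\interi} \ctxi$ by \rulename{LR3}/\rulename{LR4}/\rulename{LEnv}. I also obtain $\G \transg{\interi} \Gii := \G[k]$ (resp.\ $\G[1]$).

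\textbf{Step 2 (re-establishing association).} For $\pp$ and $\pq$, the continuations satisfy $\ctxi(\pp) \subt \T[k]$ with $\G[k] \cofullproj{p} \T[k]$, and similarly for $\pq$. This comes from the premises of \rulename{S1}--\rulename{S4} and \rulename{PR1}/\rulename{PR2}/\rulename{PR4}/\rulename{PR5}.

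For any other $\pr \in \pt{\G}$, the type $\ctxi(\pr) = \ctx(\pr)$ is unchanged. In the payload case, \rulename{PR3} gives projection of $\G[1]$ directly. In the choice case, \rulename{PR6} gives $\comergefull{\set{\T[i]}_{i\in I}}{\T[\pr]}$ with $\G[k] \cofullproj{r} \T[k]$. I then need an auxiliary fact: a full merge is a subtype of each merged component, $\T[\pr] \subt \T[k]$. This is proved by a coinduction matching \rulename{M1}--\rulename{M7} against \rulename{S1}--\rulename{S7}, where \rulename{M5}'s union of labels is exactly \rulename{S4}'s superset. I also need transitivity of $\subt$ (standard coinductive argument). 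Together these give $\ctx(\pr) \subt \T[k]$.

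\textbf{Step 3 (domains and balancedness).} I must check that $\dom{\ctxact'} = \pt{\Gii}$.
\begin{itemize}
\item Because $\G$ is balanced, $\unavoidable(\unfold{\G}) = \pt{\G}$. Any $\pr \in \pt{\G}\setminus\set{\pp,\pq}$ lies in $\bigcap_i \unavoidable(\G[i]) \subseteq \pt{\G[k]}$, so no third party is lost.
\item Only $\pp$ or $\pq$ may drop out of $\pt{\Gii}$. Their projection of $\Gii$ is then $\tend$ (by \rulename{PR11}/\rulename{PR12} and a small lemma that non-participants project to $\tend$). Hence by \rulename{S7} and inversion, their continuation unfolds to $\tend$, and I move them into $\ctxterm'$.
\item Participants already in $\ctxterm$ are untouched, since $\pp,\pq \in \pt{\G}$.
\end{itemize}
Balancedness of $\Gii$ follows because $\subterms{\Gii} \subseteq \subterms{\G}$.

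\textbf{Main obstacle.} I expect the main obstacle to be Step 2's auxiliary facts about the coinductive relations, namely merge-below-components and transitivity of $\subt$. A further difficulty is the bookkeeping of unfoldings (\rulename{PR9}, \rulename{PR10}, \rulename{M6}, \rulename{M7}, \rulename{S5}, \rulename{S6}) when inverting projections through $\toc{\sfmu}$-binders. For the latter, I would first prove that $\G \cofullproj{p} \T$ implies $\unfold{\G} \cofullproj{p} \unfold{\T}$, modulo \rulename{PR12}, and then work only with unfolded heads.
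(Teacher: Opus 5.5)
Your proof is correct, and it makes the same key move as the paper. You use the given step $\G \transg{\inter} \Gi$ only to learn that $\unfold{\G}$ has a communication head, and you fire that head interaction. The paper's induction on the derivation also fires the head in the \rulename{GR2}/\rulename{GR5} cases, and it uses the induction hypothesis only for \rulename{GR3}.

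The difference lies in how $\Gii$ and $\ctxi \assoc \Gii$ are obtained. The paper constructs only the context step $\ctx \transi{\interi} \ctxi$, by inverting the combined relation and subtyping. It then delegates everything else to Completeness of Association (Lemma~\ref{lemma:completeness-association}). That proof goes through the combined relation, an induction on unavoidability depth, and the decomposition of Lemma~\ref{lemma:composite-implies-association}, which must re-merge branch projections.

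You instead read the new witness directly off the premises of \rulename{PR1}--\rulename{PR6} at the head, so no re-merging is needed. Your auxiliary facts correspond to the paper's appendix lemmas:
\begin{itemize}
\item a full merge lies below each merged component (merge as a widening lower bound);
\item transitivity of $\subt$;
\item every projection onto a non-participant unfolds to $\tend$ (completed participants);
\item invariance of projection and subtyping under unfolding (unfolding replacement).
\end{itemize}

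Your route is more elementary and self-contained. The paper's is shorter on the page because the heavier completeness lemma is needed anyway for trace inclusion and subject reduction, so soundness becomes a near-corollary.

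One small correction to your exposition: when $\inter$ is derived via \rulename{GR2}/\rulename{GR5}, the head interaction $\interi$ may involve entirely different participants from $\inter$, not merely a different label. This is harmless, since the statement only asks for some $\interi$.
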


The principality argument is semantic in that we compare associated global
types by their trace sets rather than by their syntax.
\begin{restatable}[Association Implies Trace Inclusion]{lemma}{associationimpliestraces}
\label{lemma:association-implies-traces}
\proofapp{\appref{app:building-proofs}}
If $\ctx \assoc \G$ with $\G$ balanced, then $\trace(\ctx) \subseteq \trace(\G)$.
\end{restatable}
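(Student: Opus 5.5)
Take an arbitrary trace $\inter[0]\inter[1]\cdots \in \trace(\ctx)$, witnessed by a maximal reduction sequence $\ctx = \ctx[0] \transi{\inter[0]} \ctx[1] \transi{\inter[1]} \cdots$. We build, step by step, a matching global reduction sequence $\G = \G[0] \transg{\inter[0]} \G[1] \transg{\inter[1]} \cdots$ with each $\G[k]$ balanced and $\ctx[k] \assoc \G[k]$. The inductive step is exactly Lemma~\ref{lemma:completeness-association}: from $\ctx[k] \assoc \G[k]$ with $\G[k]$ balanced and $\ctx[k] \transi{\inter[k]} \ctx[k+1]$ we obtain a balanced $\G[k+1]$ with $\G[k] \transg{\inter[k]} \G[k+1]$ and $\ctx[k+1] \assoc \G[k+1]$. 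For an infinite trace, a dependent-choice (or coinductive) construction gives an infinite global sequence. This sequence is automatically maximal, so the trace lies in $\trace(\G)$.

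The remaining case is a finite trace, where the context sequence ends in a stuck context $\ctx[m]$. We must show that the global sequence is also maximal, i.e.\ that $\G[m]$ cannot move. We argue by contradiction using Lemma~\ref{lemma:soundness-association}. Suppose $\G[m] \transg{\inter} \Gi$ for some $\inter$. Since $\ctx[m] \assoc \G[m]$ and $\G[m]$ is balanced, soundness yields some $\interi$ and $\ctxi$ with $\ctx[m] \transi{\interi} \ctxi$. This contradicts $\mathsf{stuck}(\ctx[m])$. Hence $\G[m] \ntransg{}$, the global sequence is maximal, and the finite trace also lies in $\trace(\G)$.

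The main obstacle is this maximality step, not the simulation. It is the reason soundness of association is needed in addition to completeness: completeness alone would allow $\G$ to have further moves after the context is stuck, which would make a finite context trace only a proper prefix of a global trace. Two further points need care. First, association must be stable at each step with respect to the split $\ctxact,\ctxterm$, but this is already packaged inside the two association lemmas. Second, balancedness must be carried along the sequence, which is why both lemmas return balanced successors. With those lemmas taken as given, the proof is a short induction plus the maximality argument above.
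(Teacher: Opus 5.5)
Your proof is correct and follows the paper's argument essentially step for step. You iterate Lemma~\ref{lemma:completeness-association} to build a matching sequence of balanced global types. At a stuck final context you then use Lemma~\ref{lemma:soundness-association} to rule out any further global move, which makes the constructed sequence maximal.
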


\begin{restatable}[Soundness of the Combined Relation]{lemma}{soundnessofcombinedrelation}
\label{lemma:soundness-combined-relation}
\proofapp{\appref{app:building-proofs}}
If $\G \cofullprojsubctx \ctx$ and $\G \transg{\inter} \Gi$,
then there exists $\ctxi$ such that $\ctx \transi{\inter} \ctxi$
and $\Gi \cofullprojsubctx \ctxi$.
\end{restatable}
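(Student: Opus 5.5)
We argue by induction on the derivation of the global transition $\G \transg{\inter} \Gi$, with $\inter = \interaction{\pr}{\role{s}}{\obs}$, proving the statement simultaneously for all contexts $\ctx$ with $\G \cofullprojsubctx \ctx$. Before the case analysis we collect two auxiliary facts. The first is an \emph{unfolding lemma}: if $\G \cofullprojsub{p} \T$, then $\unfold{\G} \cofullprojsub{p} \T$ and $\G \cofullprojsub{p} \unfold{\T}$. This holds because the relation is coinductive and is closed under the left and right $\sfmu$-rules (the analogues of \rulename{PR9}/\rulename{PR10}). The second is the observation that $\unfold{\T}$ and $\T$ have the same transitions, by \rulename{LR5}. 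Together these let us assume throughout that the head of $\G$ and the local types are unfolded.

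Case \rulename{GR3} is then immediate from the induction hypothesis. It is applied to $\G\sub{\toc{\sfmu}\ty.\G}{\ty}$, which still satisfies $\cofullprojsubctx \ctx$ by the unfolding lemma. For $\pt{}$ we note that the participants of the unfolding equal those of the original.

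For the base cases \rulename{GR1} and \rulename{GR4}, take $\G = \pr\gar\role{s}\;\goc{\{}\lab[i]:\G[i]\goc{\}}_{i\in I}$ and $\inter$ selecting $\lab[k]$.
\begin{itemize}
\item For $\pr$, inversion gives \rulename{SPR4}, so $\unfold{\ctx(\pr)} = \tselsub{\role{s}}{\lab[i]:\T[i]}{i\in I}$ with $\G[i] \cofullprojsub{r} \T[i]$. Hence $\ctx(\pr)$ can send $\lab[k]$.
\item For $\role{s}$, inversion gives \rulename{SPR5}: the branching offers a superset $I' \supseteq I$, so $\role{s}$ can receive $\lab[k]$. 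Its continuation $\T[k]$ again satisfies $\G[k] \cofullprojsub{s} \T[k]$.
\item \rulename{LEnv} therefore yields $\ctx \transi{\inter} \ctxi$.
\item For any $\pp \notin \set{\pr,\role{s}}$, rule \rulename{SPR6} (or its payload analogue) gives $\G[k] \cofullprojsub{p} \ctx(\pp)$ directly, with $\ctxi(\pp) = \ctx(\pp)$.
\item The domain condition $\dom{\ctxi} \supseteq \pt{\G[k]}$ holds since $\pt{\G[k]} \subseteq \pt{\G}$.
\end{itemize}
The payload case is the same, with \rulename{LR1}/\rulename{LR2}.

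For the inductive cases \rulename{GR2} and \rulename{GR5}, let $\G = \pp\gar\pq\;\goc{\{}\lab[i]:\G[i]\goc{\}}_{i\in I}$ with each $\G[i] \transg{\inter} \Gi[i]$ and $\set{\pr,\role{s}}\cap\set{\pp,\pq}=\emptyset$. The difficulty is that the induction hypothesis acts on $\G[i]$ against the \emph{same} local types for the uninvolved $\pr,\role{s}$, but against continuation types for $\pp,\pq$. So we apply the induction hypothesis to each $\G[i]$ with an auxiliary context $\ctx[i]$: this agrees with $\ctx$ off $\set{\pp,\pq}$ and maps $\pp,\pq$ to their $i$-th continuations. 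For $\pr,\role{s}$, rule \rulename{SPR6} gives $\G[i] \cofullprojsub{r} \ctx(\pr)$ for every $i$, so the side conditions hold. The induction hypothesis then yields $\ctx[i] \transi{\inter} \ctxi[i]$. Since only $\pr,\role{s}$ move, and their types are common to all $\ctx[i]$, we can:
\begin{itemize}
\item take a single step $\ctx \transi{\inter} \ctxi$ updating just $\pr,\role{s}$;
\item choose the successors $\ctxi(\pr)$, $\ctxi(\role{s})$ uniformly, after fixing the observable $\obs$.
\end{itemize}
We then rebuild $\Gi \cofullprojsub{x} \ctxi(\pp[x])$ participant by participant:
\begin{itemize}
\item for $\pr,\role{s}$, by \rulename{SPR6} from the induction hypothesis for each $i$;
\item for $\pp,\pq$, by \rulename{SPR4}/\rulename{SPR5} with the same label sets;
\item for the others, by \rulename{SPR6}.
\end{itemize}

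The main obstacle is the uniformity just used. The successor local types of $\pr$ and $\role{s}$ must be the same across branches, but if $\ctx(\role{s})$ is a branching, the induction hypothesis could in principle pick different continuations in different branches. This does not happen, because the local transition is determined by $\inter$ itself (label or sort together with the source type). We therefore strengthen the induction statement to say that $\ctxi$ is the unique $\inter$-successor on the moving participants. Local types are deterministic given the action, so this holds.
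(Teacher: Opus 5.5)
Your proposal is correct and follows essentially the paper's proof. Both argue by induction on the derivation of $\G \transg{\inter} \Gi$. The head cases invert the combined relation at the sender and receiver. The pass-through cases apply the hypothesis in each branch to a context that advances the head participants to their continuations, and the moving participants agree across branches because $\transi{\inter}$ is deterministic, so no real strengthening of the induction is needed.

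The one step to tighten is the justification of your unfolding lemma. It needs inversion of the coinductive relation, not closure under \rulename{SPR9}/\rulename{SPR10}. Its \rulename{SPR12} case also needs the separate fact that $\pp \notin \pt{\G}$ and $\unfold{\T} = \tend$ imply $\G \cofullprojsub{p} \T$ (Lemma~\ref{lemma:completed-participants}(2)).
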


\begin{restatable}[Soundness Implies Trace Inclusion]{lemma}{soundnessimpliestraces}
\label{lemma:soundness-implies-traces}
\proofapp{\appref{app:building-proofs}}
If $\G \cofullprojsubctx \ctx$, then $\trace(\G) \subseteq \trace(\ctx)$.
\end{restatable}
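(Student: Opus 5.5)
We build the trace of the context directly from a trace of the global type. Fix a trace $\inter[0]\inter[1]\cdots \in \trace(\G)$, witnessed by a maximal reduction sequence $\G = \G[0] \transg{\inter[0]} \G[1] \transg{\inter[1]} \cdots$.

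\textbf{Step 1 (lifting the sequence).} We construct contexts $\ctx = \ctx[0], \ctx[1], \ldots$ by induction on $k$, maintaining the invariant $\G[k] \cofullprojsubctx \ctx[k]$. At each step we apply Lemma~\ref{lemma:soundness-combined-relation} to $\G[k] \cofullprojsubctx \ctx[k]$ and $\G[k] \transg{\inter[k]} \G[k+1]$. This yields $\ctx[k] \transi{\inter[k]} \ctx[k+1]$ with $\G[k+1] \cofullprojsubctx \ctx[k+1]$. For an infinite trace this produces an infinite reduction sequence of $\ctx$ with the same labels. An infinite sequence is automatically maximal, so the trace lies in $\trace(\ctx)$.

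\textbf{Step 2 (maximality in the finite case).} Suppose the trace is finite, ending at $\G[m]$ with $\G[m] \ntransg{}$. We must show $\ctx[m] \ntransi{}$, and this is the main obstacle.

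First we argue that a stuck global type unfolds to $\gend$. Guardedness means $\unfold{\G[m]}$ is either a prefix form or $\gend$. A prefix form always has a \rulename{GR1} or \rulename{GR4} transition, so $\unfold{\G[m]} = \gend$.

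Next we show that $\gend \cofullprojsubctx \ctx[m]$ forces every local type to unfold to $\tend$. By definition, $\gend \cofullprojsub{p} \ctx[m](\pp)$ holds for each $\pp$. The only applicable rules are the $\gend$ axiom and right-unfolding, which mirror \rulename{PR11} and \rulename{PR10}. Inverting on guarded unfolding of $\ctx[m](\pp)$, we get $\unfold{\ctx[m](\pp)} = \tend$. This can also be read off Lemma~\ref{lemma:completed-participants}, since $\pt{\gend} = \emptyset$. If the relation is phrased up to unfolding on the left, we first transport along $\G[m] \unfeq \gend$.

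Consequently no participant in $\ctx[m]$ can perform any action, so $\ctx[m] \ntransi{}$. The lifted sequence is therefore maximal, and its label sequence equals the given trace.

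\textbf{Step 3 (conclusion).} Combining the two cases gives $\trace(\G) \subseteq \trace(\ctx)$. The delicate point is the inversion in Step 2: we must check that the composite rules cannot relate $\gend$ to a non-$\tend$ local type by some other route. Inspecting the rule shapes (only the $\gend$ axiom and \rulename{PR10}-style right-unfolding have $\gend$ on the left) handles this.
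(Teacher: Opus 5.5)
Your proposal is correct and follows the paper's proof essentially step for step. You lift the global reduction sequence through Lemma~\ref{lemma:soundness-combined-relation}, and for maximality you observe that a stuck $\G[m]$ unfolds to $\gend$, so that every $\ctx[m](\pp)$ unfolds to $\tend$ by Lemma~\ref{lemma:completed-participants}; since $\pt{\G[m]} = \emptyset$, part (1) of that lemma applies to $\G[m]$ directly, so the transport along $\G[m] \unfeq \gend$ is unnecessary.
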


\begin{restatable}[Combined Relation Implies Trace Equality]{lemma}{combinedrelationimpliestraceequality}
\label{lemma:combined-relation-implies-trace-equality}
If $\G \cofullprojsubctx \ctx$ and $\G$ is balanced, then $\trace(\ctx) = \trace(\G)$.
\end{restatable}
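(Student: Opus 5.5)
The plan is to prove the two inclusions separately and combine them. The inclusion $\trace(\G) \subseteq \trace(\ctx)$ is exactly Lemma~\ref{lemma:soundness-implies-traces}, which needs only $\G \cofullprojsubctx \ctx$ and no balancedness. For the reverse inclusion, I would not argue about the combined relation directly. Instead, I would first convert it into association: Lemma~\ref{lemma:composite-implies-association} gives $\ctx \assoc \G$ from $\G \cofullprojsubctx \ctx$. Since $\G$ is balanced by hypothesis, Lemma~\ref{lemma:association-implies-traces} then yields $\trace(\ctx) \subseteq \trace(\G)$. Antisymmetry of set inclusion gives $\trace(\ctx) = \trace(\G)$.

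The key steps, in order, are these:
\begin{enumerate}
\item Invoke Lemma~\ref{lemma:soundness-implies-traces} for $\trace(\G)\subseteq\trace(\ctx)$.
\item Invoke Lemma~\ref{lemma:composite-implies-association} to obtain $\ctx \assoc \G$.
\item Invoke Lemma~\ref{lemma:association-implies-traces} with the balancedness hypothesis to obtain $\trace(\ctx)\subseteq\trace(\G)$.
\item Conclude by combining the two inclusions.
\end{enumerate}

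The only point requiring care is that the hypotheses of each cited lemma match exactly. The first lemma takes the combined relation as its premise, and the third needs both $\ctx \assoc \G$ and $\G$ balanced, which we have. Maximality of traces needs no extra argument, since both cited lemmas are already stated for maximal traces in the sense of Definition~\ref{def:traces}.

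I expect no real obstacle here; the statement is a corollary. If a gap appears, it would lie in the reverse direction when the context has terminated participants not in $\pt{\G}$. These are handled by the $\ctxterm$ part of association, since such participants have $\unfold{\ctx(\pp)}=\tend$ by Lemma~\ref{lemma:completed-participants} and contribute no transitions. I would note this explicitly rather than reprove it.
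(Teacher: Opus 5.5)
Your proposal is correct and is exactly the paper's argument. You combine Lemma~\ref{lemma:composite-implies-association} with Lemma~\ref{lemma:association-implies-traces}, using balancedness, to get $\trace(\ctx)\subseteq\trace(\G)$, and Lemma~\ref{lemma:soundness-implies-traces} for the reverse inclusion.
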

\begin{proof}
By Lemma~\ref{lemma:composite-implies-association},
$\ctx \assoc \G$, so $\trace(\ctx) \subseteq \trace(\G)$
by Lemma~\ref{lemma:association-implies-traces}, as $\G$ is balanced.
The reverse inclusion $\trace(\G) \subseteq \trace(\ctx)$
follows by Lemma~\ref{lemma:soundness-implies-traces}.
\end{proof}

\begin{restatable}[Trace Equivalence of Context and Synthesised Type]{lemma}{traceequivalencesynth}
\label{lemma:trace-equivalence-synth}
For any context $\ctx$ satisfying $\slive(\ctx)$ and any priority $\pp[i]$, let
$\G = \textsf{synth}(\varnothing, \pp[i], \ctx)$. Then $\trace(\ctx) = \trace(\G)$.
\end{restatable}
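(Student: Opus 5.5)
The plan is to reduce the statement to Lemma~\ref{lemma:combined-relation-implies-trace-equality}, which gives $\trace(\ctx)=\trace(\G)$ whenever $\G \cofullprojsubctx \ctx$ and $\G$ is balanced. So it suffices to establish the two hypotheses of that lemma for $\G = \textsf{synth}(\varnothing, \pp[i], \ctx)$ under the assumption $\slive(\ctx)$.

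The first hypothesis, $\G \cofullprojsubctx \ctx$, comes out of the proof of Theorem~\ref{theorem:existence-associated-global-type-for-safe-live-contexts}. There, Lemma~\ref{lemma:synth-terminates-for-slive-contexts} shows that $\G$ is well-defined. For every $\pp \in \dom{\ctx}$ we have $(\G,\ctx(\pp)) \in R_{\pp} \subseteq \widehat{R}_{\pp}$. Lemma~\ref{lemma:backward-closure-synth-candidate} then gives backward closure of $\widehat{R}_{\pp}$ under the rules of $\cofullprojsub{p}$, and coinduction yields $\G \cofullprojsub{p} \ctx(\pp)$. Finally, $\pt{\G} \subseteq \dom{\ctx}$ holds because synthesis only ever emits interactions between participants of $\ctx$. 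Together these give $\G \cofullprojsubctx \ctx$. I will restate this chain explicitly, rather than merely cite the theorem, because the theorem's conclusion is only association, which is weaker than the composite relation.

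The second hypothesis, that $\G$ is balanced, is exactly Lemma~\ref{lemma:synthesised-balanced} applied to the same $\ctx$ and $\pp[i]$. Combining the two hypotheses with Lemma~\ref{lemma:combined-relation-implies-trace-equality} closes the argument.

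I do not expect a real obstacle, since all the work sits in the cited lemmas. The one point to check is that the combined-relation lemma needs no extra hypothesis beyond balancedness. Its inclusion $\trace(\ctx)\subseteq\trace(\G)$ uses Lemma~\ref{lemma:association-implies-traces}, which only requires association and balancedness. Its reverse inclusion uses Lemma~\ref{lemma:soundness-implies-traces}, which only requires the composite relation. Both are available here, so no further argument is needed.
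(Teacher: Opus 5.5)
Your proposal is correct and is the paper's proof. The paper likewise takes $\G \cofullprojsubctx \ctx$ from the proof (not the statement) of Theorem~\ref{theorem:existence-associated-global-type-for-safe-live-contexts}, takes balancedness from Lemma~\ref{lemma:synthesised-balanced}, and concludes by Lemma~\ref{lemma:combined-relation-implies-trace-equality}.
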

\begin{proof}
By the proof of Theorem~\ref{theorem:existence-associated-global-type-for-safe-live-contexts},
we have $\G \cofullprojsubctx \ctx$, and $\G$ is balanced by
Lemma~\ref{lemma:synthesised-balanced}.
The result follows by Lemma~\ref{lemma:combined-relation-implies-trace-equality}.
\end{proof}
\begin{restatable}[Synthesised Global Type is Principal]{theorem}{synthesisedtypeisprincipal}
\label{theorem:synthesised-type-is-principal}
For any context $\ctx$ satisfying $\slive(\ctx)$, and any priority $\pp[i]$,
let $\G = \textsf{synth}(\varnothing, \pp[i], \ctx)$. Then $\G$ is a principal global type for $\ctx$.
\end{restatable}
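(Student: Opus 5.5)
The plan is to discharge the two clauses of Definition~\ref{def:principal-type} separately, using only results already established above. Throughout, fix $\ctx$ with $\slive(\ctx)$ and a priority $\pp[i]$, and set $\G = \textsf{synth}(\varnothing, \pp[i], \ctx)$. By Lemma~\ref{lemma:synth-terminates-for-slive-contexts}, $\G$ is a well-defined closed global type.

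\emph{Association.} The first clause, $\ctx \assoc \G$, is exactly Theorem~\ref{theorem:existence-associated-global-type-for-safe-live-contexts}. Definition~\ref{def:principal-type} speaks of $\G$ as the smallest \emph{balanced} associated type, so I will also record that $\G$ itself is balanced, by Lemma~\ref{lemma:synthesised-balanced}. This ensures $\G$ lies in the class it is compared against, and it is what licenses the trace-equality step below.

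\emph{Minimality.} Let $\Gi$ be any balanced global type with $\ctx \assoc \Gi$. I must show $\G \subseteq \Gi$, i.e.\ $\trace(\G) \subseteq \trace(\Gi)$. The argument is a two-step chain of inclusions:
\[
\trace(\G) = \trace(\ctx) \subseteq \trace(\Gi).
\]
The equality is Lemma~\ref{lemma:trace-equivalence-synth}. That lemma in turn rests on $\G \cofullprojsubctx \ctx$, obtained in the proof of Theorem~\ref{theorem:existence-associated-global-type-for-safe-live-contexts}, together with the balancedness of $\G$ and Lemma~\ref{lemma:combined-relation-implies-trace-equality}. The inclusion is Lemma~\ref{lemma:association-implies-traces}, applied to $\ctx \assoc \Gi$; its hypothesis that $\Gi$ is balanced is supplied by the assumption on $\Gi$.

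\emph{Where the real work lies.} At the level of this theorem, the proof is purely compositional and there is no genuine obstacle. The one point to check is that every hypothesis of each invoked lemma is met. In particular, the balancedness hypotheses on both $\G$ and $\Gi$ are needed, since Lemma~\ref{lemma:association-implies-traces} relies on Lemma~\ref{lemma:completeness-association}, which requires balancedness. The substantive difficulty has already been absorbed into the lemmas used here:
\begin{itemize}
\item Lemma~\ref{lemma:association-implies-traces} requires a coinductive simulation argument. Completeness of association, Lemma~\ref{lemma:completeness-association}, must be iterated along an arbitrary maximal context run. Moreover, maximality must be transferred: a stuck context must correspond to a global type with no further transitions, which uses Lemma~\ref{lemma:soundness-association} together with the fact that stuck safe-and-live contexts consist only of $\tend$ entries.
\item The reverse inclusion $\trace(\G) \subseteq \trace(\ctx)$ depends on Lemma~\ref{lemma:soundness-implies-traces}, proved via Lemma~\ref{lemma:soundness-combined-relation}.
\end{itemize}

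I would state these dependencies explicitly and then conclude that $\G$ is a principal global type for $\ctx$.
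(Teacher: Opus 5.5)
Your proposal is correct and matches the paper's proof: association comes from Theorem~\ref{theorem:existence-associated-global-type-for-safe-live-contexts}, and minimality from the chain $\trace(\G) = \trace(\ctx) \subseteq \trace(\Gi)$ via Lemmas~\ref{lemma:trace-equivalence-synth} and~\ref{lemma:association-implies-traces}. One minor aside on your dependency remarks: the maximality transfer inside Lemma~\ref{lemma:association-implies-traces} uses only Lemma~\ref{lemma:soundness-association}, not the fact about stuck safe-and-live contexts, since that lemma does not assume $\slive(\ctx)$.
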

\begin{proof}
By Definition~\ref{def:principal-type}, it suffices to show
(1)~$\ctx \assoc \G$ and
(2)~$\G \subseteq \Gi$ for every balanced $\Gi$ with $\ctx \assoc \Gi$.

For~(1), Theorem~\ref{theorem:existence-associated-global-type-for-safe-live-contexts}
gives $\ctx \assoc \G$.

For~(2), suppose $\ctx \assoc \Gi$ with $\Gi$ balanced.
By Lemma~\ref{lemma:association-implies-traces}, $\trace(\ctx) \subseteq \trace(\Gi)$.
By Lemma~\ref{lemma:trace-equivalence-synth}, $\trace(\G) = \trace(\ctx)$.
Hence $\trace(\G) \subseteq \trace(\Gi)$, i.e.\ $\G \subseteq \Gi$.
\end{proof}

\noindent
With existence and principality established, we now show that
the top-down approach accepts all bottom-up typable sessions.

\section{Completeness of the Top-Down Typing System}
\label{sec:typing} 
This section connects the synthesis result back to session typing and
proves completeness of the top-down system with respect to liveness.
Process typing $\Gamma \,\vdash\, \PP : \T$ is standard
\cite[Figure~5]{thien-nobuko-popl-25}, including the subsumption rule
\rulename{Sub}; \appref{app:typing} gives the full rules. We
therefore recall only the session-level top-down and bottom-up
judgements.

\begin{definition}[Top-down and bottom-up typing]
\label{def:typing}
We recall the session-level typing rules for a multiparty session~$\M$.
Rule \rulename{T-Sess} types $\M$ from a projected global type, while
rule \rulename{B-Sess} types $\M$ from a safe local context.

\small
\[
\begin{array}{c}
\inferrule[\rulename{Sub}]{
\Gamma \vdash \PP : \T\quad \T\subt \Ti
}{
\Gamma \vdash \PP : \Ti
}
\quad
\inferrule[\rulename{T-Sess}]{
{\begin{array}{c}
\forall i \in I. \quad
\vdash \PP[i] : \T[i]
\quad
\G \cofullproj{\pp[i]} \T[i]
\\
\G \ \balanced
\quad
\pt{\G}\subseteq \set{\pp[i] \ | \ i \in I}
\end{array}}
}{
\provestop \prod_{i \in I} \proctag{\pp[i]}{\PP[i]} :
{\set{\ptag{\pp[i]}{\T[i]}}_{i\in I}}}
\quad
\inferrule[\rulename{B-Sess}]{ \forall i \in I \, \vdash \PP[i] : \T[i] \quad \safe(\set{\ptag{\pp[i]}{\T[i]}}_{i\in I})}{
\provesbot \prod_{i \in I} \proctag{\pp[i]}{\PP[i]} :
{\set{\ptag{\pp[i]}{\T[i]}}_{i\in I}}}
\end{array}
\]
\end{definition}

We now define the completeness with respect to a property $\varphi$.

\begin{definition}[Completeness wrt $\varphi$]
Suppose $\vdash \PP[i] : \T[i]$ with $i \in I$. Let
$\M=\prod_{i \in I} \proctag{\pp[i]}{\PP[i]}$
and $\ctx=\set{\ptag{\pp[i]}{\T[i]}}_{i\in I}$. 
Typing system $\vdash^\star$ is complete with
respect to $\varphi$ if $\varphi(\ctx)$ implies both
(1) there exists $\ctxi$ such that
$\vdash^\star \M:\ctxi$ with
$\varphi(\ctxi)$ and $\ctx\subt\ctxi$; and
(2) for all $\Mi$ such that $\M\red^\ast \Mi$,
there exists $\ctxii$ such that $\vdash^\star \Mi:\ctxii$ and
$\varphi(\ctxii)$.
\end{definition}  

We state the main theorems of the top-down and bottom-up systems.
Soundness of the top-down system and the bottom-up theorems are
proven in \cite{revisited} and \cite{Scalas2019} respectively;
subject reduction of the top-down system follows from our
association lemmas (\SEC{sec:building}).

\begin{theorem}[Top-down system]
\label{thm:topdown} Assume $\provestop\M:\ctx$.
\begin{enumerate}
\item (\text{Subject Reduction})
  \proofapp{\appref{app:typing-sr}}
  If $\M\red^\ast \Mi$, then there exist $\ctxi$ and $\ctxii$
  such that $\ctx \transi{}^{*} \ctxi$, $\ctxi \subt \ctxii$, and
  $\provestop\Mi:\ctxii$.
\item (\text{Soundness}) (\cite[Theorem~7]{revisited})
  If $\provestop\M:\ctx$, then (a) $\slive(\ctx)$; and
  (b) $\M$ is safe and live.
\end{enumerate}
\end{theorem}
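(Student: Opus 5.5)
The statement to prove here is Theorem~\ref{thm:topdown}; part~(2) is cited from \cite[Theorem~7]{revisited}, so the plan concerns part~(1), Subject Reduction. I would proceed by induction on the length of $\M\red^\ast\Mi$; the general case follows by iterating the single-step claim, composing $\transi{}^{*}$ and using transitivity of $\subt$. The invariant carried along is \emph{association}, not literal projection. From $\provestop\M:\ctx$ via \rulename{T-Sess} we obtain a balanced $\G$ with $\G\cofullproj{\pp[i]}\ctx(\pp[i])$. Hence $\ctx\assoc\G$, with $\ctxterm$ given by the participants outside $\pt{\G}$, whose projections are $\tend$ by \rulename{PR11}/\rulename{PR12}.

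For the single step $\M\red\Mi$, I would invert the process typing of the redex in the standard way. First strip \rulename{Sub} and use transitivity of subtyping. Next unfold recursion using the structural precongruence. Then apply the inversion lemmas for subtyping (Lemmas~\ref{lemma:unfold-compatible-with-subtyping}--\ref{lemma:inversion-subtyping}) to read off the local type shape of each participant involved. The cases are as follows.
\begin{itemize}
\item For \rulename{t-cond}/\rulename{f-cond}, the context is unchanged, since both branches are typed by the same $\T$. We take $\ctxi=\ctxii=\ctx$.
\item For \rulename{r-comm} and \rulename{r-bra}, we get $\ctx(\pq)\transa{\qsend{\pp}{\obs}}$ and $\ctx(\pp)\transa{\qrecv{\pq}{\obs}}$. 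For a label, the chosen $\lab[k]$ is among the receiver's branches, since \rulename{c-err} is excluded by safety, Theorem~\ref{thm:topdown}(2a). Hence $\ctx\transi{\interaction{\pq}{\pp}{\obs}}\ctxi$, and the continuation processes are typed by the corresponding continuations. The substitution lemma for values handles \rulename{r-comm}. Safety also rules out \rulename{v-err} as a reachable state.
\end{itemize}

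Next I would apply Completeness of Association (Lemma~\ref{lemma:completeness-association}) to $\ctx\assoc\G$ and $\ctx\transi{\inter}\ctxi$. This yields a balanced $\Gi$ with $\G\transg{\inter}\Gi$ and $\ctxi\assoc\Gi$. Unpacking Definition~\ref{def:association}, we get $\ctxi=\ctxact,\ctxterm$ and a context $\ctx[0]$ of exact projections of $\Gi$ with $\ctxact\subt\ctx[0]$. I would set $\ctxii=\ctx[0],\ctxterm$. Then $\ctxi\subt\ctxii$, and each process of $\Mi$ is typed by its entry in $\ctxi$, hence by its entry in $\ctxii$ via \rulename{Sub}.

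It remains to discharge \rulename{T-Sess} for $\Gi$. Balancedness comes from the lemma, and $\pt{\Gi}\subseteq\dom{\ctxii}$ holds. For participants in $\ctxterm$, I need $\Gi\cofullproj{p}\tend$. I would get this from the projection of a participant absent from $\pt{\Gi}$: it is $\tend$ by \rulename{PR12} on each recursion and \rulename{PR3}/\rulename{PR6} with the trivial merge \rulename{M1}, which is a short coinduction. The $\mu$-unfolding mismatch between $\ctxterm(\pp)$ and literal $\tend$ is absorbed by \rulename{Sub}, since $\unfold{\T}=\tend$ implies $\T\subt\tend$ by \rulename{S5}.

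The main obstacle is the inversion step: relating a typed process prefix to a transition of its local type through arbitrarily many \rulename{Sub} and recursion unfoldings. In particular, for branching the receiver's type is a subtype with possibly more branches, and the selected label must be shown to be present in both the process and the context. I would first try to prove a general ``typed prefix implies matching transition of $\unfold{\T}$'' lemma by induction on the typing derivation modulo \rulename{Sub}, and then invoke safety of $\ctx$ to match the send and receive.
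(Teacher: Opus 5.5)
Your proposal is correct, but it handles the process-level part by a different route from the paper. You prove one-step subject reduction directly: generation lemmas modulo \rulename{Sub} and unfolding, substitution of values and of recursive processes, and safety of the context to match labels and payload sorts. You then interleave each process step with Completeness of Association (Lemma~\ref{lemma:completeness-association}).

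The paper never inverts process typing. It uses Theorem~\ref{thm:topdown}(2a) to get $\slive(\ctx)$ and retypes $\M$ by \rulename{B-Sess}. It then applies the bottom-up subject reduction (Theorem~\ref{thm:bottomup}(1)) as a black box to the whole sequence $\M\red^\ast\Mi$, obtaining $\provesbot\Mi:\ctxi$ with $\ctx\transi{}^{*}\ctxi$. Only then does it push association along this context path, by induction using Lemma~\ref{lemma:completeness-association}. Finally it retypes $\Mi$ top-down once, by essentially your construction of $\ctxii$.

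The paper's route is much shorter and avoids the inversion lemma you flag as the main obstacle. The price is reliance on the external theorem and its liveness hypothesis. Yours is self-contained for this calculus and needs only the safety half of Theorem~\ref{thm:topdown}(2a).

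One point to tighten is the induction invariant. It should be: the processes of $\M[k]$ are typed by the actual reduct $\ctx[k]$, $\ctx\transi{}^{*}\ctx[k]$, and $\ctx[k]\assoc\G[k]$ with $\G[k]$ balanced. Build $\ctxii$ only at the end, as the paper does.
\begin{itemize}
\item Restarting from $\provestop\Mi:\ctxii$ would not compose. A step of the supertype $\ctxii$ need not be matched by $\ctxi$, since by \rulename{S3} $\ctxii$ may offer selections that $\ctxi$ lacks. So composing $\transi{}^{*}$ with transitivity of $\subt$ is not enough by itself.
\item Your single-step argument goes through verbatim from this invariant. Lemma~\ref{lemma:completeness-association} takes association, not exact projection, as input, and safety of $\ctx[k]$ is inherited along $\ctx\transi{}^{*}\ctx[k]$.
\item Initially, the projections of $\pp\notin\pt{\G}$ only unfold to $\tend$, since \rulename{PR3}, \rulename{PR6} and \rulename{PR10} can intervene. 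That is all association requires; the paper obtains it from Lemmas~\ref{lemma:projection-in-combined} and~\ref{lemma:completed-participants}(1).
\end{itemize}
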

\begin{theorem}[Bottom-up system \cite{Scalas2019}]
\label{thm:bottomup} Assume $\provesbot\M:\ctx$ and $\live(\ctx)$. 
\begin{enumerate} 
\item (\text{Subject Reduction with Liveness}) (\cite[Theorem~15]{Scalas2019})
  If $\M\red^\ast \Mi$, then 
  $\provesbot\Mi:\ctxi$ such that $\ctx \transi{}^{*} \ctxi$ and
  $\live(\ctxi)$. 

\item (\text{Liveness}) (\cite[Theorem~15]{Scalas2019})
If $\provesbot\M:\ctx$ and $\live(\ctx)$, 
then $\M$ is safe and live. 
\end{enumerate}
\end{theorem}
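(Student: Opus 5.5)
The plan is to prove the two parts in order, with part~(2) relying on part~(1). Part~(1) goes by induction on the length of $\M \red^\ast \Mi$; the base case is trivial. For a single step $\M \red \Mi$, I will invert the typing $\provesbot \M : \ctx$ through rule \rulename{B-Sess}. This gives $\vdash \PP[i] : \T[i]$ for each $i$, where each derivation may end in a chain of \rulename{Sub} applications, and I will use the usual inversion lemmas for process typing modulo subsumption. I then case on the reduction rule.

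\begin{itemize}
\item For \rulename{t-cond} and \rulename{f-cond}, the continuation is typed by the same $\T[i]$, so $\ctxi = \ctx$ and zero context steps suffice.
\item For \rulename{r-comm} and \rulename{r-bra}, inversion gives $\T[\pp]$ and $\T[\pq]$ whose unfoldings begin with matching output/input or selection/branching prefixes. Here \rulename{S3} means the sender's type offers at least the label actually sent, and \rulename{S4} means the receiver's type may offer more branches.
\item Safety of $\ctx$ (Definition~\ref{def:safe-contexts}) then yields $\ctx \transi{\interaction{\pp}{\pq}{\obs}} \ctxi$, with the continuations typed by $\ctxi$. Substitution of the received value is handled by the standard substitution lemma for process typing.
\item The errors \rulename{v-err} and \rulename{c-err} cannot occur. \rulename{c-err} contradicts safety via \rulename{S3}/\rulename{S4}, and \rulename{v-err} is excluded because the condition is typed $\tbool$.
\end{itemize}

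Safety of $\ctxi$ is immediate, since its reachable states are reachable from $\ctx$. Liveness of $\ctxi$ is exactly the liveness half of Lemma~\ref{lemma:preservation-reduction}. Structural precongruence is handled by noting that typing is invariant under unfolding of $\pfix{X}$ and under permutation.

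For part~(2), safety of $\M$ follows from part~(1) together with the case analysis above, since no typed, safe session reduces to $\error$. For liveness, fix a reachable $\Mi$ and a participant $\pp[i]$ with a pending prefix, say an output to $\pp[j]$. By part~(1) we have $\provesbot \Mi : \ctxi$ with $\live(\ctxi)$, and inversion gives $\ctxi \transt{\pp[i] : \qsend{\pp[j]}{\B}}$.

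I will then build a \emph{fair} maximal reduction sequence of $\Mi$. It uses a round-robin scheduler in the spirit of Definition~\ref{def:next-active}, which always resolves pending conditionals first; this is possible because guarded recursion means finitely many conditional steps reach a communication prefix. By part~(1), each communication step is mirrored by a context transition, so the sequence projects to a maximal context sequence from $\ctxi$. I will show this projected sequence is fair. If a pair is enabled in some context $\ctx[k]$, then by inversion the corresponding processes expose matching prefixes, possibly with a smaller set of selection labels than the type but never an empty one. The scheduler therefore eventually fires that pair at the process level, and hence at the context level.

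Liveness of $\ctxi$ then gives a later index where $\pp[j]$ offers the matching input. At that point both processes are at matching prefixes, which are enabled by safety, and fairness of the scheduler forces the communication. This yields the required $\M[2] \pc \proctag{\pp[i]}{\PPi[i]}$ of Definition~\ref{def:properties}. The remaining cases are symmetric; for branching, some $h \in K$ is eventually chosen.

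The main obstacle is this fairness transfer between process-level and context-level reductions. Context transitions may exist for labels or branches the process never offers, because subtypes have fewer selections and more branches. A fair process run must therefore be shown to project to a fair context run, and not merely to some run. I would handle this by tracking, along the constructed sequence, the contexts given by part~(1) step by step, rather than the original $\ctxi$. I would then argue fairness \emph{pairwise on participants}, not on specific labels, which matches the pair-based fairness definition used for $\live$.
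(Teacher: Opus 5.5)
Your outline is sound, but it takes a genuinely different route. The paper gives no argument of its own for this theorem: it imports both clauses from \cite[Theorem~15]{Scalas2019}. You instead reprove them directly for the synchronous calculus used here.

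\textbf{Subject reduction.} You invert process typing modulo \rulename{Sub}, use safety of the context to reconcile the label actually sent with the receiver's context type, and use Lemma~\ref{lemma:preservation-reduction} to get liveness of the successor context.

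\textbf{Liveness.} You build a process-level round-robin run, project it to a fair maximal context run, and then apply context liveness in the manner of Lemma~\ref{lemma:fair-live-communication}. This mirrors the paper's own round-robin machinery (Lemma~\ref{lemma:projection-of-round-robin-sequences-is-fair}), lifted from contexts to sessions.

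\textbf{What each route buys.} The citation buys brevity. It also leaves implicit the transfer from Scalas and Yoshida's $\pi$-calculus with session channels, where the fair-path property is called live+, to the present calculus. Your route is self-contained, and it shows \emph{why} fair-path liveness survives subsumption while weak liveness does not (Example~\ref{ex:weak-liveness-not-closed}).
\begin{itemize}
\item Fairness is pairwise, not per label.
\item Safety makes every context-enabled pair also process-enabled. The single label a process selects lies in its type's selection set, hence by safety in the partner's context branch set, hence in the partner process's larger branch set.
\end{itemize}
Your proof also brings out two assumptions the statement silently needs:
\begin{itemize}
\item conditionals must not count as guards, so that resolving pending conditionals terminates;
\item well-typed closed expressions must evaluate to values of their sort, since a stuck evaluation would break both maximality and the theorem itself.
\end{itemize}

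\textbf{Two small points to tidy.}
\begin{itemize}
\item Under \rulename{S4} it is the process (the subtype) that offers \emph{more} branches than its context type. The context type may offer fewer, which is exactly why safety, not \rulename{S4} alone, excludes \rulename{c-err}.
\item Maximality of the projected context run in the finite case depends on the converse correspondence: a context-enabled pair is process-enabled once conditionals are resolved. Establish this before claiming maximality, not after.
\end{itemize}
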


We now arrive at the main result:
 
\begin{theorem}[Top-down $=$ Bottom-up]
\label{thm:proc:completeness}
\leavevmode\\
(1)\label{item:topbot}
If $\provestop\M:\ctx$, then
$\provesbot\M:\ctx$ with $\live(\ctx)$.
\\
(2)\label{item:bottop}
If $\provesbot\M:\ctx$ with $\live(\ctx)$, then
there exists $\ctxi$ such that $\provestop\M:\ctxi$ and
$\ctx\subt \ctxi$.
\end{theorem}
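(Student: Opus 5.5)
For part (1), I will unfold the hypothesis $\provestop\M:\ctx$ through rule \rulename{T-Sess}. This gives $\M=\prod_{i\in I}\proctag{\pp[i]}{\PP[i]}$ with $\vdash \PP[i]:\T[i]$ for each $i\in I$, a balanced $\G$ with $\G \cofullproj{\pp[i]} \T[i]$ for each $i$, and $\pt{\G}\subseteq\set{\pp[i]}_{i\in I}$. Theorem~\ref{thm:topdown}(2a) yields $\slive(\ctx)$ directly. Hence $\safe(\ctx)$ holds, and \rulename{B-Sess} applied to the same per-process judgements gives $\provesbot\M:\ctx$. Together with $\live(\ctx)$, this is (1). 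No new argument is needed beyond the cited soundness theorem.

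For part (2), I will invert \rulename{B-Sess} to obtain $\vdash\PP[i]:\T[i]$ and $\safe(\ctx)$. Combined with the hypothesis $\live(\ctx)$, this gives $\slive(\ctx)$. I then take $\G=\textsf{synth}(\varnothing,\pp[0],\ctx)$. By Theorem~\ref{theorem:existence-associated-global-type-for-safe-live-contexts} we have $\ctx\assoc\G$, and by Lemma~\ref{lemma:synthesised-balanced} $\G$ is balanced. By Definition~\ref{def:association}, $\ctx$ splits as $\ctxact,\ctxterm$ with $\dom{\ctxact}=\pt{\G}$, there is a $\ctx[0]$ with $\G\cofullproj{p}\ctx[0](\pp)$ for $\pp\in\pt{\G}$, and $\ctxact\subt\ctx[0]$; also $\unfold{\ctxterm(\pp)}=\tend$. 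I define $\ctxi=\ctx[0],\ctxterm'$, where $\ctxterm'(\pp)=\tend$ for $\pp\in\dom{\ctxterm}$.

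Next I must check the premises of \rulename{T-Sess} for $\ctxi$. First, $\pt{\G}\subseteq\dom{\ctx}=\set{\pp[i]}_{i\in I}$. Second, for $\pp[i]\in\pt{\G}$ we have $\T[i]\subt\ctx[0](\pp[i])$, so \rulename{Sub} gives $\vdash\PP[i]:\ctx[0](\pp[i])$, and that type is a projection of $\G$. Third, for $\pp[i]\notin\pt{\G}$, the type $\T[i]$ unfolds to $\tend$. So $\T[i]\subt\tend$ by \rulename{S5} and \rulename{S7}, \rulename{Sub} gives $\vdash\PP[i]:\tend$, and I still need $\G\cofullproj{\pp[i]}\tend$.

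That last requirement is the main obstacle: we need projection onto a participant absent from $\G$. I would prove a small auxiliary coinductive lemma: if $\pp\notin\pt{\G}$ and $\G$ is closed and guarded, then $\G\cofullproj{p}\tend$. The proof takes the relation $\set{(\Gi,\tend)\mid \pp\notin\pt{\Gi}}$ over closed types and shows it is backward closed. Each case is handled as follows:
\begin{itemize}
\item $\gend$ is handled by \rulename{PR11}.
\item $\mu$-types are handled by \rulename{PR12}.
\item Payload messages are handled by \rulename{PR3}.
\item Choices are handled by \rulename{PR6}, using $\comergefull{\set{\tend}}{\tend}$ via \rulename{M1}; all branch projections are $\tend$, so the set is a singleton.
\end{itemize}
One subtlety is that \rulename{PR9} and \rulename{PR12} interact with unfolded subterms that may contain free variables. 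I would phrase the relation on closed types obtained by unfolding, and check that the $\pt{}$ of an unfolding equals that of the binder body.

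With this lemma, every premise of \rulename{T-Sess} holds, giving $\provestop\M:\ctxi$. Finally, $\ctx\subt\ctxi$ holds pointwise: from $\ctxact\subt\ctx[0]$ on $\pt{\G}$, and from $\T\subt\tend$ whenever $\unfold{\T}=\tend$ elsewhere.
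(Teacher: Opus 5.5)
Your proposal is correct and follows the paper's proof essentially step for step: part (1) goes through Theorem~\ref{thm:topdown}(2) and \rulename{B-Sess}, and part (2) goes through $\textsf{synth}(\varnothing,\pp[0],\ctx)$, Theorem~\ref{theorem:existence-associated-global-type-for-safe-live-contexts} and Lemma~\ref{lemma:synthesised-balanced}, then pads the association witness with $\tend$ and applies \rulename{Sub} and \rulename{T-Sess}. The auxiliary fact you single out as the main obstacle is exactly Lemma~\ref{lemma:completed-participants}(4) in the paper, proved with the same backward-closed relation and rules \rulename{PR11}, \rulename{PR12}, \rulename{PR3} and \rulename{PR6} with \rulename{M1}; your closedness worry does not actually arise, since \rulename{PR12} is an axiom and \rulename{PR9} is never needed.
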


\begin{proof}
(1)
Assume $\provestop\M:\ctx$.
By Theorem~\ref{thm:topdown}(2), $\slive(\ctx)$, hence
$\safe(\ctx)$ and $\live(\ctx)$.
The premises of \rulename{T-Sess} give
$\vdash \PP[i] : \T[i]$ for each~$i$, so \rulename{B-Sess} yields
$\provesbot\M:\ctx$.
Thus $\provesbot\M:\ctx$ with $\live(\ctx)$.

(2)
Assume $\provesbot\M:\ctx$ with $\live(\ctx)$.
From \rulename{B-Sess} we obtain $\vdash \PP[i] : \T[i]$ for each~$i$
and $\safe(\ctx)$, hence $\slive(\ctx)$.
Let $\G = \textsf{synth}(\varnothing, \pp[0], \ctx)$.
By Theorem~\ref{theorem:existence-associated-global-type-for-safe-live-contexts},
$\ctx \assoc \G$, and $\G$ is balanced by
Lemma~\ref{lemma:synthesised-balanced}.
By Definition~\ref{def:association}, $\ctx$ splits as
$\ctx = \ctxact, \ctxterm$, and clause~(1) provides $\ctx[0]$ with
$\dom{\ctx[0]} = \pt{\G}$,
$\G \cofullproj{p} \ctx[0](\pp)$ for all $\pp \in \pt{\G}$,
and $\ctxact \subt \ctx[0]$.
Define $\ctxi$ with $\dom{\ctxi} = \dom{\ctx}$ by
$\ctxi(\pp) = \ctx[0](\pp)$ for $\pp \in \pt{\G}$ and
$\ctxi(\pp) = \tend$ otherwise.
Then $\ctx \subt \ctxi$, since on $\pt{\G}$ this is
$\ctxact \subt \ctx[0]$, and elsewhere
$\unfold{\ctx(\pp)} = \tend$ gives $\ctx(\pp) \subt \tend$
(rules \rulename{S5} and \rulename{S7}).
So \rulename{Sub} gives $\vdash \PP[i] : \ctxi(\pp[i])$ for each~$i$.
Moreover $\G \cofullproj{\pp[i]} \ctxi(\pp[i])$ holds for each~$i$,
by the witness property for $\pp[i] \in \pt{\G}$ and by
Lemma~\ref{lemma:completed-participants}(4) for
$\pp[i] \notin \pt{\G}$.
Since $\pt{\G} \subseteq \dom{\ctx} = \set{\pp[i] \mid i \in I}$
and $\G$ is balanced,
rule \rulename{T-Sess} applies, yielding $\provestop\M:\ctxi$.
\end{proof}

By Theorems~\ref{thm:topdown}, \ref{thm:bottomup}
and \ref{thm:proc:completeness}, we have:  

\begin{corollary}
\label{cor:completeness}~
\begin{enumerate}
\item $\provestop\M:\ctx$ is complete with respect to the intersection of
 safety and liveness.  
\item 
$\provesbot\M:\ctx$ with $\live(\ctx)$ is complete with respect to 
the intersection of safety and liveness.  
\end{enumerate}
\end{corollary}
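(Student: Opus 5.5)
The plan is to prove both clauses by instantiating $\varphi$ as $\slive$, i.e.\ $\safe \wedge \live$, and checking the two conditions of the completeness definition. The argument relies entirely on Theorems~\ref{thm:topdown}, \ref{thm:bottomup} and \ref{thm:proc:completeness}, together with Lemma~\ref{lemma:preservation-reduction}. Throughout we fix $\vdash \PP[i] : \T[i]$ for $i\in I$, and write $\M=\prod_{i\in I}\proctag{\pp[i]}{\PP[i]}$ and $\ctx=\set{\ptag{\pp[i]}{\T[i]}}_{i\in I}$, assuming $\slive(\ctx)$.

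\textbf{Clause (2), bottom-up system.}
\begin{itemize}
\item \emph{Condition (1).} Since $\safe(\ctx)$ holds, rule \rulename{B-Sess} gives $\provesbot \M:\ctx$. We take $\ctxi=\ctx$. Then $\slive(\ctxi)$ holds by assumption, and $\ctx\subt\ctx$ holds by reflexivity of $\subt$. Reflexivity follows by coinduction: the identity relation on local types, closed under unfolding, is backward closed under \rulename{S1}--\rulename{S7}.
\item \emph{Condition (2).} Let $\M\red^\ast\Mi$. Theorem~\ref{thm:bottomup}(1) yields $\provesbot\Mi:\ctxii$ with $\ctx\transi{}^{*}\ctxii$ and $\live(\ctxii)$. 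For safety, we can read $\safe(\ctxii)$ off the premise of \rulename{B-Sess} in the typing of $\Mi$. Alternatively, we obtain it by iterating Lemma~\ref{lemma:preservation-reduction} along $\ctx\transi{}^{*}\ctxii$. Either way $\slive(\ctxii)$ holds.
\end{itemize}

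\textbf{Clause (1), top-down system.}
\begin{itemize}
\item \emph{Condition (1).} As above, $\provesbot\M:\ctx$ with $\live(\ctx)$. Theorem~\ref{thm:proc:completeness}(2) then supplies $\ctxi$ with $\provestop\M:\ctxi$ and $\ctx\subt\ctxi$. Theorem~\ref{thm:topdown}(2a) gives $\slive(\ctxi)$.
\item \emph{Condition (2).} Let $\M\red^\ast\Mi$. Theorem~\ref{thm:topdown}(1), applied to $\provestop\M:\ctxi$, gives $\provestop\Mi:\ctxii$ for some $\ctxii$. Soundness, Theorem~\ref{thm:topdown}(2a), applied to this judgement yields $\slive(\ctxii)$.
\item \emph{Alternative route for condition (2).} We can instead go through the bottom-up side. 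Apply Theorem~\ref{thm:bottomup}(1) to obtain a live, safe $\ctxii$ typing $\Mi$ bottom-up. Then apply Theorem~\ref{thm:proc:completeness}(2) again, followed by Theorem~\ref{thm:topdown}(2a).
\end{itemize}

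\textbf{Main obstacle.} There is essentially no obstacle beyond bookkeeping. The one point needing care is that $\varphi$ must be the conjunction, not liveness alone. Theorem~\ref{thm:bottomup} is stated with $\live$ only, so safety of reducts must be recovered separately, either from \rulename{B-Sess} or from Lemma~\ref{lemma:preservation-reduction}. The top-down side gets both properties simultaneously from soundness.
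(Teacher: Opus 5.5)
Your proof is correct and takes the paper's route. The paper derives the corollary directly from Theorems~\ref{thm:topdown}, \ref{thm:bottomup} and \ref{thm:proc:completeness}, and your case analysis unpacks that citation: for clause (2), \rulename{B-Sess} with reflexivity of $\subt$, then bottom-up subject reduction with safety read off the \rulename{B-Sess} premise; for clause (1), Theorem~\ref{thm:proc:completeness}(2), then top-down subject reduction and soundness.
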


In other words, for the class of type-checked live sessions, the
protocol-driven top-down approach is just as expressive as the
model-checking bottom-up approach.

\noindent
It may be tempting to weaken the liveness condition $\live(\ctx)$ in
Theorem~\ref{thm:proc:completeness}, either to deadlock-freedom or to
the following notion of
\emph{weak liveness}, which requires progress to remain possible from
every state rather than guaranteed along every fair path. However, this is
not possible when using the subsumption rule as these properties are not downwards-closed under $\subt$.

\begin{definition}[Weakly live contexts]
\label{def:weak-live}
A context $\ctx$ is \emph{weakly live} (written $\wlive(\ctx)$) if for every
$\ctxi$ with $\ctx \transi{}^{*} \ctxi$:
\begin{enumerate}
\item[(1)] $\ctxi \transt{\pp : \qsend{\pq}{\obs}}$ implies
  $\ctxi \transi{}^{*} \ctxii \transi{\interaction{\pp}{\pq}{\obs}}$ for some
  $\ctxii$; and
\item[(2)] $\ctxi \transt{\pp : \qrecv{\pq}{\obs}}$ (for some $\obs$) implies
  $\ctxi \transi{}^{*} \ctxii \transi{\interaction{\pq}{\pp}{\obsi}}$ for some
  $\ctxii$ and $\obsi$.
\end{enumerate}
\end{definition}

\noindent
From every reachable state, weak liveness requires that if a one-sided output is enabled
for a participant then the corresponding two-sided transition, with the same payload, is eventually feasible,
and that, if a one-sided input is enabled, then at least some input is eventually feasible for that participant.
This is the same definition as \cite[Fig.~5(5)]{Scalas2019} (they call this \emph{live} and use \emph{live+} for our notion of live).

\begin{example}[Weak liveness is not enough]
\label{ex:balancedness-required}
\label{ex:weak-liveness-not-closed}
Consider the (unbalanced) global type
\[
\G \;=\; \toc{\sfmu}\ty.\,\GvtPairs{\pp}{\pq}{
    \lab[1]\gc \ty,\;\;
    \lab[2]\gc \Gvt{\pp}{\role{r}}{\B}\gend
}
\]
and its coinductive full-merging projection
$\ctx = \set{\pp:\T[\pp],\,\pq:\T[\pq],\,\role{r}:\T[\role{r}]}$ with
\[
\T[\pp] = \toc{\sfmu}\ty.\,\tsel{\pq}{\lab[1]\gc \ty,\;\;
                                       \lab[2]\gc \tout{\role{r}}{\B}\tend}
\quad
\T[\pq] = \toc{\sfmu}\ty.\,\tbra{\pp}{\lab[1]\gc \ty,\;\;
                                       \lab[2]\gc \tend}
\quad
\T[\role{r}] = \tin{\pp}{\B}\tend.
\]
Deleting $\pp$'s $\lab[2]$ branch gives
$\Ti[\pp] = \toc{\sfmu}\ty.\,\tsel{\pq}{\lab[1]\gc \ty}$ and, since
subtypes may offer fewer selections (rule \rulename{S3}), a subtype
$\ctxi = \set{\pp:\Ti[\pp],\,\pq:\T[\pq],\,\role{r}:\T[\role{r}]}
\subt \ctx$ in which every reduction performs only
$\interaction{\pp}{\pq}{\lab[1]}$, so $\role{r}$ waits on $\pp$
forever.

The context $\ctx$ is deadlock-free and weakly live
($\role{r}$'s pending input becomes feasible once $\pp$ selects
$\lab[2]$), but not live, since the reduction in which $\pp$ always selects
$\lab[1]$ never delivers a message to $\role{r}$. Its subtype $\ctxi$
is not weakly live (and similarly not deadlock-free, taking a
participant blocked forever on a receive to be deadlocked, as in
\cite{DiGiustoUrsoLozesPPDP2025}).
Participant $\role{r}$ is stuck and
its pending input $\ctxi \transt{\role{r} : \qrecv{\pp}{\B}}$ violates
clause~(2) of Definition~\ref{def:weak-live}. Hence weak liveness is
not downwards-closed under $\subt$, unlike liveness, and
\rulename{Sub} would let this defect reach typed
processes (see also Example~5.14 of~\cite{Scalas2019}).

Concretely, our process syntax has only \emph{single}
selection $\procsel{\pq}{\lab}\,\PP$, so a process commits to one branch
while its selection type advertises a whole set of labels.
Any similar setting will require a typing system featuring subsumption. Consider:
\[
\small
\PPi[\pp] = \pfix{X}\,\procsel{\pq}{\lab[1]}\,\poc{X}
\qquad
\PP[\pq] = \pfix{Y}\,\procbra{\pp}{\lab[1] \gc \poc{Y},\; \lab[2] \gc \inact}
\qquad
\PP[\role{r}] = \procin{\pp}{x}{\inact}
\]
where $\PPi[\pp]$ has minimal type $\ctxi(\pp)$, while $\PP[\pq]$ and
$\PP[\role{r}]$ realise $\T[\pq]$ and $\T[\role{r}]$. Since
$\ctxi \subt \ctx$, rule \rulename{Sub} derives
$\vdash \PPi[\pp] : \ctx(\pp)$, so the session
$\Mi = \proctag{\pp}{\PPi[\pp]} \pc \proctag{\pq}{\PP[\pq]} \pc
\proctag{\role{r}}{\PP[\role{r}]}$
is typed by the weakly live context $\ctx$. Yet
$\PPi[\pp]$ never fires its branch to $\role{r}$, so $\role{r}$ blocks
forever and $\Mi$ is not live.
\end{example}

\noindent
This is why Theorem~\ref{thm:proc:completeness} is stated for $\live$.
Were the bottom-up judgement to require only deadlock-freedom or
$\wlive(\ctx)$, the session $\Mi$ above would type-check despite not
being live.

\section{Implementation}
\label{sec:implementation}
We implement a Haskell toolchain (GHC 9.6.6) and test it on the
examples from this paper and case studies from the literature.
\subsection{Implemented Features}
Our toolchain implements both sides of the theory. On the \textbf{top-down}
strategy (Figure~\ref{fig:topdown}), it includes the four \textbf{projection algorithms} of
\cite{thien-nobuko-popl-25} (IP, IF, CP, and CF; Definitions
\ref{def:coind-merge} and \ref{def:coind-proj}), \textbf{balancedness
checking} (Definition~\ref{def:balanced-global-types}), \textbf{synchronous
subtyping} (Definition~\ref{def:coind-subtyping}), and \textbf{process type
checking} (\appref{app:typing}). On the \textbf{bottom-up} strategy (Figure~\ref{fig:bottomup}), it
implements \textbf{local type inference} from processes
\cite[Sec.~5.2]{thien-nobuko-popl-25}, \textbf{safety and liveness checking} of
typing contexts via {\sf mpstk}~\cite{Scalas2019}, and \textbf{global type
synthesis} from safe-and-live contexts as in \SEC{sec:building}.
While \cite{thien-nobuko-popl-25} proposed the four projection
algorithms, no implementation was provided; ours is the first
implementation, in particular of the coinductive full-merging
projection (CF).

The toolchain parses the paper's
surface syntax into global, local, and
context automata on which the core algorithms operate. In particular, balancedness
is checked on the graph representation by comparing reachable and
unavoidable participants at each reachable vertex. Both sets can be
computed iteratively in $O(\vert \G \vert^2)$ time.

\subsection{Global Type Inference}

The core feature of our implementation is the synthesis algorithm (Algorithm~\ref{alg:synth}), which constructs a global type 
from a local context. The algorithm follows the structure of the function $\textsf{synth}(\Sigma, \pp[i], \ctx)$ (Definition~\ref{def:synth-rules} in \SEC{sec:building}). 

\noindent\begin{tabular}{@{}p{0.42\textwidth}@{\hspace{12pt}\hspace{12pt}}p{0.42\textwidth}@{}}
\begin{minipage}[t]{\linewidth}
\begin{algorithm}[H]
\scriptsize
\caption{Global Type Synthesis}\label{alg:synth}
\begin{algorithmic}
\Require Safe and live $\ctx$, initial priority $\pp[i]$.
\Ensure $\textsf{synth}(\varnothing, \pp[i], \ctx)$.
\Function{Synth}{$\Sigma, \pp[i], \ctx$}
  \If{$\textsf{stuck}(\ctx)$} \Return $\gend$ \EndIf
  \State $\pp[j] \gets \textsf{next-active}(\ctx, \pp[i])$
  \If{$(\ctx, \pp[j]) \in \dom{\Sigma}$}
    \State \Return $\Sigma(\ctx, \pp[j])$
  \EndIf
  \State Choose fresh $\ty$
  \State $\Sigma' \gets \Sigma \cup \set{(\ctx, \pp[j]) \mapsto \ty}$
  \State Choose $\pq$ such that $\ctx \transi{\pp[j]\pq}$
  \State $L \gets \set{\obs \mid \ctx \transi{\interaction{\pp[j]}{\pq}{\obs}}}$
  \ForAll{$\obs \in L$}
    \State Choose $\ctxi$ with
    \Statex \hspace{\algorithmicindent}$\ctx \transi{\interaction{\pp[j]}{\pq}{\obs}} \ctxi$
    \State $\G[\obs] \gets \Call{Synth}{\Sigma', \pp[(j{+}1)\bmod n], \ctxi}$
  \EndFor
  \State \Return $\toc{\sfmu} \ty . \pp[j] \rightarrow \pq \{\, \obs : \G[\obs] \,\}_{\obs \in L}$
\EndFunction
\end{algorithmic}
\end{algorithm}
\end{minipage}
&
\begin{minipage}[t]{\linewidth}
\vspace{28pt}
\small
\noindent\textbf{Algorithm outline.}
\begin{enumerate}[leftmargin=*,itemsep=2pt,topsep=3pt,label=(\arabic*)]
\item Select the next active sender $\pp[j]$ by cycling through participants in round-robin order (\SEC{sec:semantics}).
\item If the pair $(\ctx, \pp[j])$ was already visited, return the stored recursion variable to close the loop.
\item Otherwise, record a fresh variable $\ty$ and enumerate every observable enabled for $\pp[j]$.
\item Recurse on each successor context; assemble all branches into a global type node.
\item When all processes are idle ($\textsf{stuck}$), return $\gend$. Memoisation converts revisited states into $\mu$-binders.
\end{enumerate}
\end{minipage}
\end{tabular}

\subsection{Evaluations}

\newcommand{\tabY}{{\color{green!50!black}$\checkmark$}}
\newcommand{\tabN}{{\color{red!70!black}$\times$}}
\newcommand{\tabslow}[1]{{\color{blue!70!black}#1}}
\newcommand{\taberr}[1]{{\color{red}$^{#1}$}}
\begin{table}[t]
\centering
\caption{Top-down and bottom-up benchmarks give median times over $10$
  runs. \#p is the number of participants; Citation gives the source
  and, where available, the entry in Table~\ref{tab:example-global-types}.
  \emph{Top-down}, from the global type $G$: its size $|G|$ and
  balancedness (Bal), the size $|\Delta_{\textsf{pro}}|$ of its
  projection, success/failure of the inductive plain (IP), inductive
  full (IF) and coinductive plain (CP) projections, and the runtime of
  coinductive full (CF) projection, which succeeds on every row.
  \emph{Bottom-up}, from a local context $\Delta$: its size
  $|\Delta|$, safety, deadlock-freedom (see \cite{Scalas2019}) and liveness
  ($\textsf{safe}$/$\textsf{df}$/$\textsf{live}$), the {\sf mpstk}
  time to check them, and the size $|G_{\textsf{inf}}|$, balancedness
  (Bal) and synthesis time (Synth.)\ of the synthesised global type.
  Rows marked \emph{no source global type} are contexts that are not
  the direct projection of any global type (each safe-and-live one is
  associated with its synthesised type); for Travel Agency, Company
  Communication, Ping Pong, Circuit Breaker, and Ring($n$), the
  bottom-up input is the projection of the global type.
  {\color{red}$^1$} {\sf mpstk} stack overflow.
  {\color{red}$^2$} Time limit exceeded.}
\label{tab:benchmarks}
\resizebox{\textwidth}{!}{%
\scriptsize
\setlength{\tabcolsep}{3.5pt}%
\begin{tabular}{l l r r c r c c c r r c c c r r c r}
\toprule
& & & \multicolumn{7}{c}{Top-down} & \multicolumn{8}{c}{Bottom-up} \\
\cmidrule(lr){4-10}\cmidrule(lr){11-18}
Example & Citation & \#p & $|G|$ & Bal & $|\Delta_{\textsf{pro}}|$ & IP & IF & CP & CF & $|\Delta|$ & $\textsf{safe}$ & $\textsf{df}$ & $\textsf{live}$ & mpstk & $|G_{\textsf{inf}}|$ & Bal & Synth. \\
\midrule
Simple Travel Agency & \cite[Fig.~1(a)]{YoshidaGheri2020}; \exampletabref{g} & 2 & 9 & \tabY & 18 & \tabY & \tabY & \tabY & 52.0\,\textmu s & 18 & \tabY & \tabY & \tabY & \tabslow{190.0\,ms} & 9 & \tabY & 297.0\,\textmu s \\
Better Travel Agency & \cite[Fig.~1(b)]{YoshidaGheri2020}; \exampletabref{h} & 2 & 12 & \tabY & 24 & \tabY & \tabY & \tabY & 59.0\,\textmu s & 24 & \tabY & \tabY & \tabY & \tabslow{200.0\,ms} & 12 & \tabY & 299.0\,\textmu s \\
OAuth & \cite[Ex.~1]{Scalas2019}; \exampletabref{b} & 3 & 13 & \tabY & 27 & \tabN & \tabY & \tabN & 74.0\,\textmu s & 27 & \tabY & \tabY & \tabY & \tabslow{200.0\,ms} & 13 & \tabY & \tabslow{1.4\,ms} \\
Two Buyer & \cite[Ex.~2]{Scalas2019}; \exampletabref{c} & 3 & 20 & \tabN & 41 & \tabN & \tabN & \tabN & 116.0\,\textmu s & 41 & \tabY & \tabY & \tabN & \tabslow{210.0\,ms} & 20 & \tabN & \tabslow{2.8\,ms} \\
Inductive Plain & \cite[Ex.~4.8]{thien-nobuko-popl-25}; \exampletabref{i} & 3 & 10 & \tabY & 20 & \tabY & \tabY & \tabY & 43.0\,\textmu s & 20 & \tabY & \tabY & \tabY & \tabslow{190.0\,ms} & 10 & \tabY & 66.0\,\textmu s \\
Inductive Full & \cite[Ex.~4.8]{thien-nobuko-popl-25}; \exampletabref{j} & 3 & 10 & \tabY & 22 & \tabN & \tabY & \tabN & 48.0\,\textmu s & 27 & \tabY & \tabY & \tabY & \tabslow{200.0\,ms} & 16 & \tabY & 281.0\,\textmu s \\
Semantic Recursion & \cite{Tirore2023}; \exampletabref{f} & 4 & 9 & \tabY & 20 & \tabN & \tabN & \tabY & 56.0\,\textmu s & 20 & \tabY & \tabY & \tabY & \tabslow{190.0\,ms} & 7 & \tabY & 63.0\,\textmu s \\
Ring & \cite{Castro2026}; \exampletabref{k} & 3 & 15 & \tabY & 31 & \tabN & \tabY & \tabN & 66.0\,\textmu s & 31 & \tabY & \tabY & \tabY & \tabslow{200.0\,ms} & 15 & \tabY & \tabslow{1.5\,ms} \\
Odd-Even & \cite[Ex.~2.1]{Li2023}; \exampletabref{a} & 3 & 33 & \tabY & 68 & \tabN & \tabN & \tabN & 150.0\,\textmu s & 68 & \tabY & \tabY & \tabY & \tabslow{200.0\,ms} & 43 & \tabY & \tabslow{10.4\,ms} \\
Monte Carlo Gmap & \S \ref{sec:overview} & 4 & 18 & \tabY & 44 & \tabN & \tabN & \tabN & 148.0\,\textmu s & 42 & \tabY & \tabY & \tabY & \tabslow{220.0\,ms} & 18 & \tabY & \tabslow{9.8\,ms} \\
Monte Carlo Gmin & \S \ref{sec:overview} & 4 & 15 & \tabY & 32 & \tabY & \tabY & \tabY & 93.0\,\textmu s & 39 & \tabY & \tabY & \tabY & \tabslow{220.0\,ms} & 15 & \tabY & \tabslow{9.3\,ms} \\
Independent Pairs & Fig \ref{fig:round-robin-reduction-diagram} & 4 & 13 & \tabY & 20 & \tabY & \tabY & \tabY & 46.0\,\textmu s & 24 & \tabY & \tabY & \tabY & \tabslow{200.0\,ms} & 13 & \tabY & \tabslow{1.4\,ms} \\
Company Communication & \cite[Fig.~9]{DBLP:conf/ecoop/GheriLSTY22}; \exampletabref{m} & 6 & 30 & \tabY & 68 & \tabN & \tabY & \tabN & 178.0\,\textmu s & 68 & \tabY & \tabY & \tabY & \tabslow{270.0\,ms} & 44 & \tabY & \tabslow{202.0\,ms} \\
Online Wallet & \cite[Fig.~1]{DBLP:conf/rv/NeykovaYH13}; \exampletabref{n} & 3 & 26 & \tabY & 53 & \tabN & \tabY & \tabN & 155.0\,\textmu s & 53 & \tabY & \tabY & \tabY & \tabslow{220.0\,ms} & 26 & \tabY & \tabslow{18.3\,ms} \\
Adder & \cite[Fig.~1(a)]{FASE16EndpointAPI}; \exampletabref{l} & 2 & 13 & \tabY & 26 & \tabY & \tabY & \tabY & 62.0\,\textmu s & 26 & \tabY & \tabY & \tabY & \tabslow{200.0\,ms} & 13 & \tabY & 349.0\,\textmu s \\
Distributed Logging & \cite[Fig.~16]{DBLP:conf/ecoop/LagaillardieNY22}; \exampletabref{o} & 2 & 17 & \tabY & 34 & \tabY & \tabY & \tabY & 86.0\,\textmu s & 34 & \tabY & \tabY & \tabY & \tabslow{200.0\,ms} & 17 & \tabY & 541.0\,\textmu s \\
E-Voting & \cite{DBLP:conf/ecoop/LagaillardieNY22}; \exampletabref{p} & 2 & 22 & \tabY & 44 & \tabY & \tabY & \tabY & 105.0\,\textmu s & 44 & \tabY & \tabY & \tabY & \tabslow{200.0\,ms} & 22 & \tabY & \tabslow{1.3\,ms} \\
Ping Pong & \cite[Fig.~14]{BHYZ2023} & 2 & 6 & \tabY & 12 & \tabY & \tabY & \tabY & 27.0\,\textmu s & 12 & \tabY & \tabY & \tabY & \tabslow{190.0\,ms} & 6 & \tabY & 48.0\,\textmu s \\
Circuit Breaker & \cite[Fig.~14]{BHYZ2023} & 3 & 57 & \tabY & 123 & \tabN & \tabN & \tabN & 323.0\,\textmu s & 123 & \tabY & \tabY & \tabY & \tabslow{340.0\,ms} & 72 & \tabY & \tabslow{202.8\,ms} \\
MapReduce(5) & \cite[Ex.~3]{Scalas2019}; \exampletabref{d} & 5 & 25 & \tabY & 62 & \tabN & \tabN & \tabN & 179.0\,\textmu s & 62 & \tabY & \tabY & \tabY & \tabslow{270.0\,ms} & 35 & \tabY & \tabslow{324.9\,ms} \\
MapReduce(6) & \cite[Ex.~3]{Scalas2019}; \exampletabref{d} & 6 & 31 & \tabY & 78 & \tabN & \tabN & \tabN & 234.0\,\textmu s & 78 & \tabY & \tabY & \tabY & \tabslow{370.0\,ms} & 45 & \tabY & \tabslow{3.96\,s} \\
MapReduce(7) & \cite[Ex.~3]{Scalas2019}; \exampletabref{d} & 7 & 37 & \tabY & 94 & \tabN & \tabN & \tabN & 297.0\,\textmu s & 94 & \tabY & \tabY & \tabY & \tabslow{590.0\,ms} & 55 & \tabY & \tabslow{32.48\,s} \\
Independent Workers(7) & \cite[Ex.~4]{Scalas2019}; \exampletabref{e} & 7 & 59 & \tabY & 91 & \tabN & \tabN & \tabN & 410.0\,\textmu s & 61 & \tabY & \tabY & \tabY & \tabslow{320.0\,ms} & 73 & \tabY & \tabslow{2.77\,s} \\
Independent Workers(10) & \cite[Ex.~4]{Scalas2019}; \exampletabref{e} & 10 & 215 & \tabY & 169 & \tabN & \tabN & \tabN & \tabslow{1.8\,ms} & 91 & \tabY & \tabY & \tabY & \tabslow{1.83\,s} & \taberr{2} & \taberr{2} & \taberr{2} \\
Ring($5$)  & \exampletabref{u} & 5 & 7 & \tabY & 20 & \tabY & \tabY & \tabY & 51.0\,\textmu s & 20 & \tabY & \tabY & \tabY & \tabslow{210.0\,ms} & 7 & \tabY & 510.0\,\textmu s \\
Ring($8$)  & \exampletabref{u} & 8 & 10 & \tabY & 32 & \tabY & \tabY & \tabY & 87.0\,\textmu s & 32 & \tabY & \tabY & \tabY & \tabslow{220.0\,ms} & 10 & \tabY & \tabslow{10.7\,ms} \\
Ring($12$) & \exampletabref{u} & 12 & 14 & \tabY & 48 & \tabY & \tabY & \tabY & 161.0\,\textmu s & 48 & \tabY & \tabY & \tabY & \tabslow{250.0\,ms} & 14 & \tabY & \tabslow{489.8\,ms} \\
Ring($15$) & \exampletabref{u} & 15 & 17 & \tabY & 60 & \tabY & \tabY & \tabY & 229.0\,\textmu s & 60 & \tabY & \tabY & \tabY & \tabslow{300.0\,ms} & 17 & \tabY & \tabslow{6.92\,s} \\
$\ctx[5]$ & \cite[Fig.~4]{thien-nobuko-popl-25} & 3 & \multicolumn{7}{c}{\color{black!60}\emph{no source global type}} & 21 & \tabY & \tabY & \tabY & \tabslow{210.0\,ms} & 9 & \tabY & 308.0\,\textmu s \\
$\ctx[6]$ & \cite[Fig.~4]{thien-nobuko-popl-25} & 2 & \multicolumn{7}{c}{\color{black!60}\emph{no source global type}} & 10 & \tabN & \tabY & \tabN & \tabslow{210.0\,ms} & 3 & \tabY & 53.0\,\textmu s \\
$\ctx[7]$ & \cite[Fig.~4]{thien-nobuko-popl-25} & 5 & \multicolumn{7}{c}{\color{black!60}\emph{no source global type}} & 17 & \tabN & \tabY & \tabN & \tabslow{210.0\,ms} & 4 & \tabY & 131.0\,\textmu s \\
$\ctx[8]$ & \cite[Fig.~4]{thien-nobuko-popl-25} & 1 & \multicolumn{7}{c}{\color{black!60}\emph{no source global type}} & 3 & \tabY & \tabN & \tabN & \tabslow{200.0\,ms} & 1 & \tabY & 47.0\,\textmu s \\
$\ctx[9]$ & \cite[Fig.~4]{thien-nobuko-popl-25} & 3 & \multicolumn{7}{c}{\color{black!60}\emph{no source global type}} & 11 & \tabY & \tabY & \tabN & \tabslow{200.0\,ms} & 4 & \tabY & 40.0\,\textmu s \\
Coinductive Full(2) & \cite[Thm.~4.24]{thien-nobuko-popl-25}; \exampletabref{q} & 3 & 25 & \tabY & 62 & \tabN & \tabN & \tabN & 133.0\,\textmu s & 62 & \tabY & \tabY & \tabY & \tabslow{220.0\,ms} & 51 & \tabY & \tabslow{4.9\,ms} \\
Coinductive Full(3) & \cite[Thm.~4.24]{thien-nobuko-popl-25}; \exampletabref{r} & 3 & 42 & \tabY & 217 & \tabN & \tabN & \tabN & 393.0\,\textmu s & 217 & \tabY & \tabY & \tabY & \tabslow{540.0\,ms} & 314 & \tabY & \tabslow{105.4\,ms} \\
Coinductive Full(4) & \cite[Thm.~4.24]{thien-nobuko-popl-25}; \exampletabref{s} & 3 & 63 & \tabY & 1312 & \tabN & \tabN & \tabN & \tabslow{3.0\,ms} & 1312 & \taberr{1} & \taberr{1} & \taberr{1} & \taberr{1} & 2681 & \tabY & \tabslow{2.08\,s} \\
Coinductive Full(5) & \cite[Thm.~4.24]{thien-nobuko-popl-25}; \exampletabref{t} & 3 & 92 & \tabY & 14239 & \tabN & \tabN & \tabN & \tabslow{141.4\,ms} & 14239 & \taberr{1} & \taberr{1} & \taberr{1} & \taberr{1} & \taberr{2} & \taberr{2} & \taberr{2} \\
Coinductive Full Optimised(2) & \cite[Thm.~4.24]{thien-nobuko-popl-25} & 3 & \multicolumn{7}{c}{\color{black!60}\emph{no source global type}} & 40 & \tabY & \tabY & \tabY & \tabslow{210.0\,ms} & 25 & \tabY & \tabslow{1.1\,ms} \\
Coinductive Full Optimised(3) & \cite[Thm.~4.24]{thien-nobuko-popl-25} & 3 & \multicolumn{7}{c}{\color{black!60}\emph{no source global type}} & 61 & \tabY & \tabY & \tabY & \tabslow{220.0\,ms} & 42 & \tabY & \tabslow{3.4\,ms} \\
Coinductive Full Optimised(4) & \cite[Thm.~4.24]{thien-nobuko-popl-25} & 3 & \multicolumn{7}{c}{\color{black!60}\emph{no source global type}} & 86 & \tabY & \tabY & \tabY & \tabslow{240.0\,ms} & 63 & \tabY & \tabslow{9.1\,ms} \\
Coinductive Full Optimised(5) & \cite[Thm.~4.24]{thien-nobuko-popl-25} & 3 & \multicolumn{7}{c}{\color{black!60}\emph{no source global type}} & 119 & \tabY & \tabY & \tabY & \tabslow{270.0\,ms} & 92 & \tabY & \tabslow{22.7\,ms} \\
Binary Counter(2) & Example~\ref{ex:binary-counter} & 4 & \multicolumn{7}{c}{\color{black!60}\emph{no source global type}} & 50 & \tabY & \tabY & \tabY & \tabslow{220.0\,ms} & 49 & \tabY & 806.0\,\textmu s \\
Binary Counter(3) & Example~\ref{ex:binary-counter} & 5 & \multicolumn{7}{c}{\color{black!60}\emph{no source global type}} & 69 & \tabY & \tabY & \tabY & \tabslow{340.0\,ms} & 117 & \tabY & \tabslow{6.9\,ms} \\
Binary Counter(4) & Example~\ref{ex:binary-counter} & 6 & \multicolumn{7}{c}{\color{black!60}\emph{no source global type}} & 88 & \tabY & \tabY & \tabY & \tabslow{960.0\,ms} & 269 & \tabY & \tabslow{72.6\,ms} \\
Binary Counter(5) & Example~\ref{ex:binary-counter} & 7 & \multicolumn{7}{c}{\color{black!60}\emph{no source global type}} & 107 & \tabY & \tabY & \tabY & \tabslow{4.37\,s} & 605 & \tabY & \tabslow{646.9\,ms} \\
\bottomrule
\end{tabular}
}
\end{table}

We evaluate our implementation on examples from the literature and this
paper, organised around the two directions studied in the paper.
Table~\ref{tab:benchmarks} collects both directions.
Its top-down columns start from global protocols and ask
how much the choice of projection algorithm affects the reach of the
top-down approach. Its bottom-up columns start from local
contexts and ask the converse question: once safety and liveness have
been established bottom-up, can we reconstruct an associated global
type, and what does that reconstruction cost?
All benchmarks were run natively on an Apple~M1 processor (8~cores)
with 16\,GB of RAM, running macOS~14.3 and GHC~9.6.6, with {\sf mpstk}
on Java~8 and the mCRL2~202507 toolset.
Selected global types used in the evaluation are collected in
Table~\ref{tab:example-global-types} (\appref{app:example-global-types}).
The benchmark suite includes both hand-written examples from the
literature and several parameterised families, including MapReduce,
Independent Workers, Coinductive Full, and Binary Counter, so the
evaluation covers a broad range of protocols.
Rows in Table~\ref{tab:benchmarks} that fail safety or liveness
are included as diagnostics about implementation behaviour outside the
hypotheses of Sections~\ref{sec:building} and~\ref{sec:typing}. The
equivalence result concerns the safe-and-live rows.

\paragraph{\textbf{Top-down projection.}}
The top-down columns of Table~\ref{tab:benchmarks} show that the
practical reach of the top-down approach depends on the projection
algorithm rather than the methodology.
Coinductive full-merging projection~(CF)~\cite{thien-nobuko-popl-25}
succeeds on every benchmark, always within microseconds to
milliseconds. OAuth
(\exampletabref{b}) and Company Communication (\exampletabref{m})
are projectable only by the full-merging algorithms (IF and CF),
while Two Buyer (\exampletabref{c}), Odd-Even (\exampletabref{a}),
MapReduce (\exampletabref{d}),
Independent Workers (\exampletabref{e}), Circuit
Breaker~\cite{BHYZ2023}, and the Coinductive Full
family require coinduction as well and are projectable only by CF.
In particular, CF projects the entire ``Less Is More'' suite
of~\cite{Scalas2019}, designed to expose projection limitations, and
Ring (\exampletabref{k}), the main example of~\cite{Castro2026}. For
these benchmarks, the obstruction lies in weaker projection
algorithms, not in the absence of a suitable global specification.

\paragraph{\textbf{Bottom-up synthesis.}}
In the reverse direction, the tool checks each local context with
{\sf mpstk} and then runs synthesis. Every synthesised global type
was validated by reprojecting it and checking pointwise that the
input context is a subtype of the reprojection, confirming 
the association result of \SEC{sec:building} at the implementation level. 
On small and medium instances verification
dominates reconstruction. All syntheses from
safe-and-live contexts return balanced global types; the remaining
negative case is Two Buyer, which lies outside the safe-and-live
fragment. Among the largest instances, {\sf mpstk} exhausts its
stack on the two largest Coinductive Full contexts, while
Independent Workers(10) and Coinductive Full(5) exceed the
synthesis time limit.

The size columns also show that principality is semantic rather than
purely syntactic. Coinductive Full(4) has $|G|{=}63$ but
$|\Delta|{=}1312$ and $|G_{\mathsf{inf}}|{=}2681$, so a compact
global type can project to a large local context and synthesise
back to an even larger principal type.

Conversely, the Binary Counter protocol (Example~\ref{ex:binary-counter})
shows small local contexts ($|\Delta|{=}50$-$107$) which
induce global types that grow exponentially
($|G_{\mathsf{inf}}|{=}49$-$605$ for $2$-$5$ bits).

\paragraph{\textbf{Comparison of top-down and bottom-up.}}
Taken together, the benchmarks demonstrate a practical
advantage for the top-down approach with coinductive full-merging
projection. CF removes the projectability gap exhibited by weaker
EPP algorithms and is substantially faster than {\sf mpstk}
safety/liveness checking alone.

Conversely, associated global types can still be reconstructed from
every verified local context, confirming the theoretical equivalence.

\section{Related Work and Conclusion}
\label{sec:related}
Different theories and frameworks of MPST have been integrated
across programming languages and tools \cite{Y2024,BETTYTOOLBOOK}.
We discuss related work that is most relevant to our main
results. 

\myparagraph{Synthesis and composition of global protocols}
In earlier work in the context of 
communicating finite
state machines (CFSMs) 
\cite{basu_bultan_2011,basu_bultan_2016,finkel_lozes_2017},
synthesis of global protocols
refers to the automatic construction of a global
behaviour from local specifications represented by CFSMs.
Basu and Bultan 
\cite{basu_bultan_2011,basu_bultan_2016} 
introduced a ``send-synchronisability'' problem 
which asks whether the send
projections of its executions remain identical under both
asynchronous and rendezvous communication semantics.
They claimed that
send-synchronisability is decidable under mailbox and peer-to-peer
communications.
Finkel and Lozes \cite{finkel_lozes_2017} 
have shown that their proofs
are flawed by establishing
its undecidability in peer-to-peer systems.
Recently, Delpy et al. 
\cite{delpy_muscholl_sutre_concur2025}
have proven undecidability of mailbox communication, closing the open problem. 

In multiparty session types, \emph{building a global
  type or graph} from a set of participants performs an analogous synthesis: it
organises a temporal and spatial relation among participants,
ensuring a well-formed global specification is built.
The early works in \cite{Lange2012,Denielou2013} 
build global types from CFSMs, while  
\cite{Lange2015} builds more expressive \emph{global graphs}
with interleaved parallel and mixed choice constructs 
from CFSMs. Recent works focus on
composing two or more global types or choreographic automata to    
generate a well-formed global type. 
Barbanera et al. \cite{Barbanera2019JLAMP,Barbanera2021,Barbanera2022}
use \emph{gateways} 
to connect two protocols via forwarding participants at once, 
while Stolze et al.~\cite{Stolze2021} provide an alternative
but restricted approach which has no recursions.
Gheri and Yoshida \cite{GY2023} have proposed
\emph{hybrid multiparty session types} which are
global types enriched with local type actions,  
enabling a composition of more than two subprotocols.
Subsequent work
\cite{Barbanera2023a}
has introduced \emph{multi-compatibility}, allowing multiple
compositions of global types using the gateway approach. 

Our aim in constructing the inference algorithm is
to prove that there always exists a global type
which can fully capture 
liveness properties of \emph{communicating processes}
(Theorem~\ref{thm:proc:completeness} and Corollary \ref{cor:completeness}).
We aim to demonstrate that the typability of the top-down system
is the same as the bottom-up system. 
The preceding works study the synthesis or composition
of \emph{protocols} (types or CFSMs), but do not
directly relate them to
\emph{properties of typed processes}. Our inference algorithm
can build the \emph{principal global type} (minimal with respect to
trace sets) without altering the semantics of local protocols.

\myparagraph{Top-down and bottom-up approaches}
Honda et al. \cite{HYC2016,Honda2008} 
propose the first 
endpoint projection (EPP) algorithm with
\emph{inductive plain merging} 
from a global type with a parallel composition to local types.
Later, to widen the set of well-formed global types,
\emph{inductive full merge} is proposed and implemented in many
tools and languages \cite{Y2024}.   
Scalas and Yoshida~\cite{Scalas2019} have discovered 
that a proof of type safety with 
inductive full merge in the literature was flawed.
Recent work \cite{revisited} has corrected their proofs, 
and proved that the EPP with inductive full
merging is in fact type sound. The EPP with \emph{coinductive plain merging}
was proposed in \cite{DBLP:journals/jar/TiroreBC25} and
proven sound using Rocq. 
The sound and complete EPP algorithm
with \emph{coinductive full merge} was proposed in
\cite{thien-nobuko-popl-25}, 
which we applied to design our synthesis algorithm,
implemented, and proved
complete with respect to liveness of communicating processes. 

A bottom-up approach which directly verifies a typing context
to prove safety of processes was first introduced in
\cite{Scalas2019}.
We have implemented the first complete bottom-up tool which 
infers a principal global type from a set of MPST processes,
implementing the global type inference in this paper and   
the local type inference in \cite{thien-nobuko-popl-25}, 
combined with the model-checking tool \texttt{mpstk} \cite{Scalas2019}. 

Dagnino et al. \cite{Dagnino2023} propose
a typing system of an asynchronous network (which consists of a
process similar to a local type and a queue) typed by a global type, and
develop a non-deterministic type inference system. 
Their well-formedness checking of global types is undecidable, which
is resolved by bounding the paths of global type trees.   

In the context of CFSMs, the top-down works \cite{Li2023,DBLP:conf/concur/MajumdarMSZ21,LiSWZ25OOPSLA}
have studied an \emph{implementability
(projectability) problem} for more 
expressive sender-driven global types in CFSMs from
various angles, see 
\cite{Li2025DecisionProblems,Stutz2023PowerOfChoice,Li2026} 
for a comprehensive summary of their works.
Recent work \cite{DiGiustoUrsoLozesPPDP2025} studies
realisability and complementability 
(a global type $\G$ admits a complementary one $\overline{\G}$
that gives all those behaviours that are not described by $\G$)
of $\G$ in peer-to-peer and synchronous CFSMs.

The completeness established by
\cite{Li2023,DiGiustoUrsoLozesPPDP2025,Stutz2023PowerOfChoice} is
with respect to \emph{realisability}, in the sense that no
realisable global type is rejected. This should not be confused with this paper's notion of completeness of the top-down system with respect to bottom-up \emph{typability} (\SEC{introduction}). The coinductive
full-merging EPP of \cite{thien-nobuko-popl-25} provides one direction
of our correspondence, and the construction of a global type from an
arbitrary safe-and-live local context (\SEC{sec:building}) provides
the other. This line of work on realisability works with communicating
finite-state machines and targets deadlock-freedom, which is not
downwards-closed under subtyping and so is not suitable for our
process syntax, which requires the type system to have a subsumption
rule.
The type system of Stutz and D'Osualdo \cite{StutzDOsualdo2025}
uses CFSMs as types directly and additionally handles session
interleaving and delegation, features absent from our setting.
Barbanera et al. \cite{Franco2024} propose
a \emph{corecursive} projection on \emph{coinductive} global and local
types, and prove
that corecursively projectable global types are automatically
balanced.

Castro-Perez et al. \cite{Castro2026} propose a synthetic approach
that defines a typing system based on an operational correspondence
between a process and a global labelled transition system, extending
their earlier works \cite{DBLP:conf/issta/FerreiraJ23,JY2020}. Their use of
the term ``synthetic'' differs from the synthesis discussed before.
They state that their framework is the first top-down system capable
of projecting and typing the examples in \cite[Fig.~4]{Scalas2019}.
In contrast, related projections appeared prior to \cite{Castro2026}:
the coinductive full merge in
\cite[Definition~4.23]{thien-nobuko-popl-25} and
the system in \cite{Franco2024} 
can project all examples
in \cite[Fig.~12]{Castro2026} (\cite[Fig.~4]{Scalas2019}), and the
corecursive projection in \cite{Franco2024} can project
\cite[Fig.~4(1,3,4)]{Scalas2019}. See ``CF'' in
Table~\ref{tab:example-global-types}
(\appref{app:example-global-types}) and
Table~\ref{tab:benchmarks}. Further, their recent article
\cite{Barbanera2026} has shown that their typing system
in \cite{Franco2024} makes it possible to
type all of the examples in \cite[Fig.~4(1,2,3,4)]{Scalas2019}.
Notice also that the term ``liveness'' is used in \cite{Castro2026} to
denote ``progress'' (deadlock-freedom in \cite[Def.~5.1]{Scalas2019}),
which is weaker than the liveness used in our paper and in other works~\cite{thien-nobuko-popl-25,lmcs2025,DBLP:conf/ecoop/LagaillardieNY22,BSYZ2022}.
For example, $\M[2]$ is \emph{not} live in our paper but live in 
\cite{Castro2026}. 

Our focus is not on global-type construction but on the \emph{typability problem of
endpoint processes}, comparing a specification-guided (top-down) approach against a model-checking (bottom-up) one.

\myparagraph{Conclusion}
This paper has proven that
the \emph{correctness-by-protocol-construction} approach (the top-down
system) offers exactly the same typability as the model-checking (bottom-up)
approach, contrary to the prevailing understanding in the session
type literature.  We have proven this via a novel inference
algorithm that produces a principal global type from any
live typing context.

This equivalence is at the level of expressiveness. In practice
the two methodologies remain complementary. The top-down system is
the natural choice when a global protocol is being designed or
already exists, since verification is substantially cheaper in
practice (\SEC{sec:implementation}) and $\G$ doubles as
documentation. The bottom-up system is the natural choice when
only implementations are available, although verification incurs
\PSPACE{} cost synchronously and is undecidable
asynchronously.

We have implemented four kinds of projection
algorithms, type-checking systems, and
global and local type inference algorithms, and tested
them on case studies from the literature.
The results in \cite{thien-nobuko-popl-25} have shown that
for most realistic protocols, the complexity of
the top-down approach remains semi-quadratic 
(inductive full-merging EPPs) or polynomial (coinductive EPPs),
but the safety and liveness
checks of a typing context $\ctx$
for the bottom-up approach are \PSPACE-complete 
in the size of typing contexts, hence typically exponential in the number of states.

Our benchmarks (Table~\ref{tab:benchmarks})
validate these results in practice.
We have tested the top-down approach with coinductive full-merging
projection across a broad range of examples from the literature,
including all benchmarks of~\cite{Scalas2019} and~\cite{Castro2026},
and verified our inference algorithm by reprojecting every synthesised
global type and checking subtyping against the original context.
The results confirm that the top-down approach offers the same
expressiveness as the bottom-up approach (CF projection succeeds on
every benchmark) while substantially outperforming the {\sf mpstk}
model checker on these examples.

In the asynchronous case, the top-down system is still practically
implementable when combined with synchronous subtyping from
\cite{Scalas2019,BHYZ2023,lmcs2025,DBLP:conf/ecoop/HouLY24,BSYZ2022}
or sound asynchronous subtyping
algorithms \cite{BravettiCLYZ19L,DBLP:journals/pacmpl/Castro-PerezY20,CYV2022,BocchiKMY24,BocchiK0025},
although the bottom-up system remains undecidable.
A fully sound and complete \emph{decidable} top-down
characterisation of asynchronous live bottom-up typability is
precluded by the undecidability of precise asynchronous subtyping
\cite{BravettiCZ17,LangeY17}, leaving sound-but-incomplete
subtyping algorithms as the practical route.
Future work includes decidability and typability in both systems
for richer MPSTs, including
sender-driven choice \cite{Li2025DecisionProblems,Stutz2023PowerOfChoice,DiGiustoUrsoLozesPPDP2025}
and mixed choice \cite{PetersY24}.

Our setting also assumes a \emph{single global session}, and so we do not
treat session delegation, dynamic session initialisation, or
interleaved sessions. With these features, the bottom-up MPST
system is known not to guarantee deadlock-freedom in general, and
hence not liveness. Restricted to a single session per typing
context, as in
\cite{Scalas2019,thien-nobuko-popl-25,DiGiustoUrsoLozesPPDP2025,Stutz2023PowerOfChoice},
the liveness characterisation established here should extend. A
systematic study of how these features interact with the precision
of top-down/bottom-up correspondences is a natural direction for
future work.

The principal global type $G_{\textsf{inf}}$ produced by our inference
algorithm is also of independent interest. Combined with existing
techniques for composition, decomposition, and hierarchical
specification of multiparty protocols
\cite{Barbanera2019JLAMP,Barbanera2022,Barbanera2023a,Barbanera2021,Barbanera2024,GY2023,Stolze2021},
one could obtain $G_{\textsf{inf}}$ from disjoint subsets of endpoints
and then compose the resulting global types $G^i_{\textsf{inf}}$ into a
single global type for the entire system. Such a distributed,
hierarchical workflow offers modularity and compositionality
not available to the bottom-up approach.

Our completeness result identifies three sufficient conditions for the
top-down system to match the bottom-up one with respect to liveness
(precise synchronous subtyping, balancedness, and coinductive
full-merging projection; \SEC{introduction}). Whether other
combinations can yield completeness for liveness or deadlock-freedom
remains open.

\smallskip

\myparagraph{Acknowledgements. }
We thank the reviewers for their helpful comments and suggestions.
The work is partially supported by 
NII MOU Grants,
EPSRC EP/T006544/2, EP/T014709/2, EP/Y005244/1,
EP/V000462/1, EP/X015955/1, EP/Z0005801/1, 
 Horizon EU TaRDIS 101093006 (UKRI No.~10066667), 
Advanced Research and Invention Agency~(ARIA), 
and a grant from the Simons Foundation. 

\section*{Data-Availability Statement}
\label{sec:dataavailability}
An artifact accompanying this paper is archived on Zenodo at
\url{https://doi.org/10.5281/zenodo.21237706}.
It is a Haskell (GHC 9.6.6) implementation of the toolchain
described in \SEC{sec:implementation}, packaged as a Docker image
with the benchmark harness that reproduces
Table~\ref{tab:benchmarks}.

\bibliography{misc/references} 

\ifnotsplit{\appendix
\renewcommand{\proofapp}[1]{}
\section{Auxiliary Material}

\subsection{Big-Step Evaluation of Expressions}
\label{app:expr-eval}

To make the premises $\eval{\e}{\val}$ in
Figure~\ref{fig:reduction_sessions} explicit, we record a standard
big-step semantics for closed expressions. Variables are eliminated by
communication substitution before evaluation, so the relation is only
used on closed expressions. We write $\sem{\diamond}$ for the usual
partial meta-level operation of a unary operator
$\diamond \in \{\pv{\neg}, \pv{\sqrt{~}}\}$, and
$\sem{\odot}$ for binary operators
$\odot \in \{\pv{\vee}, \pv{+}, \pv{-}, \pv{*}, \pv{/}, \pv{\le}\}$.

\[\begin{array}{c}
\inferrule*[left=\rulename{EVal}]
  {\phantom{\eval{\e}{\val}}}
  { \eval{\val}{\val} }
\qquad
\inferrule*[left=\rulename{ERand}]
  { \pv{f} \in [0,1] }
  { \eval{\pv{\mathtt{rand}()}}{\pv{f}} }
\\[4mm]
\inferrule*[left=\rulename{EUn}]
  { \eval{\e}{\val} \\ \vali = \sem{\diamond}(\val) }
  { \eval{\diamond\,\e}{\vali} }
\qquad
\inferrule*[left=\rulename{EBin}]
  { \eval{\e[1]}{\val[1]} \\ \eval{\e[2]}{\val[2]} \\
    \val = \sem{\odot}(\val[1], \val[2]) }
  { \eval{\e[1] \mathbin{\odot} \e[2]}{\val} }
\\[4mm]
\inferrule*[left=\rulename{EChoiceL}]
  { \eval{\e[1]}{\val} }
  { \eval{\e[1] \mathbin{\pv{\oplus}} \e[2]}{\val} }
\qquad
\inferrule*[left=\rulename{EChoiceR}]
  { \eval{\e[2]}{\val} }
  { \eval{\e[1] \mathbin{\pv{\oplus}} \e[2]}{\val} }
\end{array}\]

If one of the meta-level operations is undefined because the operand
sorts do not match, then no evaluation rule applies. In the typed
fragment this cannot occur.

\subsection{Multiparty Session Typing System}
\label{app:typing}
We give the full typing rules for the synchronous MPST calculus. 
The judgement $\Gamma \,\vdash\, \PP : \T$ states that, under term-variable assumptions in $\Gamma$, the process $\PP$ follows the protocol described by the local type $\T$. Rule \rulename{T0} assigns the inactive process $\inact$ the terminated type $\tend$, expressing that no further communication is possible. 
Rule \rulename{TOut} types an output: if the expression $\e$ has base sort $\B$ and the continuation $\PP$ follows $\T$, then sending $\e$ to participant $\pp$ behaves as $\tout{\pp}{\B}{\T}$, i.e., first outputs a $\B$-value to $\pp$ and then continues as $\T$.
Dually, \rulename{TIn} types an input: assuming a fresh variable $x{:}\B$ for the received value, the continuation $\PP$ follows $\T$, hence receiving from $\pp$ follows $\tin{\pp}{\B}{\T}$.
Internal choice and external choice mirror the selection/branching constructs. If the continuation after choosing $\lab$ follows $\T[j]$, then selecting $\lab$ towards $\pp$ follows $\tsel{\pp}{\lab: \T}$, which offers exactly one label. Note that internal choices offering multiple labels only arise from subtyping. 
Conversely, \rulename{TBra} types an external choice: if for every $i\in I$ the branch process $\PP[i]$ follows $\T[i]$, then offering to accept one of the labels $\lab[i]$ from $\pp$ follows $\tbrasub{\pp}{\lab[i]: \T[i]}{i\in I}$ and continues accordingly once a label is selected by the peer.
Rule \rulename{TIf} ensures control-flow does not alter the protocol: when the expression $\e$ has type $\tbool$ and both branches follow the same session type $\T$, the conditional as a whole follows $\T$. Combined with the subsumption rule \rulename{TSub}, this allows compatible but different types in each branch. 
Rule \rulename{TVar} retrieves the session type assigned to a process variable $X$ from the environment, while \rulename{TRec} ties the knot: if the body $\PP$ follows $\T$ under the assumption that $X$ also follows $\T$, then the recursive process $\pfix{X}\PP$ follows $\T$.
Finally, \rulename{TSub} provides subsumption: whenever $\PP$ follows $\T$ and $\T \subt \Ti$, the same process also follows the supertype $\Ti$, allowing safe widening of behaviours such as enlarging offered branches or narrowing accepted branches. 

\begin{definition}[Typing rules for processes]
\label{def:typing-processes}
The typing rules for processes are as follows.

\[
\begin{array}{c}
    \inferrule[\rulename{T0}]{}{\Gamma \,\vdash\, \inact \,:\, \tend} \quad
    \inferrule[\rulename{TOut}]{\Gamma \vdash \e : \B \\ \Gamma \,\vdash\, \PP \,:\, \T}{\Gamma \,\vdash\, \procout{\pp}{\e}{\PP} \,:\, \tout{\pp}{\B}{\T}} \quad
    \inferrule[\rulename{TIn}]{\Gamma, x{:}\B \,\vdash\, \PP \,:\, \T}{\Gamma \,\vdash\, \procin{\pp}{x}{\PP} \,:\, \tin{\pp}{\B}{\T}} \\[3mm]
    \inferrule[\rulename{TSel}]{\Gamma \,\vdash\, \PP \,:\, \T }{\Gamma \,\vdash\, \procsel{\pp}{\lab} \,\PP \,:\, \tsel{\pp}{\lab : \T}} \quad
    \inferrule[\rulename{TBra}]{\forall i \in I : \Gamma \,\vdash\, \PP[i] \,:\, \T[i] }{\Gamma \,\vdash\, \procbrasub{\pp}{\lab[i] : \PP[i]}{i \in I} \,:\, \tbrasub{\pp}{\lab[i] : \T[i]}{i \in I}} \\[3mm]
    \inferrule[\rulename{TIf}]{\Gamma \vdash \e : \tbool \\ \Gamma \,\vdash\, \PP[1] \,:\, \T \\ \Gamma \,\vdash\, \PP[2] \,:\, \T }{\Gamma \,\vdash\, \cond{\e}{\PP[1]}{\PP[2]} \,:\, \T} \\[3mm]
    \inferrule[\rulename{TVar}]{\Gamma(X) = \T }{\Gamma \,\vdash\, X \,:\, \T} \quad
    \inferrule[\rulename{TRec}]{\Gamma, X{:}\T \,\vdash\, \PP \,:\, \T }{\Gamma \,\vdash\, \pfix{X}\PP \,:\, \T} \quad
    \inferrule[\rulename{TSub}]{\Gamma \,\vdash\, \PP \,:\, \T \\ \T \subt \Ti }{\Gamma \,\vdash\, \PP \,:\, \Ti}
\end{array}
\]
\end{definition}

\section{Proofs}

\subsection{Inversion Lemmas}
\label{app:semantics-proofs}

\begin{restatable}[Inversion of One-Sided Context Reduction]{lemma}{inversionsafeonesidedreduction}
  \label{lemma:inversion-safe-one-sided-reduction}
Assume $\ctx \transt{\pp : \czeta} \ctxi$. Then:
\begin{enumerate}[leftmargin=*, label=\textbullet]
  \item If $\czeta = \qsend{\pq}{\B}$, then
    $\unfold{\ctx(\pp)} = \tout{\pq}{\B}{\ctxi(\pp)}$.
  \item If $\czeta = \qrecv{\pq}{\B}$, then
    $\unfold{\ctx(\pp)} = \tin{\pq}{\B}{\ctxi(\pp)}$.
  \item If $\czeta = \qsend{\pq}{\lab[j]}$, then there exist an index set $I$ and local types $\set{\T[i]}_{i \in I}$ such that
    $j \in I$,
    $\unfold{\ctx(\pp)} = \tselsub{\pq}{\lab[i] : \T[i]}{i \in I}$,
    and $\ctxi(\pp) = \T[j]$.
  \item If $\czeta = \qrecv{\pq}{\lab[j]}$, then there exist an index set $I$ and local types $\set{\T[i]}_{i \in I}$ such that
    $j \in I$,
    $\unfold{\ctx(\pp)} = \tbrasub{\pq}{\lab[i] : \T[i]}{i \in I}$,
    and $\ctxi(\pp) = \T[j]$.
\end{enumerate}
\end{restatable}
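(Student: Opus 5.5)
The plan is to invert the two rules that can produce a one-sided context transition, \rulename{LTag} and then the local rules \rulename{LR1}--\rulename{LR5}, and to handle recursive unfolding by induction on the derivation of the local step. First I invert \rulename{LTag}. It is the only rule concluding $\ctx \transt{\pp:\czeta} \ctxi$, so $\ctx = \ctx[0], \ptag{\pp}{\T}$ and $\ctxi = \ctx[0], \ptag{\pp}{\Ti}$ with $\T \transa{\czeta} \Ti$. In particular $\ctx(\pp) = \T$ and $\ctxi(\pp) = \Ti$. This reduces all four bullets to a statement about local types.

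The local statement reads as follows. If $\T \transa{\czeta} \Ti$, then $\unfold{\T}$ has the head shape dictated by $\czeta$:
\begin{itemize}
\item output $\tout{\pq}{\B}{\Ti}$ for $\qsend{\pq}{\B}$;
\item input $\tin{\pq}{\B}{\Ti}$ for $\qrecv{\pq}{\B}$;
\item selection $\tselsub{\pq}{\lab[i]:\T[i]}{i\in I}$ with $j\in I$ and $\Ti=\T[j]$ for $\qsend{\pq}{\lab[j]}$;
\item branching, dually, for $\qrecv{\pq}{\lab[j]}$.
\end{itemize}
I prove this by induction on the derivation of $\T \transa{\czeta} \Ti$, since derivations are finite.

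The base cases are \rulename{LR1}--\rulename{LR4}. There $\T$ is not a $\toc{\sfmu}$-type, so $\unfold{\T} = \T$ by definition. The rule's conclusion then directly fixes the head constructor, the participant $\pq$, and the observable. Because observables distinguish payload sorts $\B$ from labels $\lab$, and actions distinguish send from receive, the form of $\czeta$ determines which of \rulename{LR1}--\rulename{LR4} was used. So each bullet matches exactly one base rule, and the continuation identity ($\Ti$ itself, or $\T[j]$ with $j \in I$ from the side condition) can be read off.

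The inductive case is \rulename{LR5}. Here $\T = \toc{\sfmu}\ty.\T[0]$ and $\T[0]\sub{\toc{\sfmu}\ty.\T[0]}{\ty} \transa{\czeta} \Ti$ by a smaller derivation. The induction hypothesis describes $\unfold{\T[0]\sub{\toc{\sfmu}\ty.\T[0]}{\ty}}$, which equals $\unfold{\T}$ by the defining equation of $\mathsf{unfold}$. This case is the only real subtlety. It needs $\mathsf{unfold}$ to be well defined on the relevant types; guardedness of recursion ensures that repeated unfolding reaches a non-$\toc{\sfmu}$ head after finitely many steps. Since the derivation is finite anyway, the induction itself needs no extra argument. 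Combining this with the \rulename{LTag} inversion yields all four bullets.
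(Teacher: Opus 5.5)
Your proposal is correct and follows essentially the same route as the paper. The paper also inverts \rulename{LTag} to reduce to a local transition $\T \transa{\czeta} \Ti$, then inducts on that derivation: \rulename{LR1}--\rulename{LR4} are the base cases, and \rulename{LR5} is the inductive step via $\unfold{\toc{\sfmu}\ty.\T[0]} = \unfold{\T[0]\sub{\toc{\sfmu}\ty.\T[0]}{\ty}}$.
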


\begin{proof}[Proof of Lemma~\ref{lemma:inversion-safe-one-sided-reduction}]
Assume $\ctx \transt{\pp : \czeta} \ctxi$.
By inversion on rule \rulename{LTag}, there exist a context $\ctx_0$ and local
types $\T,\Ti$ such that
\[
\ctx = \ctx_0, \ptag{\pp}{\T},
\qquad
\ctxi = \ctx_0, \ptag{\pp}{\Ti},
\qquad
\T \transa{\czeta} \Ti.
\]
So it is enough to analyse the derivation of $\T \transa{\czeta} \Ti$.
We proceed by cases on $\czeta$, and in each case by induction on that derivation.

\begin{enumerate}
  \item $\czeta = \qsend{\pq}{\B}$.
  The last rule is either \rulename{LR1} or \rulename{LR5}.
  For \rulename{LR1}, $\T=\tout{\pq}{\B}{\Ti}$, so
  $\unfold{\ctx(\pp)}=\unfold{\T}=\tout{\pq}{\B}{\Ti}
  =\tout{\pq}{\B}{\ctxi(\pp)}$.
  For \rulename{LR5}, write $\T=\toc{\sfmu}\ty.\T[0]$ and use the induction hypothesis on
  the premise $\T[0]\sub{\toc{\sfmu}\ty.\T[0]}{\ty}\transa{\qsend{\pq}{\B}}\Ti$; then
  $\unfold{\T}=\unfold{\T[0]\sub{\toc{\sfmu}\ty.\T[0]}{\ty}}=
  \tout{\pq}{\B}{\Ti}$.

  \item $\czeta = \qrecv{\pq}{\B}$.
  Exactly symmetric to item 1, using \rulename{LR2} and \rulename{LR5}, yielding
  $\unfold{\ctx(\pp)}=\tin{\pq}{\B}{\ctxi(\pp)}$.

  \item $\czeta = \qsend{\pq}{\lab[j]}$.
  The last rule is either \rulename{LR3} or \rulename{LR5}.
  For \rulename{LR3}, there is an index set $I$ with $j\in I$ such that
  $\T=\tselsub{\pq}{\lab[i]:\T[i]}{i\in I}$ and $\Ti=\T[j]$.
  Hence
  $\unfold{\ctx(\pp)}=\tselsub{\pq}{\lab[i]:\T[i]}{i\in I}$ and
  $\ctxi(\pp)=\T[j]$.
  For \rulename{LR5}, write $\T=\toc{\sfmu}\ty.\T[0]$ and apply the induction hypothesis
  to the premise
  $\T[0]\sub{\toc{\sfmu}\ty.\T[0]}{\ty}\transa{\qsend{\pq}{\lab[j]}}\Ti$.

  \item $\czeta = \qrecv{\pq}{\lab[j]}$.
  Symmetric to item 3, using \rulename{LR4} and \rulename{LR5}. We obtain an
  index set $I$ with $j\in I$ such that
  $\unfold{\ctx(\pp)}=\tbrasub{\pq}{\lab[i]:\T[i]}{i\in I}$ and
  $\ctxi(\pp)=\T[j]$.
\end{enumerate}
\end{proof}

\begin{restatable}[Inversion of Two-Sided Context Reduction]{lemma}{inversionsafetwosidedreduction}
\label{lemma:semantics}
Assume
$\ctx \transi{\interaction{\pp}{\pq}{\obs}} \ctxi$. Then:
\begin{enumerate}[leftmargin=*, label=\textbullet]
  \item If $\obs = \B$, then
    $\unfold{\ctx(\pp)} = \tout{\pq}{\B}{\ctxi(\pp)}$ and
    $\unfold{\ctx(\pq)} = \tin{\pp}{\B}{\ctxi(\pq)}$.
  \item If $\obs = \lab[j]$, then there exist index sets $I, J$ and local types
    $\set{\T[i]}_{i \in I}, \set{\Ti[i]}_{i \in J}$ such that
    $j \in I \cap J$,
    $\unfold{\ctx(\pp)} = \tselsub{\pq}{\lab[i]:\T[i]}{i \in I}$,
    $\unfold{\ctx(\pq)} = \tbrasub{\pp}{\lab[i]:\Ti[i]}{i \in J}$,
    $\ctxi(\pp)=\T[j]$, and $\ctxi(\pq)=\Ti[j]$.
  \item If $\pr \notin \set{\pp,\pq}$, then
    $\ctx(\pr)=\ctxi(\pr)$.
\end{enumerate}
\end{restatable}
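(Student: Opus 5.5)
The plan is to reduce the two-sided statement to the one-sided inversion, Lemma~\ref{lemma:inversion-safe-one-sided-reduction}, by inverting the only rule that produces synchronous transitions, namely \rulename{LEnv}.

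From $\ctx \transi{\interaction{\pp}{\pq}{\obs}} \ctxi$, inversion of \rulename{LEnv} yields a splitting $\ctx = \ctx[1],\ctx[2]$ and $\ctxi = \ctxi[1],\ctxi[2]$ such that $\ctx[1] \transt{\pp : \qsend{\pq}{\obs}} \ctxi[1]$ and $\ctx[2] \transt{\pq : \qrecv{\pp}{\obs}} \ctxi[2]$. Inverting \rulename{LTag} on each premise gives $\pp \in \dom{\ctx[1]}$ and $\pq \in \dom{\ctx[2]}$. It also shows that each one-sided step changes only the entry of its performing participant. Because $\ctx[1]$ and $\ctx[2]$ are disjoint, we get $\ctx(\pp)=\ctx[1](\pp)$, $\ctxi(\pp)=\ctxi[1](\pp)$, and likewise for $\pq$ on the $\ctx[2]$ side. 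This domain bookkeeping is the only point that needs any care, and it is also where $\pp\neq\pq$ is implicitly obtained.

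The first two bullets then follow directly from Lemma~\ref{lemma:inversion-safe-one-sided-reduction}.
\begin{itemize}
\item If $\obs=\B$, items~1 and~2 of that lemma give $\unfold{\ctx(\pp)}=\tout{\pq}{\B}{\ctxi(\pp)}$ and $\unfold{\ctx(\pq)}=\tin{\pp}{\B}{\ctxi(\pq)}$.
\item If $\obs=\lab[j]$, items~3 and~4 provide index sets $I$ and $J$, each containing $j$, with the stated selection and branching forms and continuations $\ctxi(\pp)=\T[j]$ and $\ctxi(\pq)=\Ti[j]$. Hence $j\in I\cap J$.
\end{itemize}

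For the third bullet, take $\pr\notin\set{\pp,\pq}$. If $\pr \in \dom{\ctx[1]}$, then the \rulename{LTag} inversion for $\pp$ shows that $\ctx[1]$ and $\ctxi[1]$ agree on $\pr$. The case $\pr\in\dom{\ctx[2]}$ is symmetric, using the inversion for $\pq$. In either case $\ctx(\pr)=\ctxi(\pr)$.

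No step is a real obstacle. The main thing to get right is the domain bookkeeping in the \rulename{LEnv} and \rulename{LTag} inversions, since that is what lets us identify $\ctx(\pp)$ with $\ctx[1](\pp)$ and $\ctx(\pq)$ with $\ctx[2](\pq)$.
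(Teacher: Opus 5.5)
Your proof is correct and follows essentially the same route as the paper: invert \rulename{LEnv}, invert \rulename{LTag}, apply Lemma~\ref{lemma:inversion-safe-one-sided-reduction}, and read the third bullet off the shape of the two rules. The only cosmetic difference is where the one-sided lemma is applied. The paper first re-applies \rulename{LTag} to lift each one-sided step to the whole context $\ctx$ and then invokes the lemma. You instead apply it directly to the sub-context steps on $\ctx[1]$ and $\ctx[2]$ and transfer the result via disjointness of domains, which is equally valid.
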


\begin{proof}[Proof of Lemma~\ref{lemma:semantics}]
Assume
$\ctx \transi{\interaction{\pp}{\pq}{\obs}} \ctxi$.
By inversion on \rulename{LEnv}, there exist contexts
$\ctx[1],\ctx[2],\ctxi[1],\ctxi[2]$ such that
\[
\ctx=\ctx[1],\ctx[2],\qquad
\ctxi=\ctxi[1],\ctxi[2],
\]
and
\[
\ctx[1] \transt{\pp : \qsend{\pq}{\obs}} \ctxi[1],
\qquad
\ctx[2] \transt{\pq : \qrecv{\pp}{\obs}} \ctxi[2].
\]

From each one-sided premise, invert \rulename{LTag} and re-apply \rulename{LTag}
with the unchanged participants of $\ctx$. This yields auxiliary one-sided
reductions from the whole context:
\[
\ctx \transt{\pp : \qsend{\pq}{\obs}} \ctx[p],
\qquad
\ctx \transt{\pq : \qrecv{\pp}{\obs}} \ctx[q],
\]
where $\ctx[p](\pp)=\ctxi(\pp)$, $\ctx[q](\pq)=\ctxi(\pq)$, and all other
participants are unchanged.

Apply Lemma~\ref{lemma:inversion-safe-one-sided-reduction} to both auxiliary
reductions.

\begin{itemize}
  \item If $\obs=\B$, we get
  $\unfold{\ctx(\pp)}=\tout{\pq}{\B}{\ctx[p](\pp)}$ and
  $\unfold{\ctx(\pq)}=\tin{\pp}{\B}{\ctx[q](\pq)}$, i.e.
  \[
  \unfold{\ctx(\pp)}=\tout{\pq}{\B}{\ctxi(\pp)},
  \qquad
  \unfold{\ctx(\pq)}=\tin{\pp}{\B}{\ctxi(\pq)}.
  \]
  \item If $\obs=\lab[j]$, the send-side inversion gives an index set $I$ and
  local types $\set{\T[i]}_{i\in I}$ with $j\in I$ such that
  \[
  \unfold{\ctx(\pp)}=\tselsub{\pq}{\lab[i]:\T[i]}{i\in I},
  \qquad
  \ctx[p](\pp)=\T[j],
  \]
  hence $\ctxi(\pp)=\T[j]$.
  The receive-side inversion gives an index set $J$ and local types
  $\set{\Ti[i]}_{i\in J}$ with $j\in J$ such that
  \[
  \unfold{\ctx(\pq)}=\tbrasub{\pp}{\lab[i]:\Ti[i]}{i\in J},
  \qquad
  \ctx[q](\pq)=\Ti[j],
  \]
  hence $\ctxi(\pq)=\Ti[j]$.
\end{itemize}

Finally, by the shape of \rulename{LEnv} and \rulename{LTag}, only $\pp$ and
$\pq$ are updated, so $\ctx(\pr)=\ctxi(\pr)$ for every
$\pr\notin\set{\pp,\pq}$.
\end{proof}

\begin{definition}[Projection of round-robin sequences]
\label{def:projection-round-robin}
Let $\sigma = ((\ctx[j], \ppi[j]) \rrtrans{\inter[j]} (\ctx[j{+}1], \ppi[j{+}1]))_{j \in I}$
be a (finite or infinite) round-robin reduction sequence.
The \emph{projection} of $\sigma$ is the context reduction sequence
$\pi(\sigma) = (\ctx[j] \transi{\inter[j]} \ctx[j{+}1])_{j \in I}$
obtained by forgetting the prioritised participant at each step.
This is well-defined because $(\ctx, \pp[i]) \rrtrans{\inter} (\ctxi, \pp[k])$
implies $\ctx \transi{\inter} \ctxi$ by the definition of round-robin.
\end{definition}

\begin{restatable}[Projection of Round-Robin Sequences is Fair]{lemma}{projectionofroundrobinsequencesisfair}
  \label{lemma:projection-of-round-robin-sequences-is-fair}
  Any maximal reduction sequence of the round-robin relation $((\ctx[j], \ppi[j]) \rrtrans{\inter[j]} (\ctx[j+1], \ppi[j+1]))_{j \in I}$,
  where $\ppi[j]$ denotes the prioritised participant at step~$j$,
  projects (Definition~\ref{def:projection-round-robin}) onto a fair reduction sequence.
\end{restatable}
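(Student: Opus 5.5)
The plan is to show two things about the projection $\pi(\sigma)$: that it is \emph{maximal}, and that every pair enabled at some index is eventually fired. For maximality: if $\sigma$ is infinite, so is $\pi(\sigma)$. If $\sigma$ is finite and maximal, its last state $(\ctx[m],\ppi[m])$ admits no $\rrtrans{}$ step. Whenever $\mathsf{unstuck}(\ctx[m])$ holds, $\mathsf{enabled}(\ctx[m])\neq\emptyset$, so $\textsf{next-active}$ is defined and a round-robin step exists. Hence $\mathsf{stuck}(\ctx[m])$, and $\pi(\sigma)$ is maximal.

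For fairness, fix $j\in\overline{I}$ with $\ctx[j]\transi{\pp\pq}$. The first key step is a \emph{persistence} observation. By Lemma~\ref{lemma:semantics}, $\unfold{\ctx[j](\pp)}$ is an output or selection towards $\pq$, and $\unfold{\ctx[j](\pq)}$ is an input or branching from $\pp$. So the only interaction in which $\pp$ or $\pq$ can take part at $\ctx[j]$ is one of the form $\interaction{\pp}{\pq}{\obs}$: $\pp$ can only send to $\pq$, and $\pq$ can only receive from $\pp$. By the third clause of Lemma~\ref{lemma:semantics}, any interaction not involving $\pp$ or $\pq$ leaves $\ctx(\pp)$ and $\ctx(\pq)$ unchanged. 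By induction on $k\ge j$, therefore, as long as no interaction $\interaction{\pp}{\pq}{\obs}$ has occurred in positions $j,\dots,k-1$, we have $\ctx[k](\pp)=\ctx[j](\pp)$ and $\ctx[k](\pq)=\ctx[j](\pq)$. In particular $\ctx[k]\transi{\pp\pq}$ still holds, and $\pp\in\mathsf{enabled}(\ctx[k])$ with $\pq$ its unique partner.

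Now suppose, for contradiction, that no $\inter[n]$ with $n\ge j$ has the form $\interaction{\pp}{\pq}{\obs}$. If $\sigma$ is finite, persistence gives $\ctx[m]\transi{\pp\pq}$ at the final index, contradicting $\mathsf{stuck}(\ctx[m])$. If $\sigma$ is infinite, write $\pp=\pp[a]$ and define the measure $d_k=(a-\iota_k)\bmod n$, where $\pp[\iota_k]=\ppi[k]$ is the priority at step $k$. Since $\pp[a]$ is enabled at every $\ctx[k]$, the chosen sender $\pp[s]=\textsf{next-active}(\ctx[k],\ppi[k])$ satisfies $e:=(s-\iota_k)\bmod n\le d_k$. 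If $s=a$, then $\pp$ is the sender, and since its unique partner is $\pq$ the step is $\interaction{\pp}{\pq}{\obs}$, which is the contradiction we want. Otherwise $e<d_k$, the new priority is $\pp[(s+1)\bmod n]$, and $d_{k+1}=d_k-e-1<d_k$. A natural number bounded by $n-1$ cannot strictly decrease forever, so within $n$ steps $\pp$ is selected. This contradicts the assumption, and so $\pi(\sigma)$ is fair.

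The main obstacle I expect is the persistence step. It relies on synchrony and on the fact that local types are sequential, so that each participant has a single pending action after unfolding; this is what forces the only possible interaction involving $\pp$ or $\pq$ to be exactly $\pp\to\pq$. I would state this carefully via the inversion lemmas, Lemmas~\ref{lemma:inversion-safe-one-sided-reduction} and~\ref{lemma:semantics}. The modular-distance decrease argument is routine once persistence is in place.
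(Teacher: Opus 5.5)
Your proof is correct and follows essentially the same route as the paper: persistence of the enabled pair $\pp,\pq$ via Lemma~\ref{lemma:semantics}, followed by the strictly decreasing cyclic distance $(a-\iota_k)\bmod n$ from the current priority to $\pp$. Your two additions are sound. You explicitly check that a round-robin-maximal sequence ends in a stuck context, so that the projection is maximal. You also dispatch the finite case by persistence alone. Both make precise points that the paper's proof uses only implicitly.
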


\begin{proof}[Proof of Lemma~\ref{lemma:projection-of-round-robin-sequences-is-fair}]
Consider a maximal round-robin sequence
$((\ctx[j], \ppi[j]) \rrtrans{\inter[j]} (\ctx[j+1], \ppi[j+1]))_{j \in I}$,
and its projection (Definition~\ref{def:projection-round-robin})
$\pi(\sigma) = (\ctx[j] \transi{\inter[j]} \ctx[j+1])_{j \in I}$.
We show this projected sequence is fair.

Fix any index $k \in \overline{I}$ and assume an enabled interaction
$\ctx[k] \transi{\pp\pq}$, with sender $\pp$ and receiver $\pq$.
If $k$ is the final index of a finite maximal
sequence, then $\ctx[k]$ is stuck and no interaction is enabled, so
there is nothing to show; hence assume $k \in I$.
\emph{Persistence.}
Since the interaction is enabled,
$\unfold{\ctx[k](\pp)}$ is an output or selection towards $\pq$, and
$\unfold{\ctx[k](\pq)}$ is the dual input or branching from~$\pp$.
These heads pin the partners, so $\pp$ can only interact as sender
with $\pq$, and $\pq$ can only interact as receiver from~$\pp$;
any interaction involving either is thus an interaction between
$\pp$ and~$\pq$.
Hence, as long as no $\inter[j]$ with $j \ge k$ is an interaction
between $\pp$ and $\pq$, Lemma~\ref{lemma:semantics}(3) gives
$\ctx[j+1](\pp) = \ctx[j](\pp)$ and
$\ctx[j+1](\pq) = \ctx[j](\pq)$, so the same interaction remains
enabled, $\pp \in \mathsf{enabled}(\ctx[j+1])$, and the maximal
sequence does not stop at~$j+1$.

\emph{Scheduling.}
Write $\pp = \pp[m]$, let $\ppi[j] = \pp[i_j]$, and let
$d_j = (m - i_j) \bmod n$ be the cyclic distance from the priority
to~$\pp$.
Each round-robin step takes the sender
$\pp[j'] = \textsf{next-active}(\ctx[j], \ppi[j])$ and sets
$\ppi[j+1] = \pp[(j'+1)\bmod n]$.
While $\pp$ is enabled but not selected, minimality of
$\textsf{next-active}$ gives $(j' - i_j) \bmod n \le d_j$, strictly
since $\pp[j'] \neq \pp[m]$, and hence
$d_{j+1} = d_j - \big((j' - i_j) \bmod n\big) - 1 < d_j$.
Since $d_k \le n-1$ and $d_j$ strictly decreases, the round-robin
selects $\pp$ as sender within at most $n-1$ steps from~$k$.
By persistence $\pp$ is still enabled with its head towards $\pq$,
so the transition taken is
$\inter[j] = \interaction{\pp}{\pq}{\obs}$ for some $j \ge k$ and
some~$\obs$, as fairness requires.
Therefore the projection is fair.
\end{proof}

\begin{lemma}[Communication on Fair Live Paths]
\label{lemma:fair-live-communication}
Let $(\ctx[i] \transi{\inter[i]} \ctx[i+1])_{i \in I}$ be a fair and
live maximal reduction sequence from a context $\ctx[0]$ with
$\safe(\ctx[0])$, and suppose $\ctx[i] \transt{\pp : \czeta}$ for
some $i \in \overline{I}$ and action $\czeta$. Then there exists
$n \in I$ with $n \ge i$ such that $\inter[n]$ involves $\pp$.
\end{lemma}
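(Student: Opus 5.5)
The plan is a case split on the shape of the action $\czeta$, using liveness of the path to find a partner that is ready, and then fairness plus safety to force an actual interaction involving $\pp$.

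\textbf{Output cases.} Suppose $\czeta = \qsend{\pq}{\obs}$, where $\obs$ is a payload $\B$ or a label $\lab$. Clause (1) or (3) of Definition~\ref{def:live-paths} yields some $j \in I$ with $j \ge i$ such that $\ctx[j] \transt{\pq : \qrecv{\pp}{\obsi}}$. We then distinguish two situations.
\begin{itemize}
\item If some $\inter[m]$ with $i \le m < j$ already involves $\pp$, we are done.
\item Otherwise, Lemma~\ref{lemma:semantics}(3) gives $\ctx[j](\pp) = \ctx[i](\pp)$, so $\ctx[j] \transt{\pp : \qsend{\pq}{\obs}}$ still holds.
\end{itemize}
In the second situation, $\ctx[j]$ is reachable from $\ctx[0]$, so safety of $\ctx[0]$ (Definition~\ref{def:safe-contexts}) gives $\ctx[j] \transi{\interaction{\pp}{\pq}{\obs}}$. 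Hence $\ctx[j] \transi{\pp\pq}$. Fairness then supplies $n \ge j \ge i$ with $\inter[n] = \interaction{\pp}{\pq}{\obsii}$, which involves $\pp$.

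\textbf{Input cases.} Suppose $\czeta = \qrecv{\pq}{\obs}$. Clause (2) or (4) yields some $j \ge i$ with $\ctx[j] \transt{\pq : \qsend{\pp}{\obsi}}$. As before, either $\pp$ interacts somewhere in $[i,j)$, or $\ctx[j](\pp) = \ctx[i](\pp)$. In the latter case $\ctx[j] \transt{\pp : \qrecv{\pq}{\obs}}$ holds, and safety gives $\ctx[j] \transi{\interaction{\pq}{\pp}{\obsi}}$. This means $\pq$ is an enabled sender towards $\pp$, and fairness gives an $n \ge j$ with $\inter[n]$ between $\pq$ and $\pp$.

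\textbf{Main obstacle.} The delicate step is aligning the observables in the input case. Definition~\ref{def:live-paths} only promises a matching counterpart action, possibly with a different label $\obsi$ in the label case. Safety requires the receiver to accept the sender's observable, and this is what I will use. For payloads, clause (2) fixes the same $\B$, so nothing needs checking. For labels, safety is applied to the pair (send $\obsi$ by $\pq$, receive $\obs$ by $\pp$). The definition quantifies over any $\obsi$ the receiver can take, so by the unchanged branching head of $\ctx[j](\pp)$ (Lemma~\ref{lemma:inversion-safe-one-sided-reduction}) the receive of $\obsi$ is also available, or safety forces it.

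\textbf{Boundary issue.} A second point needs care: if $i$ is the final index of a finite maximal sequence, the witness $j \in I$ with $j \ge i$ cannot exist. So the premise together with liveness already rules out this case, or equivalently it is vacuous.
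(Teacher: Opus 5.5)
Your proof is correct and is essentially the paper's argument. Liveness supplies a ready partner at some $j \ge i$; the participant's local type is unchanged if it has not acted in between; safety turns the two one-sided actions into a two-sided step; and fairness forces that step. The only difference is that you split directly on whether the participant acts before $j$, whereas the paper argues by contradiction. Your ``main obstacle'' is not actually an obstacle, since Definition~\ref{def:safe-contexts} lets the receiver's observable be arbitrary, so safety alone gives the two-sided transition with the sender's observable and no appeal to the branching head is needed.
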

\begin{proof}
Suppose no $\inter[n]$ with $n \ge i$ involves $\pp$.
Then $\ctx[k](\pp) = \ctx[i](\pp)$ for all positions $k \ge i$
(Lemma~\ref{lemma:semantics}(3)), so
$\ctx[k] \transt{\pp : \czeta}$ at every position $k \ge i$.
The clause of Definition~\ref{def:live-paths} matching $\czeta$
yields $j \in I$ with $j \ge i$ at which a complementary one-sided
action between $\pp$ and the partner named by $\czeta$ is enabled.
Since $\ctx[j]$ is reachable from $\ctx[0]$, safety
(Definition~\ref{def:safe-contexts}) turns the two matching
one-sided actions into a two-sided transition
$\ctx[j] \transi{\pr \role{s}}$ with $\pp \in \set{\pr, \role{s}}$.
By fairness, $\inter[n] = \interaction{\pr}{\role{s}}{\obs}$ for
some $n \in I$ with $n \ge j$ and some $\obs$, so $\inter[n]$
involves $\pp$, a contradiction.
\end{proof}

\subsection{Proofs from Section~\ref{sec:building}}
\label{app:building-proofs}

\subsubsection{Combined Relation Definition}

\begin{definition}[Composition of Projection and Subtyping]
\label{def:composition-projection-supertype}
We coinductively define a combined relation incorporating both projection and subtyping by the following rules.

\centerline{
\small
$
\begin{array}{c}
    \cinferruleleft[\rulename{SPR1}]{ \G \cofullprojsub{p} \T}{ \Gvt{\pp}{\pq}{\B}{\G}  \cofullprojsub{p} \tout{\pq}{\B}{\T}} \quad
    \cinferruleleft[\rulename{SPR2}]{ \G \cofullprojsub{p} \T}{ \Gvt{\pq}{\pp}{\B}{\G}  \cofullprojsub{p} \tin{\pq}{\B}{\T}} \quad
    \cinferruleleft[\rulename{SPR3}]{ \G \cofullprojsub{p} \T \\ \pp \notin \set{\pq, \pr} }{ \Gvt{\pq}{\pr}{\B}{\G}  \cofullprojsub{p} \T} \\[3mm]
    \cinferruleleft[\rulename{SPR4}]{ \forall i \in I : \G[i] \cofullprojsub{p} \T[i]}{ \GvtPair{\pp}{\pq}{\lab[i] : \G[i]}{i \in I}  \cofullprojsub{p} \tselsub{\pq}{\lab[i] : \T[i]}{i \in I}} \quad
    \cinferruleleft[\rulename{SPR5}]{ \forall j \in J : \G[j] \cofullprojsub{p} \T[j] \\ J \subseteq I}{ \GvtPair{\pq}{\pp}{\lab[j] : \G[j]}{j \in J}  \cofullprojsub{p} \tbrasub{\pq}{\lab[i] : \T[i]}{i \in I}} \\[3mm]
    \cinferruleleft[\rulename{SPR6}]{ \forall i \in I : \G[i] \cofullprojsub{p} \T \\ \pp \notin \set{\pq, \pr}}{ \GvtPair{\pq}{\pr}{\lab[i] : \G[i]}{i \in I}  \cofullprojsub{p} \T} \quad
    \cinferruleleft[\rulename{SPR9}]{ \G \sub{\toc{\sfmu} \ty . \G}{\ty} \cofullprojsub{p} \T \\ \pp \in \pt{\G}}{ \toc{\sfmu} \ty . \G \cofullprojsub{p} \T} \\[3mm]
    \cinferruleleft[\rulename{SPR10}]{ \G \cofullprojsub{p} \T \sub{\toc{\sfmu} \ty . \T}{\ty}}{ \G \cofullprojsub{p} \toc{\sfmu} \ty . \T} \quad
    \cinferruleleft[\rulename{SPR11}]{ }{ \gend \cofullprojsub{p} \tend} \quad
    \cinferruleleft[\rulename{SPR12}]{ \pp \not \in \pt{\G} }{ \toc{\sfmu} \ty . \G \cofullprojsub{p} \tend}
\end{array}
$
}
\end{definition}

\subsubsection{Inversion Lemmas}

\begin{lemma}[Subtyping is invariant under unfolding]
    \label{lemma:unfold-compatible-with-subtyping}
    We have that $\T \subt \unfold{\T}$ and $\unfold{\T} \subt \T$.
\end{lemma}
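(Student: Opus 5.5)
To prove $\T \subt \unfold{\T}$ and $\unfold{\T} \subt \T$, I would argue by coinduction on Definition~\ref{def:coind-subtyping}, combined with an outer induction on the number of leading $\toc{\sfmu}$-binders of $\T$. Guardedness ensures $\unfold{\T}$ is well defined: each substitution step strictly decreases this number, so the unfolding reaches a non-$\toc{\sfmu}$ head after finitely many steps. The basic ingredient is reflexivity of $\subt$, which I would prove first. Let $\mathcal{R} = \set{(\T,\T)}$ over all closed local types. This relation is backward closed. For a pair $(\toc{\sfmu}\ty.\Ti,\toc{\sfmu}\ty.\Ti)$ I apply \rulename{S5}, reducing it to $(\Ti\sub{\toc{\sfmu}\ty.\Ti}{\ty},\toc{\sfmu}\ty.\Ti)$. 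To stay inside the candidate relation, I enlarge $\mathcal{R}$ to contain all pairs $(\T,\Ti)$ such that $\unfold{\T} = \unfold{\Ti}$ syntactically.

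This enlarged relation $\mathcal{R}'$ is the right candidate for both halves of the statement, because $\unfold{\unfold{\T}} = \unfold{\T}$. Hence $(\T,\unfold{\T})$ and $(\unfold{\T},\T)$ both belong to $\mathcal{R}'$. It therefore suffices to show that $\mathcal{R}'$ is backward closed under rules \rulename{S1}--\rulename{S7}. Take $(\T,\Ti)\in\mathcal{R}'$. If $\T = \toc{\sfmu}\ty.\T[0]$, I apply \rulename{S5}; the premise $(\T[0]\sub{\T}{\ty},\Ti)$ is still in $\mathcal{R}'$, since unfolding is unchanged by one unfolding step. Symmetrically, if $\T$ is not a $\toc{\sfmu}$ but $\Ti$ is, I apply \rulename{S6}.

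Otherwise neither side is a $\toc{\sfmu}$, so both sides are their own unfoldings and $\T = \Ti$ syntactically. I then apply the structural rule matching the head of $\T$:
\begin{itemize}
\item \rulename{S7} for $\tend$;
\item \rulename{S1} or \rulename{S2} for a payload output or input;
\item \rulename{S3} or \rulename{S4} for a selection or branching, taking $I = J$.
\end{itemize}
The premises are pairs of identical continuations, which lie in $\mathcal{R}' \supseteq \mathcal{R}$. A free variable $\ty$ cannot occur at the head, because all types considered are closed. This case is excluded by closedness.

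The only delicate point is the well-foundedness of the $\toc{\sfmu}$-peeling. The coinductive rules \rulename{S5} and \rulename{S6} could in principle be applied forever without reaching a structural rule. For backward closure this is harmless, because coinduction only requires each pair to be the conclusion of some rule whose premises stay in the relation. Still, I would note explicitly that guardedness makes $\unfold{-}$ total, so that membership in $\mathcal{R}'$ is well defined.
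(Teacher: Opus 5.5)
Your proposal is correct and is essentially the paper's proof. Both take the candidate relation $\set{(\T,\Ti) \mid \unfold{\T} = \unfold{\Ti}}$, use idempotence of unfolding to place $(\T,\unfold{\T})$ and $(\unfold{\T},\T)$ in it, and check backward closure via \rulename{S5}, \rulename{S6}, and the structural rules with equal index sets. Two minor points: the identity relation you start from is not itself backward closed at $\toc{\sfmu}$-types, and the announced outer induction is never used, but the enlargement makes both irrelevant; your explicit restriction to closed types is a precision the paper leaves implicit.
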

\begin{proof}
By coinduction.
Define the candidate relation
$R = \set{(\T, \Ti) \mid \unfold{\T} = \unfold{\Ti}}$.
Since $\unfold{}$ is idempotent, $(\T, \unfold{\T}) \in R$ and
$(\unfold{\T}, \T) \in R$ for every~$\T$.
It suffices to show $R$ is consistent with the subtyping rules,
i.e.\ for every $(\T, \Ti) \in R$ there is a subtyping rule whose
conclusion matches $\T \subt \Ti$ and whose premises lie in~$R$.

Take $(\T, \Ti) \in R$ with $\unfold{\T} = \unfold{\Ti}$.

\emph{If $\T$ is a $\toc{\sfmu}$-type}, say $\T = \toc{\sfmu} \ty . \T[0]$.
Rule \rulename{S5} applies: the premise requires
$(\T[0]\sub{\toc{\sfmu} \ty . \T[0]}{\ty}, \Ti) \in R$.
Since $\unfold{\T[0]\sub{\toc{\sfmu} \ty . \T[0]}{\ty}} = \unfold{\T} = \unfold{\Ti}$,
the premise is in~$R$.

\emph{If $\Ti$ is a $\toc{\sfmu}$-type}, say $\Ti = \toc{\sfmu} \ty . \Ti[0]$.
Rule \rulename{S6} applies: the premise requires
$(\T, \Ti[0]\sub{\toc{\sfmu} \ty . \Ti[0]}{\ty}) \in R$.
Since $\unfold{\Ti[0]\sub{\toc{\sfmu} \ty . \Ti[0]}{\ty}} = \unfold{\Ti} = \unfold{\T}$,
the premise is in~$R$.

\emph{If neither $\T$ nor $\Ti$ is a $\toc{\sfmu}$-type},
then $\T = \unfold{\T} = \unfold{\Ti} = \Ti$.
The appropriate structural rule applies:
\rulename{S1} for $\tout{}{}{}$,
\rulename{S2} for $\tin{}{}{}$,
\rulename{S3} for $\tsel{}{}$,
\rulename{S4} for $\tbra{}{}$, or
\rulename{S7} for $\tend$.
In each case, the premises are pairs $(\T[i], \T[i])$,
which lie in $R$ since $\unfold{\T[i]} = \unfold{\T[i]}$.
\end{proof}

\begin{lemma}[Inversion of Subtyping]
    \label{lemma:inversion-subtyping}
    If $\T \subt \Ti$ then:
    \begin{enumerate}[leftmargin=*, label=\textbullet]
      \item If $\Ti = \tout{\pq}{\B}{\Tii}$, then
        $\unfold{\T} = \tout{\pq}{\B}{\T[0]}$ and $\T[0] \subt \Tii$.
      \item If $\Ti = \tin{\pq}{\B}{\Tii}$, then
        $\unfold{\T} = \tin{\pq}{\B}{\T[0]}$ and $\T[0] \subt \Tii$.
      \item If $\Ti = \tselsub{\pq}{\lab[i] : \Tii[i]}{i \in I}$, then
        there exists $J \subseteq I$ such that
        $\unfold{\T} = \tselsub{\pq}{\lab[j] : \T[j]}{j \in J}$, and
        $\T[j] \subt \Tii[j]$ for all $j \in J$.
      \item If $\Ti = \tbrasub{\pq}{\lab[i] : \Tii[i]}{i \in I}$, then
        there exists $J \supseteq I$ such that
        $\unfold{\T} = \tbrasub{\pq}{\lab[j] : \T[j]}{j \in J}$, and
        $\T[i] \subt \Tii[i]$ for all $i \in I$.
      \item If $\Ti = \tend$, then
        $\unfold{\T} = \tend$.
    \end{enumerate}
\end{lemma}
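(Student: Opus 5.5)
The plan is to argue by induction on the number of leading $\toc{\sfmu}$-binders of $\T$, i.e.\ on the length of the unfolding chain that computes $\unfold{\T}$; this number is finite because recursion is guarded. In each case I first invert the derivation of $\T \subt \Ti$ by looking at which coinductive rule can have $\T \subt \Ti$ as its conclusion. In all five cases $\Ti$ is not a $\toc{\sfmu}$-type, so rule \rulename{S6} never applies. Hence the last rule is either \rulename{S5}, when $\T$ is a $\toc{\sfmu}$-type, or the structural rule matching the head of $\Ti$.

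\emph{Base case.} Suppose $\T$ is not of the form $\toc{\sfmu}\ty.\T[0]$, so $\unfold{\T} = \T$. The rule concluding $\T \subt \Ti$ must then be a structural rule, and since the conclusions of \rulename{S1}--\rulename{S4} and \rulename{S7} fix both heads, that rule is determined by the shape of $\Ti$. For $\Ti = \tout{\pq}{\B}{\Tii}$ only \rulename{S1} fits, so $\T = \tout{\pq}{\B}{\T[0]}$ with premise $\T[0] \subt \Tii$. The input case is the same with \rulename{S2}. For selection, \rulename{S3} gives $\T = \tselsub{\pq}{\lab[j]:\T[j]}{j\in J}$ with $J \subseteq I$ and $\T[j] \subt \Tii[j]$ for all $j \in J$. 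For branching, \rulename{S4} gives $J \supseteq I$ and pointwise subtyping on $I$. For $\tend$, only \rulename{S7} fits, so $\T = \tend$. Because the relation is coinductive, I will rely on the standard fact that a greatest fixed point is a post-fixed point of its rule functional: every related pair is the conclusion of some rule instance whose premises are again related. The inversion above is exactly that fact applied to the single pair $(\T,\Ti)$.

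\emph{Inductive step.} Suppose $\T = \toc{\sfmu}\ty.\T[0]$. Since \rulename{S6} is excluded, the only applicable rule is \rulename{S5}, so $\T[0]\sub{\toc{\sfmu}\ty.\T[0]}{\ty} \subt \Ti$. By definition $\unfold{\T} = \unfold{\T[0]\sub{\toc{\sfmu}\ty.\T[0]}{\ty}}$, and guardedness guarantees that this substituted type has strictly fewer leading binders along the unfolding measure. The induction hypothesis then gives the required shape of its unfolding together with the premises, and these transfer verbatim to $\T$.

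The main obstacle is making this measure well-founded. Guardedness has to ensure that repeated use of \rulename{S5} reaches a non-$\toc{\sfmu}$ head after finitely many steps, which is the same fact that makes $\unfold{}$ well defined; I would simply invoke it. Everything else is routine rule inversion.
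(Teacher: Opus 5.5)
Your proof is correct, but it is organised differently from the paper's.

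The paper does not induct on the unfolding chain. It first invokes Lemma~\ref{lemma:unfold-compatible-with-subtyping} to get $\unfold{\T} \subt \T$, then chains this with $\T \subt \Ti$ to obtain $\unfold{\T} \subt \Ti$. Finally it performs a single rule inversion with both sides in constructor form, where, as you note, neither \rulename{S5} nor \rulename{S6} can fire. The chaining step silently uses transitivity of the coinductive relation $\subt$. The paper only proves that later, in Lemma~\ref{lemma:subtyping-transitive}, via Lemma~\ref{lemma:unfold-replacement}, and the latter's proof is essentially your binder-peeling induction on \rulename{S5}.

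Your route is therefore self-contained. It needs only the post-fixed-point property of $\subt$ and a well-founded unfolding measure, and it avoids the hidden dependence on transitivity. The paper's route is shorter on the page because it hands exactly this work to the unfolding-invariance and transitivity lemmas.

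One small tightening: take the measure to be the length of the chain computing $\unfold{\T}$. Then the strict decrease in your inductive step follows immediately from the defining equation $\unfold{\toc{\sfmu}\ty.\T[0]} = \unfold{\T[0]\sub{\toc{\sfmu}\ty.\T[0]}{\ty}}$. Guardedness is then needed only to know that $\unfold{\cdot}$ is well defined, not to count leading binders after substitution.
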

\begin{proof}
Assume $\T \subt \Ti$.
By Lemma~\ref{lemma:unfold-compatible-with-subtyping},
$\unfold{\T} \subt \T \subt \Ti$, so $\unfold{\T} \subt \Ti$.
Since $\unfold{\T}$ is not a $\toc{\sfmu}$-type, rule \rulename{S5}
does not apply, and the derivation of
$\unfold{\T} \subt \Ti$ must proceed by one of the
non-unfolding rules (possibly applying \rulename{S6} to~$\Ti$ first,
but $\Ti$ is given in constructor form, so \rulename{S6} does not
apply either).
In each case the matching rule forces the head constructor of
$\unfold{\T}$:
\begin{enumerate}[leftmargin=*, label=\textbullet]
  \item $\Ti = \tout{\pq}{\B}{\Tii}$: only \rulename{S1} applies,
    so $\unfold{\T} = \tout{\pq}{\B}{\T[0]}$ with $\T[0] \subt \Tii$.
  \item $\Ti = \tin{\pq}{\B}{\Tii}$: only \rulename{S2} applies,
    so $\unfold{\T} = \tin{\pq}{\B}{\T[0]}$ with $\T[0] \subt \Tii$.
  \item $\Ti = \tselsub{\pq}{\lab[i] : \Tii[i]}{i \in I}$:
    only \rulename{S3} applies,
    so $\unfold{\T} = \tselsub{\pq}{\lab[j] : \T[j]}{j \in J}$
    with $J \subseteq I$ and $\T[j] \subt \Tii[j]$ for all $j \in J$.
  \item $\Ti = \tbrasub{\pq}{\lab[i] : \Tii[i]}{i \in I}$:
    only \rulename{S4} applies,
    so $\unfold{\T} = \tbrasub{\pq}{\lab[j] : \T[j]}{j \in J}$
    with $J \supseteq I$ and $\T[i] \subt \Tii[i]$ for all $i \in I$.
  \item $\Ti = \tend$: only \rulename{S7} applies, so $\unfold{\T} = \tend$.
\end{enumerate}
\end{proof}

\begin{restatable}[Inversion of Global Type Reduction]{lemma}{inversionglobalreduction}
\label{lemma:inversion-global-reduction}
Assume $\G \transg{\interaction{\pp}{\pq}{\obs}} \Gi$. Then one of the following holds:
\begin{enumerate}[leftmargin=*, label=\textbullet]
  \item $\obs = \B$,
    $\unfold{\G} = \Gvt{\pp}{\pq}{\B}{\G[0]}$, and $\Gi = \G[0]$.
  \item $\obs = \lab[j]$, there exists an index set $I$ with $j \in I$ such that
    $\unfold{\G} = \GvtPair{\pp}{\pq}{\lab[i] : \G[i]}{i \in I}$
    and $\Gi = \G[j]$.
  \item $\unfold{\G} = \Gvt{\pr}{\role{s}}{\B}{\G[0]}$ with
    $\set{\pp,\pq} \cap \set{\pr,\role{s}} = \emptyset$,
    $\G[0] \transg{\interaction{\pp}{\pq}{\obs}} \Gi[0]$,
    and $\Gi = \Gvt{\pr}{\role{s}}{\B}{\Gi[0]}$.
  \item $\unfold{\G} = \GvtPair{\pr}{\role{s}}{\lab[i] : \G[i]}{i \in I}$ with
    $\set{\pp,\pq} \cap \set{\pr,\role{s}} = \emptyset$,
    $\G[i] \transg{\interaction{\pp}{\pq}{\obs}} \Gi[i]$ for all $i \in I$,
    and $\Gi = \GvtPair{\pr}{\role{s}}{\lab[i] : \Gi[i]}{i \in I}$.
\end{enumerate}
\end{restatable}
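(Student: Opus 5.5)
The plan is to follow the template of Lemma~\ref{lemma:inversion-safe-one-sided-reduction}: induct on the derivation of $\G \transg{\interaction{\pp}{\pq}{\obs}} \Gi$ and case-split on the last rule applied. The five global rules \rulename{GR1}--\rulename{GR5} fall into two groups. The four structural rules each yield one clause of the statement directly, with $\unfold{\G} = \G$ because $\G$ is not a $\toc{\sfmu}$-type. Rule \rulename{GR3} is handled by the induction hypothesis, and is the only rule where unfolding does real work.

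\emph{Structural rules.}
\begin{itemize}
\item \rulename{GR1}: $\G = \Gvt{\pp}{\pq}{\B}{\G[0]}$ and $\Gi = \G[0]$, with $\obs = \B$. This is clause~1.
\item \rulename{GR4}: $\G = \GvtPair{\pp}{\pq}{\lab[i] : \G[i]}{i \in I}$, $\obs = \lab[j]$ with $j \in I$, and $\Gi = \G[j]$. This is clause~2.
\item \rulename{GR2}: $\G = \Gvt{\pr}{\role{s}}{\B}{\G[0]}$ with $\set{\pp,\pq}\cap\set{\pr,\role{s}}=\emptyset$, the premise $\G[0] \transg{\interaction{\pp}{\pq}{\obs}} \Gi[0]$, and $\Gi = \Gvt{\pr}{\role{s}}{\B}{\Gi[0]}$. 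These are exactly the data required by clause~3. Note the rule's metavariables are named the other way round from the statement, so they must be renamed.
\item \rulename{GR5}: clause~4 follows analogously, with the premise holding for every branch $i \in I$.
\end{itemize}

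\emph{Recursion rule.} For \rulename{GR3}, write $\G = \toc{\sfmu}\ty.\G[1]$ with premise $\G[1]\sub{\toc{\sfmu}\ty.\G[1]}{\ty} \transg{\interaction{\pp}{\pq}{\obs}} \Gi$. The induction hypothesis applies to this strictly smaller derivation. It gives one of the four clauses for $\unfold{\G[1]\sub{\toc{\sfmu}\ty.\G[1]}{\ty}}$, with the same $\Gi$ and the same premises. By the definition of $\unfold{\cdot}$, this unfolding equals $\unfold{\G}$, so the same clause holds for $\G$ unchanged.

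\emph{Main point to check.} The only subtle step is that the induction is on derivations rather than on the syntax of $\G$. Substitution can enlarge the term, so structural induction on $\G$ would not be well-founded. Derivations of $\transg{}$ are finite, since the relation is defined inductively, so the induction is well-founded. Guardedness of recursion ensures $\unfold{\cdot}$ is defined, so the identity $\unfold{\G} = \unfold{\G[1]\sub{\toc{\sfmu}\ty.\G[1]}{\ty}}$ holds. Beyond that, the remaining work is bookkeeping of the disjointness side condition, which is copied verbatim from \rulename{GR2}/\rulename{GR5}.
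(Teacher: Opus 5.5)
Your proposal is correct and matches the paper's proof essentially step for step. Both argue by induction on the derivation of the transition. Rules \rulename{GR1}, \rulename{GR4}, \rulename{GR2} and \rulename{GR5} give clauses 1--4 directly, since $\unfold{\G} = \G$ when $\G$ is not a $\toc{\sfmu}$-type. Rule \rulename{GR3} is discharged by the induction hypothesis via $\unfold{\toc{\sfmu}\ty.\G[1]} = \unfold{\G[1]\sub{\toc{\sfmu}\ty.\G[1]}{\ty}}$.
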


\begin{proof}[Proof of Lemma~\ref{lemma:inversion-global-reduction}]
By induction on the derivation of $\G \transg{\interaction{\pp}{\pq}{\obs}} \Gi$.
\begin{itemize}
  \item \rulename{GR1}: $\G = \Gvt{\pp}{\pq}{\B}{\G[0]}$, $\obs = \B$, $\Gi = \G[0]$.
    Since $\G$ is not a $\toc{\sfmu}$-type, $\unfold{\G} = \G$. Case~1.
  \item \rulename{GR4}: $\G = \GvtPair{\pp}{\pq}{\lab[i] : \G[i]}{i \in I}$,
    $\obs = \lab[j]$, $j \in I$, $\Gi = \G[j]$.
    Since $\G$ is not a $\toc{\sfmu}$-type, $\unfold{\G} = \G$. Case~2.
  \item \rulename{GR3}: $\G = \toc{\sfmu} \ty . \G[1]$ and
    $\G[1]\sub{\toc{\sfmu} \ty . \G[1]}{\ty}
    \transg{\interaction{\pp}{\pq}{\obs}} \Gi$.
    By the induction hypothesis, one of cases~1--4 holds for
    $\G[1]\sub{\toc{\sfmu} \ty . \G[1]}{\ty}$.
    Since $\unfold{\G} = \unfold{\G[1]\sub{\toc{\sfmu} \ty . \G[1]}{\ty}}$,
    the same case holds for~$\G$.
  \item \rulename{GR2}:
    $\G = \Gvt{\pr}{\role{s}}{\B}{\G[0]}$,
    $\set{\pp,\pq} \cap \set{\pr,\role{s}} = \emptyset$,
    $\G[0] \transg{\interaction{\pp}{\pq}{\obs}} \Gi[0]$,
    $\Gi = \Gvt{\pr}{\role{s}}{\B}{\Gi[0]}$.
    Since $\G$ is not a $\toc{\sfmu}$-type, $\unfold{\G} = \G$. Case~3.
  \item \rulename{GR5}:
    $\G = \GvtPair{\pr}{\role{s}}{\lab[i] : \G[i]}{i \in I}$,
    $\set{\pp,\pq} \cap \set{\pr,\role{s}} = \emptyset$,
    $\G[i] \transg{\interaction{\pp}{\pq}{\obs}} \Gi[i]$ for all $i \in I$,
    $\Gi = \GvtPair{\pr}{\role{s}}{\lab[i] : \Gi[i]}{i \in I}$.
    Since $\G$ is not a $\toc{\sfmu}$-type, $\unfold{\G} = \G$. Case~4. \qedhere
\end{itemize}
\end{proof}

\begin{restatable}[Inversion of the Combined Relation]{lemma}{inversioncombinedrelation}
\label{lemma:inversion-combined-relation}
Assume $\G \cofullprojsub{p} \T$. Then:
\begin{enumerate}[leftmargin=*, label=\textbullet]
  \item If $\unfold{\G} = \Gvt{\pp}{\pq}{\B}{\G[0]}$, then
    $\unfold{\T} = \tout{\pq}{\B}{\T[0]}$ and $\G[0] \cofullprojsub{p} \T[0]$.
  \item If $\unfold{\G} = \Gvt{\pq}{\pp}{\B}{\G[0]}$, then
    $\unfold{\T} = \tin{\pq}{\B}{\T[0]}$ and $\G[0] \cofullprojsub{p} \T[0]$.
  \item If $\unfold{\G} = \GvtPair{\pp}{\pq}{\lab[i] : \G[i]}{i \in I}$, then
    $\unfold{\T} = \tselsub{\pq}{\lab[i] : \T[i]}{i \in I}$ and
    $\G[i] \cofullprojsub{p} \T[i]$ for all $i \in I$.
  \item If $\unfold{\G} = \GvtPair{\pq}{\pp}{\lab[j] : \G[j]}{j \in J}$, then there exists $I \supseteq J$ such that
    $\unfold{\T} = \tbrasub{\pq}{\lab[i] : \T[i]}{i \in I}$ and
    $\G[j] \cofullprojsub{p} \T[j]$ for all $j \in J$.
  \item If $\unfold{\G} = \Gvt{\pq}{\pr}{\B}{\G[0]}$ with $\pp \notin \set{\pq, \pr}$, then
    $\G[0] \cofullprojsub{p} \T$.
  \item If $\unfold{\G} = \GvtPair{\pq}{\pr}{\lab[i] : \G[i]}{i \in I}$ with $\pp \notin \set{\pq, \pr}$, then
    $\G[i] \cofullprojsub{p} \T$ for all $i \in I$.
  \item If $\unfold{\G} = \gend$, then
    $\unfold{\T} = \tend$.
\end{enumerate}
\end{restatable}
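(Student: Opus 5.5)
The plan is to prove all seven clauses by a single induction on the number of unfoldings needed to bring $\G$ to the form $\unfold{\G}$, combined with inversion on the last rule of a derivation of $\G \cofullprojsub{p} \T$. Because $\cofullprojsub{p}$ is coinductive, "inversion" here means: $\G \cofullprojsub{p} \T$ holds only if some rule of Definition~\ref{def:composition-projection-supertype} has conclusion $\G \cofullprojsub{p} \T$ and premises that are themselves in the relation.

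First I would dispose of right unfoldings. I would show, by an inner induction on the number of leading $\toc{\sfmu}$ binders of $\T$, that we may assume $\T$ is already unfolded. If $\T = \toc{\sfmu}\ty.\T[1]$, the only rules whose conclusion can match are \rulename{SPR10} (giving $\G \cofullprojsub{p} \T[1]\sub{\toc{\sfmu}\ty.\T[1]}{\ty}$, which has the same $\unfold{}$), \rulename{SPR9} (if $\G$ is a $\toc{\sfmu}$), or \rulename{SPR12}. Guardedness makes this inner induction well-founded.

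For the outer induction on $\G$, I would case on whether $\G$ is itself a $\toc{\sfmu}$-type.

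\emph{$\G$ is not a $\toc{\sfmu}$-type.} Then $\unfold{\G} = \G$, and the head constructor of $\G$, together with whether $\pp$ is its sender, receiver, or neither, selects exactly one applicable rule among \rulename{SPR1}--\rulename{SPR6} and \rulename{SPR11}; \rulename{SPR12} is excluded because it requires a $\toc{\sfmu}$. Reading off that rule's conclusion fixes the shape of $\T$ up to right unfolding: an output, an input, a selection on the same index set $I$, a branching on some $I \supseteq J$, or $\tend$. Its premises are exactly the continuations demanded in clauses 1--7. In the uninvolved clauses 5 and 6 the premises relate the continuations to $\T$ itself, as required.

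\emph{$\G = \toc{\sfmu}\ty.\G[1]$.} The possible last rules are \rulename{SPR9}, \rulename{SPR10}, and \rulename{SPR12}. \rulename{SPR10} is handled by the right-unfolding step above. \rulename{SPR9} gives $\G[1]\sub{\toc{\sfmu}\ty.\G[1]}{\ty} \cofullprojsub{p} \T$, and since this term has the same $\unfold{}$ with one fewer unfolding step, the induction hypothesis applies.

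The main obstacle is \rulename{SPR12}, where $\pp \notin \pt{\G[1]}$ and $\T = \tend$. In clauses 1--4, $\pp$ occurs at the head of $\unfold{\G}$, hence $\pp \in \pt{\G}$, and since $\pt{\G} = \pt{\G[1]}$ this case is vacuous. In clause 7 the conclusion $\unfold{\T} = \tend$ holds trivially. For clauses 5 and 6 we need $\G[0] \cofullprojsub{p} \tend$ (respectively $\G[i] \cofullprojsub{p} \tend$) for a global type not containing $\pp$. I would establish this with an auxiliary coinductive lemma: if $\pp \notin \pt{\Gi}$ for closed $\Gi$, then $\Gi \cofullprojsub{p} \tend$. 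Its proof takes the candidate relation $\set{(\Gi,\tend) \mid \pp \notin \pt{\Gi}}$ and checks closure using \rulename{SPR3}, \rulename{SPR6}, \rulename{SPR11}, and \rulename{SPR12}. The delicate point is that $\pt{}$ of subterms obtained after unfolding must still exclude $\pp$; this holds because substituting a closed $\toc{\sfmu}$-term whose participants exclude $\pp$ keeps $\pp$ absent.
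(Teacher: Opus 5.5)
Your approach is the paper's. You invert the greatest-fixpoint rules and induct on unfolding steps on both sides; your nested induction, outer on $\G$ and inner on $\T$, is a lexicographic variant of the paper's summed measure. You also settle the \rulename{SPR12} case of clauses 5--6 with the same auxiliary fact the paper uses, that $\pp \notin \pt{\Gi}$ implies $\Gi \cofullprojsub{p} \tend$. Your restriction of that fact to closed $\Gi$ is needed, since no rule relates a free type variable.

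However, your right-unfolding step rests on a wrong list of rules. When $\T = \toc{\sfmu}\ty.\T[1]$, rule \rulename{SPR12} cannot match, because its local side is $\tend$. Rules \rulename{SPR3} and \rulename{SPR6} can match, because their local side is an arbitrary $\T$. So a $\G$ whose head does not involve $\pp$ can be related to a $\toc{\sfmu}$-type without \rulename{SPR10} ever being used. In that situation the judgement gives you nothing from which to peel the binder of $\T$. Your inner induction therefore does not justify ``we may assume $\T$ is already unfolded'', and this case is missing from your analysis.

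The repair is the one the paper makes. If the justifying rule is \rulename{SPR3} or \rulename{SPR6}, clauses 1--4 and 7 are vacuous. Only clauses 5--6 apply, and their conclusions, with the original $\T$, are exactly the rule's premises.

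There is a related omission in the \rulename{SPR10} case. There the induction hypothesis for clauses 5--6 gives $\G[0] \cofullprojsub{p} \T[1]\sub{\toc{\sfmu}\ty.\T[1]}{\ty}$, not $\G[0] \cofullprojsub{p} \T$. Clauses 1--4 and 7 only constrain $\unfold{\T}$, so they transfer unchanged. For clauses 5--6 you must fold back by applying \rulename{SPR10} forwards, which is legitimate because the relation is a fixed point. Your sentence that the premises ``relate the continuations to $\T$ itself, as required'' silently substitutes the unfolded $\T$ for the original one.
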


\begin{proof}[Proof of Lemma~\ref{lemma:inversion-combined-relation}]
Since $\cofullprojsub{p}$ is defined coinductively (as a greatest fixpoint),
every pair $(\G, \T) \in {}\cofullprojsub{p}$ is justified by a rule
application whose premises also lie in $\cofullprojsub{p}$.
We proceed by induction on the total number of $\toc{\sfmu}$-unfolding steps
from $\G$ to $\unfold{\G}$ plus from $\T$ to $\unfold{\T}$
(well-founded by guardedness of types).

\emph{If $\G$ is a $\toc{\sfmu}$-type}, say $\G = \toc{\sfmu} \ty . \G[1]$.
The rule justifying $(\G, \T) \in {}\cofullprojsub{p}$ is
\rulename{SPR9} (if $\pp \in \pt{\G[1]}$),
\rulename{SPR12} (if $\pp \notin \pt{\G[1]}$), or, when $\T$ is also
a $\toc{\sfmu}$-type, possibly \rulename{SPR10}; in the last case the
premise unfolds $\T$ by one step and the induction hypothesis applies
exactly as in the next paragraph.
For \rulename{SPR12}: $\T = \tend$ and $\pp \notin \pt{\G}$.
Since $\pp$ does not appear as a participant in $\unfold{\G}$,
cases~1--4 cannot arise; for cases~5--6 the conclusion
$\G[0] \cofullprojsub{p} \T$ (resp.\ $\G[i] \cofullprojsub{p} \T$ for all~$i$)
holds because $\pp \notin \pt{\G}$ implies $\pp \notin \pt{\G[0]}$
(resp.\ $\pp \notin \pt{\G[i]}$), and a straightforward coinductive argument
shows $\Gi \cofullprojsub{p} \tend$ whenever $\pp \notin \pt{\Gi}$;
case~7 gives $\unfold{\T} = \tend$ immediately.
For \rulename{SPR9}: the premise
$\G[1]\sub{\toc{\sfmu} \ty . \G[1]}{\ty} \cofullprojsub{p} \T$
holds, and $\unfold{\G} = \unfold{\G[1]\sub{\toc{\sfmu} \ty . \G[1]}{\ty}}$
with strictly fewer unfolding steps on the $\G$-side.
The result follows by the induction hypothesis.

\emph{If $\G$ is not a $\toc{\sfmu}$-type but $\T$ is},
say $\T = \toc{\sfmu} \ty . \T[1]$.
The applicable rules are \rulename{SPR10}, and \rulename{SPR3} or
\rulename{SPR6} when the head of $\G$ does not involve $\pp$; for the
latter two the premises are exactly the sub-relations required by
cases~5--6, and cases~1--4 and~7 do not arise.
For \rulename{SPR10}, the premise is
$\G \cofullprojsub{p} \T[1]\sub{\toc{\sfmu} \ty . \T[1]}{\ty}$.
Since $\unfold{\T} = \unfold{\T[1]\sub{\toc{\sfmu} \ty . \T[1]}{\ty}}$,
the induction hypothesis (strictly fewer unfolding steps on the $\T$-side)
gives the desired conclusions about $\unfold{\T}$ and the sub-relations
for cases~1--4 and~7.
For cases~5--6, the hypothesis gives
$\G[0] \cofullprojsub{p} \T[1]\sub{\toc{\sfmu} \ty . \T[1]}{\ty}$
(resp.\ $\G[i] \cofullprojsub{p} \T[1]\sub{\toc{\sfmu} \ty . \T[1]}{\ty}$ for all~$i$).
Applying \rulename{SPR10} yields
$\G[0] \cofullprojsub{p} \T$
(resp.\ $\G[i] \cofullprojsub{p} \T$).

\emph{If neither $\G$ nor $\T$ is a $\toc{\sfmu}$-type},
then $\G = \unfold{\G}$ and $\T = \unfold{\T}$.
Rules \rulename{SPR9}, \rulename{SPR10}, and \rulename{SPR12} do not apply.
The remaining rules \rulename{SPR1}--\rulename{SPR6} and \rulename{SPR11}
each match exactly one structural form of $\G$ for a given~$\pp$:
\rulename{SPR1} matches case~1,
\rulename{SPR2} matches case~2,
\rulename{SPR4} matches case~3,
\rulename{SPR5} matches case~4,
\rulename{SPR3} matches case~5,
\rulename{SPR6} matches case~6,
and \rulename{SPR11} matches case~7.
In each case the premises of the rule directly yield the stated conclusions.
\end{proof}

\subsubsection{Widening Subtyping, Merging, and the Combined Relation}
\label{app:widening-combined}

The results of this subsection connect the coinductive projection
$\cofullproj{p}$ of Definition~\ref{def:coind-proj} to the combined
relation $\cofullprojsub{p}$: full merge widens branchings, and the
combined relation absorbs exactly this widening
(rules \rulename{SPR5} and \rulename{SPR6}).
We first isolate the fragment of subtyping generated by merging.

\begin{definition}[Widening Subtyping]
\label{def:widening-subtyping}
The \emph{widening subtyping} relation $\subtw$ is defined
coinductively by the rules of Definition~\ref{def:coind-subtyping},
with rule \rulename{S3} restricted to equal index sets ($I = J$).
Thus $\T \subtw \Ti$ allows $\T$ to offer additional \emph{branches}
(rule \rulename{S4}) but not to omit \emph{selections}.
Every rule of $\subtw$ is a rule of $\subt$, so
${\subtw} \subseteq {\subt}$.
Moreover, the candidate relation used in the proof of
Lemma~\ref{lemma:unfold-compatible-with-subtyping} only invokes
structural rules with equal index sets, so that proof also shows
$\T \subtw \Ti$ whenever $\unfold{\T} = \unfold{\Ti}$;
in particular $\subtw$ is reflexive.
\end{definition}

\begin{lemma}[Unfolding Replacement]
\label{lemma:unfold-replacement}
Let $\unfold{\T} = \unfold{\Ti}$. Then:
(1) $\T \subt \Tii$ iff $\Ti \subt \Tii$, and $\Tii \subt \T$ iff
$\Tii \subt \Ti$; similarly for $\subtw$;
(2) $\G \cofullproj{p} \T$ iff $\G \cofullproj{p} \Ti$, and
$\G \cofullprojsub{p} \T$ iff $\G \cofullprojsub{p} \Ti$;
(3) $\comergefull{S}{\T}$ iff $\comergefull{S}{\Ti}$, and
$\comergefull{S \cup \set{\T}}{\Tii}$ iff
$\comergefull{S \cup \set{\Ti}}{\Tii}$.
In particular, each relation is invariant under replacing the
indicated argument by its unfolding.
\end{lemma}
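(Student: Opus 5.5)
We prove all three parts by coinduction, each time choosing a candidate relation that is closed under exactly the rule instances needed, and each time exploiting the fact that every rule of every relation involved inspects its local-type argument only up to unfolding. Fix $\T,\Ti$ with $\unfold{\T}=\unfold{\Ti}$. We write $\T\approx\Ti$ for this relation; it is an equivalence.

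\emph{Part (1).} For subtyping we could simply reuse Lemma~\ref{lemma:unfold-compatible-with-subtyping} together with transitivity of $\subt$. However, transitivity is not yet established, so we argue directly. Take the candidate
\[
R=\set{(\T[1],\T[2]) \mid \exists \T[1]',\T[2]'.\ \T[1]\approx\T[1]',\ \T[2]\approx\T[2]',\ \T[1]'\subt\T[2]'}.
\]
We show $R$ is backward closed under the rules of Definition~\ref{def:coind-subtyping}.
\begin{itemize}
\item If $\T[1]$ is a $\toc{\sfmu}$-type, apply \rulename{S5}. Its premise, the one-step unfolding of $\T[1]$, has the same $\unfold{}$, so it is again in $R$.
\item If $\T[2]$ is a $\toc{\sfmu}$-type, apply \rulename{S6} symmetrically.
\item If neither is a $\toc{\sfmu}$-type, then $\T[1]=\unfold{\T[1]'}$ and $\T[2]=\unfold{\T[2]'}$. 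Apply Lemma~\ref{lemma:inversion-subtyping} to $\T[1]'\subt\T[2]'$ after replacing $\T[2]'$ by $\unfold{\T[2]'}$. This replacement is justified by peeling the \rulename{S6} steps off the $\subt$-derivation, inducting on the number of unfoldings, which is finite by guardedness. The inversion yields a structural rule (\rulename{S1}--\rulename{S4}, \rulename{S7}) whose premises are pairs in $\subt\subseteq R$.
\end{itemize}
Taking $\T[1]=\Ti$ and $\T[1]'=\T$ gives the left-hand replacement; the right-hand one is symmetric. For $\subtw$ we run the same argument with \rulename{S3} restricted to $I=J$, which the inversion preserves.

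\emph{Part (2).} We use the analogous candidate for $\cofullproj{p}$: pairs $(\G,\T[1])$ such that $\G\cofullproj{p}\T[1]'$ for some $\T[1]'\approx\T[1]$. We do the same for $\cofullprojsub{p}$. For backward closure we case on $\T[1]$.
\begin{itemize}
\item If $\T[1]$ is a $\toc{\sfmu}$-type, use \rulename{PR10}/\rulename{SPR10}. The premise unfolds $\T[1]$ without changing $\unfold{}$, so it stays in the candidate.
\item Otherwise, first strip \rulename{PR10}/\rulename{SPR10} steps from the derivation of $\G\cofullproj{p}\T[1]'$, by induction on unfolding depth, until the right-hand side is $\unfold{\T[1]'}=\T[1]$. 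The last rule is then one of \rulename{PR1}--\rulename{PR6}, \rulename{PR9}, \rulename{PR11}, \rulename{PR12}, concluding with $\T[1]$ itself, and its premises already lie in the relation.
\end{itemize}
The delicate rules are \rulename{PR3}/\rulename{PR6}/\rulename{PR9} and their \rulename{SPR} counterparts. Their conclusion has the same local type as a premise (or a merge result), so stripping unfoldings of $\T[1]'$ commutes with them only if we stop stripping as soon as a non-\rulename{PR10} rule appears. In that case we keep $\T[1]'$ and place the premise $(\G_i,\T[1]')$ into the candidate, witnessed via $\T[1]'\approx\T[1]$. We therefore phrase the candidate with the witness $\T[1]'$ existentially quantified, which handles this uniformly.

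\emph{Part (3).} For merging, the first claim is identical in structure, using \rulename{M7} in place of \rulename{PR10}. The second claim concerns an element of the input set. Take the candidate of pairs $(S'',\Tii)$ where $S''$ is obtained from a set $S'$ with $\comergefull{S'}{\Tii}$ by replacing members with $\approx$-equivalent types.
\begin{itemize}
\item Any $\toc{\sfmu}$-member of $S''$ is handled by \rulename{M6}, which unfolds it while preserving $\approx$.
\item When all members of $S''$ are unfolded, we invert the derivation for $S'$ by first applying all of its \rulename{M6} steps, by induction on total unfolding depth, to reach a set of unfolded types. This set coincides with $S''$ except that distinct elements of $S'$ may have collapsed to a single unfolding. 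Since the rules \rulename{M1}--\rulename{M5} are stated over sets, this collapse is harmless. Their premises are again merge judgements on continuation sets, which lie in the candidate.
\end{itemize}
We expect this set-level bookkeeping in (3)—duplicates collapsing under unfolding, and commuting \rulename{M6} with \rulename{M7}—to be the main obstacle. We would handle it by defining the candidate up to $\approx$ pointwise on sets, with existentially quantified witnesses, exactly as in (2).
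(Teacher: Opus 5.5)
Your plan is essentially the paper's proof. It uses a candidate relation whose witness has the same unfolding, applies the unfolding rule on the replaced side when that argument is a $\toc{\sfmu}$-type, and otherwise inducts on the leading binders of the witness, peeling unfolding steps until a structural rule, or a rule not inspecting that argument, justifies the judgement. The paper writes out only the first claim of (1) and declares the rest identical, so your (2) and (3) are more explicit than the source. Two steps, however, need tightening.

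In (1) you set transitivity aside but then invoke Lemma~\ref{lemma:inversion-subtyping}. The paper's proof of that lemma goes through $\unfold{\T}\subt\T\subt\Ti$, i.e.\ exactly that transitivity. For $\subtw$, the matching inversion is Lemma~\ref{lemma:inversion-subtyping-sub}, which the paper derives from the very lemma at hand. Your ``peeling of \rulename{S6} steps'' also overlooks that, when both sides are $\toc{\sfmu}$-types, the justifying rule may be \rulename{S5}. Both points disappear if you induct on the leading binders of $\T[1]'$ and $\T[2]'$ jointly, letting an \rulename{S5} justification peel the left and an \rulename{S6} justification peel the right. You then reach $\unfold{\T[1]'}\subt\unfold{\T[2]'}$, i.e.\ $\T[1]\subt\T[2]$, whose justifying rule is structural with premises in ${\subt}\subseteq R$, and no inversion lemma is needed.

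In (2), consider the case where stripping stops at \rulename{PR6}. There the replaced type is not the local type of a projection premise; relocating $(\G[i],\cdot)$ with an unchanged local type is the \rulename{SPR6} shape. Instead it is the output of the side condition $\comergefull{\set{U_i}_{i}}{\T[1]'}$, where $\G[i]\cofullproj{p}U_i$. The new \rulename{PR6} instance for $(\G,\T[1])$ needs $\comergefull{\set{U_i}_{i}}{\T[1]}$, which is exactly the first claim of (3). So prove (3) first; it does not use (2).

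Finally, your \rulename{M6}-peeling measure, like the paper's argument, presupposes that \rulename{M6} removes the unfolded member. Under a literal $\cup$, a set containing both $\toc{\sfmu}\ty.\T$ and $\T\sub{\toc{\sfmu}\ty.\T}{\ty}$ justifies itself by \rulename{M6} and merges to any type. That would even refute the second claim of (3).
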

\begin{proof}
We prove the first claim of~(1); the statement is symmetric in $\T$
and $\Ti$, and the remaining claims are proved identically, with
rules \rulename{S6}, \rulename{PR10},
\rulename{SPR10}, \rulename{M7}, and \rulename{M6} supplying the
unfolding steps on the corresponding side (rules that do not inspect
the replaced argument recurse directly on their premises).
Let $R = \set{(\T[0], \Tii) \mid \exists \Ti[0].\
\Ti[0] \subt \Tii \text{ and }
\unfold{\T[0]} = \unfold{\Ti[0]}}$; we show $R$ is consistent with the
subtyping rules, so that $R \subseteq {\subt}$; since
$(\T, \Tii) \in R$ via $\Ti$, this suffices.
Take $(\T[0], \Tii) \in R$ via $\Ti[0]$.
If $\T[0] = \toc{\sfmu}\ty.\T[1]$, rule \rulename{S5} applies with premise
$(\T[1]\sub{\toc{\sfmu}\ty.\T[1]}{\ty}, \Tii) \in R$ (same witness~$\Ti[0]$).
Otherwise $\T[0]$ is in constructor form, so
$\T[0] = \unfold{\T[0]} = \unfold{\Ti[0]}$.
By induction on the number of leading $\toc{\sfmu}$-binders of~$\Ti[0]$
plus those of~$\Tii$ (finite by guardedness): if the judgement
$\Ti[0] \subt \Tii$ is justified by \rulename{S5}, its premise peels
one binder of $\Ti[0]$; if by \rulename{S6}, its premise peels one
binder of $\Tii$, which \rulename{S6} restores after applying the
induction hypothesis. Iterating yields $\unfold{\Ti[0]} \subt \Tii$,
i.e.\ $\T[0] \subt \Tii$. This judgement is justified by some rule whose premises
lie in ${\subt} \subseteq R$.
\end{proof}

\begin{lemma}[Transitivity]
\label{lemma:subtyping-transitive}
The relations $\subt$ and $\subtw$ are transitive.
\end{lemma}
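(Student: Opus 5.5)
We prove transitivity of $\subt$ and $\subtw$ simultaneously by coinduction. Set
\[
R = \set{(\T, \Tiii) \mid \exists \Tii.\ \T \subt \Tii \text{ and } \Tii \subt \Tiii},
\]
and define $R_w$ analogously with $\subtw$ in place of $\subt$. It suffices to show that $R$ (resp.\ $R_w$) is consistent with the rules of Definition~\ref{def:coind-subtyping} (resp.\ with equal index sets in \rulename{S3}). That is, for every pair in $R$ we must exhibit a rule whose conclusion is that pair and whose premises again lie in $R$. Since ${\subt} \subseteq R$ (take $\Tii = \Tiii$ and use reflexivity, Definition~\ref{def:widening-subtyping}), greatest-fixpoint reasoning then gives $R \subseteq {\subt}$, and this is transitivity.

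\textbf{Handling the recursive binders.} Take $(\T, \Tiii) \in R$ via $\Tii$. If $\T = \toc{\sfmu}\ty.\T[1]$, apply \rulename{S5}. The required premise $(\T[1]\sub{\toc{\sfmu}\ty.\T[1]}{\ty}, \Tiii)$ lies in $R$ via the same $\Tii$, because Lemma~\ref{lemma:unfold-replacement}(1) lets us replace $\T$ by its one-step unfolding in $\T \subt \Tii$ (both have the same $\unfold{}$). Symmetrically, if $\Tiii = \toc{\sfmu}\ty.\Tiii[1]$, apply \rulename{S6} and use Lemma~\ref{lemma:unfold-replacement}(1) on the right argument of $\Tii \subt \Tiii$.

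The middle type $\Tii$ may also carry binders. Here we again use Lemma~\ref{lemma:unfold-replacement}(1) to replace $\Tii$ by $\unfold{\Tii}$ in both judgements. We may therefore assume that $\T$, $\Tii$ and $\Tiii$ are all in constructor form.

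\textbf{Constructor cases.} With all three types in constructor form, we use Lemma~\ref{lemma:inversion-subtyping} on $\Tii \subt \Tiii$ to read off the shape of $\Tii$, and then on $\T \subt \Tii$ to read off the shape of $\T$. The cases are as follows.
\begin{itemize}
\item \emph{Output, input and $\tend$.} The heads of all three types coincide. The continuations satisfy $\T[0] \subt \Tii[0] \subt \Tiii[0]$, so $(\T[0], \Tiii[0]) \in R$ and rule \rulename{S1}, \rulename{S2} or \rulename{S7} applies.
\item \emph{Selection.} The index sets satisfy $I_1 \subseteq I_2 \subseteq I_3$, so $I_1 \subseteq I_3$. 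For each $i \in I_1$ the continuations are related through the middle one, so \rulename{S3} applies with premises in $R$. For $\subtw$ the index sets are all equal, so equality is preserved.
\item \emph{Branching.} Dually, $I_1 \supseteq I_2 \supseteq I_3$. For each $i \in I_3$ the continuations are again related through the middle type, so \rulename{S4} applies.
\end{itemize}

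\textbf{Main obstacle.} The only delicate point is the handling of $\toc{\sfmu}$ binders in the middle type. A naive case analysis on the last rules used in the two derivations does not obviously make progress, because \rulename{S6} on the left derivation can interleave with \rulename{S5} on the right. Routing everything through Lemma~\ref{lemma:unfold-replacement} and Lemma~\ref{lemma:inversion-subtyping} avoids any need for an induction on unfolding depth: guardedness ensures that $\unfold{}$ is defined, and the inversion lemma operates directly on $\unfold{\T}$.
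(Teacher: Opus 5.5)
Your proof is correct and is essentially the paper's. It uses the same candidate relation, applies \rulename{S5}/\rulename{S6} to the outer binders, uses Lemma~\ref{lemma:unfold-replacement} to put the middle type in constructor form, and then composes the structural rules with $I_1 \subseteq I_2 \subseteq I_3$ for selections and dually for branchings.

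One small caution: the paper's proof of Lemma~\ref{lemma:inversion-subtyping} itself goes through $\unfold{\T}\subt\T\subt\Ti$, which is transitivity, and that lemma is stated only for $\subt$. So once all three types are in constructor form, read the head shapes directly off the structural rule that must justify each judgement, as the paper does; this avoids the circularity and covers $\subtw$ uniformly.
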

\begin{proof}
We treat $\subt$; the argument for $\subtw$ is identical.
Let $R = \set{(\T, \Tii) \mid \exists \Ti.\
\T \subt \Ti \subt \Tii}$; we show $R$ is consistent.
Take such a triple; by Lemma~\ref{lemma:unfold-replacement} we may
assume $\Ti$ is in constructor form.
If $\T = \toc{\sfmu}\ty.\T[0]$, rule \rulename{S5} applies with
premise in $R$ (by Lemma~\ref{lemma:unfold-replacement}(1),
$\T[0]\sub{\toc{\sfmu}\ty.\T[0]}{\ty} \subt \Ti$).
If $\Tii = \toc{\sfmu}\ty.\Tii[0]$, rule \rulename{S6} applies
symmetrically.
Otherwise all three types are in constructor form, and the
judgements $\T \subt \Ti$ and $\Ti \subt \Tii$ are justified by
structural rules on the same head constructor, which compose:
for \rulename{S1}, \rulename{S2}, \rulename{S7} directly;
for \rulename{S3} the index sets satisfy
$I_{\T} \subseteq I_{\Ti} \subseteq I_{\Tii}$ and the premises chain
through $\Ti$ for each $i \in I_{\T}$;
for \rulename{S4} dually, for each $i \in I_{\Tii}$.
In each case the resulting premises lie in~$R$.
\end{proof}

\begin{lemma}[Inversion of Subtyping at the Subtype]
\label{lemma:inversion-subtyping-sub}
If $\T \subt \Ti$ then:
\begin{enumerate}[leftmargin=*, label=\textbullet]
  \item If $\unfold{\T} = \tout{\pq}{\B}{\T[0]}$, then
    $\unfold{\Ti} = \tout{\pq}{\B}{\Tii[0]}$ and $\T[0] \subt \Tii[0]$.
  \item If $\unfold{\T} = \tin{\pq}{\B}{\T[0]}$, then
    $\unfold{\Ti} = \tin{\pq}{\B}{\Tii[0]}$ and $\T[0] \subt \Tii[0]$.
  \item If $\unfold{\T} = \tselsub{\pq}{\lab[i] : \T[i]}{i \in I}$, then
    there exists $I' \supseteq I$ such that
    $\unfold{\Ti} = \tselsub{\pq}{\lab[i] : \Tii[i]}{i \in I'}$, and
    $\T[i] \subt \Tii[i]$ for all $i \in I$.
  \item If $\unfold{\T} = \tbrasub{\pq}{\lab[i] : \T[i]}{i \in I}$, then
    there exists $I' \subseteq I$ such that
    $\unfold{\Ti} = \tbrasub{\pq}{\lab[i] : \Tii[i]}{i \in I'}$, and
    $\T[i] \subt \Tii[i]$ for all $i \in I'$.
  \item If $\unfold{\T} = \tend$, then $\unfold{\Ti} = \tend$.
\end{enumerate}
The same statements hold for $\subtw$ in place of $\subt$, with
$I' = I$ in clause~3.
Dually, in the style of Lemma~\ref{lemma:inversion-subtyping}, if
$\T \subtw \Ti$ and $\unfold{\Ti}$ is given, then $\unfold{\T}$ has
the same head constructor, with equal index set for selections and
index set $\supseteq$ for branchings, and $\subtw$-related
continuations on the labels of $\unfold{\Ti}$.
\end{lemma}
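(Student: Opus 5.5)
The plan is to mirror the proof of Lemma~\ref{lemma:inversion-subtyping}, this time reading the derivation of $\T \subt \Ti$ from the subtype side. First I would normalise both sides. By Lemma~\ref{lemma:unfold-replacement}(1), $\T \subt \Ti$ implies $\unfold{\T} \subt \unfold{\Ti}$, because $\unfold{\unfold{\T}} = \unfold{\T}$. Alternatively, compose with Lemma~\ref{lemma:unfold-compatible-with-subtyping} and use transitivity (Lemma~\ref{lemma:subtyping-transitive}):
\[
\unfold{\T} \subt \T \subt \Ti \subt \unfold{\Ti}.
\]
Guardedness of recursion ensures that $\unfold{\T}$ and $\unfold{\Ti}$ are both in constructor form. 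So neither \rulename{S5} nor \rulename{S6} can justify the judgement $\unfold{\T} \subt \unfold{\Ti}$.

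Since $\subt$ is a greatest fixpoint, this judgement is justified by some structural rule whose premises lie in $\subt$. The structural rules \rulename{S1}, \rulename{S2}, \rulename{S3}, \rulename{S4}, \rulename{S7} each require the two sides to share a head constructor, and also the same participant and payload sort. I would then case on the head of $\unfold{\T}$.
\begin{itemize}
\item An output $\tout{\pq}{\B}{\T[0]}$ forces \rulename{S1}. Hence $\unfold{\Ti} = \tout{\pq}{\B}{\Tii[0]}$ with $\T[0] \subt \Tii[0]$.
\item The input case is identical, using \rulename{S2}.
\item A selection forces \rulename{S3}. This yields $\unfold{\Ti} = \tselsub{\pq}{\lab[i]:\Tii[i]}{i\in I'}$ with $I \subseteq I'$ and the continuations related on $I$.
\item A branching forces \rulename{S4}. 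This yields an index set $I' \subseteq I$, with the continuations related on $I'$.
\item $\tend$ forces \rulename{S7}, so $\unfold{\Ti} = \tend$.
\end{itemize}

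For $\subtw$ I would run the same argument. The only change is that rule \rulename{S3} is restricted to equal index sets, which gives $I' = I$ in clause~3. The unfolding steps are still available, because Definition~\ref{def:widening-subtyping} notes that $\T \subtw \unfold{\T}$ and its converse hold, and Lemmas~\ref{lemma:unfold-replacement} and~\ref{lemma:subtyping-transitive} are stated for $\subtw$ as well.

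The dual statement is handled in the same way, except that the case split is on the head of $\unfold{\Ti}$. This exactly parallels Lemma~\ref{lemma:inversion-subtyping}, with \rulename{S3} again restricted to equal index sets.

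I expect the main obstacle to be the normalisation step. Passing from $\T \subt \Ti$ to $\unfold{\T} \subt \unfold{\Ti}$ needs either transitivity or the unfolding-replacement lemma, and both must be available for $\subtw$ as well as for $\subt$. The rest of the argument is routine rule inversion.
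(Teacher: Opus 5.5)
Your proposal is correct and matches the paper's proof. The paper also uses Lemma~\ref{lemma:unfold-replacement} to pass to $\unfold{\T} \subt \unfold{\Ti}$, notes that both sides are in constructor form so the head of $\unfold{\T}$ determines the structural rule (\rulename{S1}, \rulename{S2}, \rulename{S3}, \rulename{S4}, or \rulename{S7}), restricts \rulename{S3} to equal index sets for $\subtw$, and reads the dual case off the head of $\unfold{\Ti}$; your alternative normalisation via Lemma~\ref{lemma:unfold-compatible-with-subtyping} and transitivity is also sound, since Lemma~\ref{lemma:subtyping-transitive} is proved without using this lemma.
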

\begin{proof}
By Lemma~\ref{lemma:unfold-replacement},
$\unfold{\T} \subt \unfold{\Ti}$ (replace both arguments).
Both sides are in constructor form, so the judgement is justified by
a structural rule, which is determined by the head constructor of
$\unfold{\T}$: \rulename{S1} for clause~1, \rulename{S2} for
clause~2, \rulename{S3} for clause~3 (with
$I \subseteq I'$), \rulename{S4} for clause~4 (with $I' \subseteq I$),
and \rulename{S7} for clause~5. The premises give the stated
sub-relations. For $\subtw$, \rulename{S3} is restricted to $I' = I$;
the dual direction reads the same rule off the head constructor of
$\unfold{\Ti}$.
\end{proof}

\begin{lemma}[Merge is a Widening Lower Bound]
\label{lemma:merge-widening-bound}
If $\comergefull{S}{\T}$, then $\T \subtw \Tii$ for every
$\Tii \in S$.
\end{lemma}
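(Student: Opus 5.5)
We prove the statement by coinduction on $\subtw$. The candidate relation is
\[
R=\set{(\T,\Tii)\mid \exists S.\ \comergefull{S}{\T}\ \text{and}\ \Tii\in S}.
\]
We show that $R$ is consistent with the rules of $\subtw$, that is, the rules of Definition~\ref{def:coind-subtyping} with \rulename{S3} restricted to equal index sets. Then $R\subseteq{\subtw}$, which is the claim.

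The first step is to strip recursion on both sides. Lemma~\ref{lemma:unfold-replacement}(3) lets us replace $\T$ by $\unfold{\T}$ in the merge. The second clause of that lemma, combined with \rulename{M6}, lets us replace $\Tii\in S$ by $\unfold{\Tii}$. On the subtyping side, \rulename{S5} and \rulename{S6} peel the binders on $\T$ and $\Tii$ respectively, and each premise is again in $R$: the merge is unchanged up to these replacements and membership in $S$ is preserved. Once both types are in constructor form, we invert the derivation of $\comergefull{S}{\T}$. We may assume every element of $S$ is unfolded, so \rulename{M6} and \rulename{M7} no longer fire at the head. This inversion step is the main obstacle: \rulename{M6} may have unfolded other members of $S$, so the set has changed, and we must reconstruct a derivation for the set containing our particular $\unfold{\Tii}$. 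The cleanest route is to prove a small normalisation lemma first: if $\comergefull{S}{\T}$, then $\comergefull{\set{\unfold{\T'}\mid \T'\in S}}{\unfold{\T}}$. We prove it by induction on the finitely many head binders, using guardedness, and then work only with unfolded sets.

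With everything unfolded, we case on the structural merge rule.
\begin{itemize}
\item \rulename{M1}: both sides are $\tend$, and \rulename{S7} closes the case.
\item \rulename{M2} and \rulename{M3}: every member of $S$ has the same prefix, say $\tout{\pp}{\B}{\T[j]}$, and $\T=\tout{\pp}{\B}{\T_0}$ with $\comergefull{\set{\T[j]}}{\T_0}$. \rulename{S1} (resp.\ \rulename{S2}) needs $(\T_0,\T[j])\in R$, which holds with witness set $\set{\T[j]}_j$.
\item \rulename{M4}: all members are selections over the same index set $I$, and $\T$ is a selection over $I$. We apply \rulename{S3} with equal index sets, which is allowed in $\subtw$. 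Each premise $(\T[i],\T[i,j])$ lies in $R$ via the set $\set{\T[i,j]}_{j}$.
\item \rulename{M5}: $\Tii$ is $\tbrasub{\pp}{\lab[i]:\T[i,k]}{i\in I_k}$ for some $k$, and $\T$ is a branching over $\bigcup_j I_j\supseteq I_k$. \rulename{S4} applies with $J=I_k$. For each $i\in I_k$ the premise is $(\T[i],\T[i,k])$, which lies in $R$ via the set $\set{\T[i,j]}_{j\mid i\in I_j}$; this set contains $\T[i,k]$ because $i\in I_k$.
\end{itemize}
Since all premises lie in $R$, consistency holds and $R\subseteq{\subtw}$.
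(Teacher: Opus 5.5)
Your proof is correct and takes essentially the paper's route. It uses the same candidate relation $R$, checked for consistency with $\subtw$ by the same matching of \rulename{M1}, \rulename{M2}/\rulename{M3}, \rulename{M4}, \rulename{M5} against \rulename{S7}, \rulename{S1}/\rulename{S2}, restricted \rulename{S3}, and \rulename{S4}. The only difference is how recursion is handled: the paper works inline by induction on the leading $\mu$-binders of the members of $S$ (an \rulename{M6} step on $\Tii$ itself is answered by \rulename{S6}, one on another member by passing to the premise merge, and \rulename{M7} by \rulename{S5}). You instead front-load this through Lemma~\ref{lemma:unfold-replacement}(3) and a normalisation lemma, which is just iterated use of that same lemma.
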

\begin{proof}
Let $R = \set{(\T, \Tii) \mid \exists S \ni \Tii.\
\comergefull{S}{\T}}$; we show $R$ is consistent with the rules
of~$\subtw$. Take $(\T, \Tii) \in R$ via $S$, and consider the rule
justifying $\comergefull{S}{\T}$.
By induction on the total number of leading $\toc{\sfmu}$-binders
among the members of~$S$, we may assume the justifying rule is not
\rulename{M6}: if it is, and the unfolded member is $\Tii$ itself,
say $\Tii = \toc{\sfmu}\ty.\T[0]$, then rule \rulename{S6} applies with
premise $(\T, \T[0]\sub{\toc{\sfmu}\ty.\T[0]}{\ty}) \in R$ via the premise
merge; if the unfolded member is another member, the premise merge
still contains $\Tii$ and justifies $(\T, \Tii) \in R$ with fewer
binders.
For \rulename{M7}, $\T = \toc{\sfmu}\ty.\T[0]$ and rule \rulename{S5}
applies with premise in $R$ via the premise merge.
Otherwise all members are in constructor form and the rule is
structural:
\rulename{M1} gives $(\tend, \tend)$, matched by \rulename{S7};
\rulename{M2}/\rulename{M3} give equal value heads, matched by
\rulename{S1}/\rulename{S2}, with premises in~$R$;
\rulename{M4} gives selections with equal index sets, matched by the
restricted \rulename{S3}, with premises in~$R$;
\rulename{M5} gives $\T = \tbrasub{\pp}{\lab[i] : \T[i]}{i \in \bigcup_j I_j}$
and $\Tii = \tbrasub{\pp}{\lab[i] : \Tii[i]}{i \in I_m}$ for some~$m$,
matched by \rulename{S4} since $I_m \subseteq \bigcup_j I_j$, and for
each $i \in I_m$ the premise merge over the members offering~$i$
contains $\Tii[i]$, so $(\T[i], \Tii[i]) \in R$.
\end{proof}

\begin{lemma}[Projection is Contained in the Combined Relation]
\label{lemma:projection-in-combined}
If $\G \cofullproj{p} \Ti$ and $\T \subtw \Ti$, then
$\G \cofullprojsub{p} \T$.
In particular, $\G \cofullproj{p} \T$ implies
$\G \cofullprojsub{p} \T$.
\end{lemma}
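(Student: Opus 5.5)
We prove the statement by coinduction. Define the candidate relation
\[
R = \set{(\G, \T) \mid \exists \Ti.\ \G \cofullproj{p} \Ti \text{ and } \T \subtw \Ti}
\]
and show that $R$ is backward-closed under the rules of $\cofullprojsub{p}$ (Definition~\ref{def:composition-projection-supertype}). The ``in particular'' clause then follows from reflexivity of $\subtw$ (Definition~\ref{def:widening-subtyping}).

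Take $(\G,\T)\in R$ with witness $\Ti$. First we normalise $\T$ and $\Ti$.

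\begin{itemize}
\item If $\T=\toc{\sfmu}\ty.\T[0]$, apply \rulename{SPR10}. Its premise $(\G,\T[0]\sub{\toc{\sfmu}\ty.\T[0]}{\ty})$ lies in $R$ with the same witness, by Lemma~\ref{lemma:unfold-replacement}(1).
\item Otherwise $\T$ is in constructor form, and by Lemma~\ref{lemma:unfold-replacement}(2) we may replace $\Ti$ by $\unfold{\Ti}$.
\end{itemize}

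Next we do an inner induction on the number of leading binders of $\G$ (finite by guardedness). Suppose the projection $\G\cofullproj{p}\Ti$ is justified by \rulename{PR10}. Since $\Ti$ is already unfolded, this case is impossible, so the justifying rule is structural, \rulename{PR9}, or \rulename{PR12}.

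\begin{itemize}
\item For \rulename{PR9}, use \rulename{SPR9}; the premise stays in $R$.
\item For \rulename{PR12}, we have $\Ti=\tend$, so inversion (Lemma~\ref{lemma:inversion-subtyping-sub}, dual form) gives $\unfold{\T}=\tend$, and \rulename{SPR12} applies.
\end{itemize}

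The structural cases pair each projection rule with its combined counterpart, using the dual inversion of $\subtw$ from Lemma~\ref{lemma:inversion-subtyping-sub} to read off the shape of $\T$.

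\begin{itemize}
\item \rulename{PR1} and \rulename{PR2} give $\T$ an output or input with a $\subtw$-related continuation, matched by \rulename{SPR1}/\rulename{SPR2}.
\item \rulename{PR4} gives $\T$ a selection with an equal index set, because $\subtw$ restricts \rulename{S3} to equal sets. This matches \rulename{SPR4} exactly.
\item \rulename{PR5} gives $\T$ a branching on a superset $I\supseteq J$ of the global labels, with continuations related for $j\in J$. This is precisely \rulename{SPR5}.
\item \rulename{PR3} leaves the local type unchanged, so \rulename{SPR3} applies with the same witness.
\item \rulename{PR11} is matched by \rulename{SPR11}.
\end{itemize}

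The main obstacle is \rulename{PR6}. There we have $\G[i]\cofullproj{p}\T[i]$ for each $i$ together with $\comergefull{\set{\T[i]}_{i\in I}}{\Ti}$, and we need $\G[i]\cofullprojsub{p}\T$ for every $i$. By Lemma~\ref{lemma:merge-widening-bound}, $\Ti\subtw\T[i]$ for each $i$. By transitivity of $\subtw$ (Lemma~\ref{lemma:subtyping-transitive}), $\T\subtw\Ti\subtw\T[i]$ gives $\T\subtw\T[i]$. Hence $(\G[i],\T)\in R$ with witness $\T[i]$, and \rulename{SPR6} applies.

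This step is exactly why the statement is phrased with the $\subtw$ slack rather than equality: the widening introduced by full merge is absorbed into the witness. We also need to check that the unfolding normalisation above is compatible with \rulename{SPR6} and \rulename{SPR3}. It is: those rules leave $\T$ untouched, so any $\toc{\sfmu}$ on $\T$ is handled by \rulename{SPR10} beforehand, and the inner induction on binders of $\G$ terminates.
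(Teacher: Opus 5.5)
Your proof is correct and follows the paper's argument essentially step for step. It uses the same candidate relation $R$, discharges a leading binder on $\T$ with \rulename{SPR10}, normalises the witness $\Ti$ to constructor form so that \rulename{PR10} is excluded, and closes the key \rulename{PR6} case by combining Lemma~\ref{lemma:merge-widening-bound} with transitivity of $\subtw$ (Lemma~\ref{lemma:subtyping-transitive}). Two small blemishes, neither affecting correctness: your inner induction on the leading binders of $\G$ is never used, since \rulename{PR9} is discharged coinductively by \rulename{SPR9}; and replacing the witness by its unfolding also needs Lemma~\ref{lemma:unfold-replacement}(1) for $\subtw$, not only part~(2).
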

\begin{proof}
The second claim follows from the first by reflexivity of $\subtw$
(Definition~\ref{def:widening-subtyping}).
Let $R = \set{(\G, \T) \mid \exists \Ti.\
\G \cofullproj{p} \Ti \text{ and } \T \subtw \Ti}$; we show $R$ is
consistent with the rules of $\cofullprojsub{p}$.
Take $(\G, \T) \in R$ via $\Ti$; by
Lemma~\ref{lemma:unfold-replacement} we may assume $\Ti$ is in
constructor form.
If $\T = \toc{\sfmu}\ty.\T[0]$, rule \rulename{SPR10} applies with
premise $(\G, \T[0]\sub{\toc{\sfmu}\ty.\T[0]}{\ty}) \in R$
(Lemma~\ref{lemma:unfold-replacement}(1) for $\subtw$).
Otherwise $\T$ is in constructor form; we case-split on the rule
justifying $\G \cofullproj{p} \Ti$ (since $\Ti$ is in constructor
form, rule \rulename{PR10} cannot apply):
\begin{enumerate}[leftmargin=*, label=\textbullet]
\item \rulename{PR9}: $\G = \toc{\sfmu}\ty.\G[1]$ with
  $\pp \in \pt{\G[1]}$ and premise
  $(\G[1]\sub{\toc{\sfmu}\ty.\G[1]}{\ty}, \Ti) \in {\cofullproj{p}}$,
  so the corresponding pair with $\T$ lies in $R$;
  rule \rulename{SPR9} applies.
\item \rulename{PR12}: $\Ti = \tend$ and $\pp \notin \pt{\G[1]}$.
  By Lemma~\ref{lemma:inversion-subtyping-sub} (dual direction),
  $\unfold{\T} = \tend$, so $\T = \tend$; rule \rulename{SPR12}
  applies.
\item \rulename{PR11}: $\G = \gend$, $\Ti = \tend$, so $\T = \tend$;
  rule \rulename{SPR11} applies.
\item \rulename{PR1}: $\G = \Gvt{\pp}{\pq}{\B}{\G[0]}$,
  $\Ti = \tout{\pq}{\B}{\Ti[0]}$, premise
  $\G[0] \cofullproj{p} \Ti[0]$.
  By Lemma~\ref{lemma:inversion-subtyping-sub} (dual direction),
  $\T = \tout{\pq}{\B}{\T[0]}$ with $\T[0] \subtw \Ti[0]$, so
  $(\G[0], \T[0]) \in R$; rule \rulename{SPR1} applies.
  \rulename{PR2} is symmetric via \rulename{SPR2}.
\item \rulename{PR4}: $\Ti = \tselsub{\pq}{\lab[i] : \Ti[i]}{i \in I}$
  with the index set $I$ of the global head. By the dual inversion,
  $\T = \tselsub{\pq}{\lab[i] : \T[i]}{i \in I}$ (equal sets) with
  $\T[i] \subtw \Ti[i]$, so each $(\G[i], \T[i]) \in R$;
  rule \rulename{SPR4} applies.
\item \rulename{PR5}: $\Ti = \tbrasub{\pq}{\lab[j] : \Ti[j]}{j \in J}$
  with $J$ the index set of the global head. By the dual inversion,
  $\T = \tbrasub{\pq}{\lab[i] : \T[i]}{i \in I}$ with $I \supseteq J$
  and $\T[j] \subtw \Ti[j]$ for $j \in J$, so each
  $(\G[j], \T[j]) \in R$; rule \rulename{SPR5} applies with
  $J \subseteq I$.
\item \rulename{PR3}: premise $(\G[0], \Ti) \in {\cofullproj{p}}$
  gives $(\G[0], \T) \in R$; rule \rulename{SPR3} applies.
\item \rulename{PR6}: premises $\G[i] \cofullproj{p} \Ti[i]$ with
  $\comergefull{\set{\Ti[i]}_{i \in I}}{\Ti}$.
  By Lemma~\ref{lemma:merge-widening-bound}, $\Ti \subtw \Ti[i]$;
  by Lemma~\ref{lemma:subtyping-transitive}, $\T \subtw \Ti[i]$,
  so $(\G[i], \T) \in R$ for every~$i$; rule \rulename{SPR6}
  applies. \qedhere
\end{enumerate}
\end{proof}

\begin{lemma}[Completed Participants]
\label{lemma:completed-participants}
\begin{enumerate}[leftmargin=*, label=(\arabic*)]
  \item If $\G \cofullprojsub{p} \T$ and $\pp \notin \pt{\G}$, then
    $\unfold{\T} = \tend$.
  \item If $\pp \notin \pt{\G}$ and $\unfold{\T} = \tend$, then
    $\G \cofullprojsub{p} \T$.
  \item If $\G \cofullprojsub{p} \T$ and $\pp \in \pt{\G}$, then
    $\unfold{\T} \neq \tend$.
  \item If $\pp \notin \pt{\G}$ and $\unfold{\T} = \tend$, then
    $\G \cofullproj{p} \T$.
\end{enumerate}
\end{lemma}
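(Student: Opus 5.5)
The four claims split into two kinds. Claims (2) and (4) build a relation and show it is consistent with the rules, so they go by coinduction. Claims (1) and (3) inspect an existing derivation, so they go by induction on the number of leading $\toc{\sfmu}$-binders on both sides, which is finite by guardedness. The appeal to Lemma~\ref{lemma:completed-participants}(2) inside the proof of Lemma~\ref{lemma:inversion-combined-relation} must not be circular, so (2) is proved directly and first.

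\textbf{(2) and (4).} For (2), I take $R = \set{(\G,\T) \mid \pp \notin \pt{\G},\ \unfold{\T} = \tend}$ and show it is backward closed under the rules of $\cofullprojsub{p}$.

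Take $(\G,\T) \in R$. If $\T = \toc{\sfmu}\ty.\T[0]$, rule \rulename{SPR10} applies, since unfolding preserves $\unfold{\T} = \tend$. Otherwise $\T = \tend$, and I case on $\G$:
\begin{itemize}
\item $\G = \gend$: rule \rulename{SPR11}.
\item $\G = \toc{\sfmu}\ty.\G[1]$: since $\pt{\G} = \pt{\G[1]}$ and $\pp \notin \pt{\G}$, rule \rulename{SPR12} applies outright.
\item $\G$ is a communication or choice: $\pp$ is not among its endpoints. Rule \rulename{SPR3} or \rulename{SPR6} applies, and each premise $(\G[i], \tend)$ lies in $R$ because $\pt{\G[i]} \subseteq \pt{\G}$.
\item $\G = \ty$: this cannot occur, because we work with closed types and substitution under \rulename{SPR9} never exposes a bare variable.
\end{itemize}

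Claim (4) uses the same relation against the rules of $\cofullproj{p}$, with \rulename{PR10}, \rulename{PR11}, \rulename{PR12} and \rulename{PR3}. For choices I use \rulename{PR6}, and this needs the merge $\comergefull{\set{\tend}}{\tend}$. That follows from \rulename{M1}, since the family of continuations, viewed as a set, collapses to $\set{\tend}$.

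\textbf{(1).} Assume $\G \cofullprojsub{p} \T$ and $\pp \notin \pt{\G}$. I induct on the number of leading binders of $\G$ and $\T$.
\begin{itemize}
\item \rulename{SPR10}: this peels a binder of $\T$ without changing $\unfold{\T}$, so the induction hypothesis applies.
\item \rulename{SPR12}: gives $\T = \tend$ directly.
\item \rulename{SPR9}: needs $\pp \in \pt{\G}$, so it cannot apply.
\item \rulename{SPR11}: gives $\T = \tend$.
\item \rulename{SPR1}, \rulename{SPR2}, \rulename{SPR4}, \rulename{SPR5}: all require $\pp$ at the head, so they cannot apply.
\item \rulename{SPR3}, \rulename{SPR6}: these leave $\T$ unchanged and descend into $\G$, so the binder measure does not decrease.
\end{itemize}
The last case is the main obstacle. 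An unbounded chain of uninvolved communications could keep pushing the argument into $\G$ forever.

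I would handle it by a nested induction on the depth of $\G$ until the next $\toc{\sfmu}$ or $\gend$, which is finite because global types are finite syntax trees. On reaching a $\toc{\sfmu}$ with $\pp$ absent, the only rules left are \rulename{SPR12} or \rulename{SPR10} on $\T$. Either $\T$ is already $\tend$, or $\T$ has a binder and the outer measure decreases. The remaining case, $\T$ in constructor form other than $\tend$ with $\G$ a $\toc{\sfmu}$, is excluded because \rulename{SPR9} is unavailable, so no rule fits. On reaching $\gend$, rule \rulename{SPR11} forces $\T = \tend$.

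\textbf{(3).} Assume $\G \cofullprojsub{p} \T$ and $\pp \in \pt{\G}$. By the definition of $\pt{\cdot}$, $\pp$ occurs at the head of some finite syntactic path in $\G$, through $\toc{\sfmu}$-binders and uninvolved interactions. I induct on the length of the shortest such path.
\begin{itemize}
\item At a head involving $\pp$: Lemma~\ref{lemma:inversion-combined-relation} cases 1--4 give an output, input, selection or branching as $\unfold{\T}$, so $\unfold{\T} \neq \tend$.
\item Through an uninvolved prefix: cases 5--6 keep the same $\T$ and shorten the path.
\item Through $\toc{\sfmu}$: rule \rulename{SPR12} is excluded because $\pp \in \pt{\G}$. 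So \rulename{SPR9} or \rulename{SPR10} applies, and the path continues in the unfolded body.
\end{itemize}
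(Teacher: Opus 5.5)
Your plan follows the paper's proof. It uses the same relation $R$ for (2) and (4), including \rulename{PR6} with \rulename{M1} for foreign choices. It analyses the justifying rule for (1), and it runs a shallowest-occurrence induction through Lemma~\ref{lemma:inversion-combined-relation} for (3). Your restriction to closed $\G$ makes explicit a point the paper leaves implicit: without it, $R$ contains pairs such as $(\ty,\tend)$ that no rule justifies.

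The gap is in the induction measures.

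\textbf{Part (3).} Suppose $\G$ is a $\toc{\sfmu}$-type and $\T$ is also a $\toc{\sfmu}$-type. Then the justifying rule may be \rulename{SPR10}. Its premise leaves $\G$, and hence your shortest path, unchanged, and only unfolds $\T$. So ``the path continues in the unfolded body'' is true only for \rulename{SPR9}, and your induction hypothesis is unavailable in the \rulename{SPR10} case. There are two fixes:
\begin{itemize}
\item add the number of leading binders of $\T$ as a second, lexicographic component; or
\item do as the paper does: measure the depth of the shallowest occurrence of $\pp$ in the tree of $\unfold{\G}$, and apply Lemma~\ref{lemma:inversion-combined-relation} at every step. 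Its clauses are stated for $\unfold{\G}$, so they already absorb both \rulename{SPR9} and \rulename{SPR10}.
\end{itemize}

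\textbf{Part (1).} The stated measure is also not the right one. Leading binders of $\G$ are never peeled, because \rulename{SPR9} is excluded. They can also grow across \rulename{SPR3}/\rulename{SPR6} when the continuation is a $\toc{\sfmu}$-type. So ``leading binders of $\G$ and $\T$'' is not an outer measure that decreases. Your two-phase argument (descend, then only unfold $\T$ at a $\toc{\sfmu}$ or $\gend$) is sound. The clean statement is the paper's lexicographic induction on (size of $\G$, leading binders of $\T$):
\begin{itemize}
\item \rulename{SPR3}/\rulename{SPR6} decrease the first component;
\item \rulename{SPR10} decreases the second;
\item at a $\toc{\sfmu}$-type or $\gend$, only \rulename{SPR12}, \rulename{SPR11} or \rulename{SPR10} remain.
\end{itemize}
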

\begin{proof}
(1) By lexicographic induction on (the size of $\G$, the number of
leading $\toc{\sfmu}$-binders of~$\T$), analysing the rule justifying
$\G \cofullprojsub{p} \T$.
If the rule is \rulename{SPR10}, recurse on its premise (fewer
binders on the $\T$-side).
Otherwise: if $\G$ is a $\toc{\sfmu}$-type, rule \rulename{SPR9} is
excluded by its side condition, so the rule is \rulename{SPR12} and
$\T = \tend$.
If $\G = \gend$, the rule is \rulename{SPR11} and $\T = \tend$.
If $\G$ has a communication head, it cannot involve $\pp$
(as $\pp \notin \pt{\G}$), so the rule is \rulename{SPR3} or
\rulename{SPR6}; recurse on a premise, whose global type is a strict
syntactic subterm.

(2) The relation
$R = \set{(\G, \T) \mid \pp \notin \pt{\G},\ \unfold{\T} = \tend}$
is consistent: if $\T$ is a $\toc{\sfmu}$-type, apply
\rulename{SPR10}; if $\T = \tend$, apply \rulename{SPR12}
(for $\toc{\sfmu}$-types $\G$), \rulename{SPR11} (for $\G = \gend$),
or \rulename{SPR3}/\rulename{SPR6} (for foreign heads), with premises
in~$R$.

(3) By induction on the depth of a shallowest occurrence of $\pp$ in
the regular tree of $\unfold{\G}$ (finite since $\pp \in \pt{\G}$).
If the head of $\unfold{\G}$ involves $\pp$,
Lemma~\ref{lemma:inversion-combined-relation}(1)--(4) forces
$\unfold{\T}$ to be a value or choice type, hence not $\tend$.
Otherwise the head is foreign and a shallowest occurrence of $\pp$
lies in some branch $\G[i]$ with strictly smaller depth;
Lemma~\ref{lemma:inversion-combined-relation}(5)--(6) gives
$\G[i] \cofullprojsub{p} \T$, and the induction hypothesis applies.

(4) As in~(2), the same relation $R$ is consistent for
$\cofullproj{p}$: if $\T$ is a $\toc{\sfmu}$-type, apply
\rulename{PR10}; if $\T = \tend$, apply \rulename{PR12}
(for $\toc{\sfmu}$-types $\G$), \rulename{PR11} (for $\G = \gend$),
\rulename{PR3} (for foreign value heads), or \rulename{PR6}
(for foreign choice heads, projecting every branch to $\tend$ and
merging by rule \rulename{M1}), with premises in~$R$.
\end{proof}

\subsubsection{Finiteness Lemmas}

\begin{lemma}[Finiteness of reachable local types]
\label{lemma:finite-reachable-local-states}
For every $\T$, the set
$\set{\Ti \mid \T \transa{}^{*} \Ti}$
is finite.
\end{lemma}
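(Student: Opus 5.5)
The plan is to bound every type reachable from $\T$ by a finite, syntactically defined set, following the standard argument that a guarded $\toc{\sfmu}$-term has finitely many distinct unfoldings. Define the \emph{unfolded subterm closure} $\mathrm{cl}(\T)$ as the least set containing $\T$ that is closed under three operations. First, if $\toc{\sfmu}\ty.\T[0] \in \mathrm{cl}(\T)$ then $\T[0]\sub{\toc{\sfmu}\ty.\T[0]}{\ty} \in \mathrm{cl}(\T)$. Second, if $\tout{\pp}{\B}{\T[0]}$ or $\tin{\pp}{\B}{\T[0]}$ lies in $\mathrm{cl}(\T)$, then $\T[0] \in \mathrm{cl}(\T)$. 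Third, if a selection or branching with continuations $\set{\T[i]}_{i\in I}$ lies in $\mathrm{cl}(\T)$, then every $\T[i] \in \mathrm{cl}(\T)$. The proof then has two steps: show that reachability stays inside $\mathrm{cl}(\T)$, and show that $\mathrm{cl}(\T)$ is finite.

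\emph{Containment.} We first show that if $\T[1] \in \mathrm{cl}(\T)$ and $\T[1] \transa{\czeta} \T[2]$, then $\T[2] \in \mathrm{cl}(\T)$. The argument is by induction on the derivation of the transition.
\begin{itemize}
\item For rules \rulename{LR1}--\rulename{LR4}, $\T[2]$ is an immediate continuation of $\T[1]$, so it lies in $\mathrm{cl}(\T)$ by the second or third closure clause.
\item For \rulename{LR5}, $\T[1] = \toc{\sfmu}\ty.\T[0]$ and the premise is a transition from $\T[0]\sub{\toc{\sfmu}\ty.\T[0]}{\ty}$. That type lies in $\mathrm{cl}(\T)$ by the first clause, so the induction hypothesis gives $\T[2] \in \mathrm{cl}(\T)$.
\end{itemize}
A further induction on the length of $\T \transa{}^{*} \Ti$ then shows that every reachable $\Ti$ lies in $\mathrm{cl}(\T)$.

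\emph{Finiteness.} This is the main obstacle, because substitution can in principle generate ever larger terms. The key is to show that every element of $\mathrm{cl}(\T)$ has the form $\T[s]\sigma$, where:
\begin{itemize}
\item $\T[s]$ is a syntactic subterm of $\T$;
\item $\sigma$ is the canonical closing substitution for $\T[s]$. It maps each free variable $\ty$ of $\T[s]$ bound in $\T$ to $\toc{\sfmu}\ty.\T[b]\sigma'$, where $\toc{\sfmu}\ty.\T[b]$ is the binder enclosing $\T[s]$ and $\sigma'$ is in turn the canonical closing substitution for $\toc{\sfmu}\ty.\T[b]$.
\end{itemize}
This substitution is determined by the position of $\T[s]$ in $\T$, and this is what keeps the set finite. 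We prove the form is preserved by induction on the closure clauses.
\begin{itemize}
\item For the structural clauses, taking a continuation of $\T[s]\sigma$ gives $\T[s']\sigma$ restricted to the free variables of $\T[s']$, where $\T[s']$ is the corresponding subterm. The restriction is again canonical.
\item For the unfolding clause, $(\toc{\sfmu}\ty.\T[0])\sigma$ unfolds to $\T[0](\sigma,\ty\mapsto(\toc{\sfmu}\ty.\T[0])\sigma)$, which is exactly $\T[0]$ under its canonical substitution.
\end{itemize}
This invariant is the part I expect to need the most care, because of capture-avoidance conventions, so I would fix Barendregt-style distinct binder names in $\T$ first. Since $\T$ has finitely many subterm positions, $\mathrm{cl}(\T)$ has at most $\size{\T}$ elements, and the reachable set is finite.

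If the invariant proves awkward to state, a fallback is to argue semantically via regular trees. Map each type to its infinite unfolding tree, which is regular, so it has finitely many distinct subtrees. The reachable types, taken up to unfolding, map to subtrees of this tree. One then still needs syntactic finiteness, which follows from the same canonical-substitution observation, so I would commit to the syntactic invariant above.
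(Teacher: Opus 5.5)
Your proof is correct and takes essentially the paper's route. The paper only asserts that, once each recursion variable is identified with its unique binder, every reachable state is a syntactic subterm of $\T$; your invariant that each element of $\mathrm{cl}(\T)$ is a subterm of $\T$ under its canonical closing substitution is a rigorous rendering of that claim (the one case you leave implicit, an element arising at a bound-variable position, is literally the canonical instance at its binder, so your unfolding clause already covers it).
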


\begin{proof}[Proof of Lemma~\ref{lemma:finite-reachable-local-states}]
Because local types are guarded, each use of \rulename{LR5} only unfolds a
guarded recursion and exposes one of finitely many constructors already present
in the syntax of $\T$. Up to identifying each recursive variable with its
unique binder, every reachable local state is a syntactic subterm of $\T$.
Since $\T$ has finitely many subterms, $\set{\Ti \mid \T \transa{}^{*} \Ti}$ is
finite.
\end{proof}

\begin{lemma}[Finiteness of reachable contexts]
\label{lemma:finite-reachable-context-states}
For every $\ctx$, the set
$\set{\ctxi \mid \ctx \transi{}^{*} \ctxi}$
is finite.
\end{lemma}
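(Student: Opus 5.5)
The plan is to reduce the statement to Lemma~\ref{lemma:finite-reachable-local-states} via a product argument. Write $\ctx = \set{\ptag{\pp[k]}{\T[k]}}_{k \in K}$ with $K$ finite (contexts map the finitely many participants $\pp[0],\ldots,\pp[n-1]$ to local types). For each $k$, let $S_k = \set{\Ti \mid \T[k] \transa{}^{*} \Ti}$; by Lemma~\ref{lemma:finite-reachable-local-states} each $S_k$ is finite. The goal is the invariant that every $\ctxi$ with $\ctx \transi{}^{*} \ctxi$ satisfies $\dom{\ctxi} = \dom{\ctx}$ and $\ctxi(\pp[k]) \in S_k$ for all $k$. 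Granting this, the reachable set injects into the finite product $\prod_{k \in K} S_k$, whose size is $\prod_k \vert S_k \vert$. This is also the bound $\prod_{i} \vert \T[\pp[i]] \vert$ used in the complexity theorem.

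The invariant follows by induction on the length of the reduction sequence $\ctx \transi{}^{*} \ctxi$. The base case is immediate, since $\T[k] \in S_k$ by reflexivity. For the step, suppose $\ctxi \transi{\interaction{\pp}{\pq}{\obs}} \ctxii$. By inversion on \rulename{LEnv} and \rulename{LTag}, as packaged in Lemma~\ref{lemma:semantics}, the domain is preserved. Every $\pr \notin \set{\pp,\pq}$ keeps its type by clause~(3). For $\pp$ and $\pq$, the premises of \rulename{LEnv} give one-sided transitions which, after inverting \rulename{LTag}, yield local transitions $\ctxi(\pp) \transa{\qsend{\pq}{\obs}} \ctxii(\pp)$ and $\ctxi(\pq) \transa{\qrecv{\pp}{\obs}} \ctxii(\pq)$. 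Since $S_k$ is closed under $\transa{}$ by definition of reflexive–transitive closure, the new types stay in their respective $S_k$.

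The only point requiring any care is extracting the \emph{local} transitions from the context step rather than just the shapes of unfoldings. I would do this by reading them off the premises of \rulename{LTag} directly, instead of going through Lemma~\ref{lemma:semantics}(1)--(2). That lemma gives only $\unfold{\ctxi(\pp)}$, and that would require an extra argument, via \rulename{LR5}, that unfolding followed by a head step is itself a $\transa{}$-step. Everything else is routine, and no safety or liveness assumption is needed.
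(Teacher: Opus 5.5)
Your proposal is correct and follows the paper's proof. Both bound the reachable contexts by the finite product $\prod_{\pp \in \dom{\ctx}} \set{\Ti \mid \T[\pp] \transa{}^{*} \Ti}$, using Lemma~\ref{lemma:finite-reachable-local-states} and the fact that each \rulename{LEnv} step changes only two participants, each by one local transition; your induction on reduction length and your reading of the local transitions off the \rulename{LTag} premises (rather than off the unfolded shapes in Lemma~\ref{lemma:semantics}) just spell out what the paper states in one line.
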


\begin{proof}[Proof of Lemma~\ref{lemma:finite-reachable-context-states}]
Write $\ctx = \set{\ptag{\pp}{\T[\pp]}}_{\pp \in \dom{\ctx}}$.
By Lemma~\ref{lemma:finite-reachable-local-states}, each
$\set{\Ti \mid \T[\pp] \transa{}^{*} \Ti}$ is finite.
Every synchronous context step (\rulename{LEnv}) updates only two participants
and each updated local type follows one local transition. Hence any
reachable $\ctxi$ satisfies
$\ctxi(\pp) \in \set{\Ti \mid \T[\pp] \transa{}^{*} \Ti}$ for each
$\pp \in \dom{\ctx}$. Therefore
\[
\set{\ctxi \mid \ctx \transi{}^{*} \ctxi}
\subseteq
\prod_{\pp\in\dom{\ctx}} \set{\Ti \mid \T[\pp] \transa{}^{*} \Ti},
\]
and the right-hand side is finite because $\dom{\ctx}$ is finite.
\end{proof}

\begin{lemma}[Finiteness of reachable context-participant spaces]
\label{lemma:finite-reachable-context-priority-space}
For any context $\ctx$, the set
$\set{(\ctxi,\pp) \mid \ctx \transi{}^{*}\ctxi \text{ and } \pp \in \dom{\ctx}}$
is finite.
\end{lemma}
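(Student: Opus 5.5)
This follows from Lemma~\ref{lemma:finite-reachable-context-states}, by a product-of-finite-sets argument. Write $S = \set{(\ctxi,\pp) \mid \ctx \transi{}^{*}\ctxi \text{ and } \pp \in \dom{\ctx}}$. I will show that $S$ is contained in the Cartesian product
\[
\set{\ctxi \mid \ctx \transi{}^{*} \ctxi} \times \dom{\ctx}.
\]
The inclusion holds by definition, since each pair in $S$ has first component a reachable context and second component a participant in $\dom{\ctx}$.

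The first factor is finite by Lemma~\ref{lemma:finite-reachable-context-states}. The second factor $\dom{\ctx}$ is finite because typing contexts are finite mappings from the finite participant set $\partset = \set{\pp[0],\ldots,\pp[n-1]}$. A product of two finite sets is finite, so the subset $S$ is finite. Its cardinality is bounded by $\vert\dom{\ctx}\vert$ times the number of reachable contexts, which by the proof of Lemma~\ref{lemma:finite-reachable-context-states} is at most $\vert\dom{\ctx}\vert\cdot\prod_{\pp}\vert\set{\Ti \mid \T[\pp]\transa{}^{*}\Ti}\vert$.

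There is no real obstacle here. The one point to check is that transitions never change the domain. By \rulename{LEnv} and \rulename{LTag}, every reachable $\ctxi$ satisfies $\dom{\ctxi} = \dom{\ctx}$, so ranging the priority over $\dom{\ctx}$ rather than $\dom{\ctxi}$ is harmless. It also matches the priorities that actually arise in $\textsf{synth}$, which are always indices $\pp[(j+1)\bmod n]$ of participants of $\ctx$.
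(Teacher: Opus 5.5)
Your proof is correct and matches the paper's argument: the set is contained in the product of the finite set of reachable contexts (Lemma~\ref{lemma:finite-reachable-context-states}) with the finite domain $\dom{\ctx}$, hence finite. Your extra remarks on the cardinality bound and on domain preservation under transitions are accurate, but the argument does not need them.
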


\begin{proof}[Proof of Lemma~\ref{lemma:finite-reachable-context-priority-space}]
By Lemma~\ref{lemma:finite-reachable-context-states},
$\set{\ctxi \mid \ctx \transi{}^{*} \ctxi}$ is finite, and
$\dom{\ctx}$ is finite. Their product is finite.
\end{proof}

\begin{lemma}[Finiteness of branching]
\label{lemma:finite-branching-context-transitions}
For every $\ctx$, the set
$\set{(\inter,\ctxi) \mid \ctx \transi{\inter} \ctxi}$
is finite.
\end{lemma}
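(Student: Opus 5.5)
The plan is to reduce the two-sided transition relation $\transi{}$ to the one-sided local transitions and then to exploit the finiteness of the local syntax. By inversion on \rulename{LEnv} (Lemma~\ref{lemma:semantics}), every pair $(\inter,\ctxi)$ with $\ctx \transi{\inter} \ctxi$ is determined by three pieces of data: the ordered pair of participants $(\pp,\pq)$, the observable $\obs$, and the successor types $\ctxi(\pp)$ and $\ctxi(\pq)$. All other participants are unchanged. Since $\dom{\ctx}$ is finite, there are at most $\vert\dom{\ctx}\vert^2$ choices of the pair $(\pp,\pq)$. It therefore suffices to show that, for a fixed pair, only finitely many $(\obs,\ctxi(\pp),\ctxi(\pq))$ can arise.

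For fixed $\pp$ and $\pq$, I would apply Lemma~\ref{lemma:semantics} by case on $\obs$. In the payload case, $\unfold{\ctx(\pp)} = \tout{\pq}{\B}{\ctxi(\pp)}$. Thus $\B$ and $\ctxi(\pp)$ are uniquely determined by the syntax of $\unfold{\ctx(\pp)}$, and $\ctxi(\pq)$ is likewise fixed by $\unfold{\ctx(\pq)}$. In the label case, $\unfold{\ctx(\pp)} = \tselsub{\pq}{\lab[i]:\T[i]}{i\in I}$ and $\unfold{\ctx(\pq)} = \tbrasub{\pp}{\lab[i]:\Ti[i]}{i\in J}$ with $j \in I\cap J$. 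The successor is $(\T[j],\Ti[j])$, so there are at most $\vert I\cap J\vert$ possibilities. This number is finite because index sets in local types are finite by Definition~\ref{def:local}.

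The one point I expect to need care with is that $\unfold{\cdot}$ is well defined and yields a single constructor-form type. This holds by guardedness: unfolding terminates and is a function. So for each fixed $\pp$ the head is unique, and the case analysis above is exhaustive. Summing over the finitely many pairs $(\pp,\pq)$ then gives a finite bound: the set $\set{(\inter,\ctxi) \mid \ctx \transi{\inter} \ctxi}$ has at most $\vert\dom{\ctx}\vert^2\cdot\max(1,b)$ elements, where $b$ bounds the label-set sizes appearing at the unfolded heads. I do not anticipate a genuine obstacle, since the argument is a direct combination of Lemma~\ref{lemma:semantics} with the finiteness of syntax.
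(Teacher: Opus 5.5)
Your argument is correct and is essentially the paper's proof. The paper also uses three facts: there are finitely many participants; each participant has finitely many one-step local actions (unique unfolded head, finite label sets); and a synchronous step pairs one send with a matching receive via \rulename{LEnv}. The only difference is cosmetic: you route the count through Lemma~\ref{lemma:semantics}, which makes explicit that $\ctxi$ is determined by $(\pp,\pq,\obs)$, whereas the paper reads this off rules \rulename{LR1}--\rulename{LR5} directly.
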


\begin{proof}[Proof of Lemma~\ref{lemma:finite-branching-context-transitions}]
The number of participants is finite. For each participant, the enabled
one-step local actions are finite by inspection of rules \rulename{LR1}--
\rulename{LR5} (payload send/receive, finite label choices, and guarded
unfolding). A context transition is formed by pairing one send and one matching
receive via \rulename{LEnv}. Therefore only finitely many synchronous
transitions are possible from $\ctx$.
\end{proof}

\synthterminatesforslivecontexts*
\begin{proof}[Proof of Lemma~\ref{lemma:synth-terminates-for-slive-contexts}]
Fix $\ctx$ and participant $\pp[i]$.
Define
\[
S
=
\set{(\ctxi,\pp) \mid \ctx \transi{}^{*}\ctxi \text{ and } \pp \in \dom{\ctx}}.
\]
By Lemma~\ref{lemma:finite-reachable-context-priority-space}, $S$ is finite.

We prove a stronger statement:
for every $\Sigma$, $\pp[i']$, and $\ctxi$ such that
$\ctx \transi{}^{*}\ctxi$ and $\dom{\Sigma}\subseteq S$,
the computation $\textsf{synth}(\Sigma,\pp[i'],\ctxi)$ terminates.

Use well-founded induction on the natural number
\[
m(\Sigma)=\left|S\setminus\dom{\Sigma}\right|.
\]
Consider one call $\textsf{synth}(\Sigma,\pp[i'],\ctxi)$.

\begin{enumerate}
  \item If $\textsf{stuck}(\ctxi)$, it returns $\gend$ immediately.

  \item Otherwise let $\pp[j]=\textsf{next-active}(\ctxi,\pp[i'])$.
  If $(\ctxi,\pp[j])\in\dom{\Sigma}$, it returns $\Sigma(\ctxi,\pp[j])$ immediately.

  \item Otherwise it enters the recursive branch.
  Let $\Sigma'=\Sigma\cup\set{(\ctxi,\pp[j])\mapsto\ty}$.
  Since $\ctx \transi{}^{*}\ctxi$ and $\pp[j]\in\dom{\ctx}$, we have $(\ctxi,\pp[j])\in S$,
  and by case assumption $(\ctxi,\pp[j])\notin\dom{\Sigma}$, therefore
  \[
  m(\Sigma')=m(\Sigma)-1.
  \]
  For each $\lab\in L$, choose $\ctx[\lab]$ with
  $\ctxi \transi{\interaction{\pp[j]}{\pq}{\lab}} \ctx[\lab]$.
  Then $\ctx \transi{}^{*}\ctx[\lab]$, and still $\dom{\Sigma'}\subseteq S$.
  Hence each recursive call
  $\textsf{synth}(\Sigma',\pp[(j{+}1)\bmod n],\ctx[\lab])$
  satisfies the induction hypothesis (strictly smaller measure), so it
  terminates.
  By Lemma~\ref{lemma:finite-branching-context-transitions}, $L$ is finite, so
  there are only finitely many such recursive calls.
\end{enumerate}

All branches terminate, so $\textsf{synth}(\Sigma,\pp[i'],\ctxi)$ terminates.
Instantiating $\Sigma=\varnothing$, $\pp[i']=\pp[i]$, and $\ctxi=\ctx$ yields the claim.
\end{proof}

\begin{remark}[Induction principle of the synthesis procedure]
\label{rem:synth-induction}\em
The termination proof above establishes a general induction principle.
Let $S = \set{(\ctxi,\pp) \mid \ctx \transi{}^{*}\ctxi,\; \pp \in \dom{\ctx}}$.
To prove a property $P(\Sigma,\pp[i],\ctxi)$ for all
$\Sigma$ with $\dom{\Sigma}\subseteq S$, all $\pp[i]$, and all
$\ctxi$ reachable from $\ctx$, it suffices to assume $P$ holds
for every recursive sub-call of $\textsf{synth}$ (which has
a strictly smaller measure $m$) and verify $P$ in each of the
three cases of Definition~\ref{def:synth-rules}.
\end{remark}

\subsubsection{Substitution and Structural Properties}

Throughout this subsection, we treat the fresh variable chosen in
Definition~\ref{def:synth-rules}(3) as determined by the call, so
that $\textsf{synth}$ is a function of its arguments. Two
observations follow by inspection of
Definition~\ref{def:synth-rules}: the free variables of
$\textsf{synth}(\Sigma, \pp[i], \ctx)$ lie in
$\mathrm{range}(\Sigma)$; and
$\textsf{synth}(\Sigma, \pp[i], \ctx) =
 \textsf{synth}(\Sigma, \pp[i]', \ctx)$
whenever
$\textsf{next-active}(\ctx, \pp[i]) = \textsf{next-active}(\ctx, \pp[i]')$,
since the definition reads the priority nowhere else.

\begin{lemma}[Properties of Unfolding Equivalence]
\label{lemma:unfeq-properties}
$\unfeq$ is an equivalence relation;
$\toc{\sfmu}\ty.\G \unfeq \G\sub{\toc{\sfmu}\ty.\G}{\ty}$;
and $\G \unfeq \Gi$ implies $\pt{\G} = \pt{\Gi}$.
\end{lemma}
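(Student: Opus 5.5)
We prove the three claims of Lemma~\ref{lemma:unfeq-properties} separately, each by a short coinductive argument against Definition~\ref{def:unfolding-equivalence}, using throughout that $\unfold{}$ is idempotent and that $\unfold{\toc{\sfmu}\ty.\G} = \unfold{\G\sub{\toc{\sfmu}\ty.\G}{\ty}}$, which is immediate from the definition of $\unfold{}$ (guardedness makes $\unfold{}$ total).

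\emph{Equivalence.} For reflexivity, the identity relation on closed global types is consistent: if $\unfold{\G}$ is $\gend$ we are in clause~(1); otherwise $\unfold{\G}$ has a communication head and clause~(2) or~(3) holds with premises $(\G[1],\G[1])$ or $(\G[i],\G[i])$, again in the identity relation. Symmetry holds because the converse of $\unfeq$ is consistent, all three clauses being symmetric. For transitivity, take $R = \set{(\G,\Gii) \mid \exists \Gi.\ \G \unfeq \Gi \unfeq \Gii}$. Given such a triple, $\unfold{\Gi}$ has a single head form, so both judgements use the same clause. The head data (participants, payload sort, or label index set $I$) agree across all three types, and the continuations chain through the continuations of $\unfold{\Gi}$, so the premises again lie in $R$.

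\emph{Unfolding.} The relation $R = \set{(\G,\Gi) \mid \unfold{\G} = \unfold{\Gi}}$ is consistent: the heads are literally equal, and the continuation pairs are identical, hence in $R$ (each continuation is related to itself). Since $(\toc{\sfmu}\ty.\G,\; \G\sub{\toc{\sfmu}\ty.\G}{\ty}) \in R$, the second claim follows.

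\emph{Participants.} This is the only step needing care, because $\pt{}$ is defined syntactically while $\unfeq$ only sees unfoldings. We first note that $\pt{\G}$ equals the set of participants occurring in the (regular) tree of $\G$. This holds because $\pt{\toc{\sfmu}\ty.\G[0]} = \pt{\G[0]} = \pt{\G[0]\sub{\toc{\sfmu}\ty.\G[0]}{\ty}}$, since substituting a closed-over body for $\ty$ adds no new participants; hence $\pt{\G} = \pt{\unfold{\G}}$, and so $\pp \in \pt{\G}$ iff $\pp$ occurs at some finite depth of the unfolded tree. We then argue by induction on that depth. Suppose $\G \unfeq \Gi$ and $\pp$ occurs at depth $d$ in $\G$. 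If $d = 0$, then $\pp$ is in the head of $\unfold{\G}$, which by clause~(2) or~(3) equals the head of $\unfold{\Gi}$. Otherwise $\pp$ occurs at depth $d-1$ in some continuation $\G[i]$, and $\G[i] \unfeq \Gii[i]$; the induction hypothesis gives $\pp \in \pt{\Gii[i]} \subseteq \pt{\Gi}$. Symmetry of $\unfeq$ then yields equality of the two participant sets. The main point to check is the identity $\pt{\G} = \pt{\unfold{\G}}$, which we prove by induction on the number of leading binders.
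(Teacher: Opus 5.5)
Your equivalence and unfolding arguments are correct and are the same coinductive arguments the paper uses. The gap is in the participants part, at the step ``hence $\pt{\G} = \pt{\unfold{\G}}$, and so $\pp \in \pt{\G}$ iff $\pp$ occurs at some finite depth of the unfolded tree''.

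\textbf{Where it fails.} The identity $\pt{\G} = \pt{\unfold{\G}}$, together with induction on depth, only gives the easy inclusion: participants at finite tree depth lie in $\pt{\G}$. Your depth induction then shows that every participant at finite depth in the tree of $\G$ lies in $\pt{\Gi}$. To conclude $\pt{\G} \subseteq \pt{\Gi}$ you need the converse inclusion, namely that every $\pp \in \pt{\G}$ actually occurs at some finite depth. This does not follow from the identity. Iterating it only tells you that $\pp$ is in the head of $\unfold{\G}$ or in $\pt{\G[i]}$ for some continuation $\G[i]$, and $\G[i]$ need not be smaller than $\G$.

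For example, take $\G = \toc{\sfmu}\ty.\,\GvtPairs{\pr}{\role{s}}{\lab[1]\gc \ty,\;\lab[2]\gc \Gvt{\pp}{\pq}{\B}\gend}$. Its $\lab[1]$-continuation after unfolding is $\G$ itself and still contains $\pp$. So a descent that merely picks some continuation containing $\pp$ can cycle forever, and no size measure decreases. Induction on leading binders, which you name as the main point, only proves the identity; it does not control this descent.

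\textbf{The missing ingredient.} You need a well-founded measure for that descent. This is exactly the step the paper isolates: ``every communication subterm of a closed guarded type occurs at some unfold-reachable position, by structural induction''. Either of the following works.
\begin{itemize}
\item Let $\delta_{\pp}(\G)$ be the least number of communication constructors strictly above a node mentioning $\pp$, along a syntactic path from the root of $\G$. Unfolding does not increase $\delta_{\pp}$: substituting the closed term $\toc{\sfmu}\ty.\G[0]$ only replaces variable leaves, so the original occurrence stays at the same position. If the head of $\unfold{\G}$ does not mention $\pp$, the continuation on a minimal path has strictly smaller $\delta_{\pp}$.
\item Follow the paper and use structural induction on $\G$, generalising over closing substitutions so that the $\toc{\sfmu}$ case goes through.
\end{itemize}
With that lemma in place, the rest of your argument is fine.
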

\begin{proof}
For reflexivity, the identity relation satisfies the clauses of
Definition~\ref{def:unfolding-equivalence}; for symmetry and
transitivity, the converse of a relation satisfying the clauses, and
the composition of two such relations, again satisfy them, matching
head shapes through the middle type in the latter case. For the
unfolding law, $\unfold{\toc{\sfmu}\ty.\G} =
\unfold{\G\sub{\toc{\sfmu}\ty.\G}{\ty}}$ holds by definition of
$\unfold{\cdot}$, so the pair satisfies the clauses with
continuations related by reflexivity. For participants, $\pt{\G}$
equals the set of participants occurring in the communication heads
at unfold-reachable positions of $\G$: the inclusion $\supseteq$ is
immediate, and $\subseteq$ holds since every communication subterm
of a closed guarded type occurs at some unfold-reachable position,
by structural induction. A straightforward coinduction shows that
$\unfeq$-related types expose identical heads at all
unfold-reachable positions, so their participant sets coincide.
\end{proof}

Call an environment $\Sigma$ \emph{coherent} when its bindings are
injective on variables and every $(\ctxi, \pp[a]) \in \dom{\Sigma}$
has $\slive(\ctxi)$ with $\pp[a]$ an enabled sender of $\ctxi$.
The \emph{resolving substitutions} $\mathrm{Res}(\Sigma)$ for a
coherent $\Sigma$ are defined inductively: the empty substitution is
in $\mathrm{Res}(\varnothing)$; and if
$\theta \in \mathrm{Res}(\Sigma)$, the extension
$\Sigma \cup \set{(\ctxi, \pp[a]) \mapsto \ty}$ is coherent with
$(\ctxi, \pp[a]) \notin \dom{\Sigma}$ and $\ty$ fresh, then
$\theta[\ty \mapsto \textsf{synth}(\Sigma, \pp, \ctxi)\theta]
\in \mathrm{Res}(\Sigma \cup \set{(\ctxi, \pp[a]) \mapsto \ty})$,
where $\pp$ is any priority with
$\textsf{next-active}(\ctxi, \pp) = \pp[a]$; by the second
observation above, the resolvent does not depend on this choice.
By induction over this definition:
$\dom{\theta} = \mathrm{range}(\Sigma)$; every resolvent is a
closed, guarded global type; extensions never overwrite existing
bindings; and every prefix of the derivation of
$\theta \in \mathrm{Res}(\Sigma)$ yields a coherent sub-environment
together with its own resolving substitution.

\substitutionforsynth*
\begin{proof}[Proof of Lemma~\ref{lemma:substitution-for-synth}]
We exhibit a relation closed under the clauses of
Definition~\ref{def:unfolding-equivalence}; since $\unfeq$ is the
largest such relation, the candidate is contained in $\unfeq$. Let
\[
R =
\set{
\left(\textsf{synth}(\Sigma, \pp[i], \ctx)\theta,\;
      \textsf{synth}(\Sigma', \pp[i], \ctx)\theta'\right)
\mid
\slive(\ctx),\
\theta \in \mathrm{Res}(\Sigma),\
\theta' \in \mathrm{Res}(\Sigma')
}
\]
with $\Sigma, \Sigma'$ coherent. Both components are closed and
guarded. Given a pair in $R$, we case-split on each component
following the three cases of Definition~\ref{def:synth-rules}, with
$\pp[k] = \textsf{next-active}(\pp[i])$; the cases taken by the two
sides can differ only in case~(2), since $\textsf{stuck}(\ctx)$ is a
property of the shared~$\ctx$.

\emph{Case~(1)}: $\textsf{stuck}(\ctx)$.
Both components are $\gend$, and clause~(1) of
Definition~\ref{def:unfolding-equivalence} holds.

\emph{Case~(2)}: $(\ctx, \pp[k]) \in \dom{\Sigma}$, say
$\Sigma(\ctx, \pp[k]) = \ty$. Then
$\textsf{synth}(\Sigma, \pp[i], \ctx)\theta = \theta(\ty)$, which is
literally equal to
$\textsf{synth}(\Sigma^0, \pp, \ctx)\theta^0$, the resolvent
recorded when the binding was created, since bindings are never
overwritten; here $\Sigma^0$ and
$\theta^0 \in \mathrm{Res}(\Sigma^0)$ arise from a prefix of the
derivation of $\theta \in \mathrm{Res}(\Sigma)$, with
$(\ctx, \pp[k]) \notin \dom{\Sigma^0}$ and
$\textsf{next-active}(\ctx, \pp) = \pp[k]$, so
$\theta(\ty) = \textsf{synth}(\Sigma^0, \pp[i], \ctx)\theta^0$.
The pair therefore equals another pair of $R$ over the same $\ctx$
and $\pp[i]$, now with $(\ctx, \pp[k])$ unbound on that side; at
most one such replacement is needed per side. We may therefore
assume both sides are in case~(3).

\emph{Case~(3)}: $\neg\textsf{stuck}(\ctx)$ and
$(\ctx, \pp[k]) \notin \dom{\Sigma}, \dom{\Sigma'}$.
Both sides enter the recursive branch with the same partner~$\pq$,
the same transition set~$L$, and fresh variables $\ty[s]$ and
$\ty[s]'$. Thus
$\textsf{synth}(\Sigma, \pp[i], \ctx) = \toc{\sfmu}\ty[s].C$, where
$C$ carries the head of the enabled action of $(\ctx, \pp[k])$ with
continuations
$\textsf{synth}(\Sigma_1, \pp[(k{+}1)\bmod n], \ctx[\lab])$ and
$\Sigma_1 = \Sigma \cup \set{(\ctx, \pp[k]) \mapsto \ty[s]}$, which
is coherent. Since resolvents are closed and $\ty[s]$ is fresh,
substitution commutes with the binder, and the unfolding takes
exactly one step because $C$ begins with a communication
constructor:
\[
\unfold{\toc{\sfmu}\ty[s].(C\theta)}
= (C\theta)\sub{\toc{\sfmu}\ty[s].(C\theta)}{\ty[s]}
= C\theta_1,
\qquad
\theta_1 = \theta[\ty[s] \mapsto
\textsf{synth}(\Sigma, \pp[i], \ctx)\theta],
\]
where the second equality holds because resolvents are closed and
$\ty[s] \notin \dom{\theta}$, and
$\theta_1 \in \mathrm{Res}(\Sigma_1)$ is exactly the defining clause
of $\mathrm{Res}$. Hence the unfolding exposes the head interaction
of $(\ctx, \pp[k])$ with continuations
$\textsf{synth}(\Sigma_1, \pp[(k{+}1)\bmod n], \ctx[\lab])\theta_1$,
and symmetrically for the other side with
$\theta'_1 \in \mathrm{Res}(\Sigma'_1)$.
The two heads agree, being determined by the shared $\ctx$ and
$\pp[k]$, and by Lemma~\ref{lemma:preservation-reduction} each
successor satisfies $\slive(\ctx[\lab])$, so the paired
continuations lie in~$R$ and clause~(2) or~(3) of
Definition~\ref{def:unfolding-equivalence} holds.

Hence $R \subseteq \unfeq$. The lemma is the instance
$\Sigma = \set{(\ctxi, \pp[j]) \mapsto \ty}$ with
$\theta = [\ty \mapsto \textsf{synth}(\varnothing, \pp[j], \ctxi)]
\in \mathrm{Res}(\Sigma)$,
taking $\pp = \pp[j]$, an enabled sender of $\ctxi$, so that
$\textsf{next-active}(\pp[j]) = \pp[j]$,
against $\Sigma' = \varnothing$ with the empty substitution.
\end{proof}

\nonterminatedparticipantappearsinsynthesisedtype*
\begin{proof}[Proof of Lemma~\ref{lemma:nonterminated-participant-appears}]
Fix $\ctx$, $\pp[i]$, and $\pp$ such that $\slive(\ctx)$ and $\unfold{\ctx(\pp)} \neq \tend$.
Write
\[
\G = \textsf{synth}(\varnothing, \pp[i], \ctx),
\]
and assume, for contradiction, that $\pp \notin \pt{\G}$.

By Lemma~\ref{lemma:synth-terminates-for-slive-contexts}, the computation of
$\textsf{synth}(\varnothing, \pp[i], \ctx)$ terminates, so its call-unfolding gives a finite
recursion tree.  Consider an arbitrary root-to-leaf path in this tree:
\[
(\ctx[0],\ppi[0]) \rrtrans{\inter[0]} (\ctx[1],\ppi[1]) \rrtrans{\inter[1]} \cdots
\rrtrans{\inter[m-1]} (\ctx[m],\ppi[m]),
\]
where $\ppi[j]$ denotes the prioritised participant at step~$j$,
with $(\ctx[0],\ppi[0])=(\ctx,\pp[i])$.

Since $\pp \notin \pt{\G}$, no constructor generated along this path involves $\pp$.
Hence each $\inter[k]$ is between participants different from $\pp$.
By Lemma~\ref{lemma:semantics}, $\ctx[k+1](\pp)=\ctx[k](\pp)$ for all $k<m$.
Therefore, for all $k \le m$:
\[
\ctx[k](\pp)=\ctx(\pp), \qquad \unfold{\ctx[k](\pp)}=\unfold{\ctx(\pp)}\neq\tend.
\]
So each $\ctx[k]$ has a one-sided local action at $\pp$.

If the leaf is a case~(1) leaf, the projected context path is finite
and maximal.
By Lemma~\ref{lemma:projection-of-round-robin-sequences-is-fair} it
is fair, hence live, since $\live(\ctx)$.
As shown above, $\ctx[0] \transt{\pp : \czeta}$ for some action
$\czeta$, so Lemma~\ref{lemma:fair-live-communication} yields a step
$\inter[k]$ involving $\pp$, contradicting the fact that no
interaction along the path involves $\pp$.

So the leaf must be a case~(2) leaf. Then there is an ancestor on the same path with the
same context-participant pair:
\[
(\ctx[r],\ppi[r])=(\ctx[m],\ppi[m]), \quad r<m.
\]
Let $\sigma$ be the finite segment from index $r$ to $m$. Repeating
$\sigma$ forever yields an
infinite reduction sequence $\rho$ from $\ctx$
in which no interaction involves $\pp$, and
$\ctx[k](\pp)=\ctx(\pp)$ at every position.
Moreover $\rho$ is the projection of an infinite maximal round-robin
sequence obtained by looping the same cycle in the recursion tree,
so it is fair by
Lemma~\ref{lemma:projection-of-round-robin-sequences-is-fair}, and
hence live, since $\live(\ctx)$.
As $\ctx[0] \transt{\pp : \czeta}$ for some action $\czeta$,
Lemma~\ref{lemma:fair-live-communication} yields a step of $\rho$
involving $\pp$, contradicting the construction of~$\rho$.

Hence $\pp \in \pt{\textsf{synth}(\varnothing, \pp[i], \ctx)}$.
\end{proof}

\begin{restatable}[Terminated Participant Does Not Appear in Synthesised Type]{lemma}{terminatedparticipantnotappearsinsynthesisedtype}
  \label{lemma:terminated-participant-not-appears}
  For any environment $\Sigma$, context $\ctx$ satisfying $\slive(\ctx)$, priority $\pp[i]$, and participant $\pp \in \dom{\ctx}$, if $\unfold{\ctx(\pp)} = \tend$, then
  $\pp \notin \pt{\textsf{synth}(\Sigma, \pp[i], \ctx)}$.
\end{restatable}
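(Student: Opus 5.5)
We argue by the induction principle of Remark~\ref{rem:synth-induction}, proving the stronger invariant that for every $\Sigma$ whose range consists of type variables, every priority $\pp[i]$, and every context $\ctxi$ reachable from $\ctx$ with $\unfold{\ctxi(\pp)} = \tend$, the participant $\pp$ does not occur in any communication head of $\textsf{synth}(\Sigma, \pp[i], \ctxi)$. Since $\pt{\cdot}$ is computed syntactically and type variables contribute nothing ($\pt{\ty} = \emptyset$), this yields $\pp \notin \pt{\textsf{synth}(\Sigma, \pp[i], \ctxi)}$. Liveness and safety are not really needed; the argument only uses that $\pp$ has no enabled action. We still carry $\slive$ along via Lemma~\ref{lemma:preservation-reduction} so the statement matches the hypotheses.

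\textbf{Case analysis on Definition~\ref{def:synth-rules}.} In case~(1) the result is $\gend$, whose participant set is empty. In case~(2) the result is a type variable $\ty$, so $\pt{\ty} = \emptyset$. In case~(3) the result is $\toc{\sfmu}\ty.\,\pp[j]\gar\pq\ldots$ with $\ctxi \transi{\pp[j]\pq}$. The key observation is that $\pp \notin \set{\pp[j],\pq}$. By Lemma~\ref{lemma:semantics}, an interaction $\interaction{\pp[j]}{\pq}{\obs}$ requires $\unfold{\ctxi(\pp[j])}$ to be an output or selection and $\unfold{\ctxi(\pq)}$ to be an input or branching. Neither can be $\tend$, which is the unfolding of $\ctxi(\pp)$. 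Hence the head contributes only $\set{\pp[j],\pq}$, which does not contain $\pp$.

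\textbf{Continuations.} For each successor $\ctxi[\obs]$, Lemma~\ref{lemma:semantics}(3) gives $\ctxi[\obs](\pp) = \ctxi(\pp)$, so we still have $\unfold{\ctxi[\obs](\pp)} = \tend$. The successor is reachable from $\ctx$, and it is safe and live by Lemma~\ref{lemma:preservation-reduction}. The extended environment still maps to variables. The induction hypothesis, applied to these sub-calls with strictly smaller measure, therefore gives that $\pp$ is absent from every continuation. Since $\pt{\toc{\sfmu}\ty.\G} = \pt{\G}$, $\pp$ is absent from the whole result.

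\textbf{Main obstacle.} The only delicate point is justifying that $\pp$ cannot be the chosen partner $\pq$. This rests on the inversion lemma forcing a non-$\tend$ head on both endpoints of any two-sided transition. The remaining bookkeeping is routine: the well-founded measure from the termination proof (Lemma~\ref{lemma:synth-terminates-for-slive-contexts}), and the fact that variables in the range of $\Sigma$ contribute no participants.
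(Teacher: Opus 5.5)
Your proof is correct and follows essentially the same route as the paper. Both use the strengthening to all contexts reachable from $\ctx$, the synthesis-measure induction of Remark~\ref{rem:synth-induction}, the same three-case split, and Lemma~\ref{lemma:semantics}(3) together with Lemma~\ref{lemma:preservation-reduction} for the continuations. The only difference is how $\pp \notin \set{\pp[j],\pq}$ is justified: you use the two-sided inversion Lemma~\ref{lemma:semantics} directly, whereas the paper cites Lemma~\ref{lemma:end-empty} but actually relies on its easy converse. Your citation is the more direct one, and you rightly observe that safety and liveness play no essential role.
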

\begin{proof}[Proof of Lemma~\ref{lemma:terminated-participant-not-appears}]
We prove the stronger statement: for all $\Sigma$, $\pp[i]$, and
$\ctxi$ reachable from $\ctx$, if $\slive(\ctxi)$ and
$\unfold{\ctxi(\pp)} = \tend$, then
$\pp \notin \pt{\textsf{synth}(\Sigma, \pp[i], \ctxi)}$.
By the induction principle of Remark~\ref{rem:synth-induction},
we case-split on $\textsf{synth}(\Sigma, \pp[i], \ctxi)$.
Let $\pp[k] = \textsf{next-active}(\ctxi, \pp[i])$.

\emph{Case~(1)}: the result is $\gend$, and
$\pt{\gend} = \emptyset$, so $\pp \notin \pt{\gend}$.

\emph{Case~(2)}: $\Sigma(\ctxi, \pp[k]) = \ty$, so the
result is $\ty$ and $\pt{\ty} = \emptyset$, hence
$\pp \notin \pt{\ty}$.

\emph{Case~(3)}: the result is
$\toc{\sfmu}\ty.\G[0]$ where $\G[0]$ is either
$\Gvt{\pp[k]}{\pq}{\B}{\G'}$ (sub-case~(a)) or
$\pp[k] \gar \pq\;\goc{\{} \lab : \G[\lab] \goc{\}}_{\lab \in L}$ (sub-case~(b)).
Since $\unfold{\ctxi(\pp)} = \tend$,
Lemma~\ref{lemma:end-empty} gives $\ctxi(\pp) \ntransi{}$,
so $\pp$ cannot be a sender or receiver of any context transition.
Hence $\pp \neq \pp[k]$ and $\pp \neq \pq$.
For each successor context~$\ctx[\lab]$
reached via $\ctxi \transi{\interaction{\pp[k]}{\pq}{\obs}} \ctx[\lab]$,
by Lemma~\ref{lemma:semantics}(3),
$\ctx[\lab](\pp) = \ctxi(\pp)$,
so $\unfold{\ctx[\lab](\pp)} = \tend$.
By Lemma~\ref{lemma:preservation-reduction}, $\slive(\ctx[\lab])$.
By the induction hypothesis,
$\pp \notin \pt{\G[\lab]}$ for each successor.
Since $\pp \neq \pp[k]$ and $\pp \neq \pq$,
$\pp \notin \pt{\toc{\sfmu}\ty.\G[0]}$.
\end{proof}

\begin{restatable}[No-Transition Safe Live Contexts Have Only End Locals]{lemma}{notransitionsafelivecontextsonlyendlocals}
  \label{lemma:no-transition-safe-live-only-end}
  For any context $\ctx$ satisfying $\slive(\ctx)$, if $\ctx \ntransi{}$, then for every $\pp \in \dom{\ctx}$ we have
  $\unfold{\ctx(\pp)} = \tend$.
\end{restatable}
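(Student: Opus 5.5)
I would argue by contradiction, using liveness on the trivial (empty) reduction sequence. Suppose $\slive(\ctx)$ and $\ctx \ntransi{}$, and that some $\pp \in \dom{\ctx}$ has $\unfold{\ctx(\pp)} \neq \tend$. Local types are guarded and $\unfold{\cdot}$ removes every leading $\toc{\sfmu}$, so $\unfold{\ctx(\pp)}$ is an output, input, selection or branching. By \rulename{LR1}--\rulename{LR4}, lifted through \rulename{LR5} by induction on the number of leading binders, $\ctx(\pp)$ has a one-sided transition. Hence $\ctx \transt{\pp : \czeta}$ for some action $\czeta$ by \rulename{LTag}.

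The empty reduction sequence consisting of $\ctx$ alone is maximal, because $\ctx$ is stuck. It is also fair vacuously, since the only context index is $0$ and $\ctx[0] = \ctx$ enables no interaction $\transi{\pr\role{s}}$. By $\live(\ctx)$, this sequence is therefore live.

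I would then apply Lemma~\ref{lemma:fair-live-communication} with $i = 0$, using $\safe(\ctx)$ and $\ctx[0] \transt{\pp : \czeta}$. The lemma yields some $n \in I$ with $n \ge 0$ such that $\inter[n]$ involves $\pp$. But the transition index set $I$ of the empty sequence is empty, so no such $n$ exists, which is a contradiction.

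If one prefers not to route through that lemma, the argument can be done directly. Take the clause of Definition~\ref{def:live-paths} matching $\czeta$; it demands a $j \in I$, and again $I = \emptyset$. Either way, every participant must satisfy $\unfold{\ctx(\pp)} = \tend$.

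The main point needing care is the index bookkeeping in Definition~\ref{def:stuck}. The liveness clauses quantify over $i \in \overline{I} = \{0\}$ but require the witness $j$ to lie in $I = \emptyset$, and this mismatch is exactly what produces the contradiction. If instead the witness were read as ranging over $\overline{I}$, I would fall back on safety. Suppose $\pp$ sends to $\pq$ and $\ctx[0]$ itself offers the matching receive $\pq : \qrecv{\pp}{\cdot}$. For a payload the observable must be the same $\B$, and safety then gives $\ctx \transi{\interaction{\pp}{\pq}{\B}}$, contradicting stuckness. For labels, the sent label $\lab$ has some enabled receive at $\pq$, and safety applied to the pair with observable $\lab$ forces the transition, again contradicting stuckness. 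The case where $\pp$ is a receiver is symmetric.
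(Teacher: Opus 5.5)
Your proof is correct and essentially the paper's own argument. In both, $\unfold{\ctx(\pp)} \neq \tend$ yields a one-sided action at $\pp$ (the paper cites Lemma~\ref{lemma:end-empty}), the empty sequence from the stuck $\ctx$ is maximal and vacuously fair, and it cannot be live because the liveness witness would have to lie in the empty transition set $I$. Routing through Lemma~\ref{lemma:fair-live-communication}, and your safety-based fallback, are both valid but not needed.
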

\begin{proof}[Proof of Lemma~\ref{lemma:no-transition-safe-live-only-end}]
Suppose for contradiction that $\unfold{\ctx(\pp)} \neq \tend$
for some $\pp \in \dom{\ctx}$.
By Lemma~\ref{lemma:end-empty} (contrapositive),
$\ctx(\pp)$ has at least one local transition,
so $\ctx \transt{\pp : \czeta}$ for some action $\czeta$.
Since $\ctx \ntransi{}$, the only maximal reduction sequence
from $\ctx$ is the empty (stuck) sequence, which is trivially
fair.
However, this sequence is not live: the one-sided action
at $\pp$ is enabled at $\ctx$ but no interaction involving $\pp$
is ever taken (Definition~\ref{def:live-paths}).
This contradicts $\live(\ctx)$.
\end{proof}

\preservationofsafetyandlivenessunderreduction*
\begin{proof}[Proof of Lemma~\ref{lemma:preservation-reduction}]
\emph{Safety.}
By Definition~\ref{def:stuck}, $\safe(\ctx)$ requires that for all
$\ctx \transi{}^{*} \ctxii$, if
$\ctxii \transt{\pr : \qsend{\role{s}}{\obsi}}$ and
$\ctxii \transt{\role{s} : \qrecv{\pr}{\obs'}}$,
then $\ctxii \transi{\interaction{\pr}{\role{s}}{\obsi}}$.
Since $\ctx \transi{} \ctxi$, every $\ctxi \transi{}^{*} \ctxii$
is also $\ctx \transi{}^{*} \ctxii$, so the condition
is inherited by~$\ctxi$.

\emph{Liveness.}
Let $(\ctxi = \ctx[0] \transi{\inter[0]} \ctx[1] \transi{\inter[1]} \cdots)$
be a fair reduction sequence from~$\ctxi$.
Prepending the step $\ctx \transi{\interaction{\pp}{\pq}{\obs}} \ctxi$
gives a maximal reduction sequence from~$\ctx$, and it is fair.
A pair enabled at position~$0$ either is $(\pp, \pq)$ itself, taken
at the prepended step, or is disjoint from $\set{\pp, \pq}$, since the
heads of $\pp$ and $\pq$ each name the other as unique partner; in
the disjoint case both endpoints are unchanged by the step
(Lemma~\ref{lemma:semantics}(3)), so the pair remains enabled at
position~$1$ and fairness of the tail supplies the required step.
At later positions fairness is inherited from the tail by shifting
indices.
By $\live(\ctx)$ (Definition~\ref{def:livecontexts}),
this extended sequence is live.
Hence the original sequence from~$\ctxi$ is live, so $\live(\ctxi)$.
\end{proof}

\begin{lemma}[End Empty]
    \label{lemma:end-empty}
    If $\T \ntransa{}$ then $\unfold{\T} = \tend$.
\end{lemma}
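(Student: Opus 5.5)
We prove the contrapositive: if $\unfold{\T} \neq \tend$, then $\T \transa{}$. The argument is by induction on the number of leading $\toc{\sfmu}$-binders of $\T$. This number is finite because recursion is guarded, so repeated unfolding reaches a constructor form after finitely many steps. This is the same measure under which $\unfold{\cdot}$ is well defined.

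\emph{Base case.} Suppose $\T$ is not a $\toc{\sfmu}$-type. Then $\unfold{\T} = \T$. A closed local type is not a bare variable, so $\T$ is one of $\tend$, $\tout{\pp}{\B}{\Ti}$, $\tin{\pp}{\B}{\Ti}$, $\tselsub{\pp}{\lab[i]:\T[i]}{i\in I}$ or $\tbrasub{\pp}{\lab[i]:\T[i]}{i\in I}$. The first is excluded by assumption. For output and input, rules \rulename{LR1} and \rulename{LR2} give a transition directly. For selection and branching, the index set $I$ is nonempty, so we can pick some $j \in I$ and apply \rulename{LR3} or \rulename{LR4}.

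\emph{Inductive step.} Suppose $\T = \toc{\sfmu}\ty.\T[0]$. Then $\unfold{\T} = \unfold{\T[0]\sub{\toc{\sfmu}\ty.\T[0]}{\ty}}$, and the unfolded body has strictly fewer leading binders to peel. It also has the same non-$\tend$ unfolding, so the induction hypothesis yields $\T[0]\sub{\toc{\sfmu}\ty.\T[0]}{\ty} \transa{\czeta} \Ti$ for some $\czeta$ and $\Ti$. Rule \rulename{LR5} then gives $\T \transa{\czeta} \Ti$.

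The only delicate point is justifying the induction measure and excluding a free-variable head. We handle this by appealing to guardedness, exactly as in the proof of Lemma~\ref{lemma:inversion-combined-relation}, and by the standing convention that local types in contexts are closed. Taking the contrapositive gives the claim.
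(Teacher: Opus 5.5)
Your proof is correct and follows the paper's argument: \rulename{LR1}--\rulename{LR4} give a transition for every non-$\tend$ constructor form, and \rulename{LR5} lifts transitions from the unfolded body to the $\toc{\sfmu}$-type, so the claim reduces to $\unfold{\T}$. You additionally spell out the induction on leading binders and the implicit side conditions (closedness excluding a bare variable head, and nonempty index sets), which the paper's brief proof leaves tacit.
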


\begin{proof}[Proof of Lemma~\ref{lemma:end-empty}]
By inspection of rules \rulename{LR1}--\rulename{LR4},
every local type in constructor form other than $\tend$
(i.e.\ output, input, selection, or branching)
has at least one enabled transition.
Rule \rulename{LR5} unfolds $\toc{\sfmu}$-types, so if
$\T$ is a $\toc{\sfmu}$-type its transitions coincide with
those of $\unfold{\T}$.
Hence if $\T$ has no transitions, $\unfold{\T}$ is not a
$\toc{\sfmu}$-type and has no transitions, so $\unfold{\T} = \tend$.
\end{proof}

\subsubsection{Backward Closure}

\backwardclosureofsynthcandidate*
\begin{proof}[Proof of Lemma~\ref{lemma:backward-closure-synth-candidate}]
We show that for every $(\G,\T)\in\widehat{R}_{\pp}$, there exists a
rule of $\cofullprojsub{p}$ whose conclusion matches
$\G \cofullprojsub{p} \T$ and whose premises all lie in
$\widehat{R}_{\pp}$.

By definition, every $(\G,\T)\in\widehat{R}_{\pp}$ originates from
some $(\Gi,\Ti)\in R_{\pp}$, meaning
$\Gi = \textsf{synth}(\varnothing,\pp[i],\ctx)$ and $\Ti = \ctx(\pp)$
for some $\slive(\ctx)$ with $\pp\in\dom{\ctx}$, with
$\G \unfeq \Gi$ via \rulename{RUnfEq} (iterated applications
collapse by transitivity of $\unfeq$,
Lemma~\ref{lemma:unfeq-properties}), and $\T$ obtained from~$\Ti$ by
finitely many applications of \rulename{RUnfR}, chaining through
nested $\toc{\sfmu}$-binders.
We organise the proof by the form of~$\G$.

\emph{Case $\T = \toc{\sfmu}\ty'.\Ti$.}
Rule \rulename{SPR10} applies; its premise
$(\G, \Ti\sub{\toc{\sfmu}\ty'.\Ti}{\ty'})$ lies in
$\widehat{R}_{\pp}$ by \rulename{RUnfR}.

\noindent
Henceforth assume $\T$ is not a $\toc{\sfmu}$-type.

\emph{Case $\G = \gend$.}
Then $\unfold{\Gi} = \gend$ by clause~(1) of
Definition~\ref{def:unfolding-equivalence}, so synthesis returned
$\gend$ and $\textsf{stuck}(\ctx)$, since otherwise
Definition~\ref{def:synth-rules}(3) emits a $\toc{\sfmu}$-type
unfolding to a communication head.
By Lemma~\ref{lemma:no-transition-safe-live-only-end},
$\unfold{\ctx(\pp)} = \tend$, so $\T = \tend$.
Rule \rulename{SPR11}.

\emph{Case $\G = \toc{\sfmu}\ty.\G[0]$.}
By Lemma~\ref{lemma:unfeq-properties}, $\pt{\G} = \pt{\Gi}$.
If $\pp \in \pt{\G[0]}$: rule \rulename{SPR9} applies.
Its premise is $(\G[0]\sub{\toc{\sfmu}\ty.\G[0]}{\ty}, \T)$.
Since $\G[0]\sub{\toc{\sfmu}\ty.\G[0]}{\ty} \unfeq \G \unfeq \Gi$
(Lemma~\ref{lemma:unfeq-properties}),
\rulename{RUnfEq} applied to $(\Gi, \Ti) \in R_{\pp}$, followed by
the same chain of \rulename{RUnfR} steps that produced $\T$
from~$\Ti$, places the premise in $\widehat{R}_{\pp}$.

If $\pp \notin \pt{\G[0]}$: since
$\pt{\toc{\sfmu}\ty.\G[0]} = \pt{\G[0]}$ and $\pt{\G} = \pt{\Gi}$,
the contrapositive of
Lemma~\ref{lemma:nonterminated-participant-appears} applied to $\Gi$
gives $\unfold{\ctx(\pp)} = \tend$, so $\T = \tend$.
Rule \rulename{SPR12}.

\emph{Case $\G$ is in communication form.}
Then $\neg\textsf{stuck}(\ctx)$, since otherwise $\Gi = \gend$ and
clause~(1) of Definition~\ref{def:unfolding-equivalence} would force
$\unfold{\G} = \gend$.
Write
$\Gi = \toc{\sfmu}\ty.\G[0] =
\textsf{synth}(\varnothing,\pp[i],\ctx)$:
synthesis hit Definition~\ref{def:synth-rules}(3)
with sender $\pp[j] = \textsf{next-active}(\ctx, \pp[i])$ and
receiver~$\pq$, and for each successor context~$\ctx[\lab]$
reached via $\ctx \transi{\interaction{\pp[j]}{\pq}{\obs}} \ctx[\lab]$,
the synthesis sub-call gives
$\G[\lab] = \textsf{synth}(\set{(\ctx,\pp[j])\!\mapsto\!\ty},\;
\pp[(j{+}1)\bmod n],\;\ctx[\lab])$.
Since $\G$ exposes its head and
$\G \unfeq \Gi \unfeq \G[0]\sub{\Gi}{\ty}$
(Lemma~\ref{lemma:unfeq-properties}), clause~(2) or~(3) of
Definition~\ref{def:unfolding-equivalence} gives that $\G$ carries
the same head interaction as $\G[0]\sub{\Gi}{\ty}$, with
continuations $\Gii[\lab]$ satisfying
$\Gii[\lab] \unfeq \G[\lab]\sub{\Gi}{\ty}$.
By Lemma~\ref{lemma:preservation-reduction},
$\slive(\ctx[\lab])$ for each successor, so
Lemma~\ref{lemma:substitution-for-synth} applies
(noting $\textsf{synth}(\varnothing,\pp[j],\ctx) = \Gi$, since
Definition~\ref{def:synth-rules} reads the priority only through
$\textsf{next-active}$ and $\textsf{next-active}(\ctx, \pp[j]) = \pp[j]$),
and with transitivity of $\unfeq$ yields
\[
\Gii[\lab]
\;\unfeq\;
\textsf{synth}(\varnothing,\,\pp[(j{+}1)\bmod n],\,\ctx[\lab]).
\]
Hence
\[
(\star)\qquad
\bigl(\,\Gii[\lab],\;\ctx[\lab](\pp)\,\bigr)
\;\in\; \widehat{R}_{\pp}
\qquad\text{by \rulename{RCand} and \rulename{RUnfEq},
for each successor } \ctx[\lab].
\]
We case-split on the form of the enabled action and the role of~$\pp$.

\smallskip
\emph{Sub-case~(a): value-passing ($\obs = \B$, single continuation).}
Then $\G = \Gvt{\pp[j]}{\pq}{\B}{\Gii[\lab]}$.
By Lemma~\ref{lemma:semantics}:
$\unfold{\ctx(\pp[j])} = \tout{\pq}{\B}{\ctx[\lab](\pp[j])}$,
$\unfold{\ctx(\pq)} = \tin{\pp[j]}{\B}{\ctx[\lab](\pq)}$,
and $\ctx[\lab](\pr) = \ctx(\pr)$ for $\pr \notin \set{\pp[j],\pq}$.

If $\pp = \pp[j]$:
$\T = \tout{\pq}{\B}{\ctx[\lab](\pp[j])}$.
\rulename{SPR1}; premise in $\widehat{R}_{\pp}$
by~$(\star)$.
\quad
If $\pp = \pq$:
$\T = \tin{\pp[j]}{\B}{\ctx[\lab](\pq)}$.
\rulename{SPR2}; premise in $\widehat{R}_{\pp}$
by~$(\star)$.

If $\pp \notin \set{\pp[j],\pq}$:
$\ctx[\lab](\pp) = \ctx(\pp)$.
\rulename{SPR3}; premise
$(\Gii[\lab], \T)$ lies in $\widehat{R}_{\pp}$:
if $\T = \ctx(\pp)$ by~$(\star)$ directly;
if $\T \neq \ctx(\pp)$ by repeated \rulename{RUnfR}
on~$(\star)$ (since $\ctx[\lab](\pp) = \ctx(\pp)$, the same unfolding
chain applies).

\smallskip
\emph{Sub-case~(b): labelled branching ($L = I$).}
The chosen interaction has $\obs = \lab[k]$ for some~$k$, and
Lemma~\ref{lemma:semantics}(2) gives
$\unfold{\ctx(\pp[j])} = \tselsub{\pq}{\lab[i] : \T[i]}{i \in I}$ and
$\unfold{\ctx(\pq)} = \tbrasub{\pp[j]}{\lab[i] : \Ti[i]}{i \in I_r}$
with $k \in I \cap I_r$.
Hence $\ctx \transt{\pp[j] : \qsend{\pq}{\lab[i]}}$ for every
$i \in I$, and $\ctx \transt{\pq : \qrecv{\pp[j]}{\lab[k]}}$, so
$\safe(\ctx)$ yields
$\ctx \transi{\interaction{\pp[j]}{\pq}{\lab[i]}} \ctx[i]$ for every
$i \in I$.
Conversely, Lemma~\ref{lemma:semantics}(2) forces the label of any
interaction between $\pp[j]$ and $\pq$ to lie in~$I$, so the
successor contexts are exactly $\set{\ctx[i]}_{i \in I}$; thus
$L = I$ and
$\G = \GvtPair{\pp[j]}{\pq}{\lab[i] : \Gii[i]}{i \in I}$.
Applying Lemma~\ref{lemma:semantics} to each of these transitions
gives $i \in I_r$, hence $I \subseteq I_r$, together with
$\ctx[i](\pp[j]) = \T[i]$, $\ctx[i](\pq) = \Ti[i]$, and
$\ctx[i](\pr) = \ctx(\pr)$ for $\pr \notin \set{\pp[j],\pq}$.

If $\pp = \pp[j]$:
$\T = \tselsub{\pq}{\lab[i] : \T[i]}{i \in I}$.
\rulename{SPR4} with $J = I$; each premise
$(\Gii[i], \T[i])$
lies in $\widehat{R}_{\pp}$
by~$(\star)$ (since $\T[i] = \ctx[i](\pp)$).

If $\pp = \pq$:
$\T = \tbrasub{\pp[j]}{\lab[i] : \Ti[i]}{i \in I_r}$.
\rulename{SPR5} with $J = I \subseteq I_r$; each premise
$(\Gii[i], \Ti[i])$
lies in $\widehat{R}_{\pp}$
by~$(\star)$ (since $\Ti[i] = \ctx[i](\pp)$).

If $\pp \notin \set{\pp[j], \pq}$:
$\ctx[i](\pp) = \ctx(\pp)$ for every $i \in I$
(Lemma~\ref{lemma:semantics}(3)).
\rulename{SPR6} applies;
\emph{every premise asks for
$(\Gii[i], \T)$ with the same~$\T$}.
By~$(\star)$, $(\Gii[i], \ctx[i](\pp)) \in \widehat{R}_{\pp}$
for each~$i$, and $\ctx[i](\pp) = \ctx(\pp)$ is the same type across
all branches.
If $\T = \ctx(\pp)$, the premises are directly~$(\star)$;
if $\T \neq \ctx(\pp)$, apply repeated \rulename{RUnfR}
on~$(\star)$ (since $\ctx[i](\pp) = \ctx(\pp)$, the same unfolding
chain applies to every branch).
\end{proof}

\subsubsection{Association Proofs}

\begin{lemma}[Widening Bound Yields a Merge]
\label{lemma:widening-bound-yields-merge}
Let $I$ be finite and nonempty, and let $\T \subtw \Ti[i]$ for every
$i \in I$. Then there exists $\Ti$ with
$\comergefull{\set{\Ti[i]}_{i \in I}}{\Ti}$ and $\T \subtw \Ti$.
\end{lemma}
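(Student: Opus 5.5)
The plan is to construct the merge explicitly and justify both claims coinductively. The natural witness is $\Ti$ such that $\comergefull{\set{\Ti[i]}_{i\in I}}{\Ti}$ holds via a canonical coinductive construction, and $\T \subtw \Ti$. Rather than building $\Ti$ as a syntactic term up front, I would define a candidate relation and argue it is consistent with the merge rules. The witness $\Ti$ is then read off as the regular tree determined by $\T$ and the family $\set{\Ti[i]}$.

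Concretely, I would take the merge witness to be $\T$ itself, restricted on branchings to the labels offered by some $\Ti[i]$. Since the rules \rulename{M5} take exactly the union of offered labels, the simplest choice is to define a relation
\[
R=\set{(S,\Tii)\mid S \text{ finite nonempty},\ \Tii=\mathrm{cut}_S(\T_0),\ \T_0\subtw \Tii[0]\ \forall \Tii[0]\in S}.
\]
Here $\mathrm{cut}_S(\T_0)$ is defined corecursively on the unfolded heads. It copies value and selection heads, and at a branching keeps only the labels in $\bigcup_{\Tii[0]\in S}$ of the labels of $\unfold{\Tii[0]}$, cutting the continuations against the subfamily offering each label. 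Regularity of $\mathrm{cut}$ follows from the finiteness lemmas, since the pairs $(S,\T_0)$ range over subterms. To avoid regularity bookkeeping, I could instead phrase the conclusion directly with $\Ti=\T$ whenever possible and only cut where needed.

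The key steps are as follows. First, I would use Lemma~\ref{lemma:unfold-replacement}(3) and rules \rulename{M6}/\rulename{M7} to assume all members and the witness are in constructor form. Second, I would apply the dual direction of Lemma~\ref{lemma:inversion-subtyping-sub} to each $\T\subtw\Ti[i]$, which forces every $\unfold{\Ti[i]}$ to share the head of $\unfold{\T}$. For value heads this gives identical payload and partner with $\subtw$-related continuations, so \rulename{M2}/\rulename{M3} apply with premises in $R$. For selections the index sets all equal that of $\T$, so \rulename{M4} applies. For $\tend$, \rulename{M1} applies. For branchings each $I_i\subseteq I_\T$; the cut keeps $\bigcup_i I_i$, matching the conclusion of \rulename{M5}, and for each label $k$ the subfamily $\set{i\mid k\in I_i}$ is nonempty with $\T_k\subtw\Ti[i]_k$, so the premise is in $R$. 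Third, I would show $\T\subtw\mathrm{cut}_S(\T)$ by a second consistency argument. Here \rulename{S4} is justified because the cut's labels are a subset of $\T$'s labels, and selections keep equal index sets, as the restricted \rulename{S3} requires.

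The main obstacle is the branching case together with the well-definedness and regularity of the cut witness. I need the corecursive definition to produce a genuine (finite, guarded) local type, and the recursion on the member sets to be handled uniformly with $\toc{\sfmu}$-unfolding (rule \rulename{M6} may unfold members one at a time). I would handle this by indexing states by pairs (finite set of unfolded subterms of the $\Ti[i]$, subterm of $\T$), which form a finite set by Lemma~\ref{lemma:finite-reachable-local-states}, and tying the knot with one $\toc{\sfmu}$-binder per state.
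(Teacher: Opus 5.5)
Your proposal is correct and follows essentially the same route as the paper. Both arguments use a corecursive construction guided by the constructor form of $\T$, use the dual inversion for $\subtw$ to force every $\Ti[i]$ to share $\T$'s head, match \rulename{M1}--\rulename{M5} against \rulename{S7}, \rulename{S1}/\rulename{S2}, restricted \rulename{S3} and \rulename{S4}, and take the union $\bigcup_i K_i \subseteq K$ at branchings. The differences are only presentational: you name the witness as a ``cut'' of $\T$ and check the merge and the bound $\T \subtw \Ti$ in two separate consistency arguments, with more explicit regularity bookkeeping. The paper instead verifies both properties simultaneously and appeals to productivity of the corecursion.
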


\begin{proof}
We construct $\Ti$ and verify $\comergefull{\set{\Ti[i]}_{i \in I}}{\Ti}$
and $\T \subtw \Ti$ simultaneously, by corecursion guided by the
constructor form of $\T$; this is productive because recursion in $\T$
is guarded, so $\T$ has finitely many distinct subterms and each step
emits one constructor.
By Lemma~\ref{lemma:unfold-replacement} we may replace every type by
its unfolding, so we assume $\T$ and each $\Ti[i]$ are in constructor
form; by Lemma~\ref{lemma:inversion-subtyping-sub} (dual direction for
$\subtw$) the head of each $\Ti[i]$ is then determined by that of $\T$.
We case-split on $\T$.

\emph{Case $\T = \tend$.}
Each $\Ti[i] = \tend$, so $\set{\Ti[i]}_{i \in I} = \set{\tend}$; set
$\Ti = \tend$.
Rule \rulename{M1}; rule \rulename{S7}.

\emph{Cases $\T = \tout{\pq}{\B}{\T[0]}$ and $\T = \tin{\pq}{\B}{\T[0]}$.}
Each $\Ti[i] = \tout{\pq}{\B}{\Ti[0,i]}$ (resp.\ $\tin{\pq}{\B}{\cdot}$)
with $\T[0] \subtw \Ti[0,i]$.
Corecursively merge $\set{\Ti[0,i]}_{i \in I}$ to $\Ti[0]$ with
$\T[0] \subtw \Ti[0]$, and set $\Ti = \tout{\pq}{\B}{\Ti[0]}$
(resp.\ input).
Rule \rulename{M2} (resp.\ \rulename{M3}); rule \rulename{S1}
(resp.\ \rulename{S2}).

\emph{Case $\T = \tselsub{\pq}{\lab[k] : \T[k]}{k \in K}$.}
Each $\Ti[i] = \tselsub{\pq}{\lab[k] : \Ti[k,i]}{k \in K}$ with the
\emph{same} label set $K$ (restricted \rulename{S3}) and
$\T[k] \subtw \Ti[k,i]$.
For each $k \in K$, corecursively merge $\set{\Ti[k,i]}_{i \in I}$ to
$\Ti[k]$ with $\T[k] \subtw \Ti[k]$, and set
$\Ti = \tselsub{\pq}{\lab[k] : \Ti[k]}{k \in K}$.
Rule \rulename{M4} (equal label sets); rule \rulename{S3} with equal
sets.

\emph{Case $\T = \tbrasub{\pq}{\lab[k] : \T[k]}{k \in K}$.}
Each $\Ti[i] = \tbrasub{\pq}{\lab[k] : \Ti[k,i]}{k \in K_i}$ with
$K_i \subseteq K$ and $\T[k] \subtw \Ti[k,i]$ for $k \in K_i$.
For each $k \in \bigcup_{i} K_i$ the set
$\set{\Ti[k,i] \mid i \in I,\ k \in K_i}$ is nonempty and has the
common widening lower bound $\T[k]$; corecursively merge it to
$\Ti[k]$ with $\T[k] \subtw \Ti[k]$, and set
$\Ti = \tbrasub{\pq}{\lab[k] : \Ti[k]}{k \in \bigcup_{i} K_i}$.
Rule \rulename{M5}; rule \rulename{S4}, since
$\bigcup_{i} K_i \subseteq K$.
\end{proof}

\compositeimpliesassociation*
\begin{proof}[Proof of Lemma~\ref{lemma:composite-implies-association}]
We first prove a decomposition claim, the converse of
Lemma~\ref{lemma:projection-in-combined}.
Whenever $\G \cofullprojsub{p} \T$, there exists $\Ti$ with
$\G \cofullproj{p} \Ti$ and $\T \subtw \Ti$ (the widening subtyping of
Definition~\ref{def:widening-subtyping}); this suffices for
association because ${\subtw} \subseteq {\subt}$.
The proof is by coinduction, constructing $\Ti$ by case-splitting on
the rule of $\cofullprojsub{p}$ and appealing to the coinductive
hypothesis for the premises.

\emph{Cases \rulename{SPR1}/\rulename{SPR2}.}
$\G = \Gvt{\pp}{\pq}{\B}{\G[0]}$ and
$\T = \tout{\pq}{\B}{\T[0]}$
(resp.\ $\tin{\pq}{\B}{\T[0]}$),
with $\G[0] \cofullprojsub{p} \T[0]$.
By coinductive hypothesis, there exists $\Ti[0]$ with
$\G[0] \cofullproj{p} \Ti[0]$ and $\T[0] \subtw \Ti[0]$.
Set $\Ti = \tout{\pq}{\B}{\Ti[0]}$
(resp.\ $\tin{\pq}{\B}{\Ti[0]}$).
Rule \rulename{PR1} (resp.\ \rulename{PR2}) for projection;
rule \rulename{S1} (resp.\ \rulename{S2}) for subtyping.

\emph{Case \rulename{SPR3}.}
$\G = \Gvt{\pq}{\pr}{\B}{\G[0]}$,
$\pp \notin \set{\pq, \pr}$,
$\G[0] \cofullprojsub{p} \T$.
By c.h., $\exists \Ti$: $\G[0] \cofullproj{p} \Ti$ and $\T \subtw \Ti$.
Rule \rulename{PR3} gives $\G \cofullproj{p} \Ti$.
Subtyping inherited from the premise.

\emph{Case \rulename{SPR4}.}
$\G = \GvtPair{\pp}{\pq}{\lab[i] : \G[i]}{i \in I}$,
$\T = \tselsub{\pq}{\lab[i] : \T[i]}{i \in I}$,
with $\G[i] \cofullprojsub{p} \T[i]$ for each $i$.
By c.h., for each $i$: $\G[i] \cofullproj{p} \Ti[i]$ and
$\T[i] \subtw \Ti[i]$.
Set $\Ti = \tselsub{\pq}{\lab[i] : \Ti[i]}{i \in I}$.
Rule \rulename{PR4}; rule \rulename{S3} with equal index sets.

\emph{Case \rulename{SPR5}.}
$\G = \GvtPair{\pq}{\pp}{\lab[j] : \G[j]}{j \in J}$,
$\T = \tbrasub{\pq}{\lab[i] : \T[i]}{i \in I}$ with $J \subseteq I$,
and $\G[j] \cofullprojsub{p} \T[j]$ for each $j \in J$.
By c.h., for each $j \in J$: $\G[j] \cofullproj{p} \Ti[j]$ and
$\T[j] \subtw \Ti[j]$.
Set $\Ti = \tbrasub{\pq}{\lab[j] : \Ti[j]}{j \in J}$.
Rule \rulename{PR5}; rule \rulename{S4} with $J \subseteq I$.

\emph{Case \rulename{SPR6}.}
$\G = \GvtPair{\pq}{\pr}{\lab[i] : \G[i]}{i \in I}$,
$\pp \notin \set{\pq, \pr}$,
$\G[i] \cofullprojsub{p} \T$ for all $i$ (same~$\T$).
By c.h., for each $i$:
$\G[i] \cofullproj{p} \Ti[i]$ and $\T \subtw \Ti[i]$.
The types $\set{\Ti[i]}_{i \in I}$ thus share the common widening
lower bound $\T$, so
Lemma~\ref{lemma:widening-bound-yields-merge} provides $\Ti$ with
$\comergefull{\set{\Ti[i]}_{i \in I}}{\Ti}$ and $\T \subtw \Ti$.
Rule \rulename{PR6} gives $\G \cofullproj{p} \Ti$.

\emph{Case \rulename{SPR9}.}
$\G = \toc{\sfmu} \ty . \G[0]$, $\pp \in \pt{\G[0]}$,
$\G[0] \sub{\toc{\sfmu} \ty . \G[0]}{\ty} \cofullprojsub{p} \T$.
By c.h., $\exists \Ti$:
$\G[0] \sub{\toc{\sfmu} \ty . \G[0]}{\ty} \cofullproj{p} \Ti$
and $\T \subtw \Ti$.
Rule \rulename{PR9} gives
$\toc{\sfmu} \ty . \G[0] \cofullproj{p} \Ti$.
Subtyping inherited.

\emph{Case \rulename{SPR10}.}
$\T = \toc{\sfmu} \ty . \T[0]$,
with $\G \cofullprojsub{p} \T[0] \sub{\toc{\sfmu} \ty . \T[0]}{\ty}$.
By c.h., $\exists \Ti$:
$\G \cofullproj{p} \Ti$ and
$\T[0] \sub{\toc{\sfmu} \ty . \T[0]}{\ty} \subtw \Ti$.
Projection $\G \cofullproj{p} \Ti$ directly; and since
$\unfold{\toc{\sfmu} \ty . \T[0]} =
\unfold{\T[0] \sub{\toc{\sfmu} \ty . \T[0]}{\ty}}$,
Lemma~\ref{lemma:unfold-replacement}(1) gives
$\toc{\sfmu} \ty . \T[0] \subtw \Ti$.

\emph{Cases \rulename{SPR11}/\rulename{SPR12}.}
$\T = \tend$. Set $\Ti = \tend$.
Rules \rulename{PR11}/\rulename{PR12} and rule \rulename{S7}.

\medskip
\noindent
This proves the decomposition claim.
To conclude association, split $\ctx = \ctxact, \ctxterm$, where
$\ctxact$ is the restriction of $\ctx$ to $\pt{\G}$ (recall
$\dom{\ctx} \supseteq \pt{\G}$).
For each $\pp \in \pt{\G}$, the decomposition applied to
$\G \cofullprojsub{p} \ctx(\pp)$ yields $\Ti[\pp]$ with
$\G \cofullproj{p} \Ti[\pp]$ and $\ctx(\pp) \subtw \Ti[\pp]$, hence
$\ctx(\pp) \subt \Ti[\pp]$ since ${\subtw} \subseteq {\subt}$;
defining $\ctxi$ by $\ctxi(\pp) = \Ti[\pp]$, we have
$\dom{\ctxi} = \pt{\G}$,
$\G \cofullproj{p} \ctxi(\pp)$ for each $\pp$, and
$\ctxact \subt \ctxi$ componentwise.
For each $\pp \in \dom{\ctx} \setminus \pt{\G}$,
Lemma~\ref{lemma:completed-participants}(1) gives
$\unfold{\ctx(\pp)} = \tend$, so $\ctxterm$ contains only
terminated endpoints.
By Definition~\ref{def:association}, $\ctx \assoc \G$.
\end{proof}

\subsubsection{Balancedness}

\synthesisedbalanced*
\begin{proof}[Proof of Lemma~\ref{lemma:synthesised-balanced}]
We show that for every subterm $\Gi \in \subterms{\G}$,
$\pt{\Gi} \subseteq \unavoidable(\Gi)$
(the reverse inclusion is immediate from the definition).
Equivalently, we show that every $\pp \in \pt{\Gi}$
appears on every path through the unfolded tree of~$\Gi$;
this path formulation agrees with the inductive definition of
$\unavoidable$ since global types are finitely branching.

Each subterm $\Gi$ of $\G$ corresponds to a synthesis call
$\textsf{synth}(\Sigma, \pp[k], \ctxi)$ for some reachable context~$\ctxi$
(i.e.\ $\ctx \transi{}^{*} \ctxi$).
By repeated application of
Lemma~\ref{lemma:preservation-reduction},
$\slive(\ctxi)$.

Consider any such subterm $\Gi$ with corresponding context $\ctxi$.
Let $\pp \in \pt{\Gi}$ and consider any path $\pi$ through the
unfolded tree of $\Gi$.
We must show $\pp$ appears on~$\pi$.

The path $\pi$ fixes a specific label at each branching node.
Together with the round-robin schedule, this determines a
maximal round-robin reduction sequence from~$\ctxi$.
By Lemma~\ref{lemma:projection-of-round-robin-sequences-is-fair},
this projects onto a \emph{fair} reduction sequence from~$\ctxi$.
Since $\live(\ctxi)$, this fair sequence is also \emph{live}
(Definition~\ref{def:livecontexts}).

We claim that every participant $\pp$ with
$\unfold{\ctxi(\pp)} \neq \tend$ communicates on this path.
By Lemma~\ref{lemma:end-empty} (contrapositive), $\pp$ has a
one-sided action $\czeta$ at $\ctxi$, and $\safe(\ctxi)$ holds by
$\slive(\ctxi)$, so
Lemma~\ref{lemma:fair-live-communication} yields a step of the path
involving $\pp$.
After $\pp$ communicates, its type may change; if $\pp$ still has a
one-sided action at the later position,
Lemma~\ref{lemma:fair-live-communication} applies again, and so on
until $\pp$ reaches~$\tend$.

It remains to connect $\pp \in \pt{\Gi}$ with the hypothesis of the
claim. If $\unfold{\ctxi(\pp)} = \tend$, then
Lemma~\ref{lemma:terminated-participant-not-appears} gives
$\pp \notin \pt{\Gi}$, a contradiction; hence
$\unfold{\ctxi(\pp)} \neq \tend$, the claim applies, and $\pp$
appears on~$\pi$.

Since $\pp$ appears on every path through $\Gi$, we have
$\pp \in \unavoidable(\Gi)$.
As this holds for all $\pp \in \pt{\Gi}$ and all
$\Gi \in \subterms{\G}$, the type $\G$ is balanced.
\end{proof}

\subsubsection{Main Theorems}

\completenessofassociation*

The proof goes through the combined relation.
We prove the key completeness lemma for the combined relation,
from which Lemma~\ref{lemma:completeness-association} follows via
Lemma~\ref{lemma:composite-implies-association}.

\begin{lemma}[Completeness of the Combined Relation]
\label{lemma:completeness-combined}
Let $\G$ be balanced, let $\inter = \interaction{\pp}{\pq}{\obs}$ with
$\pp, \pq \in \pt{\G}$, and let $\ctx$ be a typing context with
$\dom{\ctx} \supseteq \pt{\G}$ and
$\G \cofullprojsub{x} \ctx(\role{x})$ for every
$\role{x} \in \pt{\G}$. Suppose:
\begin{enumerate}[leftmargin=*, label=(\roman*)]
\item $\obs = \B$, with
  $\unfold{\ctx(\pp)} = \tout{\pq}{\B}{\T[\pp]}$ and
  $\unfold{\ctx(\pq)} = \tin{\pp}{\B}{\T[\pq]}$; or
\item $\obs = \lab[j]$, with
  $\unfold{\ctx(\pp)} = \tselsub{\pq}{\lab[i] : \T[\pp,i]}{i \in I_s}$,
  $j \in I_s$, and
  $\unfold{\ctx(\pq)} = \tbrasub{\pp}{\lab[i] : \T[\pq,i]}{i \in I_r}$.
\end{enumerate}
Then, in case~(ii), $I_s \subseteq I_r$; and, writing
$\T[\pp] = \T[\pp,j]$ and $\T[\pq] = \T[\pq,j]$
in case~(ii), there exists a balanced $\Gi$ with
$\G \transg{\inter} \Gi$, $\pt{\Gi} \subseteq \pt{\G}$, and
\[
\Gi \cofullprojsub{p} \T[\pp],
\qquad
\Gi \cofullprojsub{q} \T[\pq],
\qquad
\Gi \cofullprojsub{x} \ctx(\role{x})
\ \text{ for every } \role{x} \in \pt{\G} \setminus \set{\pp,\pq}.
\]
\end{lemma}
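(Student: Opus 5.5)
We will prove the lemma by well-founded induction on the depth of the first occurrence of an interaction between $\pp$ and $\pq$ along every path of the unfolded tree of $\G$. Balancedness guarantees that this depth is finite, and that it is bounded uniformly over all branches. Since $\pp \in \pt{\G} = \unavoidable(\G)$, every path of $\unfold{\G}$ eventually mentions $\pp$. The plan is to show that the first event on each path involving $\pp$ or $\pq$ must be an interaction between exactly $\pp$ and $\pq$. Because $\G$ has finitely many subterms, the depth of such first events is bounded, and we induct on it. Throughout, we replace $\G$ by $\unfold{\G}$ using Lemma~\ref{lemma:unfeq-properties} and rule \rulename{GR3}. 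We also replace each $\ctx(\role{x})$ by its unfolding using Lemma~\ref{lemma:unfold-replacement}(2).

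\emph{Base case: the head of $\unfold{\G}$ involves $\pp$ or $\pq$.} We apply Lemma~\ref{lemma:inversion-combined-relation} at $\pp$ and at $\pq$. If the head is $\pp\gar\pr$ or $\pr\gar\pp$, clauses~1--4 force $\unfold{\ctx(\pp)}$ to carry that head. Comparing with the hypothesis that $\pp$ sends to $\pq$, we get $\pr = \pq$, with $\pp$ as sender and the matching payload sort $\B$ in case~(i). The symmetric argument at $\pq$ excludes a head $\pq\gar\pr$ with $\pr \ne \pp$. Thus $\unfold{\G}$ is $\Gvt{\pp}{\pq}{\B}{\G[0]}$ or $\GvtPair{\pp}{\pq}{\lab[i]:\G[i]}{i\in I}$.

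In case~(ii), clause~3 at $\pp$ gives $I = I_s$ together with $\G[i] \cofullprojsub{p} \T[\pp,i]$. Clause~4 at $\pq$ gives $I_r \supseteq I = I_s$ together with $\G[i] \cofullprojsub{q} \T[\pq,i]$. We then take $\Gi = \G[j]$, using \rulename{GR4} (or \rulename{GR1} in case~(i)). For every other $\role{x}$, clause~5 or~6 yields $\Gi \cofullprojsub{x} \ctx(\role{x})$. Balancedness of $\Gi$ and $\pt{\Gi}\subseteq\pt{\G}$ are immediate, because $\Gi$ is a subterm.

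\emph{Inductive case: the head is foreign, i.e.\ $\pr\gar\role{s}$ with $\{\pr,\role{s}\}\cap\{\pp,\pq\}=\emptyset$.} By clauses~5 and~6 of Lemma~\ref{lemma:inversion-combined-relation}, each continuation $\G[k]$ still satisfies $\G[k]\cofullprojsub{p}\ctx(\pp)$ and $\G[k]\cofullprojsub{q}\ctx(\pq)$ with the same local types. Hence the hypotheses (i)/(ii) are unchanged.

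We need $\pp,\pq\in\pt{\G[k]}$. This follows from Lemma~\ref{lemma:completed-participants}(1): otherwise $\unfold{\ctx(\pp)}=\tend$, contradicting the hypothesis. For $\role{x}\in\{\pr,\role{s}\}$, we pass to the branch continuation of $\ctx(\role{x})$ obtained by the inversion clauses. For every other $\role{x}$ the local type is unchanged, though it may have been removed from $\pt{\G[k]}$; such participants are handled by Lemma~\ref{lemma:completed-participants}. So we must restate the induction hypothesis for arbitrary $\ctx$ satisfying the relation on $\pt{\G}$, and rebuild the relation at the end.

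The induction hypothesis gives $\G[k]\transg{\inter}\Gi[k]$ with the required relations. Rule \rulename{GR2} or \rulename{GR5} then yields $\Gi=\pr\gar\role{s}\{\lab[k]:\Gi[k]\}$. The relations for $\pp$, $\pq$ and the uninvolved participants are re-established by \rulename{SPR3}/\rulename{SPR6}. For $\pr$ and $\role{s}$ they are re-established by \rulename{SPR1}/\rulename{SPR2}/\rulename{SPR4}/\rulename{SPR5}, reusing the original label sets.

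\emph{Main obstacle.} The hardest step is showing that $\Gi$ remains balanced. Balancedness of $\Gi$ requires $\pt{\Gi[k]}$ to coincide across branches modulo the unavoidable intersection. Removing the $\pp\gar\pq$ step can drop participants occurring only there, namely $\pp$ and $\pq$ themselves when their continuations are $\tend$. We would prove a small lemma: for each $\role{x}$, we have $\role{x}\in\pt{\Gi}$ iff $\unfold{}$ of its new local type is not $\tend$. This uses Lemma~\ref{lemma:completed-participants}(1),(3) on the relations just built. The statement is uniform across branches, since the local types of $\pp$ and $\pq$ are shared, and hence unavoidability is preserved.

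A second subtlety is the $\toc{\sfmu}$ case. Since $\Gi$ need not be literally a subterm, we argue up to $\unfeq$. The measure must be taken on the unfolded regular tree, which is finite by balancedness together with the finiteness of the set of subterms.
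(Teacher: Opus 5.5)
Your proposal is correct and follows essentially the paper's own proof. Both use strong induction on how deep the first $\pp$-event lies: the paper measures this as the fixed-point stage $\mathrm{depth}_{\pp}$ at which $\pp$ enters $\unavoidable$, which coincides with your path-wise bound. Both then apply the same head analysis via Lemma~\ref{lemma:inversion-combined-relation}, rebuild the relations under foreign heads with \rulename{SPR1}--\rulename{SPR6}, \rulename{SPR10} and Lemma~\ref{lemma:completed-participants}, and prove the new root balanced by the same uniform-across-branches use of Lemma~\ref{lemma:completed-participants}(1),(3). The deviations are cosmetic: you obtain $\pp,\pq\in\pt{\G[k]}$ from Lemma~\ref{lemma:completed-participants}(1) rather than from balancedness, and the appeal to $\unfeq$ is unnecessary, since $\Gi$ is literally a continuation of $\unfold{\G}$ and every notion involved is invariant under unfolding.
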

\begin{proof}
Throughout we reason about $\unfold{\G}$ and reintroduce leading
$\toc{\sfmu}$-binders in the constructed transition via (iterated)
rule \rulename{GR3}; note that $\pt{}$, $\subterms{}$, balancedness,
and all hypotheses are invariant under unfolding.

\emph{Measure.}
Since $\G$ is balanced, $\pp \in \pt{\Gii} = \unavoidable(\Gii)$ for every
$\Gii \in \subterms{\G}$ with $\pp \in \pt{\Gii}$, and $\unavoidable$ is a
least fixed point computed over the finitely many subterms of $\G$
(Definition~\ref{def:subterms-unavoidable}).
For such $\Gii$, let $\mathrm{depth}_{\pp}(\Gii)$ be the least stage of this
fixed-point computation at which $\pp$ enters $\unavoidable(\Gii)$.
By construction, $\mathrm{depth}_{\pp}(\Gii) = 1$ iff $\pp$ occurs in the
head of $\unfold{\Gii}$;
if $\unfold{\Gii} = \Gvt{\pr}{\role{s}}{\Bi}{\Gii[0]}$ with
$\pp \notin \set{\pr,\role{s}}$ then
$\mathrm{depth}_{\pp}(\Gii) = \mathrm{depth}_{\pp}(\Gii[0]) + 1$;
and if $\unfold{\Gii} = \GvtPair{\pr}{\role{s}}{\lab[i] : \Gii[i]}{i \in I}$
with $\pp \notin \set{\pr,\role{s}}$ then
$\mathrm{depth}_{\pp}(\Gii) = 1 + \max_{i \in I} \mathrm{depth}_{\pp}(\Gii[i])$.
We proceed by strong induction on $\mathrm{depth}_{\pp}(\G)$.

\emph{Any head involving $\pp$ or $\pq$ is the $(\pp,\pq)$-head.}
Suppose the head of $\unfold{\G}$ involves $\pp$.
If $\pp$ were the receiver, or the sender towards some
$\role{r} \neq \pq$, then
Lemma~\ref{lemma:inversion-combined-relation}(1)--(4) applied to
$\G \cofullprojsub{p} \ctx(\pp)$ would force $\unfold{\ctx(\pp)}$
to be an input, or an output directed at $\role{r}$, contradicting
the shape hypothesised in (i)/(ii); for the same reason the kind of
the head (value or choice) matches the hypothesis case.
Symmetrically, a head involving $\pq$ forces, via
$\G \cofullprojsub{q} \ctx(\pq)$ and the input shape of
$\unfold{\ctx(\pq)}$, the partner to be $\pp$.
Consequently, if $\mathrm{depth}_{\pp}(\G) \geq 2$ then the head of
$\unfold{\G}$ involves neither $\pp$ nor $\pq$.

\emph{Base case $\mathrm{depth}_{\pp}(\G) = 1$.}
In case~(i), $\unfold{\G} = \Gvt{\pp}{\pq}{\B}{\G[0]}$.
Lemma~\ref{lemma:inversion-combined-relation}(1) at $\pp$ and (2) at
$\pq$ give (matching the hypothesised shapes, which determine the
continuations uniquely)
$\G[0] \cofullprojsub{p} \T[\pp]$ and
$\G[0] \cofullprojsub{q} \T[\pq]$, while clause~(5),
instantiated at each $\role{x} \in \pt{\G} \setminus \set{\pp,\pq}$,
gives $\G[0] \cofullprojsub{x} \ctx(\role{x})$.
Rules \rulename{GR3}/\rulename{GR1} give
$\G \transg{\inter} \G[0]$; set $\Gi = \G[0]$.
Since $\Gi \in \subterms{\G}$, it is balanced and
$\pt{\Gi} \subseteq \pt{\G}$.

In case~(ii),
$\unfold{\G} = \GvtPair{\pp}{\pq}{\lab[i] : \G[i]}{i \in I^{\dagger}}$.
Clause~(3) at $\pp$ forces $I^{\dagger} = I_s$ and gives
$\G[i] \cofullprojsub{p} \T[\pp,i]$ for all $i$; clause~(4) at $\pq$
gives $I_r \supseteq I^{\dagger} = I_s$ (in particular $j \in I_r$,
proving the first claim) and
$\G[i] \cofullprojsub{q} \T[\pq,i]$ for $i \in I^{\dagger}$;
clause~(6) gives
$\G[i] \cofullprojsub{x} \ctx(\role{x})$ for all $i$ and all
$\role{x} \in \pt{\G} \setminus \set{\pp,\pq}$.
Rules \rulename{GR3}/\rulename{GR4} (with $j \in I^{\dagger}$) give
$\G \transg{\inter} \G[j]$; set $\Gi = \G[j]$, a balanced subterm
with $\pt{\Gi} \subseteq \pt{\G}$, and instantiate the displayed
relations at $i = j$.

\emph{Inductive case, foreign value head.}
$\unfold{\G} = \Gvt{\pr}{\role{s}}{\Bi}{\G[0]}$ with
$\set{\pr,\role{s}} \cap \set{\pp,\pq} = \emptyset$ (by the head
analysis above).
Clauses (1), (2), and (5) of
Lemma~\ref{lemma:inversion-combined-relation} give
$\unfold{\ctx(\pr)} = \tout{\role{s}}{\Bi}{\T[\pr]}$ with
$\G[0] \cofullprojsub{r} \T[\pr]$, symmetrically
$\unfold{\ctx(\role{s})} = \tin{\pr}{\Bi}{\T[\role{s}]}$ with
$\G[0] \cofullprojsub{s} \T[\role{s}]$, and
$\G[0] \cofullprojsub{x} \ctx(\role{x})$ for every other
$\role{x} \in \pt{\G}$.
Since
$\pt{\G} = \unavoidable(\unfold{\G}) =
\set{\pr,\role{s}} \cup \unavoidable(\G[0])$ and
$\pp, \pq \notin \set{\pr, \role{s}}$, we get
$\pp, \pq \in \unavoidable(\G[0]) \subseteq \pt{\G[0]}$; moreover
$\G[0] \in \subterms{\G}$ is balanced,
$\pt{\G[0]} \subseteq \pt{\G}$, and
$\mathrm{depth}_{\pp}(\G[0]) = \mathrm{depth}_{\pp}(\G) - 1$.
The induction hypothesis applies to $\G[0]$ with the context
$\ctx[0]$ defined by $\ctx[0](\pr) = \T[\pr]$,
$\ctx[0](\role{s}) = \T[\role{s}]$, and
$\ctx[0](\role{x}) = \ctx(\role{x})$ otherwise
(the hypotheses at $\pp, \pq$ are unchanged), yielding a balanced
$\Gi[0]$ with $\G[0] \transg{\inter} \Gi[0]$,
$\pt{\Gi[0]} \subseteq \pt{\G[0]}$, and the residual relations.
Rules \rulename{GR3}/\rulename{GR2}
($\set{\pr,\role{s}} \cap \set{\pp,\pq} = \emptyset$) give
$\G \transg{\inter} \Gi$ for
$\Gi = \Gvt{\pr}{\role{s}}{\Bi}{\Gi[0]}$.

We check the required relations for $\Gi$.
For $\pp$: from $\Gi[0] \cofullprojsub{p} \T[\pp]$, rule
\rulename{SPR3} gives $\Gi \cofullprojsub{p} \T[\pp]$;
likewise for $\pq$, and for every
$\role{x} \in \pt{\G} \setminus \set{\pp,\pq,\pr,\role{s}}$
(such $\role{x}$ lie in $\pt{\G[0]}$, so the induction hypothesis
covers them, with $\ctx[0](\role{x}) = \ctx(\role{x})$).
For $\pr$: if $\pr \in \pt{\G[0]}$, the induction hypothesis gives
$\Gi[0] \cofullprojsub{r} \T[\pr]$; otherwise
Lemma~\ref{lemma:completed-participants}(1) gives
$\unfold{\T[\pr]} = \tend$ and, since
$\pr \notin \pt{\Gi[0]} \subseteq \pt{\G[0]}$,
Lemma~\ref{lemma:completed-participants}(2) gives
$\Gi[0] \cofullprojsub{r} \T[\pr]$ as well.
Rule \rulename{SPR1} then yields
$\Gi \cofullprojsub{r} \tout{\role{s}}{\Bi}{\T[\pr]} = \unfold{\ctx(\pr)}$,
and closure under \rulename{SPR10} yields
$\Gi \cofullprojsub{r} \ctx(\pr)$.
The case of $\role{s}$ is symmetric, via \rulename{SPR2}.
Finally, $\Gi[0]$ is balanced, so
$\unavoidable(\Gi) = \set{\pr,\role{s}} \cup \unavoidable(\Gi[0])
= \set{\pr,\role{s}} \cup \pt{\Gi[0]} = \pt{\Gi}$ and
$\subterms{\Gi} = \set{\Gi} \cup \subterms{\Gi[0]}$; hence $\Gi$ is
balanced, and
$\pt{\Gi} = \set{\pr,\role{s}} \cup \pt{\Gi[0]} \subseteq \pt{\G}$.

\emph{Inductive case, foreign choice head.}
$\unfold{\G} = \GvtPair{\pr}{\role{s}}{\lab[i] : \G[i]}{i \in I_0}$
with $\set{\pr,\role{s}} \cap \set{\pp,\pq} = \emptyset$.
Clauses (3), (4), and (6) of
Lemma~\ref{lemma:inversion-combined-relation} give
$\unfold{\ctx(\pr)} = \tselsub{\role{s}}{\lab[i] : \T[\pr,i]}{i \in I_0}$
with $\G[i] \cofullprojsub{r} \T[\pr,i]$ for all $i$,
$\unfold{\ctx(\role{s})} = \tbrasub{\pr}{\lab[i] : \T[\role{s},i]}{i \in I_{\role{s}}}$
with $I_0 \subseteq I_{\role{s}}$ and
$\G[i] \cofullprojsub{s} \T[\role{s},i]$ for $i \in I_0$, and
$\G[i] \cofullprojsub{x} \ctx(\role{x})$ for all $i$ and all other
$\role{x} \in \pt{\G}$.
By balancedness,
$\pt{\G} = \set{\pr,\role{s}} \cup \bigcap_{i} \unavoidable(\G[i])$,
so for every branch $i$:
$\pp, \pq \in \unavoidable(\G[i]) \subseteq \pt{\G[i]}$, and moreover
\begin{equation}
\pt{\G} \setminus \set{\pr,\role{s}} \subseteq \pt{\G[i]}
\subseteq \pt{\G}.
\tag{$\ast$}
\end{equation}
Each $\G[i] \in \subterms{\G}$ is balanced and
$\mathrm{depth}_{\pp}(\G[i]) < \mathrm{depth}_{\pp}(\G)$.
The induction hypothesis applies in each branch with the context
$\ctx[i]$ defined by $\ctx[i](\pr) = \T[\pr,i]$,
$\ctx[i](\role{s}) = \T[\role{s},i]$, and
$\ctx[i](\role{x}) = \ctx(\role{x})$ otherwise,
yielding balanced $\Gi[i]$ with $\G[i] \transg{\inter} \Gi[i]$,
$\pt{\Gi[i]} \subseteq \pt{\G[i]}$,
$\Gi[i] \cofullprojsub{p} \T[\pp]$,
$\Gi[i] \cofullprojsub{q} \T[\pq]$, and
$\Gi[i] \cofullprojsub{x} \ctx[i](\role{x})$ for every
$\role{x} \in \pt{\G[i]} \setminus \set{\pp,\pq}$.
Since the \emph{same} interaction $\inter$ fires in every branch,
rules \rulename{GR3}/\rulename{GR5} give
$\G \transg{\inter} \Gi$ for
$\Gi = \GvtPair{\pr}{\role{s}}{\lab[i] : \Gi[i]}{i \in I_0}$.

We check the required relations for $\Gi$.
For $\pp$: the relations
$\Gi[i] \cofullprojsub{p} \T[\pp]$ hold for every branch
\emph{with the same local type}, so rule \rulename{SPR6} gives
$\Gi \cofullprojsub{p} \T[\pp]$; likewise for $\pq$, and,
using $\ctx[i](\role{x}) = \ctx(\role{x})$ together
with~($\ast$), for every
$\role{x} \in \pt{\G} \setminus \set{\pp,\pq,\pr,\role{s}}$.
For $\pr$: in each branch, either $\pr \in \pt{\G[i]}$ and the
induction hypothesis gives $\Gi[i] \cofullprojsub{r} \T[\pr,i]$, or
$\pr \notin \pt{\G[i]}$ and
Lemma~\ref{lemma:completed-participants}(1)--(2) gives the same, as
in the previous case.
Rule \rulename{SPR4} then yields
$\Gi \cofullprojsub{r}
\tselsub{\role{s}}{\lab[i] : \T[\pr,i]}{i \in I_0} = \unfold{\ctx(\pr)}$,
and \rulename{SPR10} closes to $\ctx(\pr)$.
For $\role{s}$, rule \rulename{SPR5} applies with
$I_0 \subseteq I_{\role{s}}$, again closed by \rulename{SPR10}.

Balancedness of $\Gi$: each $\Gi[i]$ is balanced, so it suffices to
check the root, i.e.\ that every
$\role{x} \in \pt{\Gi[i]} \setminus \set{\pr,\role{s}}$ lies in
$\pt{\Gi[k]}$ for every $k \in I_0$.
Let $\T[\role{x}]$ be $\T[\pp]$, $\T[\pq]$,
or $\ctx(\role{x})$ according to whether $\role{x}$ is $\pp$,
$\pq$, or another participant; by~($\ast$) and the above, the
relation $\Gi[k] \cofullprojsub{x} \T[\role{x}]$ holds in
\emph{every} branch $k$, with the same $\T[\role{x}]$.
Lemma~\ref{lemma:completed-participants}(3) in branch $i$ gives
$\unfold{\T[\role{x}]} \neq \tend$, and then
Lemma~\ref{lemma:completed-participants}(1) in branch $k$ gives
$\role{x} \in \pt{\Gi[k]}$.
Hence
$\unavoidable(\Gi) = \set{\pr,\role{s}} \cup \bigcap_k \pt{\Gi[k]}
= \set{\pr,\role{s}} \cup \bigcup_k \pt{\Gi[k]} = \pt{\Gi}$,
so $\Gi$ is balanced; and
$\pt{\Gi} \subseteq \set{\pr,\role{s}} \cup \bigcup_k \pt{\G[k]}
\subseteq \pt{\G}$.
\end{proof}

\begin{proof}[Proof of Lemma~\ref{lemma:completeness-association}]
Let $\inter = \interaction{\pp}{\pq}{\obs}$.
Since $\ctx \transi{\inter} \ctxi$, Lemma~\ref{lemma:semantics}
gives the head shapes:
if $\obs = \B$, then
$\unfold{\ctx(\pp)} = \tout{\pq}{\B}{\ctxi(\pp)}$ and
$\unfold{\ctx(\pq)} = \tin{\pp}{\B}{\ctxi(\pq)}$;
if $\obs = \lab[j]$, then
$\unfold{\ctx(\pp)} = \tselsub{\pq}{\lab[i] : \T[i]}{i \in I}$ and
$\unfold{\ctx(\pq)} = \tbrasub{\pp}{\lab[i] : \Ti[i]}{i \in J}$ with
$j \in I \cap J$, $\ctxi(\pp) = \T[j]$, and $\ctxi(\pq) = \Ti[j]$;
in both cases $\ctxi(\role{x}) = \ctx(\role{x})$ for
$\role{x} \notin \set{\pp,\pq}$.
In particular
$\unfold{\ctx(\pp)} \neq \tend \neq \unfold{\ctx(\pq)}$, so
$\pp$ and $\pq$ cannot lie in the $\ctxterm$ part of the split of
$\ctx$; hence $\pp, \pq \in \pt{\G}$, and clause~(1) of
Definition~\ref{def:association} provides a witness $\ctx[0]$ with
$\dom{\ctx[0]} = \pt{\G}$,
$\G \cofullproj{x} \ctx[0](\role{x})$, and
$\ctx(\role{x}) \subt \ctx[0](\role{x})$ for all
$\role{x} \in \pt{\G}$.

By Lemma~\ref{lemma:inversion-subtyping-sub} applied to
$\ctx(\pp) \subt \ctx[0](\pp)$ and $\ctx(\pq) \subt \ctx[0](\pq)$:
if $\obs = \B$, then
$\unfold{\ctx[0](\pp)} = \tout{\pq}{\B}{\T[\pp]}$ with
$\ctxi(\pp) \subt \T[\pp]$, and
$\unfold{\ctx[0](\pq)} = \tin{\pp}{\B}{\T[\pq]}$ with
$\ctxi(\pq) \subt \T[\pq]$;
if $\obs = \lab[j]$, then
$\unfold{\ctx[0](\pp)} =
\tselsub{\pq}{\lab[i] : \T[\pp,i]}{i \in I_s}$ with
$I \subseteq I_s$ (so $j \in I_s$) and
$\ctxi(\pp) = \T[j] \subt \T[\pp,j]$, and
$\unfold{\ctx[0](\pq)} =
\tbrasub{\pp}{\lab[i] : \T[\pq,i]}{i \in I_r}$ with
$I_r \subseteq J$ and $\Ti[i] \subt \T[\pq,i]$ for all $i \in I_r$.

By Lemma~\ref{lemma:projection-in-combined},
$\G \cofullprojsub{x} \ctx[0](\role{x})$ for every
$\role{x} \in \pt{\G}$, so
Lemma~\ref{lemma:completeness-combined} applies to $\ctx[0]$.
In the choice case it gives $I_s \subseteq I_r$, so $j \in I_r$ and
$\ctxi(\pq) = \Ti[j] \subt \T[\pq,j] = \T[\pq]$.
In both cases it yields a balanced $\Gi$ with
$\G \transg{\inter} \Gi$, $\pt{\Gi} \subseteq \pt{\G}$,
$\Gi \cofullprojsub{p} \T[\pp]$,
$\Gi \cofullprojsub{q} \T[\pq]$, and
$\Gi \cofullprojsub{x} \ctx[0](\role{x})$ for
$\role{x} \in \pt{\G} \setminus \set{\pp,\pq}$.

Define $\ctx[0]'$ with $\dom{\ctx[0]'} = \pt{\G}$ by
$\ctx[0]'(\pp) = \T[\pp]$,
$\ctx[0]'(\pq) = \T[\pq]$, and
$\ctx[0]'(\role{x}) = \ctx[0](\role{x})$ otherwise.
The relations above give $\Gi \cofullprojsubctx \ctx[0]'$
(note $\dom{\ctx[0]'} = \pt{\G} \supseteq \pt{\Gi}$), so
Lemma~\ref{lemma:composite-implies-association} yields
$\ctx[0]' \assoc \Gi$.
Moreover, $\ctxi(\role{x}) \subt \ctx[0]'(\role{x})$ for every
$\role{x} \in \pt{\G}$: at $\pp$ and $\pq$ this was established
above, and elsewhere
$\ctxi(\role{x}) = \ctx(\role{x}) \subt \ctx[0](\role{x})$.

We conclude $\ctxi \assoc \Gi$, taking the split of $\ctxi$ whose
first part is the restriction of $\ctxi$ to $\pt{\Gi}$ (note
$\dom{\ctxi} = \dom{\ctx} \supseteq \pt{\G} \supseteq \pt{\Gi}$).
For clause~(2), consider
$\role{x} \in \dom{\ctxi} \setminus \pt{\Gi}$.
If $\role{x} \notin \pt{\G}$, then $\role{x} \notin \set{\pp,\pq}$ and
$\unfold{\ctxi(\role{x})} = \unfold{\ctx(\role{x})} = \tend$ by
clause~(2) for $\ctx \assoc \G$;
if $\role{x} \in \pt{\G} \setminus \pt{\Gi}$, then
clause~(2) for $\ctx[0]' \assoc \Gi$ gives
$\unfold{\ctx[0]'(\role{x})} = \tend$, and from
$\ctxi(\role{x}) \subt \ctx[0]'(\role{x})$,
Lemma~\ref{lemma:unfold-replacement} and
Lemma~\ref{lemma:inversion-subtyping}(5) force
$\unfold{\ctxi(\role{x})} = \tend$.
For clause~(1), clause~(1) of $\ctx[0]' \assoc \Gi$ provides
$\ctx[0]''$ with
$\dom{\ctx[0]''} = \pt{\Gi}$,
$\Gi \cofullproj{x} \ctx[0]''(\role{x})$, and
$\ctx[0]'(\role{x}) \subt \ctx[0]''(\role{x})$ for
$\role{x} \in \pt{\Gi}$.
By Lemma~\ref{lemma:subtyping-transitive},
$\ctxi(\role{x}) \subt \ctx[0]''(\role{x})$ for
$\role{x} \in \pt{\Gi}$, so the same witness $\ctx[0]''$ serves for
$\ctxi$.
\end{proof}

\soundnessofassociation*
\begin{proof}[Proof of Lemma~\ref{lemma:soundness-association}]
Since $\G \transg{\inter} \Gi$, the unfolding of $\G$ has a
communication head, whose participants lie in $\pt{\G}$; in
particular $\pt{\G} \neq \emptyset$.
By clause~(1) of Definition~\ref{def:association}, there exists
$\ctx[0]$ such
that $\dom{\ctx[0]} = \pt{\G}$, $\G \cofullproj{p} \ctx[0](\pp)$
for all $\pp \in \pt{\G}$, and
$\ctx(\pp) \subt \ctx[0](\pp)$ for all $\pp \in \pt{\G}$.
By Lemma~\ref{lemma:projection-in-combined},
$\G \cofullprojsub{p} \ctx[0](\pp)$ for all $\pp \in \pt{\G}$;
below we invert these combined-relation judgements.

We proceed by induction on the derivation of
$\G \transg{\inter} \Gi$ to construct $\interi$ and $\ctxi$ with
$\ctx \transi{\interi} \ctxi$.
Once such $\interi, \ctxi$ are obtained, the global-side witness
$\Gii$ follows from Completeness of Association
(Lemma~\ref{lemma:completeness-association}) applied to
$\ctx \assoc \G$ (with $\G$ balanced) and
$\ctx \transi{\interi} \ctxi$:
there exists a balanced $\Gii$ with $\G \transg{\interi} \Gii$ and
$\ctxi \assoc \Gii$.

\emph{Case \rulename{GR1}.}
$\G = \Gvt{\pp}{\pq}{\B}{\G[0]}$, $\inter = \interaction{\pp}{\pq}{\B}$,
and $\Gi = \G[0]$.
Lemma~\ref{lemma:inversion-combined-relation} applied to
$\G \cofullprojsub{p} \ctx[0](\pp)$ and
$\G \cofullprojsub{q} \ctx[0](\pq)$ gives
$\unfold{\ctx[0](\pp)} = \tout{\pq}{\B}{\T[\pp]}$ and
$\unfold{\ctx[0](\pq)} = \tin{\pp}{\B}{\T[\pq]}$.
Since $\ctx(\pp) \subt \ctx[0](\pp)$ and
$\ctx(\pq) \subt \ctx[0](\pq)$,
Lemma~\ref{lemma:inversion-subtyping}\,(1)--(2) yields
$\unfold{\ctx(\pp)} = \tout{\pq}{\B}{\T'[\pp]}$ and
$\unfold{\ctx(\pq)} = \tin{\pp}{\B}{\T'[\pq]}$.
Rules \rulename{LR1}, \rulename{LR2}, \rulename{LTag}, and
\rulename{LEnv} give
$\ctx \transi{\interaction{\pp}{\pq}{\B}} \ctxi$ for some $\ctxi$.
Take $\interi = \inter$.

\emph{Case \rulename{GR4}.}
$\G = \GvtPair{\pp}{\pq}{\lab[i] : \G[i]}{i \in I}$,
$\inter = \interaction{\pp}{\pq}{\lab[j]}$ with $j \in I$,
and $\Gi = \G[j]$.
Lemma~\ref{lemma:inversion-combined-relation} gives
$\unfold{\ctx[0](\pp)} = \tselsub{\pq}{\lab[i] : \T[\pp,i]}{i \in I}$
and
$\unfold{\ctx[0](\pq)} = \tbrasub{\pp}{\lab[i] : \T[\pq,i]}{i \in I'}$
with $I \subseteq I'$.
Since $\ctx(\pp) \subt \ctx[0](\pp)$ and
$\ctx(\pq) \subt \ctx[0](\pq)$,
Lemma~\ref{lemma:inversion-subtyping}\,(3)--(4) yields
$\unfold{\ctx(\pp)} = \tselsub{\pq}{\lab[i] : \T'[\pp,i]}{i \in I_s}$
with $I_s \subseteq I$, and
$\unfold{\ctx(\pq)} = \tbrasub{\pp}{\lab[i] : \T'[\pq,i]}{i \in I_r}$
with $I_r \supseteq I'$.
By well-formedness of local types, $I_s \neq \emptyset$;
pick any $k \in I_s$.
Since $I_s \subseteq I \subseteq I' \subseteq I_r$, also $k \in I_r$,
so rules \rulename{LR3}, \rulename{LR4}, \rulename{LTag}, and
\rulename{LEnv} give
$\ctx \transi{\interaction{\pp}{\pq}{\lab[k]}} \ctxi$ for some $\ctxi$.
Take $\interi = \interaction{\pp}{\pq}{\lab[k]}$ (which need not
equal $\inter$ when $k \neq j$).

\emph{Case \rulename{GR2}.}
$\G = \Gvt{\pr}{\role{s}}{\B}{\G[0]}$ with
$\set{\pr,\role{s}}$ disjoint from the participants of $\inter$,
$\G[0] \transg{\inter} \Gi[0]$,
and $\Gi = \Gvt{\pr}{\role{s}}{\B}{\Gi[0]}$.
The head of $\unfold{\G}$ is the same $\pr,\role{s}$ pair, so the
argument of case \rulename{GR1} applies verbatim to
$\unfold{\ctx(\pr)}$ and $\unfold{\ctx(\role{s})}$:
$\ctx \transi{\interaction{\pr}{\role{s}}{\B}} \ctxi$
for some $\ctxi$.
Take $\interi = \interaction{\pr}{\role{s}}{\B}$.
We do not need the inductive hypothesis on
$\G[0] \transg{\inter} \Gi[0]$: $\ctx$ is lifted via the head
communication of $\G$, not the deeper step.

\emph{Case \rulename{GR5}.}
Symmetric to \rulename{GR2}, using
$\unfold{\G} = \GvtPair{\pr}{\role{s}}{\lab[i]:\G[i]}{i \in I}$ and
the argument of case \rulename{GR4} on the head choice.

\emph{Case \rulename{GR3}.}
$\G = \toc{\sfmu}\ty.\G[1]$ and the premise is
$\G[1]\sub{\toc{\sfmu}\ty.\G[1]}{\ty} \transg{\inter} \Gi$.
Since
$\unfold{\G} = \unfold{\G[1]\sub{\toc{\sfmu}\ty.\G[1]}{\ty}}$,
$\ctx \assoc \G$ also gives
$\ctx \assoc \G[1]\sub{\toc{\sfmu}\ty.\G[1]}{\ty}$ (the split of
$\ctx$ and the witness $\ctx[0]$ are the same, and $\pt{}$ and
balancedness are invariant under unfolding).
The inductive hypothesis on the premise yields the desired
$\interi$ and $\ctxi$.

In each case we obtain $\ctx \transi{\interi} \ctxi$.
By Lemma~\ref{lemma:completeness-association}, there exists a
balanced $\Gii$
with $\G \transg{\interi} \Gii$ and $\ctxi \assoc \Gii$.
\end{proof}

\associationimpliestraces*
\begin{proof}[Proof of Lemma~\ref{lemma:association-implies-traces}]
Let $\inter[0]\,\inter[1]\,\cdots \in \trace(\ctx)$, witnessed by
a maximal reduction sequence
$\ctx = \ctx[0] \transi{\inter[0]} \ctx[1] \transi{\inter[1]} \cdots$.
Set $\G[0] = \G$; then $\ctx[0] \assoc \G[0]$ with $\G[0]$ balanced.
By Lemma~\ref{lemma:completeness-association}, applied iteratively,
each step $\ctx[k] \transi{\inter[k]} \ctx[k{+}1]$ yields a balanced
$\G[k{+}1]$ with $\G[k] \transg{\inter[k]} \G[k{+}1]$
and $\ctx[k{+}1] \assoc \G[k{+}1]$,
giving $\G[0] \transg{\inter[0]} \G[1] \transg{\inter[1]} \cdots$
with the same labels.
It remains to check that the constructed sequence is maximal.
If the given sequence is infinite, so is the constructed one.
Otherwise it ends in a stuck $\ctx[n]$ with
$\ctx[n] \assoc \G[n]$;
were $\G[n]$ to have a transition,
Lemma~\ref{lemma:soundness-association} would yield a transition of
$\ctx[n]$, contradicting stuckness.
Hence the constructed sequence is maximal and
$\inter[0]\,\inter[1]\,\cdots \in \trace(\G)$.
\end{proof}

\soundnessimpliestraces*
\begin{proof}[Proof of Lemma~\ref{lemma:soundness-implies-traces}]
Let $\inter[0]\,\inter[1]\,\cdots \in \trace(\G)$, witnessed by
a maximal reduction sequence
$\G = \G[0] \transg{\inter[0]} \G[1] \transg{\inter[1]} \cdots$.
Set $\ctx[0] = \ctx$; then $\G[0] \cofullprojsubctx \ctx[0]$.
By Lemma~\ref{lemma:soundness-combined-relation}, applied iteratively,
there exists $\ctx[k{+}1]$ with $\ctx[k] \transi{\inter[k]} \ctx[k{+}1]$
and $\G[k{+}1] \cofullprojsubctx \ctx[k{+}1]$,
yielding $\ctx[0] \transi{\inter[0]} \ctx[1] \transi{\inter[1]} \cdots$.
For maximality, suppose the given sequence ends in a stuck $\G[n]$.
A global type whose unfolding has a communication head always has a
transition (by \rulename{GR1} or \rulename{GR4}, under
\rulename{GR3}), so $\unfold{\G[n]} = \gend$ and
$\pt{\G[n]} = \emptyset$; by
Lemma~\ref{lemma:completed-participants}, every $\ctx[n](\pp)$
unfolds to $\tend$, so $\ctx[n]$ is stuck.
Hence $\inter[0]\,\inter[1]\,\cdots \in \trace(\ctx)$.
\end{proof}

\soundnessofcombinedrelation*
\begin{proof}[Proof of Lemma~\ref{lemma:soundness-combined-relation}]
Assume $\G \cofullprojsubctx \ctx$ and $\G \transg{\interaction{\pp}{\pq}{\obs}} \Gi$.
We construct $\ctxi$ with $\ctx \transi{\interaction{\pp}{\pq}{\obs}} \ctxi$ and $\Gi \cofullprojsubctx \ctxi$.
Note that $\dom{\ctxi} = \dom{\ctx} \supseteq \pt{\G} \supseteq \pt{\Gi}$
in every case below, and that the hypothesis provides
$\G \cofullprojsub{t} \ctx(\role{t})$ for \emph{every}
$\role{t} \in \dom{\ctx}$, so it suffices to establish
$\Gi \cofullprojsub{t} \ctxi(\role{t})$ for every
$\role{t} \in \dom{\ctx}$.
The proof is by induction on the derivation of
$\G \transg{\interaction{\pp}{\pq}{\obs}} \Gi$.
By Lemma~\ref{lemma:inversion-global-reduction}, one of four cases holds.

\medskip
\emph{Cases~1 and~2 (head interaction fires).}

\emph{Case~1}: $\obs = \B$,
$\unfold{\G} = \Gvt{\pp}{\pq}{\B}{\G[0]}$, and $\Gi = \G[0]$.

For the sender $\pp$: from $\G \cofullprojsub{p} \ctx(\pp)$,
Lemma~\ref{lemma:inversion-combined-relation}(1) gives
$\unfold{\ctx(\pp)} = \tout{\pq}{\B}{\T[\pp]}$ with $\G[0] \cofullprojsub{p} \T[\pp]$.
Hence $\ctx(\pp) \transa{\qsend{\pq}{\B}} \T[\pp]$.

For the receiver $\pq$: from $\G \cofullprojsub{q} \ctx(\pq)$,
Lemma~\ref{lemma:inversion-combined-relation}(2) gives
$\unfold{\ctx(\pq)} = \tin{\pp}{\B}{\T[\pq]}$ with $\G[0] \cofullprojsub{q} \T[\pq]$.
Hence $\ctx(\pq) \transa{\qrecv{\pp}{\B}} \T[\pq]$.

Combining via rules \rulename{LTag} and \rulename{LEnv},
$\ctx \transi{\interaction{\pp}{\pq}{\B}} \ctxi$
where $\ctxi(\pp) = \T[\pp]$, $\ctxi(\pq) = \T[\pq]$, and $\ctxi(\pr) = \ctx(\pr)$ for $\pr \notin \set{\pp, \pq}$.

For each uninvolved $\pr \notin \set{\pp, \pq}$:
Lemma~\ref{lemma:inversion-combined-relation}(5) gives
$\G[0] \cofullprojsub{r} \ctx(\pr) = \ctxi(\pr)$.
Hence $\Gi \cofullprojsubctx \ctxi$.

\emph{Case~2}: $\obs = \lab[j]$,
$\unfold{\G} = \GvtPair{\pp}{\pq}{\lab[i] : \G[i]}{i \in I}$,
$j \in I$, $\Gi = \G[j]$.

For the sender $\pp$:
Lemma~\ref{lemma:inversion-combined-relation}(3) gives
$\unfold{\ctx(\pp)} = \tselsub{\pq}{\lab[i] : \T[i]}{i \in I}$ with
$\G[i] \cofullprojsub{p} \T[i]$ for all $i \in I$.
Since $j \in I$:
$\ctx(\pp) \transa{\qsend{\pq}{\lab[j]}} \T[j]$
and $\G[j] \cofullprojsub{p} \T[j]$.

For the receiver $\pq$:
Lemma~\ref{lemma:inversion-combined-relation}(4) gives
$J' \supseteq I$ with
$\unfold{\ctx(\pq)} = \tbrasub{\pp}{\lab[i] : \Ti[i]}{i \in J'}$ and
$\G[j] \cofullprojsub{q} \Ti[j]$ (since $j \in I \subseteq J'$).
Hence $\ctx(\pq) \transa{\qrecv{\pp}{\lab[j]}} \Ti[j]$.

Combining via \rulename{LTag} and \rulename{LEnv},
$\ctx \transi{\interaction{\pp}{\pq}{\lab[j]}} \ctxi$
where $\ctxi(\pp) = \T[j]$, $\ctxi(\pq) = \Ti[j]$, $\ctxi(\pr) = \ctx(\pr)$ for $\pr \notin \set{\pp, \pq}$.

For uninvolved $\pr$:
Lemma~\ref{lemma:inversion-combined-relation}(6) gives
$\G[i] \cofullprojsub{r} \ctx(\pr)$ for all $i$;
in particular $\G[j] \cofullprojsub{r} \ctx(\pr) = \ctxi(\pr)$.
Hence $\Gi \cofullprojsubctx \ctxi$.

\medskip
\emph{Case~3 (base-type prefix pass-through).}
$\unfold{\G} = \Gvt{\pr}{\role{s}}{\B}{\G[0]}$ with
$\set{\pp,\pq} \cap \set{\pr,\role{s}} = \emptyset$,
$\G[0] \transg{\interaction{\pp}{\pq}{\obs}} \Gi[0]$,
and $\Gi = \Gvt{\pr}{\role{s}}{\B}{\Gi[0]}$.

For each participant $\role{t} \in \dom{\ctx}$, apply
Lemma~\ref{lemma:inversion-combined-relation} to
$\G \cofullprojsub{t} \ctx(\role{t})$:
\begin{itemize}
  \item $\role{t} = \pr$: case~(1) gives
    $\unfold{\ctx(\pr)} = \tout{\role{s}}{\B}{\T[\pr]}$ with
    $\G[0] \cofullprojsub{r} \T[\pr]$.
  \item $\role{t} = \role{s}$: case~(2) gives
    $\unfold{\ctx(\role{s})} = \tin{\pr}{\B}{\T[\role{s}]}$ with
    $\G[0] \cofullprojsub{s} \T[\role{s}]$.
  \item $\role{t} \notin \set{\pr, \role{s}}$: case~(5) gives
    $\G[0] \cofullprojsub{t} \ctx(\role{t})$.
\end{itemize}
Define $\ctx_0$ by $\ctx_0(\pr) = \T[\pr]$,
$\ctx_0(\role{s}) = \T[\role{s}]$, and
$\ctx_0(\role{t}) = \ctx(\role{t})$ for $\role{t} \notin \set{\pr, \role{s}}$.
Then $\G[0] \cofullprojsubctx \ctx_0$.
Since $\set{\pp,\pq} \cap \set{\pr,\role{s}} = \emptyset$,
we have $\ctx_0(\pp) = \ctx(\pp)$ and $\ctx_0(\pq) = \ctx(\pq)$.

By the induction hypothesis (the derivation of
$\G[0] \transg{\interaction{\pp}{\pq}{\obs}} \Gi[0]$ is strictly smaller),
there exists $\ctxi_0$ with
$\ctx_0 \transi{\interaction{\pp}{\pq}{\obs}} \ctxi_0$ and
$\Gi[0] \cofullprojsubctx \ctxi_0$.

The step $\ctx_0 \transi{\interaction{\pp}{\pq}{\obs}} \ctxi_0$
only modifies $\pp$ and $\pq$.
Since $\ctx_0(\pp) = \ctx(\pp)$ and $\ctx_0(\pq) = \ctx(\pq)$,
the same one-sided transitions exist in~$\ctx$, giving
$\ctx \transi{\interaction{\pp}{\pq}{\obs}} \ctxi$
where $\ctxi(\pp) = \ctxi_0(\pp)$, $\ctxi(\pq) = \ctxi_0(\pq)$,
and $\ctxi(\role{t}) = \ctx(\role{t})$ for $\role{t} \notin \set{\pp,\pq}$.

It remains to show $\Gi \cofullprojsubctx \ctxi$
where $\Gi = \Gvt{\pr}{\role{s}}{\B}{\Gi[0]}$.
\begin{itemize}
  \item For $\role{t} \notin \set{\pr,\role{s}}$:
    by \rulename{SPR3},
    $\Gi[0] \cofullprojsub{t} \ctxi(\role{t})$ implies
    $\Gi \cofullprojsub{t} \ctxi(\role{t})$.
    From $\Gi[0] \cofullprojsubctx \ctxi_0$:
    if $\role{t} \in \set{\pp,\pq}$, then
    $\Gi[0] \cofullprojsub{t} \ctxi_0(\role{t}) = \ctxi(\role{t})$;
    if $\role{t} \notin \set{\pp,\pq}$, then
    $\ctxi(\role{t}) = \ctx(\role{t}) = \ctx_0(\role{t}) = \ctxi_0(\role{t})$,
    and $\Gi[0] \cofullprojsub{t} \ctxi_0(\role{t}) = \ctxi(\role{t})$.
  \item For $\role{t} = \pr$:
    since $\pr \notin \set{\pp,\pq}$, we have
    $\ctxi(\pr) = \ctx(\pr)$ and
    $\ctxi_0(\pr) = \ctx_0(\pr) = \T[\pr]$.
    From $\Gi[0] \cofullprojsubctx \ctxi_0$:
    $\Gi[0] \cofullprojsub{r} \T[\pr]$.
    By \rulename{SPR1},
    $\Gvt{\pr}{\role{s}}{\B}{\Gi[0]} \cofullprojsub{r} \tout{\role{s}}{\B}{\T[\pr]}$.
    Since $\unfold{\ctx(\pr)} = \tout{\role{s}}{\B}{\T[\pr]}$, by closure under
    \rulename{SPR10}:
    $\Gi \cofullprojsub{r} \ctx(\pr) = \ctxi(\pr)$.
  \item For $\role{t} = \role{s}$: symmetric, using \rulename{SPR2} and \rulename{SPR10}.
\end{itemize}

\medskip
\emph{Case~4 (branching prefix pass-through).}
$\unfold{\G} = \GvtPair{\pr}{\role{s}}{\lab[i] : \G[i]}{i \in I}$ with
$\set{\pp,\pq} \cap \set{\pr,\role{s}} = \emptyset$,
$\G[i] \transg{\interaction{\pp}{\pq}{\obs}} \Gi[i]$ for all $i$,
and $\Gi = \GvtPair{\pr}{\role{s}}{\lab[i] : \Gi[i]}{i \in I}$.
The argument is analogous to Case~3, using
Lemma~\ref{lemma:inversion-combined-relation} cases~(3), (4), and~(6)
instead of~(1), (2), and~(5),
and rules \rulename{SPR4}/\rulename{SPR5}/\rulename{SPR6}
instead of \rulename{SPR1}/\rulename{SPR2}/\rulename{SPR3}.
The induction hypothesis is applied to each sub-derivation
$\G[i] \transg{\interaction{\pp}{\pq}{\obs}} \Gi[i]$,
yielding compatible context steps that agree on $\pp$ and $\pq$
(since the combined relation determines the local types of $\pp$ and $\pq$
independently of the branch index~$i$).
\end{proof}

\subsection{Proof of Subject Reduction for the Top-Down System}
\label{app:typing-sr}

\begin{proof}[Proof of Theorem~\ref{thm:topdown}(1)]
The proof uses Theorem~\ref{thm:topdown}(2), which does not depend on
subject reduction.
Assume $\provestop \M : \ctx$ with
$\M = \prod_{i \in I}\proctag{\pp[i]}{\PP[i]}$.
By inversion of \rulename{T-Sess}, we have
$\ctx = \set{\ptag{\pp[i]}{\T[i]}}_{i \in I}$ where
$\vdash \PP[i] : \T[i]$ and $\G \cofullproj{\pp[i]} \T[i]$ for all
$i \in I$, for some balanced $\G$ with
$\pt{\G} \subseteq \set{\pp[i] \mid i \in I}$.

We first observe that $\ctx \assoc \G$.
Split $\ctx = \ctxact, \ctxterm$, where $\ctxact$ is the restriction
of $\ctx$ to $\pt{\G}$.
Clause~(1) of Definition~\ref{def:association} holds with witness
$\ctx[0] = \ctxact$, whose entries are the projections themselves,
using reflexivity of $\subt$.
Clause~(2) holds because $\G \cofullproj{\pp[i]} \T[i]$ with
$\pp[i] \notin \pt{\G}$ forces $\unfold{\T[i]} = \tend$, since
$\G \cofullprojsub{\pp[i]} \T[i]$ by
Lemma~\ref{lemma:projection-in-combined} and
Lemma~\ref{lemma:completed-participants}(1) applies.

By Theorem~\ref{thm:topdown}(2a), $\slive(\ctx)$, hence
$\safe(\ctx)$ and $\live(\ctx)$.
The premises $\vdash \PP[i] : \T[i]$ together with $\safe(\ctx)$
allow rule \rulename{B-Sess}, so $\provesbot \M : \ctx$.
By Theorem~\ref{thm:bottomup}(1), $\M \red^\ast \Mi$ yields
$\provesbot \Mi : \ctxi$ with $\ctx \transi{}^{*} \ctxi$ and
$\live(\ctxi)$, where
$\Mi = \prod_{i \in I}\proctag{\pp[i]}{\PPi[i]}$.
By induction on the length of $\ctx \transi{}^{*} \ctxi$, applying
Lemma~\ref{lemma:completeness-association} at each step, there
exists a balanced $\Gi$ with $\ctxi \assoc \Gi$.

It remains to retype $\Mi$ top-down, following the construction in
the proof of Theorem~\ref{thm:proc:completeness}(2).
By inversion of \rulename{B-Sess},
$\vdash \PPi[i] : \ctxi(\pp[i])$ for all $i \in I$.
By Definition~\ref{def:association}, $\ctxi$ splits into
$\ctxact, \ctxterm$, and clause~(1) provides $\ctx[0]$ with
$\dom{\ctx[0]} = \pt{\Gi}$,
$\Gi \cofullproj{p} \ctx[0](\pp)$ for all $\pp \in \pt{\Gi}$, and
$\ctxact \subt \ctx[0]$.
Define $\ctxii$ with $\dom{\ctxii} = \dom{\ctxi}$ by
$\ctxii(\pp) = \ctx[0](\pp)$ for $\pp \in \pt{\Gi}$ and
$\ctxii(\pp) = \tend$ otherwise.
Then $\ctxi \subt \ctxii$, since on $\pt{\Gi}$ this is
$\ctxact \subt \ctx[0]$, and elsewhere
$\unfold{\ctxi(\pp)} = \tend$ gives $\ctxi(\pp) \subt \tend$
(rules \rulename{S5} and \rulename{S7}).
Rule \rulename{Sub} lifts each typing to
$\vdash \PPi[i] : \ctxii(\pp[i])$.
Moreover $\Gi \cofullproj{\pp[i]} \ctxii(\pp[i])$ holds for each
$i$, by the witness property for $\pp[i] \in \pt{\Gi}$ and by
Lemma~\ref{lemma:completed-participants}(4) for
$\pp[i] \notin \pt{\Gi}$, and $\Gi$ is balanced.
Since $\pt{\Gi} \subseteq \dom{\ctxi}$, rule \rulename{T-Sess}
yields $\provestop \Mi : \ctxii$ with
$\ctx \transi{}^{*} \ctxi \subt \ctxii$, as required.
\end{proof}

\section{Benchmark Global Type Table}
\label{app:example-global-types}

Rows (d), (e), and (u) show the MapReduce, Independent Workers, and
Ring families schematically; rows (m)--(t) reproduce the remaining
uncopied Haskell catalogue entries, including the concrete Coinductive
Full instances.
Entries are labelled (a)--(u) for reference from
Table~\ref{tab:benchmarks}.

{\scriptsize
\setlength{\tabcolsep}{3pt}
\setlength{\LTleft}{0pt}
\setlength{\LTright}{0pt plus 1fill}
\setlength{\LTcapwidth}{\textwidth}
\newcommand{\ccfstep}[1]{\role{p}\gar\role{q}\;\goc{\{}\labname{a}\gc #1\goc{\}}}
\begin{longtable}{@{} p{0.60\textwidth} p{0.36\textwidth} @{}}
\caption{Benchmark global type table used in the evaluation.}
\label{tab:example-global-types}\\
\toprule
\endfirsthead
\toprule
\multicolumn{2}{@{}l}{\small\emph{Benchmark Global Type Table (continued)}} \\
\midrule
\endhead
\midrule
\multicolumn{2}{r@{}}{\footnotesize Continued on next page} \\
\endfoot
\bottomrule
\endlastfoot
\multicolumn{2}{@{}l}{\small\textbf{(a) Odd-Even}~\cite[Ex.~2.1]{Li2023}} \\*[2pt]
$\G[\mathit{oe}] = \pp\gar\pq\;
\color{gtypecolor}\left\{\color{black}
\begin{array}{@{}l@{}}
\labname{o} \gc \pq\gar\pr\;\goc{\{}\labname{o} \gc
  \toc{\sfmu}\ty[1].\,\pp\gar\pq\;
  \color{gtypecolor}\left\{\color{black}
  \begin{array}{@{}l@{}}
  \labname{o} \gc \GvtPairs{\pq}{\pr}{\labname{o} \gc
    \GvtPairs{\pq}{\pr}{\labname{o} \gc \ty[1]}} \\
  \labname{b} \gc \GvtPairs{\pq}{\pr}{\labname{b} \gc
    \GvtPairs{\pr}{\pp}{\labname{o} \gc \gend}}\goc{\}}
  \end{array}
  \color{gtypecolor}\right\}\color{black} \\[4pt]
\labname{m} \gc \toc{\sfmu}\ty[2].\,\pp\gar\pq\;
  \color{gtypecolor}\left\{\color{black}
  \begin{array}{@{}l@{}}
  \labname{o} \gc \GvtPairs{\pq}{\pr}{\labname{o} \gc
    \GvtPairs{\pq}{\pr}{\labname{o} \gc \ty[2]}} \\
  \labname{b} \gc \GvtPairs{\pq}{\pr}{\labname{b} \gc
    \GvtPairs{\pr}{\pp}{\labname{m} \gc \gend}}
  \end{array}
  \color{gtypecolor}\right\}\color{black}
\end{array}
\color{gtypecolor}\right\}$
&
\parbox{\linewidth}{\footnotesize\raggedright
Three participants $\pp$, $\pq$, $\pr$.
The outer choice ($\labname{o}$/$\labname{m}$) selects between
two recursive modes.
In each mode, $\pq$ relays to~$\pr$ and the
protocol either continues ($\labname{o}$) or
terminates ($\labname{b}$).
Requires coinductive full-merging projection.}
\\
\midrule
\multicolumn{2}{@{}l}{\small\textbf{(b) OAuth}~\cite[Ex.~1]{Scalas2019}} \\*[2pt]
$\G[\mathit{oa}] =
\role{s}\gar\role{c}\;
\color{gtypecolor}\left\{\color{black}
\begin{array}{@{}l@{}}
\labname{login} \gc \role{c}\gar\role{a}\;\goc{\big\{}\labname{password} \gc \\
\quad \Gvt{\role{c}}{\role{a}}{\tstring}
  \role{a}\gar\role{s}\;\goc{\{}\labname{auth} \gc \\
\qquad \Gvt{\role{a}}{\role{s}}{\tbool} \gend\goc{\}}\goc{\big\}} \\
\labname{auth} \gc \GvtPairs{\role{c}}{\role{a}}{\labname{quit} \gc \gend}
\end{array}
\color{gtypecolor}\right\}$
&
\parbox{\linewidth}{\footnotesize\raggedright
Three participants: $\role{s}$ (server),
$\role{c}$ (client), $\role{a}$ (auth.\ provider).
Server offers login or direct auth.
On login, client sends password to auth provider,
who replies with a boolean to the server.}
\\
\midrule
\multicolumn{2}{@{}l}{\small\textbf{(c) Recursive Two-Buyer}~\cite[Ex.~2]{Scalas2019}} \\*[2pt]
$\begin{aligned}
\G[\mathit{tb}] ={} &
\role{a}\gar\role{s}\;\goc{\big\{}\labname{query} \gc
\Gvt{\role{a}}{\role{s}}{\tstring} \\
& \quad \role{s}\gar\role{a}\;\goc{\{}\labname{price} \gc
\Gvt{\role{s}}{\role{a}}{\tint} {} \\
& \qquad \toc{\sfmu}\ty.\;\role{a}\gar\role{b}\;
\color{gtypecolor}\left\{\color{black}
\begin{array}{@{}l@{}}
\labname{split} \gc \role{b}\gar\role{a}\;
  \color{gtypecolor}\left\{\color{black}
  \begin{array}{@{}l@{}}
  \labname{yes} \gc \GvtPairs{\role{a}}{\role{s}}{\labname{buy} \gc \gend} \\
  \labname{no} \gc \ty
  \end{array}
  \color{gtypecolor}\right\}\color{black} \\[3pt]
\labname{cancel} \gc \GvtPairs{\role{a}}{\role{s}}{\labname{no} \gc \gend}
\end{array}
\color{gtypecolor}\right\}\goc{\}}\goc{\big\}}
\end{aligned}$
&
\parbox{\linewidth}{\footnotesize\raggedright
Three participants: $\role{a}$ (buyer~A),
$\role{b}$ (buyer~B), $\role{s}$ (seller).
A queries the seller, who quotes a price;
A asks B to split. B accepts or declines (recurse).
A may also cancel outright.
Not balanced, but projectable.}
\\
\midrule
\multicolumn{2}{@{}l}{\small\textbf{(d) MapReduce}~\cite[Ex.~3]{Scalas2019}} \\*[2pt]
$\begin{aligned}
\G[\mathit{mr}] ={} & \toc{\sfmu}\ty.\;
\role{m}\gar\role{w_1}\;\goc{\{}\labname{datum} \gc
  \Gvt{\role{m}}{\role{w_1}}{\tint} \cdots \\
& \role{m}\gar\role{w_n}\;\goc{\{}\labname{datum} \gc
  \Gvt{\role{m}}{\role{w_n}}{\tint} \\
& \Gvt{\role{w_1}}{\role{r}}{\tint} \cdots\;
  \Gvt{\role{w_n}}{\role{r}}{\tint} \\
& \role{r}\gar\role{m}\;
\color{gtypecolor}\left\{\color{black}
\begin{array}{@{}l@{}}
\labname{continue} \gc \Gvt{\role{r}}{\role{m}}{\tint} \ty \\
\labname{stop} \gc \role{m}\gar\role{w_1}\;\goc{\{}\labname{stop} \gc \cdots \\
\quad \role{m}\gar\role{w_n}\;\goc{\{}\labname{stop} \gc \gend\goc{\}\cdots\}}\goc{\}\cdots\}}
\end{array}
\color{gtypecolor}\right\}
\end{aligned}$
&
\parbox{\linewidth}{\footnotesize\raggedright
$n{+}2$ participants: $\role{m}$ (manager),
$\role{w_1},\ldots,\role{w_n}$ (workers), $\role{r}$ (reducer).
Manager dispatches data, workers return results,
reducer continues or stops.
Running example (\SEC{sec:overview});
the catalogue instances are MapReduce($5$)--($7$).}
\\
\midrule
\multicolumn{2}{@{}l}{\small\textbf{(e) Independent Workers}~\cite[Ex.~4]{Scalas2019}} \\*[2pt]
$\begin{aligned}
\G[\mathit{iw}] ={} &
\role{s}\gar\role{wa_1}\;\goc{\{}\labname{datum} \gc
\Gvt{\role{s}}{\role{wa_1}}{\tint} \cdots \\
& \role{s}\gar\role{wa_n}\;\goc{\{}\labname{datum} \gc
\Gvt{\role{s}}{\role{wa_n}}{\tint} {} \\
& \toc{\sfmu}\ty.\;
\role{wa_i}\gar\role{wb_i}\;
\goc{\{}
\labname{datum} \gc \Gvt{\role{wa_i}}{\role{wb_i}}{\tint} \\
& \quad \role{wb_i}\gar\role{wc_i}\;\goc{\{}\labname{datum} \gc
  \Gvt{\role{wb_i}}{\role{wc_i}}{\tint} \\
& \qquad \Gvt{\role{wc_i}}{\role{wa_i}}{\tint}\,\ty \goc{\}}\text{,} \\
& \quad \labname{stop} \gc \GvtPairs{\role{wb_i}}{\role{wc_i}}{\labname{stop} \gc \gend}
\goc{\}}\goc{\}\cdots\}}
\end{aligned}$
&
\parbox{\linewidth}{\footnotesize\raggedright
$3n{+}1$ participants: $\role{s}$ (server) and
$n$ independent pipelines, each with
$\role{wa_i}$, $\role{wb_i}$, $\role{wc_i}$.
Server dispatches integers; each pipeline
loops with datum/stop until termination.
Shown schematically: the syntax has no parallel operator, so the
catalogue interleaves the pipeline-local choices; the
instances are Independent Workers($7$) and ($10$).}
\\
\midrule
\multicolumn{2}{@{}l}{\small\textbf{(f) Semantic Recursion}~\cite{Tirore2023}} \\*[2pt]
$\begin{aligned}
\G[\mathit{itp}] ={} & \toc{\sfmu}\ty.\;
\Gvt{\role{a}}{\role{b}}{\tstring} {} \\
& \toc{\sfmu}\tyi.\;
\role{c}\gar\role{d}\;
\color{gtypecolor}\left\{\color{black}
\begin{array}{@{}l@{}}
\labname{left} \gc \ty \\
\labname{right} \gc \Gvt{\role{a}}{\role{b}}{\tstring} \tyi
\end{array}
\color{gtypecolor}\right\}
\end{aligned}$
&
\parbox{\linewidth}{\footnotesize\raggedright
Four participants: $\role{a}$, $\role{b}$, $\role{c}$, $\role{d}$.
Two nested recursions: the inner loop
($\tyi$) between $\role{c}$ and $\role{d}$
interleaves with sends from $\role{a}$ to~$\role{b}$.
Choosing $\labname{left}$ restarts the outer loop.
Requires coinductive projection.}
\\
\midrule
\multicolumn{2}{@{}l}{\small\textbf{(g) Simple Travel Agency}~\cite[Fig.~1(a)]{YoshidaGheri2020}} \\*[2pt]
$\begin{aligned}
\G[\mathit{sta}] ={} &
\Gvt{\role{c}}{\role{a}}{\tstring}
\Gvt{\role{a}}{\role{c}}{\tint} {} \\
& \role{c}\gar\role{a}\;
\color{gtypecolor}\left\{\color{black}
\begin{array}{@{}l@{}}
\labname{accept} \gc \Gvt{\role{c}}{\role{a}}{\tstring}
  \Gvt{\role{a}}{\role{c}}{\tint}\,\gend \\
\labname{reject} \gc \gend
\end{array}
\color{gtypecolor}\right\}
\end{aligned}$
&
\parbox{\linewidth}{\footnotesize\raggedright
Two participants: $\role{c}$ (client) and $\role{a}$ (agency).
The client requests a quote and receives a reply,
then either accepts with booking details or rejects.}
\\
\midrule
\multicolumn{2}{@{}l}{\small\textbf{(h) Better Travel Agency}~\cite[Fig.~1(b)]{YoshidaGheri2020}} \\*[2pt]
$\begin{aligned}
\G[\mathit{bta}] ={} & \toc{\sfmu}\ty.\;
\Gvt{\role{c}}{\role{a}}{\tstring}
\Gvt{\role{a}}{\role{c}}{\tint} {} \\
& \role{c}\gar\role{a}\;
\color{gtypecolor}\left\{\color{black}
\begin{array}{@{}l@{}}
\labname{accept} \gc \Gvt{\role{c}}{\role{a}}{\tstring}
  \Gvt{\role{a}}{\role{c}}{\tint}\,\gend \\
\labname{retry} \gc \ty \\
\labname{reject} \gc \gend
\end{array}
\color{gtypecolor}\right\}
\end{aligned}$
&
\parbox{\linewidth}{\footnotesize\raggedright
The recursive variant of (g).
After the initial request and quote, the client may accept,
reject, or retry and return to the start of the protocol.}
\\
\midrule
\multicolumn{2}{@{}l}{\small\textbf{(i) Inductive Plain}~\cite[Ex.~4.8]{thien-nobuko-popl-25}} \\*[2pt]
$\begin{aligned}
\G[\mathit{ip}] ={} & \toc{\sfmu}\ty.\;
\role{q}\gar\role{r}\;
\color{gtypecolor}\left\{\color{black}
\begin{array}{@{}l@{}}
\lab[1] \gc \role{r}\gar\role{p}\;\goc{\{}\lab[1] \gc \ty \goc{\}} \\
\lab[2] \gc \role{r}\gar\role{p}\;\goc{\{}\lab[1] \gc \ty \goc{\}}
\end{array}
\color{gtypecolor}\right\}
\end{aligned}$
&
\parbox{\linewidth}{\footnotesize\raggedright
Three participants: $\role{q}$ chooses and $\role{r}$ relays to $\role{p}$.
Both branches present the same label to $\role{p}$,
so plain merge already succeeds.}
\\
\midrule
\multicolumn{2}{@{}l}{\small\textbf{(j) Inductive Full}~\cite[Ex.~4.8]{thien-nobuko-popl-25}} \\*[2pt]
$\begin{aligned}
\G[\mathit{if}] ={} & \toc{\sfmu}\ty.\;
\role{q}\gar\role{r}\;
\color{gtypecolor}\left\{\color{black}
\begin{array}{@{}l@{}}
\lab[1] \gc \role{r}\gar\role{p}\;\goc{\{}\lab[1] \gc \ty \goc{\}} \\
\lab[2] \gc \role{r}\gar\role{p}\;\goc{\{}\lab[2] \gc \gend \goc{\}}
\end{array}
\color{gtypecolor}\right\}
\end{aligned}$
&
\parbox{\linewidth}{\footnotesize\raggedright
The same overall shape as (i), but the second branch relays
a different label to $\role{p}$.
This is the smallest benchmark in the catalogue where full merge matters.}
\\
\midrule
\multicolumn{2}{@{}l}{\small\textbf{(k) Ring}~\cite{Castro2026}} \\*[2pt]
$\begin{aligned}
\G[\mathit{ring}] ={} &
\role{a}\gar\role{b}\;
\color{gtypecolor}\left\{\color{black}
\begin{array}{@{}l@{}}
\labname{AppThenGet} \gc \role{b}\gar\role{c}\;
  \color{gtypecolor}\left\{\color{black}
  \begin{array}{@{}l@{}}
  \labname{AppThenGet} \gc \role{c}\gar\role{a}\;\goc{\{}\labname{Val} \gc \gend \goc{\}}
  \end{array}
  \color{gtypecolor}\right\}\color{black} \\[3pt]
\labname{App} \gc \role{b}\gar\role{c}\;
  \color{gtypecolor}\left\{\color{black}
  \begin{array}{@{}l@{}}
  \labname{App} \gc \role{a}\gar\role{c}\;\goc{\big\{}\labname{Get} \gc
    \role{c}\gar\role{a}\;\goc{\{}\labname{Val} \gc \gend \goc{\}}\goc{\big\}}
  \end{array}
  \color{gtypecolor}\right\}\color{black}
\end{array}
\color{gtypecolor}\right\}
\end{aligned}$
&
\parbox{\linewidth}{\footnotesize\raggedright
Three participants arranged in a ring.
One branch sends an application then retrieves a value,
while the other performs the application and retrieval in a different order.}
\\
\midrule
\multicolumn{2}{@{}l}{\small\textbf{(l) Adder}~\cite[Fig.~1(a)]{FASE16EndpointAPI}} \\*[2pt]
$\begin{aligned}
\G[\mathit{ad}] ={} & \toc{\sfmu}\ty.\;
\role{c}\gar\role{s}\;
\color{gtypecolor}\left\{\color{black}
\begin{array}{@{}l@{}}
\labname{Add} \gc \Gvt{\role{c}}{\role{s}}{\tint}
  \Gvt{\role{c}}{\role{s}}{\tint} \\
\quad \role{s}\gar\role{c}\;\goc{\{}\labname{Res} \gc
    \Gvt{\role{s}}{\role{c}}{\tint}\,\ty \goc{\}} \\
\labname{Bye} \gc \role{s}\gar\role{c}\;\goc{\{}\labname{Bye} \gc \gend \goc{\}}
\end{array}
\color{gtypecolor}\right\}
\end{aligned}$
&
\parbox{\linewidth}{\footnotesize\raggedright
Two participants: client $\role{c}$ and server $\role{s}$.
The client either submits two integers and receives their sum,
or terminates the session.}
\\
\midrule
\multicolumn{2}{@{}l}{\small\textbf{(m) Company Communication}~\cite[Fig.~9]{DBLP:conf/ecoop/GheriLSTY22}} \\*[2pt]
$\begin{aligned}
\G[\mathit{cc}] ={} &
\role{d}\gar\role{ad}\;\goc{\Bigg\{}\labname{prod}\gc
\role{d}\gar\role{s}\;\goc{\Big\{}\labname{prod}\gc \\
& \quad
\role{d}\gar\role{f_1}\;\goc{\big\{}\labname{prod}\gc
\role{f_1}\gar\role{f_2}\;\goc{\{}\labname{prod}\gc
\tfix{\ty}\; \\
& \qquad
\role{f_2}\gar\role{f_1}\;
\color{gtypecolor}\left\{\color{black}
\begin{array}{@{}l@{}}
\labname{price}\gc
  \role{f_1}\gar\role{d}\;\goc{\big\{}\labname{ok}\gc
  \GvtPairs{\role{d}}{\role{ad}}{\labname{go}\gc \\
\quad
  \GvtPairs{\role{f_1}}{\role{s}}{\labname{price}\gc
  \GvtPairs{\role{s}}{\role{w}}{\labname{publish}\gc \gend}}}\goc{\big\}} \\
\labname{wait}\gc
  \role{f_1}\gar\role{d}\;\goc{\big\{}\labname{wait}\gc
  \GvtPairs{\role{d}}{\role{ad}}{\labname{wait}\gc \\
\quad
  \GvtPairs{\role{f_1}}{\role{s}}{\labname{wait}\gc
  \GvtPairs{\role{s}}{\role{w}}{\labname{wait}\gc \ty}}}\goc{\big\}}
\end{array}
\color{gtypecolor}\right\}\goc{\}}\goc{\big\}}\goc{\Big\}}\goc{\Bigg\}}
\end{aligned}$
&
\parbox{\linewidth}{\footnotesize\raggedright
Six participants: $\role{d}$ (director), $\role{ad}$ (admin), $\role{s}$ (seller),
$\role{f_1},\role{f_2}$ (factories), $\role{w}$ (warehouse).
Loops on \texttt{wait} until a price is accepted.}
\\
\midrule
\multicolumn{2}{@{}l}{\small\textbf{(n) Online Wallet}~\cite[Fig.~1]{DBLP:conf/rv/NeykovaYH13}} \\*[2pt]
$\begin{aligned}
\G[\mathit{ow}] ={} &
\role{c}\gar\role{a}\;\goc{\big\{}\labname{login}\gc
\Gvt{\role{c}}{\role{a}}{\tstring}
\Gvt{\role{c}}{\role{a}}{\tstring} \\
& \quad
\role{a}\gar\role{c}\;
\color{gtypecolor}\left\{\color{black}
\begin{array}{@{}l@{}}
\labtext{login\_ok}\gc
  \role{a}\gar\role{s}\;\goc{\{}\labtext{login\_ok}\gc
  \tfix{\ty}\; \\
\quad
  \GvtPairs{\role{s}}{\role{c}}{\labname{account}\gc
  \Gvt{\role{s}}{\role{c}}{\tint}
  \Gvt{\role{s}}{\role{c}}{\tint} \\
\quad
  \role{c}\gar\role{s}\;
  \color{gtypecolor}\left\{\color{black}
  \begin{array}{@{}l@{}}
  \labname{pay}\gc
    \Gvt{\role{c}}{\role{s}}{\tstring}
    \Gvt{\role{c}}{\role{s}}{\tint}\ty \\
  \labname{quit}\gc \gend
  \end{array}
  \color{gtypecolor}\right\}\color{black}}\goc{\}} \\
\labtext{login\_fail}\gc
  \Gvt{\role{a}}{\role{c}}{\tstring} \\
\quad
  \GvtPairs{\role{a}}{\role{s}}{\labtext{login\_fail}\gc
  \Gvt{\role{a}}{\role{s}}{\tstring}\gend}
\end{array}
\color{gtypecolor}\right\}\goc{\big\}}
\end{aligned}$
&
\parbox{\linewidth}{\footnotesize\raggedright
Three participants: $\role{c}$ (client),
$\role{a}$ (authenticator), $\role{s}$ (service).
After login, the service reports account
state and accepts payments until quit;
login failure terminates.}
\\
\midrule
\multicolumn{2}{@{}l}{\small\textbf{(o) Distributed Logging}~\cite[Fig.~16]{DBLP:conf/ecoop/LagaillardieNY22}} \\*[2pt]
$\begin{aligned}
\G[\mathit{dl}] ={} &
\role{c}\gar\role{s}\;\goc{\big\{}\labname{Start}\gc
\Gvt{\role{c}}{\role{s}}{\tint}
\tfix{\ty}\; \\
& \quad
\role{s}\gar\role{c}\;
\color{gtypecolor}\left\{\color{black}
\begin{array}{@{}l@{}}
\labname{Success}\gc \Gvt{\role{s}}{\role{c}}{\tint}\ty \\
\labname{Failure}\gc
  \Gvt{\role{s}}{\role{c}}{\tint}
  \role{c}\gar\role{s}\;
  \color{gtypecolor}\left\{\color{black}
  \begin{array}{@{}l@{}}
  \labname{Restart}\gc \Gvt{\role{c}}{\role{s}}{\tint}\ty \\
  \labname{Stop}\gc \Gvt{\role{c}}{\role{s}}{\tint}\gend
  \end{array}
  \color{gtypecolor}\right\}\color{black}
\end{array}
\color{gtypecolor}\right\}\goc{\big\}}
\end{aligned}$
&
\parbox{\linewidth}{\footnotesize\raggedright
Two participants: $\role{c}$ (client) and $\role{s}$ (server).
After starting, the server reports success
(recur) or failure (client restarts or stops).}
\\
\midrule
\multicolumn{2}{@{}l}{\small\textbf{(p) E-Voting}~\cite{DBLP:conf/ecoop/LagaillardieNY22}} \\*[2pt]
$\begin{aligned}
\G[\mathit{ev}] ={} &
\role{v}\gar\role{s}\;\goc{\big\{}\labname{Authenticate}\gc
\Gvt{\role{v}}{\role{s}}{\tstring} \\
& \quad
\role{s}\gar\role{v}\;
\color{gtypecolor}\left\{\color{black}
\begin{array}{@{}l@{}}
\labname{Ok}\gc
  \Gvt{\role{s}}{\role{v}}{\tstring} \\
\quad
  \role{v}\gar\role{s}\;
  \color{gtypecolor}\left\{\color{black}
  \begin{array}{@{}l@{}}
  \labname{Yes}\gc
    \Gvt{\role{v}}{\role{s}}{\tstring} \\
  \quad
    \GvtPairs{\role{s}}{\role{v}}{\labname{Result}\gc
    \Gvt{\role{s}}{\role{v}}{\tint}\gend} \\
  \labname{No}\gc
    \Gvt{\role{v}}{\role{s}}{\tstring} \\
  \quad
    \GvtPairs{\role{s}}{\role{v}}{\labname{Result}\gc
    \Gvt{\role{s}}{\role{v}}{\tint}\gend}
  \end{array}
  \color{gtypecolor}\right\}\color{black} \\
\labname{Reject}\gc \Gvt{\role{s}}{\role{v}}{\tstring}\gend
\end{array}
\color{gtypecolor}\right\}\goc{\big\}}
\end{aligned}$
&
\parbox{\linewidth}{\footnotesize\raggedright
Two participants: $\role{v}$ (voter) and $\role{s}$ (server).
The voter authenticates; on success
casts a Yes/No vote and receives
the result. On reject, terminates.}
\\
\midrule
\multicolumn{2}{@{}l}{\small\textbf{(q)--(t) Coinductive Full($n$)}~\cite[Thm.~4.24]{thien-nobuko-popl-25},\; $n \in \{2,3,4,5\}$} \\*[2pt]
\multicolumn{2}{@{}l}{\footnotesize
Let $A_1(\ty)=\ccfstep{\ty}$ and
$A_{m+1}(\ty)=\ccfstep{A_m(\ty)}$.} \\*[2pt]
$\begin{aligned}
\G[\mathit{cf}_n] ={} &
\GvtPair{\role{p}}{\role{r}}{\lab[k]\gc \G[k]}{1 \le k \le n} \\
\G[k] ={} &
\tfix{\ty}\;
\role{p}\gar\role{q}\;
\color{gtypecolor}\left\{\color{black}
\begin{array}{@{}l@{}}
\labname{a}\gc A_{d_k}(\ty) \\
\labname{b}\gc \GvtPairs{\role{p}}{\role{q}}{\lab[k]\gc \gend}
\end{array}
\color{gtypecolor}\right\}
\end{aligned}$
&
\parbox{\linewidth}{\footnotesize\raggedright
Three participants: $\role{p}$, $\role{q}$, $\role{r}$.
Branch depths $(d_1,\ldots,d_n)$ are the
first $n$ primes: $(2,3)$, $(2,3,5)$,
$(2,3,5,7)$, $(2,3,5,7,11)$.}
\\
\midrule
\multicolumn{2}{@{}l}{\small\textbf{(u) Ring($n$)},\; $n \in \{5,8,12,15\}$} \\*[2pt]
$\begin{aligned}
\G[\mathit{ring}_n] ={} & \tfix{\ty}\;
\Gvt{\role{p_1}}{\role{p_2}}{\tint}
\Gvt{\role{p_2}}{\role{p_3}}{\tint} \\
& \quad \cdots\;
\Gvt{\role{p_n}}{\role{p_1}}{\tint}\,\ty
\end{aligned}$
&
\parbox{\linewidth}{\footnotesize\raggedright
$n$ participants $\role{p_1},\ldots,\role{p_n}$ arranged in a cycle:
each round passes an integer once around the ring and repeats.
Projectable by all four EPP algorithms.}
\\
\end{longtable}
}

\begin{example}
\label{ex:binary-counter}
The Binary Counter family exhibits the unavoidable exponential growth
of the synthesised global type with the number of participants.
Let $\ctx[b] = \set{\ptag{\pp[i]}{\T[\pp[i]]}}_{i \in P}$ where
$P = \{0, 1, \ldots, n-1\}$ and each participant $\pp[i]$ has local
type:
\[
\T[\pp[i]] = \toc{\sfmu} \ty[0].\;
\pp[i{-}1]\mkern2mu\toc{\&}\mkern2mu
\left\{
\begin{array}{@{}l@{}}
\lab[0] : \tsel{\pp[i{+}1]}{ \lab[0] : \ty[0] } \\[3pt]
\lab[1] : \tsel{\pp[i{+}1]}{ \lab[0] : \toc{\sfmu} \ty[1].\;
  \pp[i{-}1]\mkern2mu\toc{\&}\mkern2mu
  \left\{
  \begin{array}{@{}l@{}}
  \lab[0] : \tsel{\pp[i{+}1]}{ \lab[0] : \ty[1] } \\[3pt]
  \lab[1] : \tsel{\pp[i{+}1]}{ \lab[1] : \ty[0] }
  \end{array}
  \right\}
  }
\end{array}
\right\}
\]
Each participant acts as one bit of a ripple-carry counter,
receiving from $\pp[i-1]$ and sending to $\pp[i+1]$. Labels
$\lab[0]$ and $\lab[1]$ encode ``no carry'' and ``carry'', while
$\ty[0]$ and $\ty[1]$ record whether the current bit stores $0$ or
$1$; receiving $\lab[1]$ flips the bit, and only state $\ty[1]$
propagates the carry onward.
\end{example}

}{}

\end{document}